\DeclareSymbolFont{bbold}{U}{bbold}{m}{n}
\DeclareSymbolFontAlphabet{\mathbbold}{bbold}
\newtheorem{theorem}{Theorem}[section]
\newtheorem{remark}[theorem]{Remark}
\newtheorem{lemma}[theorem]{Lemma}
\newtheorem{definition}[theorem]{Definition}
\newtheorem{proposition}[theorem]{Proposition}
\newtheorem{corollary}[theorem]{Corollary}
\newtheorem{conjecture}[theorem]{Conjecture}
\newcommand{\Erdos}{Erd\H{o}s}
\newcommand{\Renyi}{R\'{e}nyi}
\newcommand{\ra}{\rangle}
\newcommand{\la}{\langle}
\newcommand{\RR}{\mathbb{R}}
\newcommand{\ZZ}{\mathbb{Z}}
\newcommand{\PP}{\mathbb{P}}
\newcommand{\Px}{\mathop{\mathbb{P}}}
\newcommand{\EE}{\mathbb{E}}
\newcommand{\Ex}{\mathop{\mathbb{E}}}
\newcommand{\bD}{\bm D}
\newcommand{\bL}{\bm L}
\newcommand{\bP}{\bm P}
\newcommand{\bW}{\bm W}
\newcommand{\bX}{\bm X}
\newcommand{\bZ}{\bm Z}
\newcommand{\bx}{\bm x}
\newcommand{\by}{\bm y}
\newcommand{\bz}{\bm z}
\newcommand{\sA}{\mathcal{A}}
\newcommand{\sE}{\mathcal{E}}
\newcommand{\sM}{\mathcal{M}}
\newcommand{\sN}{\mathcal{N}}
\newcommand{\sP}{\mathcal{P}}
\newcommand{\sQ}{\mathcal{Q}}
\newcommand{\Tr}{\mathrm{Tr}}
\newcommand{\diag}{\mathsf{diag}}
\newcommand{\cc}{\mathsf{cc}}
\DeclareMathOperator*{\argmax}{arg\,max}
\DeclareMathOperator*{\argmin}{arg\,min}
\newcommand{\Unif}{\mathsf{Unif}}
\newcommand{\aug}{\mathsf{aug}}
\newcommand{\dist}{\mathsf{dist}}
\newcommand{\pprod}{\mathsf{prod}}
\newcommand{\Cyc}{\mathsf{Cyc}}
\newcommand{\what}{\widehat}
\newcommand{\oa}{\bar{a}}
\newcommand{\ob}{\bar{b}}
\newcommand{\oc}{\bar{c}}
\newcommand{\oj}{\bar{j}}
\newcommand{\ok}{\bar{k}}
\newcommand{\ol}{\bar{\ell}}
\newcommand\numberthis{\addtocounter{equation}{1}\tag{\theequation}}
\title{Strong recovery of geometric planted matchings}
\date{July 12, 2021}
\author{}
\author[1]{Dmitriy Kunisky\thanks{Email: \textit{kunisky@cims.nyu.edu}. Partially supported by NSF grants DMS-1712730 and DMS-1719545.}}
\author[1]{Jonathan Niles-Weed\thanks{Email: \textit{jnw@cims.nyu.edu}. Partially supported by NSF grant DMS-2015291.}}
\affil[1]{Department of Mathematics, Courant Institute of Mathematical Sciences, New York University}
\begin{document}

\maketitle
\thispagestyle{empty}

\begin{abstract}
We study the problem of efficiently recovering the matching between an unlabelled collection of $n$ points in $\mathbb{R}^d$ and a small random perturbation of those points.
    We consider a model where the initial points are i.i.d.\ standard Gaussian vectors, perturbed by adding i.i.d.\ Gaussian vectors with variance $\sigma^2$.
    In this setting, the maximum likelihood estimator (MLE) can be found in polynomial time as the solution of a linear assignment problem.
    We establish thresholds on $\sigma^2$ for the MLE to perfectly recover the planted matching (making no errors) and to strongly recover the planted matching (making $o(n)$ errors) both for $d$ constant and $d = d(n)$ growing arbitrarily.
    Between these two thresholds, we show that the MLE makes $n^{\delta + o(1)}$ errors for an explicit $\delta \in (0, 1)$.
    These results extend to the geometric setting a recent line of work on recovering matchings planted in random graphs with independently-weighted edges.
    Our proof techniques rely on careful analysis of the combinatorial structure of partial matchings in large, weakly dependent random graphs using the first and second moment methods.

\end{abstract}

\clearpage

\tableofcontents
\thispagestyle{empty}

\clearpage

\section{Introduction}
\pagenumbering{arabic}
\setcounter{page}{1}
%
%

Consider a set of $n$ unlabelled particles $\{\bx_1, \dots, \bx_n\}$ in $\RR^d$ undergoing random motion.
A short time later, the particles are observed at new locations $\{\by_1, \dots, \by_n\}$.
Is it possible to ascertain which particles correspond to which?
This problem---known as \emph{multitarget tracking}---was proposed for theoretical analysis by \cite{CKKVZ-2010-ParticleTrackingBP}, and has a wide range of applications in many scientific contexts where it is useful to infer the trajectories of objects from a succession of still images.

For concreteness, we formalize this question as follows:
fix a dimension $d \in \ZZ_+$, a sample size $n \in \ZZ_+$, and a noise variance $\sigma^2 \in \RR_+$.
We first draw $\bx_1, \dots, \bx_n \sim \sN(\bm 0, \bm I_d)$ independently, then draw noise vectors $\bz_1, \dots, \bz_n \sim \sN(\bm 0, \sigma^2 \bm I_d)$ independently (of one another and the $\bx_i$) and set $\by_i \colonequals \bx_i + \bz_i$.
We then draw a hidden permutation $\pi^{\star} \sim \Unif(S_n)$ and observe the tuple $(\bx_1, \dots, \bx_n, \by_{\pi^{\star}(1)}, \dots, \by_{\pi^{\star}(n)})$.
The goal is to estimate the planted permutation $\pi^{\star}$ from this observation.

While this model is quite natural, rigorously analyzing its statistical and computational properties has proven challenging, chiefly because the pairwise distances $\{\|\bx_i - \by_j\|^2\}_{i, j = 1}^n$ are not independent.
In the interest of identifying a mathematically tractable alternative, \cite{CKKVZ-2010-ParticleTrackingBP} suggested to study a simpler model where independent random variables are substituted for these distances.
Under this simplified model, we observe a matrix $\bW \in \RR^{n \times n}$ where, for a random hidden permutation $\pi^{\star}$, the entries $W_{ij}$ are drawn from a distribution $\sP$ when $\pi^{\star}(i) = j$, and another distribution $\sQ$ otherwise, all independently.

Models of this type have attracted significant recent interest in the computer science and statistics communities, and precise results are now known in a number of different settings \cite{MMX-2019-PlantedMatching,SSZ-2020-SparsePlantedMatching,DWXY-2021-PlantedMatchingInfiniteOrder}.
Despite this progress, however, the original problem of recovering planted \emph{geometric} matchings to our knowledge has not received any attention since its proposal by \cite{CKKVZ-2010-ParticleTrackingBP}.

In this work, we make progress on this original question.
We precisely characterize the performance of a natural recovery procedure based on the linear assignment problem, and establish thresholds on $\sigma^2$ for this procedure to recover the planted matching with various amounts of error.
Our results also suggest new conjectures about the performance of a natural online algorithm for multitarget tracking which has been proposed in the signal processing literature~\cite{PSHBH-2006-MultiObjectTracking,BGORU-2016-OnlineTracking,SA-2016-IterativeHungarianTracking}.
Taken as a whole, our results indicate regimes in which it is possible to recover geometric planted matchings to high accuracy in polynomial time.
%
%

\paragraph{Maximum likelihood estimation}
We will focus on the \emph{maximum likelihood estimator (MLE)} of $\pi^{\star}$ from the observations, which is given by
\begin{align*}
    \what{\pi} \colonequals \argmax_{\pi \in S_n} \, \exp\left(-\frac{1}{2\sigma^2} \sum_{i= 1}^n \|\bx_i -\by_{\pi(i)}\|^2\right) = \argmin_{\pi \in S_n} \sum_{i = 1}^n \|\bx_i - \by_{\pi(i)}\|^2.
\end{align*}
One advantage of this estimator is that it does not depend on the variance $\sigma^2$, which may not be known in practice.
Crucially, despite being given as the solution to an optimization problem over $S_n$, the estimator can be computed in polynomial time, since it is an instance of the \emph{linear assignment problem}.
Solutions may therefore be computed efficiently either by an exact relaxation to a linear program over doubly stochastic matrices, or with specialized combinatorial algorithms such as the Hungarian algorithm~\cite{Kuhn-1955-HungarianAlgorithm,Bertsekas-1990-AuctionAlgorithm}.

We note that though the MLE is a canonical choice of estimator, it is not the only available polynomial-time approach.
Another natural approach is to estimate $\pi^\star$ by greedily matching each point $\bx_i$ to its nearest neighbor.
One can show that this algorithm is competitive with the MLE in some regimes, but is strictly dominated by the MLE when the dimension is large.
We discuss this algorithm and a similar greedy algorithm which seeks to maximize the correlation between $\bx_i$ and its matched point in Appendix~\ref{app:greedy}.

We assess the error incurred by the MLE by counting how many indices of $[n]$ it matches incorrectly.
We define the (random) set of such errors,
\begin{equation}
    \sE = \{i \in [n]: \what{\pi}(i) \neq \pi^{\star}(i)\}.
\end{equation}
We will primarily be concerned with the behavior of the random variable $|\sE|$.
Its law is unchanged by fixing $\pi^{\star}$, so we assume without loss of generality that $\pi^{\star}$ is the identity permutation.
We lastly introduce some standard jargon.
We say $\what{\pi}$ achieves \emph{strong recovery} (of $\pi^{\star}$) if $|\sE| = o(n)$, achieves \emph{perfect recovery} if $|\sE| = 0$, and achieves \emph{near-perfect recovery} or \emph{sublinear error} if $0 < |\sE| \leq o(n)$.
In contrast, we say $\what{\pi}$ makes a \emph{macroscopic} number of errors if $|\sE| = \Omega(n)$.

Most prior work on planted matching problems has focused on establishing when strong recovery is or is not achieved.
We will partly address this question, but we will also study the \emph{polynomial error rate} given by $\frac{\log(1 \vee |\sE|)}{\log n}$.
As we show below in Section~\ref{sec:tracking}, this finer control is valuable in applications to multitarget tracking over time.


\paragraph{Related work}

The limits of recovering planted matchings under independent weights are increasingly well understood.
These models exhibit a phase transition in the recoverability of $\pi^{\star}$, which was conjectured by \cite{CKKVZ-2010-ParticleTrackingBP}, proved in a special case by \cite{MMX-2019-PlantedMatching}, and studied in greater detail and generality by \cite{SSZ-2020-SparsePlantedMatching,DWXY-2021-PlantedMatchingInfiniteOrder}.
The approach of \cite{MMX-2019-PlantedMatching} in particular may be viewed as an extension to the planted setting of an earlier line of work studying optimal matchings under i.i.d.\ weights, the so-called \emph{random assignment model} \cite{MP-1987-RandomAssignment,Aldous-1992-AsymptoticsRandomAssignment,Parisi-1998-RandomAssignment,Aldous-2001-ZetaRandomAssignment}.
Despite the sophistication of these results, their techniques rely heavily on the independence assumption, and many of their conclusions remain conjectural in the geometric matching setting.

More broadly, various problems of estimating combinatorial structures from noisy observations have received much attention in recent years.
As in our case, the models making strong independence assumptions have been the most amenable to analysis; notable examples include the stochastic block model \cite{DKMZ-2011-SBM,Moore-2017-SBMReview,Abbe-2017-SBMReview} and the planted clique model \cite{Jerrum-1992-LargeCliques,AKS-1998-PlantedClique,BHKKMP-2019-PlantedClique}, both of which may be viewed as models of \emph{community detection} in networks.
One of the remarkable phenomena that such models exhibit is the \emph{statistical-to-computational gap}, where in a range of model parameters it is possible to estimate the planted object, but (conjecturally) only with prohibitively costly algorithms (see, e.g., \cite{BPW-2018-GapsNotes}).
There is not yet evidence that planted matching problems ever have such gaps, but it is an interesting open question to determine if this in fact ever occurs.
We note also that the difference between independent planted matching models and our geometric planted matching model is analogous to the difference between the stochastic block model of network community structure and the stochastic ball model \cite{ABCKVW-2015-RelaxRoundClustering,IMPV-2017-CertifiableKMeans} and similar Gaussian mixture models \cite{MVW-2017-ClusteringSubgaussianMixtures,LLLSW-2020-KMeansProximity} analyzed more recently in the community detection literature.

Finally, the question of optimally matching i.i.d.~random points is a classical topic in probability theory and computational geometry~\cite{AST-2019-PDEApproach,Ledoux-2019-OptimalMatching,Ledoux-2018-OptimalMatchingII,Talagrand-2018-NonStandardMatching,CLPS-2014-HypothesisEuclideanMatching,Tal14,AjtKomTus84,ShoYuk91, LeiSho89}.
This line of work studies a natural \emph{null model} counterpart to ours, where all $2n$ points $\bx_1, \dots, \bx_n, \by_1, \dots, \by_n$ are i.i.d.
This model is the geometric analogue of the random assignment problem, and it would be interesting to understand whether the optimal transport techniques developed for analyzing matchings of i.i.d.\ points (such as the PDE approach of \cite{CLPS-2014-HypothesisEuclideanMatching,AST-2019-PDEApproach}) can be imported to the study of geometric planted matching models, in the same way that \cite{MMX-2019-PlantedMatching} imported the techniques of \cite{Aldous-1992-AsymptoticsRandomAssignment,Aldous-2001-ZetaRandomAssignment} related to local weak convergence from the random assignment problem to their independent planted matching model.

\subsection{Notation}
Throughout, we focus on the $n \to \infty$ limit and let $d = d(n)$ and $\sigma^2 = \sigma^2(n)$ scale at various rates with $n$.
The asymptotic symbols $o(\cdot), O(\cdot), \omega(\cdot), \Omega(\cdot), \Theta(\cdot), \ll, \sim,$ and $\gg$ will have their usual meanings with reference to the limit $n \to \infty$, and events which occur with probability $1 - o(1)$ are said to hold ``with high probability."

We also introduce some further notation for the MLE.
We define two \emph{cost matrices} $\bW^{(0)}, \bW \in \RR^{n \times n}$ with entries
\begin{align}
    W_{ij}^{(0)} &\colonequals \|\bx_i - \by_j\|^2, \\
    W_{ij} &\colonequals \langle \bx_i, \by_j \rangle,
\end{align}
and note that, writing $\bP_{\pi}$ for the permutation matrix of a permutation $\pi$, the MLE is equivalently
\begin{equation}\label{eq:W-mle}
    \what{\pi} = \argmin_{\pi \in S_n} \langle \bW^{(0)}, \bP_{\pi} \rangle = \argmax_{\pi \in S_n} \langle \bW, \bm P_{\pi} \rangle,
\end{equation}
since, upon expanding the squared distances, each $\|\bx_i\|^2$ and $\|\by_j\|^2$ occurs exactly once for any $\pi$.

For $a, b \in \RR$, we write $a \vee b$ for the maximum of $a$ and $b$ and $a \wedge b$ for their minimum.
Given $x > 0$, we let $\log_+(x) \colonequals 0 \vee \log(x)$.

\subsection{Main Results}
To state our results, we consider three different regimes: the low-dimensional regime where $d = o(\log n)$, the logarithmic regime where $d = \Theta(\log n)$, and the high-dimensional regime where $d = \omega(\log n)$.
In each, we identify the behavior of $|\sE|$ as a function of $\sigma^2$.
As our proofs make clear, the difference between these regimes is justified by the fact that the quantity
\begin{equation*}
\frac{d}{\log n} \log(1 + \sigma^{-2})
\end{equation*}
plays the role of a signal-to-noise ratio for our problem, which suggests that the correct scaling of~$\sigma$ is $\sigma^2 = \Theta(n^{-\xi/d})$ for some $\xi > 0$ in the low-dimensional regime, $\sigma^2 = \Theta(1)$ in the logarithmic regime, and $\sigma^2 = \Theta(\frac d {\log n})$ in the high-dimensional regime.
Our main results verify these claims.

In the low-dimensional regime, we are able to resolve the thresholds between perfect recovery, strong recovery, and macroscopic error.

\begin{theorem}[Low-dimensional regime]
    \label{thm:d-ll-logn}
    Suppose that $d = o(\log n)$.
    \begin{enumerate}
        \item (Perfect recovery) If $\sigma^2 = o(n^{-4/d})$, then $|\sE| = 0$ with high probability.
        \item (Constant error) If $\sigma^2 = \Theta(n^{-4/d})$, then $\EE|\sE|$ is bounded; in particular $|\sE| \leq f(n)$ for any $f(n) = \omega(1)$.
        \item (Sublinear error) If $n^{-4/d} \ll \sigma^2 \ll n^{-2/d}$, then there exists an absolute constant $c > 0$ such that, for any $f(n) = \omega(1)$,
        \begin{equation}
            \frac{c}{\sqrt{d}}\sigma^d n^2 \leq |\sE| \leq f(n) \sigma^d n^2.
        \end{equation}
            In particular, if $\frac{d}{\log n} \log(1+\sigma^{-2}) \to \xi \in [2, 4]$, then the following convergence in probability holds as $n \to \infty$:
    \begin{equation}
        \frac{\log(1 \vee |\sE|)}{\log n} \to 2 - \frac{\xi}{2}.
    \end{equation}
        \item (Linear or nearly-linear error) If $\sigma^2 \geq an^{-2/d}$ for some $a > 0$, then there exists $c = c(a)$ such that $|\sE| \geq e^{-cd}n$ with high probability.
    \end{enumerate}

\end{theorem}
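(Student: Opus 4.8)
The plan is to reduce all four parts of Theorem~\ref{thm:d-ll-logn} to a single combinatorial picture, assuming (as we may) that $\pi^\star=\id$. Since $\what\pi$ and $\id$ are both perfect matchings of the complete bipartite graph on $[n]\sqcup[n]$, their symmetric difference is a vertex-disjoint union of alternating cycles, and $|\sE|$ is exactly the number of $\bx$-vertices these cycles cover. Because $\what\pi$ minimizes $\sum_i\|\bx_i-\by_{\pi(i)}\|^2$ and flipping any one cycle of the symmetric difference gives another perfect matching, every such cycle $C$, supported on distinct indices $i_1,\dots,i_\ell$ with $\what\pi$ sending $i_j\mapsto i_{j+1}$ cyclically, must satisfy $\sum_j\|\bx_{i_j}-\by_{i_{j+1}}\|^2<\sum_j\|\bx_{i_j}-\by_{i_j}\|^2$ (strict, by continuity of the weights); call such a cyclic tuple an \emph{improving $\ell$-cycle}. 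Hence $|\sE|=\sum_{\ell\ge2}\ell\cdot(\text{number of improving }\ell\text{-cycles of }\what\pi)\le\sum_{\ell\ge2}\ell N_\ell$, where $N_\ell$ counts \emph{all} improving $\ell$-cycles in $[n]$, so $\EE|\sE|\le\sum_\ell\ell\,\EE N_\ell$ by the first moment method. For lower bounds we use the dual elementary fact: if a transposition $(i\,j)$ is improving, i.e.\ $\|\bx_i-\by_j\|^2+\|\bx_j-\by_i\|^2<\|\bx_i-\by_i\|^2+\|\bx_j-\by_j\|^2$, then $\what\pi$ cannot fix both $i$ and $j$ (swapping their images would strictly lower the cost), so $\{i,j\}\cap\sE\ne\emptyset$; thus $|\sE|$ is at least the matching number of the graph $H$ on $[n]$ whose edges are the improving transpositions. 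Everything reduces to (i) bounding $\EE N_\ell$ from above and (ii) exhibiting many vertex-disjoint improving transpositions.

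The key computation is the probability that a fixed directed $\ell$-cycle $(i_1,\dots,i_\ell)$ is improving. Expanding the squared norms and telescoping, one finds this event is exactly $\sum_k\langle\bx_{i_{k-1}}-\bx_{i_k},\,\bz_{i_k}\rangle>\tfrac12 S$ where $S:=\sum_j\|\bx_{i_{j+1}}-\bx_{i_j}\|^2$. Conditioning on all the $\bx$'s, the left side is centered Gaussian with variance $\sigma^2 S$, so the conditional probability equals $\overline\Phi(\sqrt S/(2\sigma))$, with $\overline\Phi$ the standard Gaussian tail. Since $\overline\Phi(t)\le e^{-t^2/2}$ and, per coordinate, $S$ decomposes along the eigenbasis of the cycle-graph Laplacian with eigenvalues $4\sin^2(\pi k/\ell)$, the Gaussian moment generating function gives the clean bound $\PP[(i_1,\dots,i_\ell)\text{ improving}]\le\prod_{k=1}^{\ell-1}\big(1+\sigma^{-2}\sin^2(\pi k/\ell)\big)^{-d/2}$. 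For $\ell=2$ this reads $p:=\PP[\text{a fixed transposition is improving}]\le(1+\sigma^{-2})^{-d/2}=\exp\!\big(-\tfrac{\log n}{2}\cdot\tfrac{d\log(1+\sigma^{-2})}{\log n}\big)$, which is how the signal-to-noise ratio $\tfrac{d}{\log n}\log(1+\sigma^{-2})$ enters; a matching lower bound $p\ge\Theta(\sigma^d/\sqrt d)$ comes from the Gaussian tail lower bound together with a small-ball estimate for $\|\bx_i-\bx_j\|$, so $p=\sigma^d\cdot\Theta(d^{-1/2})$ once $d\sigma^2=o(1)$ (automatic when $d=o(\log n)$, $\sigma^2=o(1)$). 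For $\ell\ge3$ the product is $\sigma^{(\ell-1)d}(2^{\ell-1}/\ell)^d$ up to constants, so $\EE N_\ell\lesssim n^\ell\sigma^{(\ell-1)d}(2^{\ell-1}/\ell)^d/\ell$; comparing with $\EE N_2=\binom n2 p$ and using $n\sigma^d\to0$ throughout the range $\sigma^2\ll n^{-2/d}$ shows $\sum_\ell\ell\,\EE N_\ell$ is dominated by its $\ell=2$ term, so in that range $\EE|\sE|=(1+o(1))\,2\binom n2 p$, which lies between $\Theta(\sigma^d n^2/\sqrt d)$ and $(1+o(1))\,\sigma^d n^2$.

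The four parts then follow by bookkeeping. \emph{Part (1):} $\sigma^2=o(n^{-4/d})$ gives $\sigma^d n^2=o(1)$, hence $\sum_\ell\ell\,\EE N_\ell=o(1)$ and $\PP[\sE\ne\emptyset]\le\sum_\ell\EE N_\ell=o(1)$. \emph{Part (2):} $\sigma^2=\Theta(n^{-4/d})$ gives $\EE|\sE|=\Theta(1/\sqrt d)=O(1)$, and Markov gives $|\sE|\le f(n)$ with high probability for any $f(n)=\omega(1)$. \emph{Part (3), upper bound:} for $n^{-4/d}\ll\sigma^2\ll n^{-2/d}$ we get $\EE|\sE|\le(1+o(1))\sigma^d n^2$, so Markov gives $|\sE|\le f(n)\sigma^d n^2$ with high probability. \emph{Part (3), lower bound:} the events $\{\{i,j\}\text{ improving}\}$ are independent across disjoint index-pairs, so a Chebyshev argument (correlations come only from pairs sharing a vertex, contributing $O(n^3p^2\,d^{O(1)})$ to the variance) gives $|E(H)|=(1+o(1))\binom n2 p$ with high probability, and one checks $\binom n2 p\to\infty$ in this regime; since $np\to0$, $H$ has $o(n^2p)$ vertices of degree $\ge2$, so almost all its edges form a matching and $|\sE|\ge(1-o(1))|E(H)|\ge\tfrac c{\sqrt d}\sigma^d n^2$. \emph{Part (4):} overlay on $\RR^d$ a grid of mesh $\Theta(n^{-1/d})$; Poissonization shows that with high probability $\gtrsim n\,e^{-c_1 d}$ cells contain exactly two of the $\bx_i$ (cells with three or more being a negligible fraction), each such pair lies within $O(\sqrt d\,n^{-1/d})\le O(\sqrt d\,\sigma)$ (using $\sigma^2\ge an^{-2/d}$), hence is improving with probability $\ge\overline\Phi(O(\sqrt{d/a}))=e^{-\Theta_a(d)}$ independently across cells, and such pairs are automatically disjoint, so $|\sE|\ge n\,e^{-c(a)d}$ with high probability. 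Finally, the ``in particular'' convergences hold because $\tfrac d{\log n}\log(1+\sigma^{-2})\to\xi$ is equivalent to $\sigma^d=n^{-\xi/2+o(1)}$, which makes both the upper bound $(1+o(1))\sigma^d n^2$ and the lower bound $\tfrac c{\sqrt d}\sigma^d n^2$ equal to $n^{2-\xi/2+o(1)}$, pinning $\log(1\vee|\sE|)/\log n$ to $2-\xi/2$.

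The main obstacle is obtaining all of this \emph{uniformly} over the full range $d=d(n)=o(\log n)$, where factors polynomial versus exponential in $d$ must be tracked with care: to make part (2)'s ``$\EE|\sE|$ bounded'' (rather than merely $n^{o(1)}$) one needs $p=\Theta(\sigma^d/\sqrt d)$ with an \emph{absolute} constant, and to make $\sum_\ell\ell\,\EE N_\ell$ converge to $(1+o(1))\,2\EE N_2$ one needs the combinatorial factor $n^\ell$ to beat the spectral factor $(2^{\ell-1}/\ell)^d\sigma^{(\ell-1)d}$ simultaneously for every $\ell\ge3$ and every such $d$. The Chebyshev step requires bounding the covariance of two improving transpositions sharing a vertex, which reduces to showing $\PP[\{i,j\}\text{ improving}\mid\bx_i,\bz_i]$ is $\Theta(p)$ for typical $\bx_i,\bz_i$; and part (4) requires the grid-occupancy statistics with their $d$-dependence. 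None of these is conceptually deep, but the analysis of the small-ball and cycle probabilities and the attendant bookkeeping is the crux of the argument.
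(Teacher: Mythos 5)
Your proposal is essentially correct and covers all four parts, but it takes a genuinely different route from the paper for the lower bounds, and it contains one quantitative error worth flagging.

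For Parts~1, 2, and the upper bound of Part~3 you use the same first-moment method over augmenting cycles as the paper. Your direct estimate $\prod_{k=1}^{\ell-1}\bigl(1+\sigma^{-2}\sin^2(\pi k/\ell)\bigr)^{-1}\le\sigma^{2(\ell-1)}(2^{\ell-1}/\ell)^2$ (via $\prod_k\sin(\pi k/\ell)=\ell/2^{\ell-1}$) plays the role of the paper's Riemann-sum convexity machinery (Lemma~\ref{lem:S-lower}, Corollary~\ref{cor:riemann-sum-cvx}), and both suffice to show the $\ell=2$ term dominates when $n\sigma^d\to0$. One small slip: you write ``$\EE|\sE|=(1+o(1))\,2\binom n2 p$,'' but the first moment over all augmenting cycles only gives an upper bound on $\EE|\sE|$; this does not affect the upper-bound argument, which is all you use it for.

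The lower bounds are where you diverge. For Part~3 the paper uses a concentration-enhanced second moment method (Lemma~\ref{lem:frieze}, Azuma--Hoeffding, and a delicate combinatorial analysis of overlapping matchings in Lemma~\ref{lem:second-moment-b}), whereas you use the much simpler route of Chebyshev on $|E(G^{\aug})|$ followed by extracting a matching from the isolated edges. This simpler route is valid precisely in the regime $\sigma^2\ll n^{-2/d}$ where the average degree $np\to0$, and the paper's Remark~\ref{rem:} (after Proposition~\ref{prop:pair-aug-lb}) explicitly acknowledges that greedy/degree-based arguments suffice there; your approach saves a lot of machinery. For Part~4 you propose a geometric Poissonization argument on a grid of mesh $\Theta(n^{-1/d})$, entirely different from the paper's block-partitioned second moment. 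This is a plausible and more elementary strategy, but it requires nontrivial work you have not spelled out: $d$-dependent occupancy estimates for the Gaussian point cloud, de-Poissonization, and concentration of the count of 2-occupied cells.

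The concrete error: you claim the covariance contributions from transpositions sharing a vertex are $O(n^3p^2 d^{O(1)})$. This is wrong. By Proposition~\ref{prop:aug-per-edge-path-cycle}, $\PP[P_3\subseteq G^{\aug}]\le\exp(-\tfrac d2 S(\sigma^2,3))$, and comparing with $p^2\asymp\tfrac{1+\sigma^2}{d}(1+\sigma^{-2})^{-d}$ (using Proposition~\ref{prop:pair-aug-lb}) gives
\[
\frac{\exp(-\tfrac d2 S(\sigma^2,3))}{p^2}\;\asymp\; d\left(\frac{1+\sigma^{-2}}{1+\tfrac34\sigma^{-2}}\right)^{\!d} \;\longrightarrow\; d\,(4/3)^d \quad\text{as }\sigma^2\to0,
\]
which is exponential, not polynomial, in $d$. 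Your conclusion still goes through only because $d(4/3)^d=n^{o(1)}$ when $d=o(\log n)$, so the variance-to-mean-squared ratio $\tfrac{d(4/3)^d}{n}$ still vanishes; but you should state the correct bound, because the same exponential-in-$d$ factor propagates into your estimate of the number of 2-paths (i.e., $\EE P\lesssim n^3 p^2\,d(4/3)^d$, so $\EE P/\EE|E(H)|\lesssim np\cdot d(4/3)^d$, and verifying that this is $o(1)$ over the whole range $d=o(\log n)$, $\sigma^2\ll n^{-2/d}$ needs the sharper observation that $np\le (\sigma^2 n^{2/d})^{d/2}$ with $\sigma^2 n^{2/d}\to0$).
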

\noindent
Note that when $\sigma^2 = \Omega(n^{-2/d})$ and $d$ is a constant not depending on $n$, Theorem~\ref{thm:d-ll-logn} implies that $|\sE| = \Omega(n)$ with high probability; this is the only regime where we are able to show that the MLE actually incurs macroscopic error.
When $1 \ll d \ll \log n$ with the same scaling of $\sigma^2$, we find the nearly macroscopic $|\sE| = \Omega(n^{1 - o(1)})$.

In the logarithmic regime we obtain similar results, except that the range of $\sigma^2$ yielding sublinear errors appears to end at a point when $|\sE| = \Theta(n^{\delta})$ for some $\delta < 1$.
In fact, in Conjecture~\ref{conj:non-recovery} below we predict the existence of a discontinuity in the limiting value of $\frac{\log(1 \vee |\sE|)}{\log n}$, where the error rate jumps sharply from $|\sE| = \Theta(n^\delta)$ to $|\sE| = \Omega(n)$.
\begin{theorem}[Logarithmic regime]
    \label{thm:d-theta-logn}
    Suppose that $d \sim a \log n$ for some $a > 0$, and that $\sigma^2$ is constant not depending on $n$.
    \begin{enumerate}
        \item (Perfect recovery) If
        \begin{equation}
            \sigma^2 < \frac{1}{e^{4/a} - 1},
        \end{equation}
        then $|\sE| = 0$ with high probability.
        \item (Sublinear error) If
        \begin{equation}\label{eq:sublinear_width}
            \frac{1}{e^{4/a} - 1} \leq \sigma^2 < \frac{1}{(2e^{1/a} - 1)^2 - 1},
        \end{equation}
        then the following convergence in probability holds:
        \begin{equation}\label{eq:log_convergence}
            \frac{\log(1 \vee |\sE|)}{\log n} \to 2 - \frac{a}{2}\log(1 + \sigma^{-2}).
        \end{equation}
    \end{enumerate}
\end{theorem}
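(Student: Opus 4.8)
The plan is to route both parts through the combinatorics of \emph{bad cycles}. Since $\what{\pi}$ solves a linear assignment problem and we may take $\pi^\star=\id$, whenever $\what{\pi}(i)\neq i$ the index $i$ lies on a cycle $C=(v_1,\dots,v_k)$ of $\what{\pi}$ with $k\geq2$, and rerouting $C$ back to the identity cannot decrease the cost, so $\sum_t\|\bx_{v_t}-\by_{v_{t+1}}\|^2\le\sum_t\|\bx_{v_t}-\by_{v_t}\|^2$; call such a cyclically ordered tuple a \emph{bad cycle}. Hence $\EE|\sE|\le\sum_{k\ge2}n^k p_k$, where $p_k$ is the probability a fixed $k$-cycle is bad. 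For the matching lower bound I will use a converse: if $\{i,j\}$ is a bad transposition and neither $i$ nor $j$ lies on any \emph{other} bad cycle, then $\what{\pi}(i)=j$; indeed, if $\what{\pi}$ fixed both $i$ and $j$ then transposing them would strictly decrease the cost (a.s.\ no ties), while if, say, $i$ is not fixed then $i$ lies on a cycle of $\what{\pi}$ of length $\geq2$, which is necessarily a bad cycle through $i$ and hence, by hypothesis, equals $\{i,j\}$. So every such \emph{isolated} bad transposition produces two errors.

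The key estimate is $p_k$. Writing $\by_i=\bx_i+\bz_i$ and expanding, the bad-cycle inequality becomes $D_C\le2\sum_t\langle\bx_{v_t}-\bx_{v_{t+1}},\bz_{v_{t+1}}\rangle$ with $D_C:=\sum_t\|\bx_{v_t}-\bx_{v_{t+1}}\|^2$. Conditionally on the $\bx_{v_t}$ the right side is $\sN(0,\sigma^2 D_C)$, so $p_k=\EE\big[\,\overline{\Phi}\big(\sqrt{D_C}/(2\sigma)\big)\big]\le\EE\big[e^{-D_C/(8\sigma^2)}\big]$, where $\overline{\Phi}(t)=\PP(\sN(0,1)>t)$. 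As $D_C$ is a Gaussian quadratic form governed by the Laplacian of the $k$-cycle, whose nonzero eigenvalues are $4\sin^2(\pi\ell/k)$, this gives $p_k\le\prod_{\ell=1}^{k-1}\big(1+\sigma^{-2}\sin^2(\pi\ell/k)\big)^{-d/2}$, with a matching lower bound when $k=2$, so $p_2=n^{-\frac a2\log(1+\sigma^{-2})+o(1)}$.

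For the perfect-recovery claim I bound $\sum_{k\ge2}n^kp_k$. The $k=2$ term equals $n^{2-\frac a2\log(1+\sigma^{-2})+o(1)}$, which tends to $0$ exactly when $\sigma^2<1/(e^{4/a}-1)$. For the tail I would use $\tfrac1\pi\int_0^\pi\log(1+\sigma^{-2}\sin^2\theta)\,d\theta=2\log\tfrac{1+\sqrt{1+\sigma^{-2}}}{2}$ to write $\sum_{\ell=1}^{k-1}\log(1+\sigma^{-2}\sin^2(\pi\ell/k))=2k\log\tfrac{1+\sqrt{1+\sigma^{-2}}}{2}+O(\log k)$, whence $n^kp_k\le n^{\,k(1-a\log\frac{1+\sqrt{1+\sigma^{-2}}}{2})+o(k)}$ decays geometrically provided $\sigma^2<1/((2e^{1/a}-1)^2-1)$; since $1/(e^{4/a}-1)<1/((2e^{1/a}-1)^2-1)$ (equivalently $(e^{1/a}-1)^2>0$) this holds on the whole perfect-recovery range, the finitely many intermediate $k$ are checked individually, and $\EE|\sE|\to0$ gives $|\sE|=0$ with high probability. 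The same exponent comparison shows that throughout the window \eqref{eq:sublinear_width} the $k=2$ term dominates $\sum_kn^kp_k$---for $k=3$ it reduces to $a\log\tfrac{1+\frac34\sigma^{-2}}{\sqrt{1+\sigma^{-2}}}\ge1$, again $(e^{1/a}-1)^2\ge0$ at the right endpoint---so $\EE|\sE|=n^{2-\frac a2\log(1+\sigma^{-2})+o(1)}$, and Markov gives the upper half of \eqref{eq:log_convergence}.

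For the lower half---needed only where $\delta:=2-\frac a2\log(1+\sigma^{-2})>0$---I would run a second moment on $N:=\#\{\text{bad transpositions}\}$, which has $\EE N=n^{\delta+o(1)}\to\infty$. Disjoint transpositions are independent and contribute $p_2^2$; a pair sharing a single vertex $j$ contributes, after integrating out all noise, a quantity of order $p_2^2\,\big(1+\tfrac1{(2\sigma^2+1)(2\sigma^2+3)}\big)^{d/2}$, and one then verifies $\mathrm{Var}(N)\le\EE N+O(n^3)\,p_2^2\,n^{o(1)}\big(1+\tfrac1{(2\sigma^2+1)(2\sigma^2+3)}\big)^{d/2}=o\big((\EE N)^2\big)$ throughout \eqref{eq:sublinear_width}. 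Paley--Zygmund then gives $N\ge\tfrac12\EE N$ with high probability; a first-moment bound (the expected number of indices on two or more bad cycles is $o(\EE N)$ in this regime) shows all but an $o(1)$ fraction of these transpositions are isolated, and the structural fact turns them into errors, yielding $|\sE|\ge n^{\delta-o(1)}$ with high probability. The crux---and the step I expect to be the main obstacle---is this second-moment estimate: ``bad'' is a rare Gaussian large-deviation event, so the conditional probability of a second bad transposition through a shared vertex is inflated, and one must compute the inflation exactly (tracking the shared noise vector $\bz_j$ and replacing the crude tail bound $\overline{\Phi}(t)\le e^{-t^2/2}$ by two-sided estimates) and then show it stays below $n$ across the entire window \eqref{eq:sublinear_width}; this is the weakly-dependent geometric analogue of the second-moment computations in the independent-weight planted matching literature.
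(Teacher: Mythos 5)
Your first-moment machinery --- expanding the augmenting-cycle condition, identifying the Gaussian quadratic form with the cycle Laplacian, obtaining $p_k \leq \prod_{\ell=1}^{k-1}(1+\sigma^{-2}\sin^2(\pi\ell/k))^{-d/2}$, summing against $n^k$, and using the closed form $\int_0^1\log(1+\sigma^{-2}\sin^2\pi\theta)\,d\theta = 2\log\frac{1+\sqrt{1+\sigma^{-2}}}{2}$ to control the geometric tail --- is exactly the paper's route to Part~1 and the upper half of Part~2, down to the same thresholds. (One small caution: to get geometric decay \emph{uniformly} in $k$ you need the $O(\log k)$ correction to be controlled by a genuinely linear lower bound $S(\sigma^2,k) \geq S(\sigma^2,k_0) + (k-k_0)I(\sigma^2)$, which the paper obtains from a discrete-concavity lemma for the Riemann sums; ``checking the finitely many intermediate $k$'' does not suffice because there is no absolute constant cap on how many $k$ are ``intermediate''.)

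The lower bound is where you genuinely diverge. You propose a direct Paley--Zygmund argument on $N = \#\{\text{augmenting transpositions}\}$, together with an ``isolated transposition'' conversion to errors. The paper does something structurally different: it passes to $M$, the maximum number of vertex-disjoint augmenting 2-cycles (which satisfies the clean deterministic bound $|\sE| \geq M$ without any isolation hypothesis), restricts further to an ``$r$-good'' submatching count $M^{(r)}$, and then runs a \emph{concentration-enhanced} second moment: a martingale/Azuma inequality showing $M^{(r)}$ has Gaussian concentration on scale $\sqrt{n'}$, combined with a second-moment estimate that only yields the weak bound $\PP[M^{(r)}\geq m] \geq e^{-\beta m}$. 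These two ingredients are then glued together to produce a high-probability statement. The reason the paper does not attempt what you propose is precisely that the bare second-moment ratio for matching counts is $e^{\Theta(m)}$, not $1+o(1)$; the authors circumvent that, rather than proving small variance.

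Your route is plausible but rests on two unverified steps, both of which you flag and neither of which is routine. First, the shared-vertex second-moment bound: you assert the inflation factor $\bigl(1+\frac{1}{(2\sigma^2+1)(2\sigma^2+3)}\bigr)^{d/2}$ but do not derive it; the paper's cruder $P_3$-subgraph bound would give instead a rate governed by $\eta_2 = S(\sigma^2,2)-\frac12 S(\sigma^2,3)$, and one must actually verify $a\eta_2<1$ throughout \eqref{eq:sublinear_width} (this does hold --- at the right endpoint it reduces to $(e^{1/a}-1)^2(3e^{1/a}-1)>0$ --- but it is a computation, not an observation). Second, and more seriously, the ``isolated'' conversion is not a first-moment triviality: isolation requires that neither endpoint of the transposition lies on \emph{any} augmenting cycle of \emph{any} length, so you must bound the expected number of (2-cycle, $k$-cycle) pairs sharing a vertex and both augmenting for every $k\geq 2$, which pulls in the general subgraph-probability bound (Proposition~\ref{prop:aug-per-edge-general}) applied to a one-parameter family of graphs, with rates to be checked against the window. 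The paper's $|\sE|\geq M$ reduction sidesteps this entirely; if you persist with the Paley--Zygmund route you should either prove this family of estimates or replace the isolation argument with a maximum-degree bound on $G^{\aug}$ (noting that a greedy matching gives $M \geq N/(2\Delta_{\max})$), which runs into the same need for probability bounds on star subgraphs.
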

\noindent
The quantity on the right side of~\eqref{eq:log_convergence} equals zero at the lower limit $\sigma^2 = \frac{1}{e^{4/a} - 1}$, and equals $2 - a\log(2e^{1/a} - 1) \in (0, 1)$ at the upper limit $\sigma^2 = \frac{1}{(2e^{1/a} - 1)^2 - 1}$ for any $a > 0$.
As $a \to \infty$, the width of the sublinear error regime given in~\eqref{eq:sublinear_width} is $\frac{1}{(2e^{1/a} - 1)^2 - 1} - \frac{1}{e^{4/a} - 1} = \frac{1}{8} + o(1)$, so this is indeed a non-trivial range of $\sigma^2$ on the critical scale $\sigma^2 = \Theta(1)$.

Next, we treat the remaining high-dimensional regime.
Here our results only describe perfect recovery; however, Conjecture~\ref{conj:non-recovery} will again predict that on the scale of $\sigma^2$ indicated below, greater noise results in macroscopic error.
\begin{theorem}[High-dimensional regime]
    \label{thm:d-omega-logn}
    Suppose that $d = \omega(\log n)$.
    If for some $\epsilon > 0$
    \begin{equation}
        \sigma^2 \leq \left(\frac{1}{4} - \epsilon\right)\frac{d}{\log n},
    \end{equation}
    then $|\sE| = 0$ with high probability.
\end{theorem}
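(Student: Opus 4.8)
The plan is to establish perfect recovery by a first-moment (union bound) argument over cycles. Since $\pi^{\star}$ is taken to be the identity permutation, the event $|\sE| = 0$ is the event $\what\pi = \mathrm{id}$, so it suffices to show $\PP[\what\pi \neq \mathrm{id}] \to 0$. Using the characterization $\what\pi = \argmin_{\pi} \langle \bW^{(0)}, \bP_\pi\rangle$, if $\what\pi \neq \mathrm{id}$ then some $\pi \neq \mathrm{id}$ satisfies $\langle \bW^{(0)}, \bP_\pi\rangle \le \langle \bW^{(0)}, \bP_{\mathrm{id}}\rangle$; decomposing $\pi$ into cycles and writing this cost difference as a sum of per-cycle contributions, at least one nontrivial cycle $C = (v_1 \to v_2 \to \cdots \to v_k \to v_1)$ must satisfy
\[
\Delta_C \colonequals \sum_{j=1}^{k}\big(\|\bx_{v_j} - \by_{v_{j+1}}\|^2 - \|\bx_{v_j} - \by_{v_j}\|^2\big) \le 0
\]
(indices taken modulo $k$). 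There are $N_k \colonequals \binom{n}{k}(k-1)! \le n^k/k$ cycles of length $k$ on $[n]$, and by exchangeability $\PP[\Delta_C \le 0]$ depends only on $k$; hence $\PP[\what\pi \neq \mathrm{id}] \le \sum_{k=2}^{n} N_k \, \PP[\Delta_C \le 0]$, and it remains to control $\PP[\Delta_C \le 0]$ for a single $k$-cycle.

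To estimate this probability I would condition on $(\bx_{v_1}, \dots, \bx_{v_k})$. Substituting $\by_{v_j} = \bx_{v_j} + \bz_{v_j}$ and expanding, the terms $\|\bz_{v_j}\|^2$ cancel cyclically and one is left with
\[
\Delta_C = S - 2\sum_{j=1}^{k}\langle \bx_{v_j} - \bx_{v_{j+1}},\, \bz_{v_{j+1}}\rangle, \qquad S \colonequals \sum_{j=1}^{k}\|\bx_{v_j} - \bx_{v_{j+1}}\|^2 .
\]
Since the vertices of $C$ are distinct, each $\bz_{v_j}$ appears in exactly one of these summands, so conditionally on $(\bx_{v_j})_j$ the subtracted term is a centered Gaussian with variance $4\sigma^2 S$. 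The bound $\PP[Z \ge t] \le e^{-t^2/2}$ for $Z \sim \sN(0,1)$ then gives $\PP[\Delta_C \le 0 \mid (\bx_{v_j})_j] \le \exp(-S/(8\sigma^2))$, and hence $\PP[\Delta_C \le 0] \le \EE\, \exp(-S/(8\sigma^2))$.

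Next I would take the expectation over $S$, using that $S$ is a weighted sum of independent $d$-dimensional chi-squared random variables whose weights are the nonzero Laplacian eigenvalues of the $k$-cycle graph, all lying in $(0,4]$, so that $\EE S = 2kd$; a Chernoff bound then yields $\PP[S \le (1-\eta)2kd] \le e^{-c\eta^2 kd}$ for an absolute constant $c > 0$ and every $\eta \in (0,1)$. Splitting the expectation of $\exp(-S/(8\sigma^2))$ according to whether $S \ge (1-\eta)2kd$,
\[
\PP[\Delta_C \le 0] \le e^{-(1-\eta)kd/(4\sigma^2)} + e^{-c\eta^2 kd}.
\]
At this point the hypotheses enter. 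We may assume $\epsilon \in (0,\tfrac14)$, and then $\sigma^2 \le (\tfrac14 - \epsilon)\tfrac{d}{\log n}$ gives $\tfrac{1}{4\sigma^2} \ge \tfrac{\log n}{(1-4\epsilon)d}$, so the first term is at most $n^{-(1-\eta)k/(1-4\epsilon)}$; choosing $\eta \colonequals 2\epsilon$ makes this exponent equal to $(1-2\epsilon)/(1-4\epsilon) = 1 + \delta$ with $\delta = \delta(\epsilon) > 0$. Since $d = \omega(\log n)$, for all large $n$ we have $e^{-c\eta^2 kd} \le e^{-2k\log n} = n^{-2k}$. Altogether $\PP[\Delta_C \le 0] \le n^{-(1+\delta)k} + n^{-2k}$ for $n$ large.

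Plugging this into the union bound gives
\[
\PP[\what\pi \neq \mathrm{id}] \le \sum_{k \ge 2}\frac{n^k}{k}\big(n^{-(1+\delta)k} + n^{-2k}\big) \le \sum_{k\ge 2}\big(n^{-\delta k} + n^{-k}\big),
\]
which tends to $0$ as $n \to \infty$, so $|\sE| = 0$ with high probability. The proof is a pure first-moment calculation; the only real care lies in this final balancing. The exponent $(1-\eta)/(1-4\epsilon)$ governing the Gaussian tail of the noise must be pushed strictly above $1$, which is exactly where the constant $\tfrac14$ enters --- equivalently, where the ``signal-to-noise ratio'' $\tfrac{d}{\log n}\log(1+\sigma^{-2})$ must exceed $4$ --- while the assumption $d = \omega(\log n)$ is what makes the concentration of $S$ super-polynomially strong, so that the $\approx n^k$ choices of $k$-cycle are absorbed for free. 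Because of this slack, neither the exact value of $c$ nor the precise law of $S$ matters; in particular I would not bother with the (available) closed-form evaluation of $\EE\,\exp(-S/(8\sigma^2))$ as a product over the eigenvalues of the cycle Laplacian.
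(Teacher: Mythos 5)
Your proof is correct, and it reaches the conclusion by a genuinely different route from the paper's. Both arguments begin the same way: a union bound over augmenting cycles, conditioning on the $\bx$'s, and the Chernoff bound $\PP[\Delta_C \le 0 \mid (\bx_{v_j})_j] \le \exp(-S/(8\sigma^2))$ with $S = \bx^\top(\bL^{C_k}\otimes\bm I_d)\bx$. The paper then evaluates $\EE\exp(-S/(8\sigma^2))$ \emph{exactly} as $\det(\bm I + (4\sigma^2)^{-1}\bL^{C_k})^{-d/2} = \exp(-\tfrac{d}{2}S(\sigma^2,k))$ and spends the rest of the proof lower-bounding the Riemann sums $S(\sigma^2,k)$ via the Lucas-polynomial/discrete-concavity machinery (Lemmas~\ref{lem:S-lower}, \ref{lem:S-upper}, Corollary~\ref{cor:riemann-sum-cvx}), a single argument that simultaneously yields the perfect-recovery parts of all three dimension regimes. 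You instead sidestep the exact evaluation entirely: you use concentration of the weighted chi-squared $S$ about its mean $\EE S = 2kd$ (your Laurent--Massart-type tail bound is correct, since the eigenvalues of $\bL^{C_k}$ lie in $[0,4]$ and sum to $2k$) and split the expectation accordingly, observing that the hypothesis $d = \omega(\log n)$ makes the deviation probability $e^{-c\eta^2 kd}$ super-polynomially small in $n$, so the $\approx n^k$ cycle choices are absorbed for free while the ``typical'' term contributes $n^{-(1-\eta)k/(1-4\epsilon)}$, with exponent pushed strictly above $1$ once $\eta = 2\epsilon < 1 - 4\epsilon$. What you buy is a shorter, more elementary argument that avoids the Riemann-sum analysis and the Lucas polynomial lemmas altogether; what you give up is that this route is tied to $d = \omega(\log n)$ (the concentration step fails to beat the $n^k$ entropy when $d = O(\log n)$), so it cannot be reused for Theorems~\ref{thm:d-ll-logn} and \ref{thm:d-theta-logn} as the paper's unified lemma can. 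A minor bookkeeping note: your bound is phrased as $\PP[\what\pi \neq \mathrm{id}] \le \sum_k N_k \PP[\Delta_C \le 0]$ whereas the paper bounds $\EE|\sE|$ (which carries an extra factor of $k$); both are fine here since the geometric decay easily absorbs it.
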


Finally, we state a supplementary conjecture, which we will discuss in greater detail in Section~\ref{sec:techniques}, where we show how it is suggested by the first moment combinatorics of augmenting cycles.
If true, this conjecture would complete the high-level picture described by our results, in each regime of $d$ showing that for the remaining $\sigma^2$ not covered by our results, the MLE makes a macroscopic number of errors.
\begin{conjecture}
    \label{conj:non-recovery}
    Suppose that any of the following conditions holds:
    \begin{enumerate}
        \item $1 \ll d \ll \log n$ and, for some $\epsilon > 0$, $\sigma^2 \geq n^{-(2 - \epsilon)/d}$.
        \item $d \sim a\log n$ and, for some $\epsilon > 0$, $\sigma^2 \geq \frac{1}{(2e^{1/a} - 1)^2 - 1} + \epsilon$.
        \item $d = \omega(\log n)$ and, for some $\epsilon > 0$, $\sigma^2 \geq (\frac{1}{4} + \epsilon)\frac{d}{\log n}$.
    \end{enumerate}
    Then, for some $c = c(\epsilon) > 0$, $|\sE| \geq cn$ with high probability.
\end{conjecture}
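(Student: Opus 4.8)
We sketch the approach by which we expect Conjecture~\ref{conj:non-recovery} can be established, isolating the step that presents the main difficulty.

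Take $\pi^\star=\id$, and for $k\geq 2$ call a pair $(S,\tau)$ — with $S\subseteq[n]$, $|S|=k$, and $\tau$ a $k$-cycle on $S$ — an \emph{augmenting $2k$-cycle} if replacing the planted matching on $S$ by $\tau$ lowers the cost, i.e.\ $\sum_{i\in S}\|\bx_i-\by_{\tau(i)}\|^2<\sum_{i\in S}\|\bx_i-\by_i\|^2$, which, writing $\bv_i\colonequals\bx_{\tau^{-1}(i)}-\bx_i$ and expanding, is the event $\sum_{i\in S}\big(2\langle\bv_i,\bz_i\rangle-\|\bv_i\|^2\big)>0$. Two such events are independent whenever their vertex sets are disjoint, since each depends only on $(\bx_i,\bz_i)_{i\in S}$; in general their dependence is carried entirely by shared vertices, exactly as for the cycle structure in the independent-weight planted matching models of \cite{SSZ-2020-SparsePlantedMatching,DWXY-2021-PlantedMatchingInfiniteOrder}. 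The point of these objects is a deterministic reduction: since $\what\pi$ maps $A\colonequals\{i:\what\pi(i)=i\}$ bijectively onto itself, $\id_A$ must be the cost-minimizing assignment of $\{\bx_i\}_{i\in A}$ to $\{\by_j\}_{j\in A}$, so no augmenting cycle can be contained in $A$; hence every augmenting cycle meets $\sE$, and consequently $|\sE|\geq\nu$, where $\nu$ is the largest number of pairwise vertex-disjoint augmenting cycles. It therefore suffices to prove $\nu=\Omega(n)$ with high probability under each hypothesis of the conjecture.

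The first step is a first-moment computation for the number $Z_k$ of augmenting $2k$-cycles at a fixed constant length $k$. Conditioning on the $\bx_i$ and integrating out the Gaussian noise gives the exact identity
\begin{equation*}
  \Px\big[(S,\tau)\text{ augmenting}\big]=\Ex\left[\bar\Phi\left(\frac{1}{2\sigma}\Big(\sum_{i\in S}\|\bv_i\|^2\Big)^{1/2}\right)\right],
\end{equation*}
and a short Gaussian computation evaluates the remaining expectation over the jointly Gaussian vectors $\bv_i$ — whose per-coordinate covariance is the graph Laplacian of the $k$-cycle, with eigenvalues $4\sin^2(\pi\ell/k)$, $0\leq\ell\leq k-1$ — as $\prod_{\ell=1}^{k-1}\big(1+\sigma^{-2}\sin^2(\pi\ell/k)\big)^{-d/2}$ up to a factor $n^{o(1)}$. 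Since there are $\Theta(n^k)$ choices of $(S,\tau)$,
\begin{equation*}
  \Ex Z_k=n^{\,k-\frac{d}{2\log n}\sum_{\ell=1}^{k-1}\log(1+\sigma^{-2}\sin^2(\pi\ell/k))+o(1)}.
\end{equation*}
As $k\to\infty$ the normalized sum converges to $\int_0^1\log(1+\sigma^{-2}\sin^2(\pi t))\,dt=2\log\frac{\sqrt{1+\sigma^{-2}}+1}{2}$, so $\Ex Z_k\to\infty$ — indeed $\Ex Z_k\geq n^{1+\delta}$ for some constant $\delta>0$ and a sufficiently large constant $k$ — exactly when $\frac{d}{\log n}\log\frac{\sqrt{1+\sigma^{-2}}+1}{2}<1$. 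This last inequality, carrying the $\epsilon$ of slack present in the conjecture, is equivalent in the three regimes to, respectively, $\sigma^2\geq n^{-(2-\epsilon)/d}$, $\sigma^2\geq\frac{1}{(2e^{1/a}-1)^2-1}+\epsilon$, and $\sigma^2\geq(\frac14+\epsilon)\frac{d}{\log n}$ — precisely the three hypotheses.

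It remains to upgrade ``$\Ex Z_k$ is polynomially large'' to ``$\nu=\Omega(n)$.'' The natural plan is: (i) prove the matching second-moment bound $\mathrm{Var}(Z_k)=o((\Ex Z_k)^2)$, organizing $\Ex Z_k^2$ by the number $j=|S\cap S'|$ of shared vertices between two augmenting $2k$-cycles — the $j=0$ terms give $(\Ex Z_k)^2$ by independence, the $j=k$ terms give $O_k(\Ex Z_k)$, and one must show the intermediate terms are lower order — so that $Z_k=\Theta(\Ex Z_k)$ with high probability; then (ii) extract $\Omega(n)$ vertex-disjoint augmenting cycles by a greedy packing. Step (i) is the main obstacle. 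Unlike in the independent-weight models, the relevant two-cycle overlap probability $q_j=\Px[(S,\tau)\text{ and }(S',\tau')\text{ both augment}\mid|S\cap S'|=j]$ does not factorize cleanly, because the differences $\bv_i=\bx_{\tau^{-1}(i)}-\bx_i$ couple each vertex to its cyclic predecessor, so a vertex on the boundary of the shared set appears in a term mixing shared and private randomness; moreover the crude bound $\sup_w\Px[(S,\tau)\text{ augmenting}\mid\text{data on }j\text{ vertices}=w]$ is useless (it can be close to $1$, by conditioning on atypically large noise). Controlling $q_j$ then requires a genuine geometric large-deviation estimate for the probability that a cycle augments conditionally on partial data, taking into account the chi-squared tails of the squared distances $\|\bx_i-\by_j\|^2$; it is plausible this is best handled by first conditioning on the Gram matrix of $(\bx_1,\dots,\bx_n)$. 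Step (ii) also needs care — a naive deletion of vertices lying in too many augmenting cycles can destroy all of them — so one must argue, via a further second-moment estimate or by adapting the combinatorial arguments of \cite{SSZ-2020-SparsePlantedMatching,DWXY-2021-PlantedMatchingInfiniteOrder}, that the augmenting cycles are spread over a linear number of vertices, or, alternatively, that $\Omega(n)$ individual vertices $v$ lie in an augmenting cycle isolated enough to force $v\in\sE$. Carrying this out uniformly across the low-dimensional, logarithmic, and high-dimensional regimes is the remaining work.
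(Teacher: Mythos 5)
This statement is Conjecture~\ref{conj:non-recovery}, which the paper leaves \emph{open}: the authors offer only a heuristic first-moment justification in Section~\ref{sec:techniques} and list proving it as the first of their open problems. You correctly recognize this and submit a sketch rather than a claimed proof, so there is no ``paper proof'' to compare against.

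Your sketch does track the paper's reasoning closely and accurately. The deterministic reduction (every augmenting cycle must meet $\sE$, so $|\sE|\geq\nu$ for $\nu$ the largest collection of pairwise vertex-disjoint augmenting cycles) is exactly the paper's $|\sE|\geq M$ generalized from $2$-cycles to arbitrary cycle lengths, which is the generalization actually needed here since the first-moment curve is driven by long cycles in this regime. Your conditional-Gaussian computation reproducing $\prod_{\ell=1}^{k-1}(1+\sigma^{-2}\sin^2(\pi\ell/k))^{-d/2}$ matches Proposition~\ref{prop:cycle-aug-prob}, and the threshold $\frac{d}{2\log n}\,I(\sigma^2)<1$, translated into the three hypotheses via $I(\sigma^2)=2\log\frac{1+\sqrt{1+\sigma^{-2}}}{2}$, is exactly the transition discussed after Corollary~\ref{cor:first-moment} and illustrated in Figure~\ref{fig:cycle-counts}. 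Your diagnosis of the obstacle --- a second-moment bound controlling the overlap probabilities for two long augmenting cycles sharing $0<j<k$ vertices, followed by a disjoint-packing argument --- matches the paper's own caveat that contributions from long $t$-cycles ``our techniques here do not handle''; the concentration-enhanced second moment of Section~\ref{sec:second-moment} was built precisely to execute this plan for $2$-cycles, and the genuine question is whether it extends. One minor caveat worth flagging in your sketch: the claim that the augmenting-cycle probability equals the determinant formula ``up to $n^{o(1)}$'' is only established in the paper for $k=2$ (Proposition~\ref{prop:pair-aug-lb}); for general constant $k$ a matching Mills-ratio lower bound is needed, which is plausibly routine but not done. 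As you say, the conjecture remains unproved.
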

\noindent
If true, Conjecture~\ref{conj:non-recovery} together with Theorem~\ref{thm:d-theta-logn} would surprisingly imply a discontinuity in the value of $\frac{\log(1 \vee |\sE|)}{\log n}$ as a function of $\sigma^2$ when $d = a\log n$ at $\sigma^2 = \frac{1}{(2e^{1/a} - 1)^2 - 1}$: from the left this quantity would tend to a limit $2 - a \log(2e^{1/a} - 1)$ strictly smaller than 1, while from the right it would equal 1.
As $a \to 0$, the size of this jump would shrink, recovering in the limit the continuous behavior of the $d \ll \log(n)$ case.
We illustrate these error curves and the predicted jump in Figure~\ref{fig:error-numerics}; see also Section~\ref{sec:techniques} for discussion of theoretical evidence for this prediction.

\subsection{Stylized Application: Online Tracking of Brownian Motions}
\label{sec:tracking}
As an application of our results, we consider a stylized motion tracking model, similar to the one proposed by \cite{CKKVZ-2010-ParticleTrackingBP}. 
Suppose that $\bx_1(t), \dots, \bx_n(t) \in \RR^d$ are independent standard Brownian motions in dimension $d = O(1)$, started from $\bx_i(0)$ independent standard Gaussian vectors.
We view these Brownian motions as the evolution of indistinguishable particles, whose motion we would like to track over time: for some fixed $\delta > 0$, we observe this collection of particles (but not their labels) at times $t = k \delta$ for each integer $k  \geq 0$. On the basis of these observations, we would like to track the identities of each particle over some large interval $t \in [0, T]$ as accurately as possible.
 
A natural approach is an iterative matching algorithm: having observed the point set $X_k = \{\bx_1(k\delta), \dots, \bx_n(k\delta)\}$ for each integer $k \geq 0$, repeatedly compute the MLE matching $\what{\pi}_k$ between $X_{k - 1}$ and $X_k$ for $k \geq 1$. 
Then, the composition $\what{\pi} = \what{\pi}_1 \cdots \what{\pi}_K$ gives a plausible matching between $X_0$ and $X_K$, which attempts to track the Brownian motions up to time $T = K\delta$.
In fact, this approach is frequently used in practical engineering applications in concert with various preprocessing and filtering pipelines \cite{PSHBH-2006-MultiObjectTracking,BGORU-2016-OnlineTracking,SA-2016-IterativeHungarianTracking}.
We illustrate a small example in Figure~\ref{fig:tracking-schematic}.
How large can we make this $T$ while having the final matching correctly identify at least, say, half of the particles, i.e., having $\what{\pi}$ fix at least half of the points of $[n]$?\footnote{All manner of quantities describing the approach of $\what{\pi}$ to a uniformly random permutation, such as total variation distance in the style of results on Markov chain mixing times, would be interesting to consider; we restrict our discussion to the number of fixed points for the sake of simplicity.}
Let us define the expectation of this time,
\begin{equation}
    T_{\max} = T_{\max}(\delta, n) \colonequals \delta \cdot \EE \min\{K: \what{\pi}_1 \cdots \what{\pi}_K \text{ has fewer than } n / 2 \text{ fixed points}\}.
\end{equation}

Clearly we expect decreasing $\delta$---taking snapshots more frequently---to increase $T_{\max}$.
We can use our results for $d$ constant to make an informal prediction as to the behavior of this tradeoff.
The displacement of a Brownian motion in time $\delta$ has law $\sN(0, \delta \bm I_d)$, so each time step looks like our earlier setup with $\sigma^2 = \delta$.
Thus suppose $n^{-4/d} \lesssim \delta \lesssim n^{-2/d}$.
Then, we expect the error incurred by $\what{\pi}_k$ to be roughly $\delta^{d / 2}n^2$ for each $k$.
Supposing that these errors affect different indices in each time step, we then expect to make $\Omega(n)$ errors in total once $K > n / (\delta^{d / 2}n^2) = \delta^{-d/2} / n$.
Thus, we expect $T_{\max} \sim \delta K = \delta^{1 - d/2} / n$.

One case to which this argument certainly does \emph{not} apply is $d = 1$: in this case, the difference between the positions of any two particles is itself a Brownian motion which will eventually cross zero (meaning that the particles will collide), and by a standard argument of time inversion of Brownian motion will in fact cross zero infinitely many times in the vicinity of any such crossing (meaning that the particles will collide infinitely many times immediately following their first collision).
Indeed, we illustrate in Figure~\ref{fig:tracking-error} below that, when $d = 1$, the error of tracking appears to be driven by such collisions and does not depend at all on the sampling interval $\delta$.
However, we conjecture that the above heuristic is sound for larger dimension.

\begin{conjecture}
    \label{conj:tracking}
    Suppose that $d \geq 2$ and $\delta = n^{-\xi / d}$ for some $\xi \in [2, 4]$.
    Then, $T_{\max} \sim \frac{\delta^{1 - d/2}}{n} f(n) = n^{\xi/2 - \xi/d - 1} f(n)$ for some $1 / \mathsf{polylog}(n) \leq f(n) \leq \mathsf{polylog}(n)$.
\end{conjecture}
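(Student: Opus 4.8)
Write $\tau := \min\{K \ge 1 : \what{\pi}_1 \cdots \what{\pi}_K \text{ fixes fewer than } n/2 \text{ points}\}$, so that $T_{\max} = \delta\,\EE[\tau]$ and, since $\delta = n^{-\xi/d}$, the conjecture amounts to $\tau = \delta^{-d/2}n^{-1}\cdot\mathsf{polylog}(n)^{\pm1} = n^{\xi/2-1}\cdot\mathsf{polylog}(n)^{\pm1}$, both with high probability and in expectation. The first step is to reduce each snapshot transition to the Gaussian planted-matching model analyzed above. Since $\bx_i(k\delta)\sim\sN(\bm 0,(1+k\delta)\bm I_d)$ while $\bx_i(k\delta)-\bx_i((k-1)\delta)\sim\sN(\bm 0,\delta\bm I_d)$, independently across $i$ and across increments, rescaling the point cloud $X_{k-1}\cup X_k$ by $(1+(k-1)\delta)^{-1/2}$ --- which leaves the MLE unchanged, as $\argmin_\pi\sum_i\|\bx_i-\by_{\pi(i)}\|^2$ depends on the points only up to a global scale --- shows that $\what{\pi}_k$ has exactly the law of the MLE in our model with effective noise $\sigma_k^2 = \delta/(1+(k-1)\delta) \le \delta$, and the Markov property of Brownian motion makes $\{\what{\pi}_j\}_{j\le k}$ and $\{\what{\pi}_j\}_{j>k}$ conditionally independent given $X_k$. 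For $d$ constant, $2 < \xi < 4$, and $k\lesssim 1/\delta = n^{\xi/d}$ --- which, as noted below, is the range relevant when $d \le 4$ --- we have $\sigma_k^2 = \Theta(\delta)$, lying in the sublinear-error window $n^{-4/d}\ll\sigma_k^2\ll n^{-2/d}$ of Theorem~\ref{thm:d-ll-logn}(3); writing $\sE_k$ for the error set of $\what{\pi}_k$, that theorem and the moment estimates behind it give $|\sE_k| = \delta^{d/2}n^2\cdot\mathsf{polylog}(n)^{\pm1}$ with high probability and $\EE|\sE_k| = \Theta(\delta^{d/2}n^2)$, while the endpoints $\xi\in\{2,4\}$ are handled the same way using parts~(4) and~(2) of Theorem~\ref{thm:d-ll-logn} (macroscopic, resp.\ constant, error per step).

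For the lower bound on $\tau$, the elementary observation is that the non-fixed set of $\what{\pi}_1\cdots\what{\pi}_K$ is contained in $\bigcup_{k\le K}\sE_k$ --- an index fixed by every factor is fixed by the product --- so $\sum_{k\le K}|\sE_k| < n/2$ forces $\what{\pi}_1\cdots\what{\pi}_K$ to fix more than $n/2$ points. Using $\sigma_k^2\le\delta$ and the upper bound of Theorem~\ref{thm:d-ll-logn}(3) gives $\EE\sum_{k\le K}|\sE_k| \le K\delta^{d/2}n^2\cdot\mathsf{polylog}(n)$, which is $o(n)$ as soon as $K\le\delta^{-d/2}n^{-1}/\mathsf{polylog}(n)$; Markov's inequality then forces $\tau > \delta^{-d/2}n^{-1}/\mathsf{polylog}(n)$ with high probability, hence $T_{\max} \ge \delta^{1-d/2}n^{-1}/\mathsf{polylog}(n)$.

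The matching upper bound is the crux and requires showing that the per-step errors genuinely \emph{accumulate}: that the non-fixed set of $\what{\pi}_1\cdots\what{\pi}_K$ has size $\Omega(\sum_{k\le K}|\sE_k|)$ for as long as this sum is at most $n/2$. The main obstacle, as I see it, is to show that each $\sE_k$ is spread roughly uniformly over $[n]$ and is, conditionally on the running composition $\rho_{k-1} := \what{\pi}_1\cdots\what{\pi}_{k-1}$, decorrelated enough from $\rho_{k-1}$ that a constant fraction of $\sE_k$ lands on indices currently fixed by $\rho_{k-1}$. Exchangeability of the particle labels yields the ``uniformly spread'' half, but the decorrelation is delicate, since both $\sE_k$ and the non-fixed set of $\rho_{k-1}$ are governed by the short-range clustering geometry of the snapshots, which is correlated in time; the natural route is to use the description --- implicit in the proof of Theorem~\ref{thm:d-ll-logn} --- of MLE errors as caused by short augmenting cycles among tuples of mutually nearby points, and to argue that the locations of such tuples in $X_k$ are asymptotically independent of the macroscopic structure of $\rho_{k-1}$, which depends on $X_0,\dots,X_{k-1}$ only through comparably short-range geometric events. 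Granting this, an analysis parallel to that of the random-transposition walk on $S_n$ --- in which $\what{\pi}_k$ acts like $\Theta(|\sE_k|)$ transpositions --- shows that the fixed-point count drops below $n/2$ once $\sum_{k\le K}|\sE_k| = \Theta(n)$, i.e.\ at $K = \delta^{-d/2}n^{-1+o(1)}$, with high probability. To upgrade this to the bound $\EE[\tau] = \delta^{-d/2}n^{-1}\cdot\mathsf{polylog}(n)$ needed for $T_{\max}$, split the snapshots into consecutive blocks of length $\Theta(\delta^{-d/2}n^{-1})$: the Markov property makes the blocks conditionally independent given the intervening snapshots, and each block launched from any configuration with at least $n/2$ fixed points has a constant chance of driving the count below $n/2$, so the probability that $\tau$ exceeds $j$ blocks decays geometrically in $j$.

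Finally, the proof must confront --- and the stated formula should be qualified accordingly --- that the heuristic per-snapshot error $\delta^{d/2}n^2$ is in force only while $\sigma_k^2 = \Theta(\delta)$, i.e.\ for $k\lesssim 1/\delta = n^{\xi/d}$; beyond that $\sigma_k^2\approx 1/k$ decays and so does the per-step error. The accumulation argument therefore needs the stopping time $\tau = n^{\xi/2-1+o(1)}$ to be at most $n^{\xi/d}$, equivalently $\xi\le 2d/(d-2)$, equivalently $\delta\gtrsim n^{-2/(d-2)}$; this holds for all $\xi\in[2,4]$ exactly when $d\in\{2,3,4\}$ (with $\xi = 4$, $d = 4$ a harmless boundary case). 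For $d\ge 5$ and $\xi\in(2d/(d-2),4]$, by the time one would accumulate $\Theta(n)$ errors the effective noise has dropped through the constant-error regime into perfect recovery; the total expected error $\sum_{k\ge1}\EE|\sE_k|$ is then of order $\delta^{d/2-1}n^2 = n^{2-\xi/2+\xi/d}\cdot\mathsf{polylog}(n) = o(n)$, and the Markov argument of the lower bound gives $\tau = \infty$ with high probability, so that $T_{\max} = \infty$ rather than $n^{\xi/2-\xi/d-1}\cdot\mathsf{polylog}(n)$. Characterizing $T_{\max}$ in that remaining regime --- by integrating the per-snapshot error rate against the genuine decay of $\sigma_k^2$ and controlling the resulting non-monotone fixed-point process --- is the deepest part of the problem and would require estimates beyond the sublinear-error regime of Theorem~\ref{thm:d-ll-logn}.
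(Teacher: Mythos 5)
This is Conjecture~\ref{conj:tracking}, which the paper does not prove; the accompanying text gives only a back-of-the-envelope heuristic, so there is no proof in the paper to compare your strategy against. That said, your outline is a sensible and substantially more careful elaboration of the paper's heuristic, and it surfaces an issue that appears to be a genuine flaw in the conjecture as stated.

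The rescaling observation --- that dividing $X_{k-1}\cup X_k$ by $(1+(k-1)\delta)^{1/2}$ leaves the MLE unchanged and maps the $k$th transition exactly onto the Gaussian planted-matching model with effective noise $\sigma_k^2 = \delta/(1+(k-1)\delta)$ --- is correct, cleaner than anything implicit in the paper's discussion, and is exactly what turns the vague ``each time step looks like our earlier setup with $\sigma^2 = \delta$'' into a precise reduction. Your lower bound on $\tau$ via $\text{(non-fixed set of } \what{\pi}_1\cdots\what{\pi}_K) \subseteq \bigcup_{k\le K}\sE_k$ and Markov is clean, and you correctly identify the matching upper bound --- showing that per-step errors genuinely accumulate rather than repeatedly re-scrambling the same small set of indices --- as the crux, requiring a decorrelation of $\sE_k$ from the running permutation $\rho_{k-1}$. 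The paper offers no mechanism for this step; your sketch of using the short-range locality of augmenting-cycle errors, plus a random-transposition-walk style analysis, is the right shape of argument, though it would need real work to execute, in particular because the positions of nearby tuples at step $k$ are not literally independent of the history.

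The most valuable part of your write-up is the final paragraph. Because $\sigma_k^2 = \delta/(1+(k-1)\delta)$ decays like $1/k$ once $k \gg 1/\delta$, the per-step error rate $\approx \sigma_k^d n^2$ is summable (for $d > 2$) with total $\Theta(\delta^{d/2-1}n^2) = n^{2 - \xi(d-2)/(2d)}$, which is $\Omega(n)$ if and only if $\xi \le 2d/(d-2)$. For $d \in \{2,3,4\}$ this covers all $\xi \in [2,4]$, but for $d \ge 5$ and $\xi \in (2d/(d-2), 4]$ the cumulative expected error is $o(n)$, your own lower-bound argument then gives $\tau = \infty$ with probability $1-o(1)$, and $T_{\max} = \infty$ --- which contradicts the finite asymptotic $n^{\xi/2-\xi/d-1}f(n)$ claimed in the conjecture. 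This is a legitimate correction: the paper's heuristic tacitly treats $\sigma_k^2 \equiv \delta$ for all $k$, which is only valid for $K \lesssim 1/\delta$, and the conjecture needs either the restriction $\xi \le 2d/(d-2)$ (vacuous for $d\le 4$) or a different assertion (e.g.\ $T_{\max}=\infty$) in the complementary range. You should state that observation as a remark rather than burying it at the end of a proof strategy.

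One caveat: your proposal invokes Theorem~\ref{thm:d-ll-logn}(3) to get per-step concentration $|\sE_k| = \delta^{d/2}n^2 \cdot \mathsf{polylog}(n)^{\pm 1}$ with high probability, but that theorem gives only the in-probability statement $\log(1\vee|\sE|)/\log n \to 2-\xi/2$, i.e.\ control at the level of $n^{o(1)}$ multiplicative factors rather than $\mathsf{polylog}(n)$. To get the $\mathsf{polylog}$ window in the conjecture you would need sharper concentration of $|\sE|$ than the paper currently proves --- a point the paper itself acknowledges when it says proving Conjecture~\ref{conj:tracking} ``would require better understanding of the concentration properties of $|\sE|$.''
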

\noindent
A surprising consequence of this conjecture would be that, when $d = 2$, there is a large range of $\delta$ over which the improvement in $T_{\max}$ gained for decreasing $\delta$ is only logarithmic in $\delta$---the situation is hardly better than $d = 1$---while once $d \geq 3$ this improvement becomes polynomial in $\delta$.
This criticality of $d = 2$ seems to resemble similar phenomena in the structure of optimal matchings of i.i.d.\ points in the null model \cite{AjtKomTus84,Ledoux-2019-OptimalMatching,Ledoux-2018-OptimalMatchingII,Talagrand-2018-NonStandardMatching}.
While it is difficult to make $n$ sufficiently large to overcome finite-size effects and resolve the exponents we are interested in numerically, as alternative evidence we plot the number of errors over time for a fixed small $n$ and various $\delta$ and $d$ in Figure~\ref{fig:tracking-error}.
We observe something qualitatively similar to the Conjecture: when $d = 2$ the error changes logarithmically over several orders of magnitude of $\delta$, while once $d = 3$ the error changes much more rapidly, plausibly polynomially.

Proving Conjecture~\ref{conj:tracking} would require several improvements over our current results, and represents an interesting question for future work.
At a minimum, doing so would require better understanding of the concentration properties of $|\sE|$ in the low-dimensional regime.
Obtaining stronger concentration bounds would also open the door to understanding what happens when $\delta \ll n^{-4/d}$, when each time step is in our ``perfect recovery'' regime and most time steps do not introduce new errors.

\begin{figure}[p]
    \centering
    \begin{tabular}{c@{\hskip.0cm}c@{\hskip.0cm}c@{\hskip.0cm}}
        \includegraphics[scale=0.35]{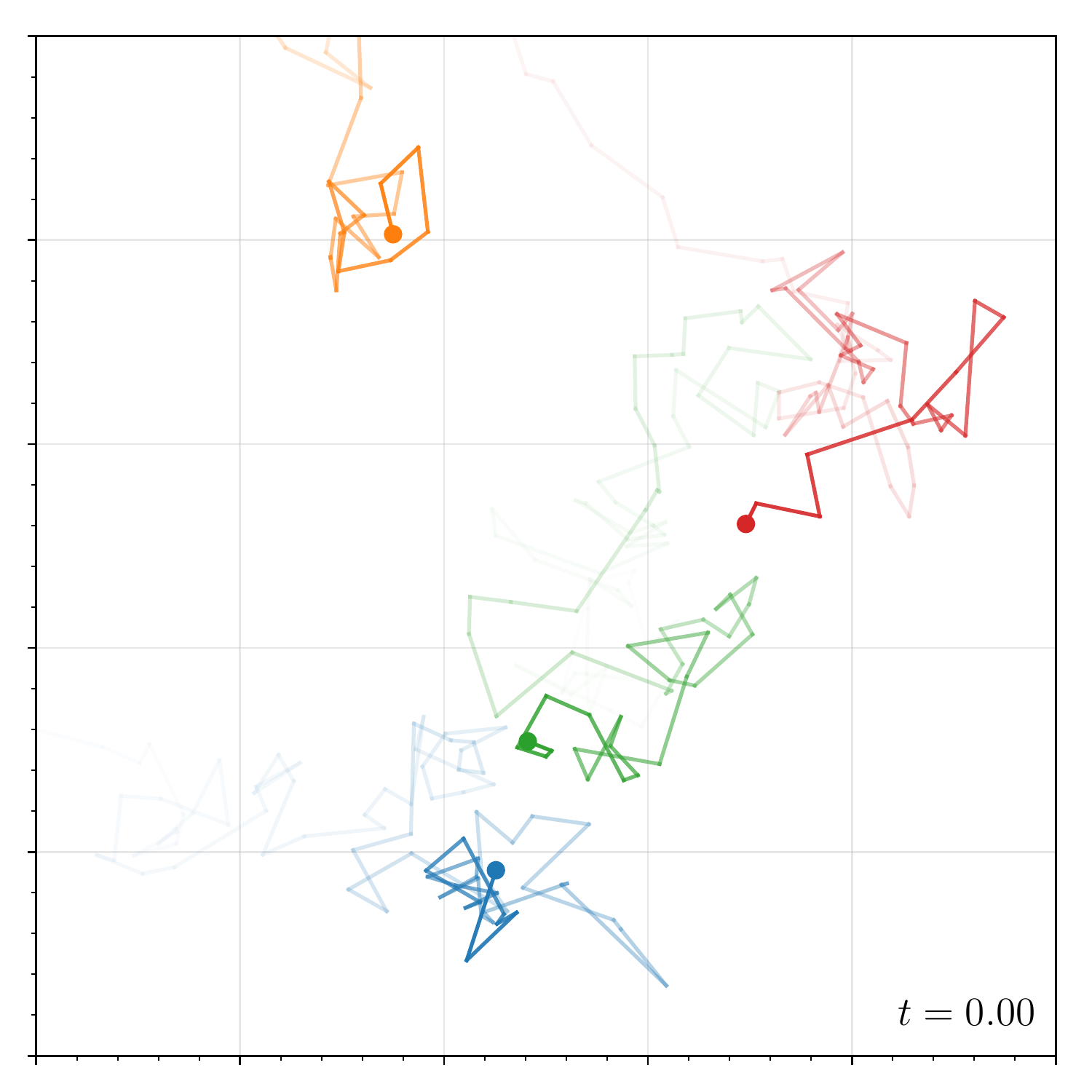} & 
        \includegraphics[scale=0.35]{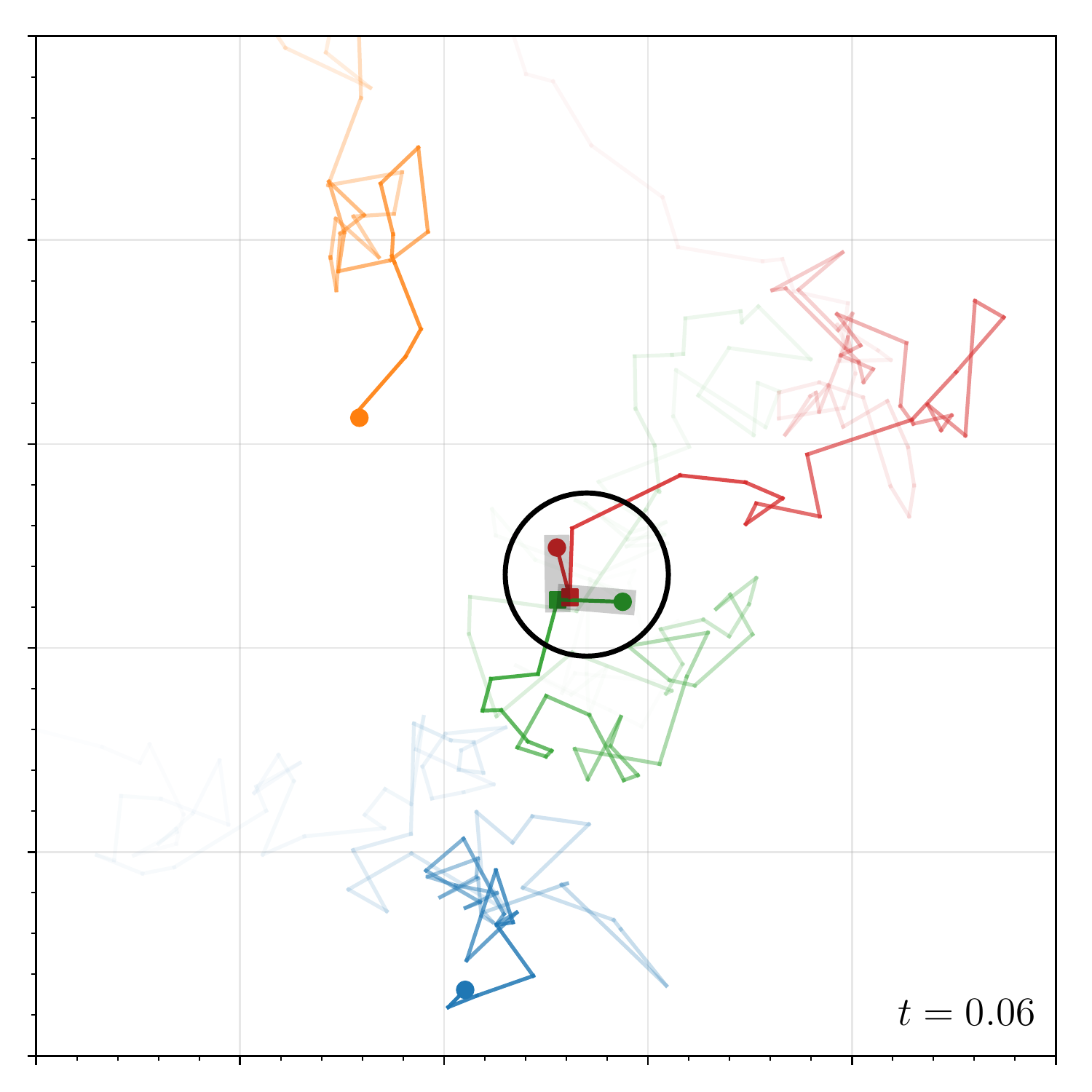} &
        \includegraphics[scale=0.35]{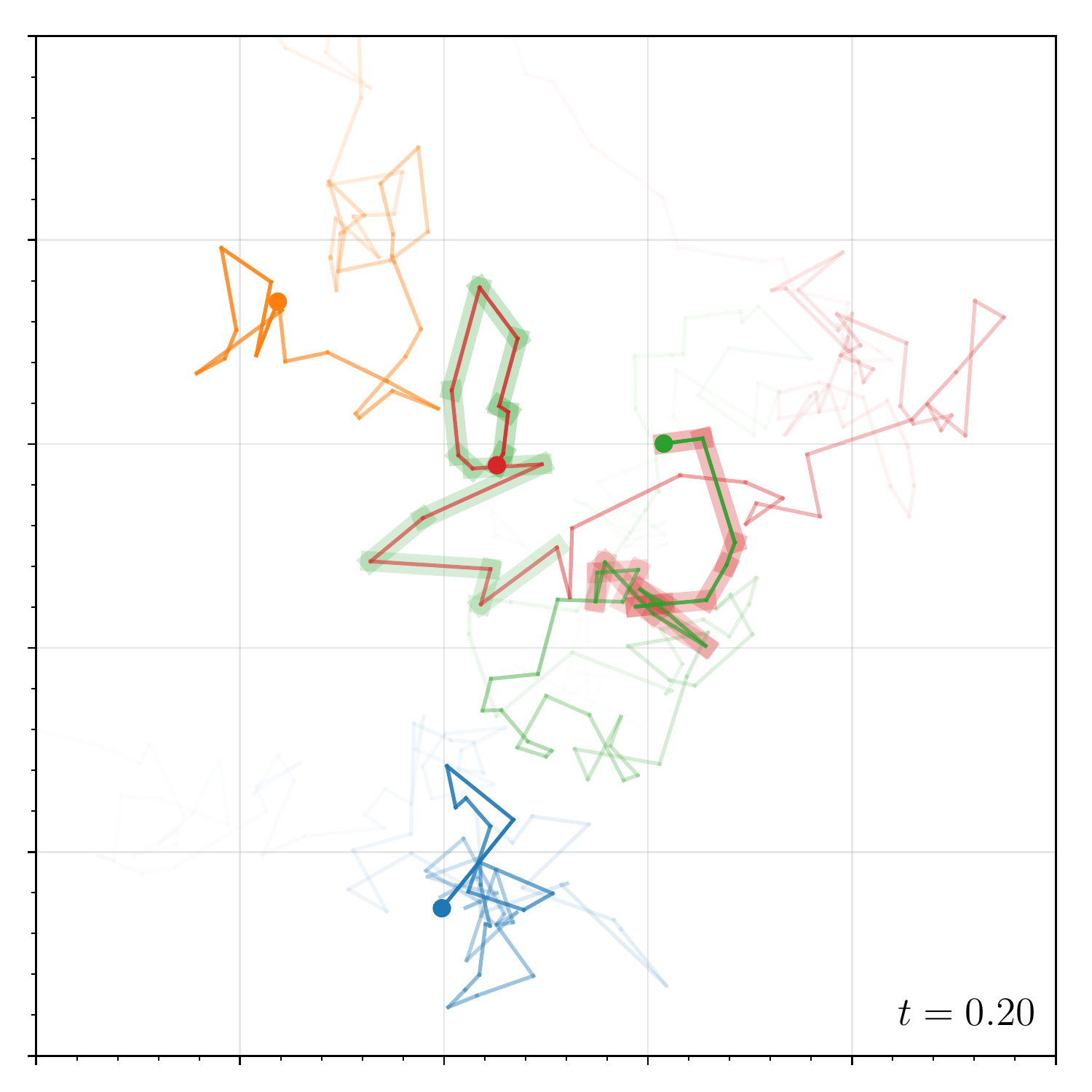}
    \end{tabular}
    \begin{tabular}{c@{\hskip.0cm}c}
        \includegraphics[scale=0.35]{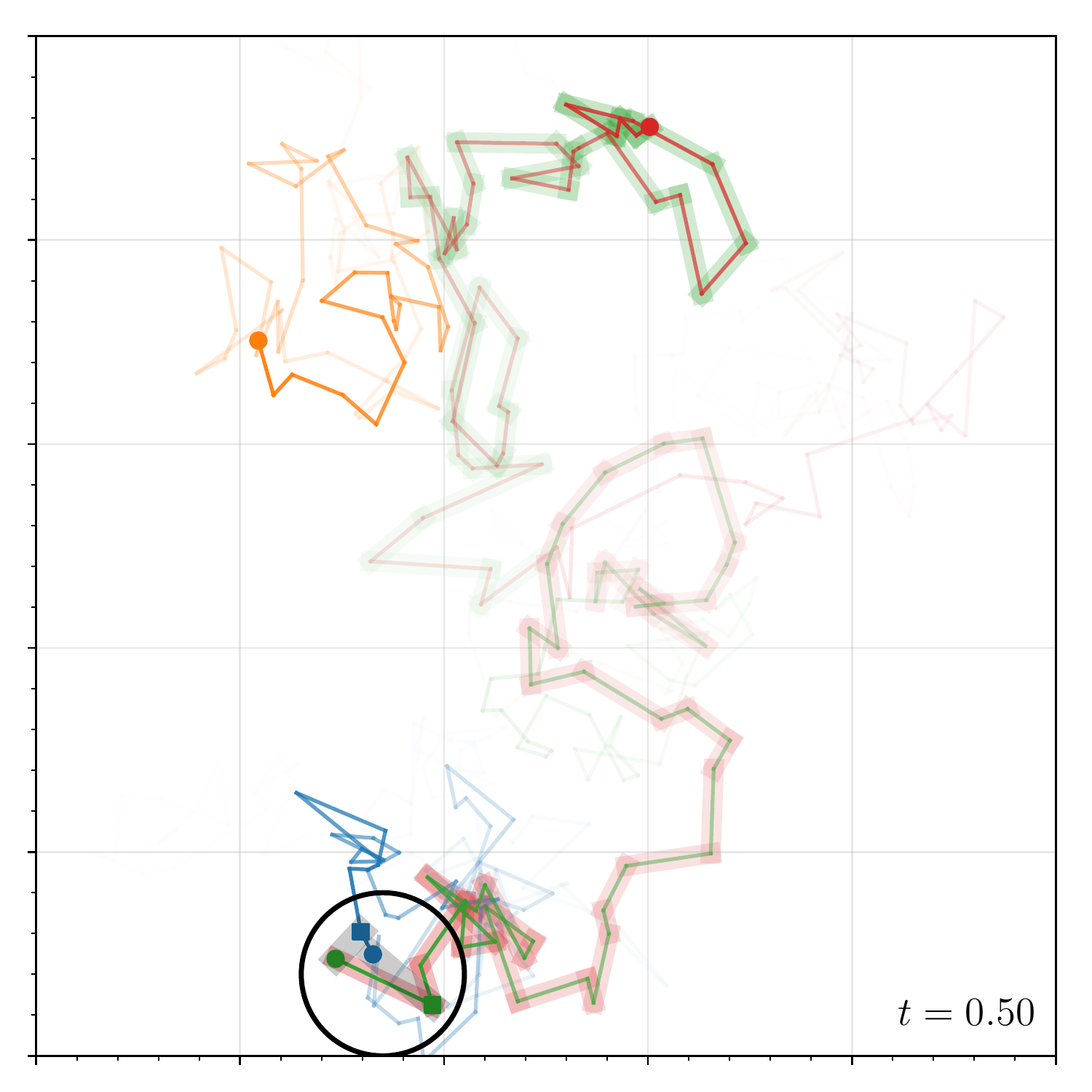} &
        \includegraphics[scale=0.35]{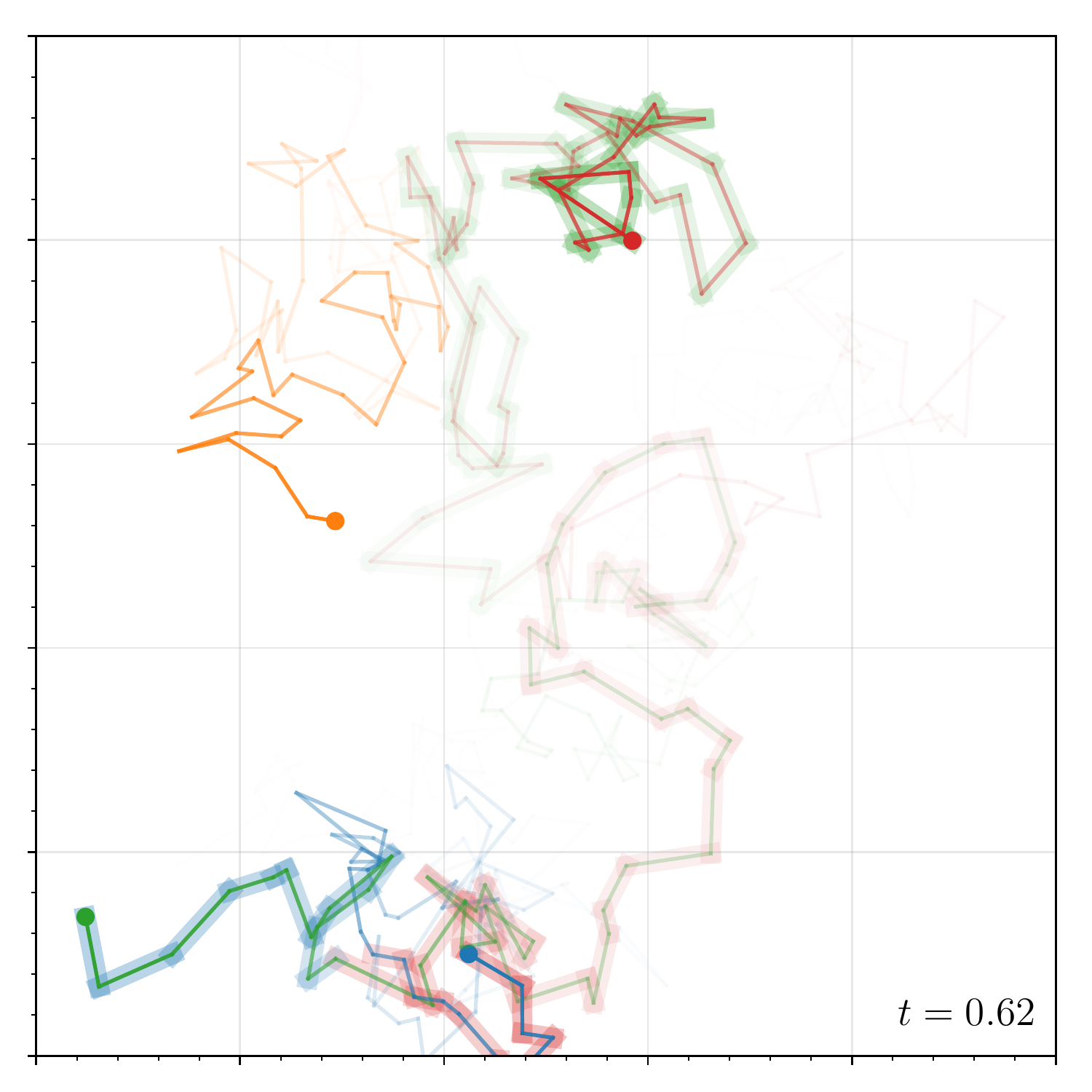}
    \end{tabular}
    \caption{\textbf{Online MLE tracking of Brownian motions.} We illustrate how errors accrue in tracking particles by iteratively computing the MLE. We plot the random walks formed by snapshots of four Brownian motions in $\RR^2$, and indicate by a circle two times when the permutation produced by the iterated MLE undergoes a transposition from the true labeling. For erroneously labelled points, we show their true label in the thin inner line, and their label by the iterated MLE in the thick outer line. If the points colored orange, red, green, and blue are respectively labelled $1, 2, 3, 4$ at the beginning, then the estimated permutation changes first to $1, 3, 2, 4$, and then to $1, 3, 4, 2$.} 
    \label{fig:tracking-schematic}
\end{figure}

\begin{figure}
    \hspace{-0.5cm}
    \begin{tabular}{c@{\hskip.0cm}c@{\hskip.0cm}c}
   \includegraphics[scale=0.55]{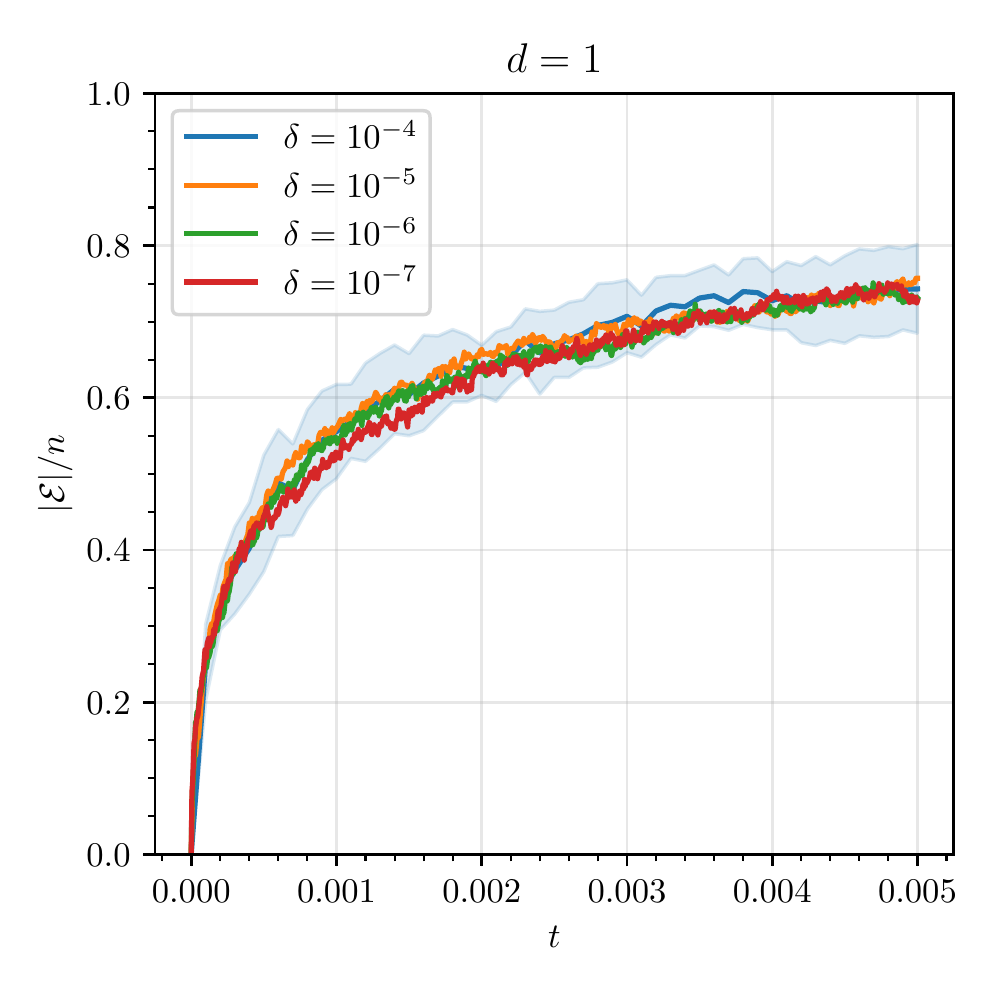} & \includegraphics[scale=0.55]{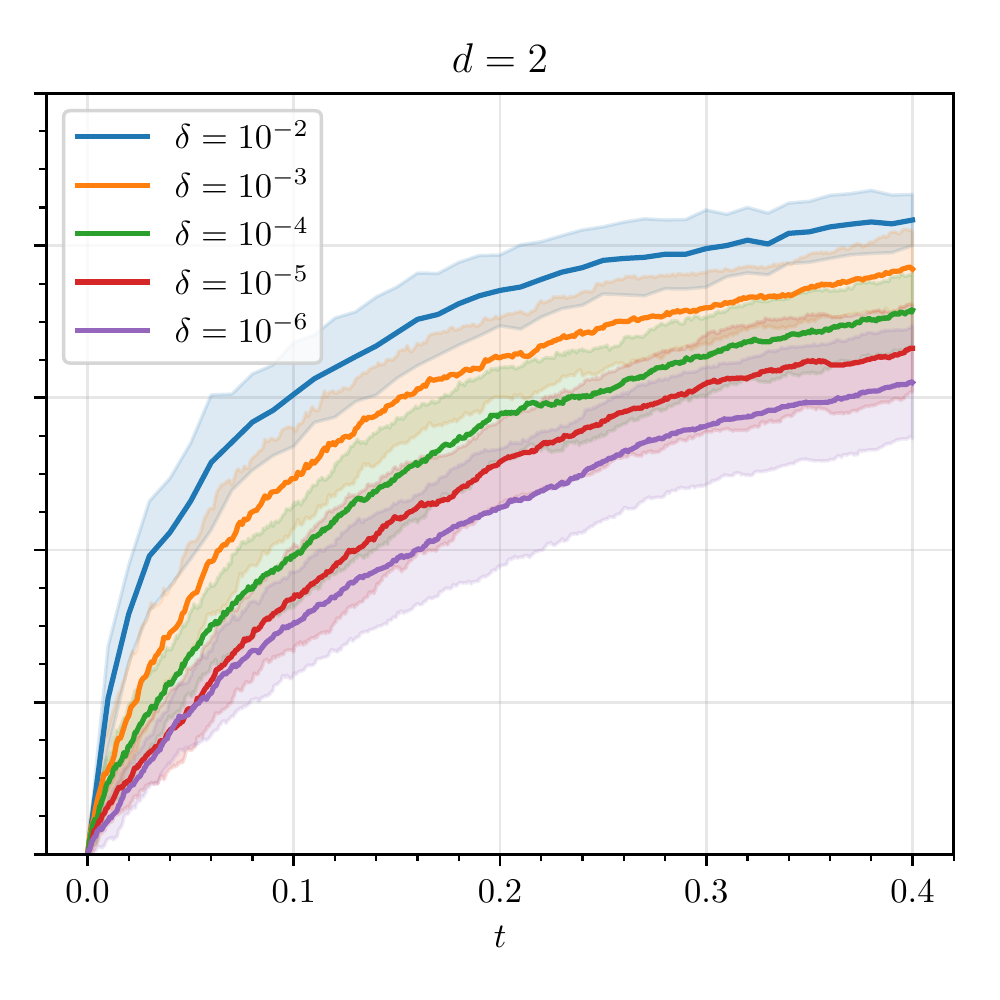} & \includegraphics[scale=0.55]{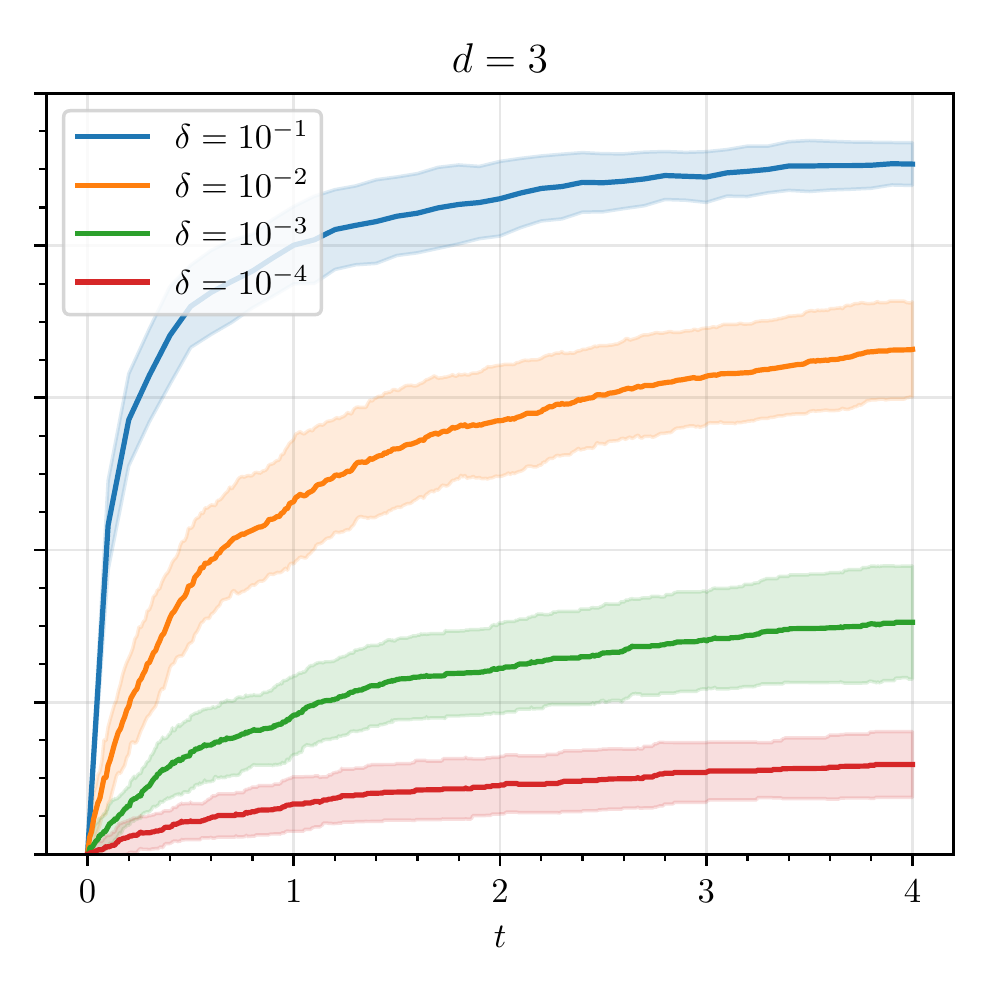}
    \end{tabular}
    \vspace{-0.5cm}
    \caption{\textbf{Dimension-dependent error scaling of MLE tracking.} We plot the error incurred by the iterated MLE estimator over time for tracking $n = 100$ independent Brownian motions in dimensions $d = 1, 2$, and 3, illustrating the differing dependences on the sampling interval $\delta$. Each curve plots an average of 20 independent trials and an error bar of one standard deviation.}
    \label{fig:tracking-error}
\end{figure}

 \begin{figure}
        \centering
        \includegraphics[scale=0.5]{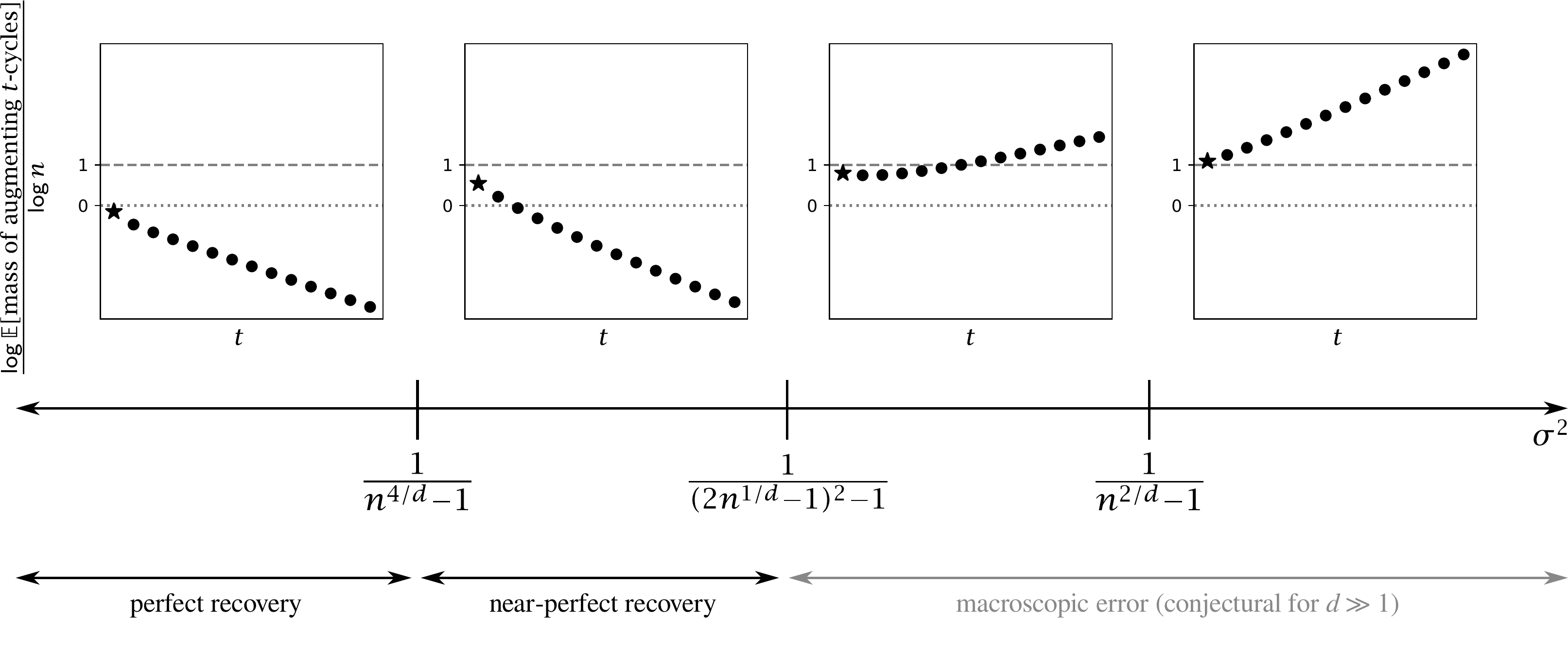}
        \caption{\textbf{First moments of augmenting cycle counts.} We illustrate our results and the associated thresholds, giving a schematic illustration of the polynomial rate of growth of the total mass of augmenting cycles of various sizes in each regime of the noise parameter $\sigma^2$. Regimes marked in black are those described by our results; the one in gray is conjectural. In each plot, a star marks the point plotting the expected mass of augmenting 2-cycles, whose analysis drives our lower bounds on $|\sE|$.}
        \label{fig:cycle-counts}
    \end{figure}

\begin{figure}
    \centering
    \includegraphics[scale=0.8]{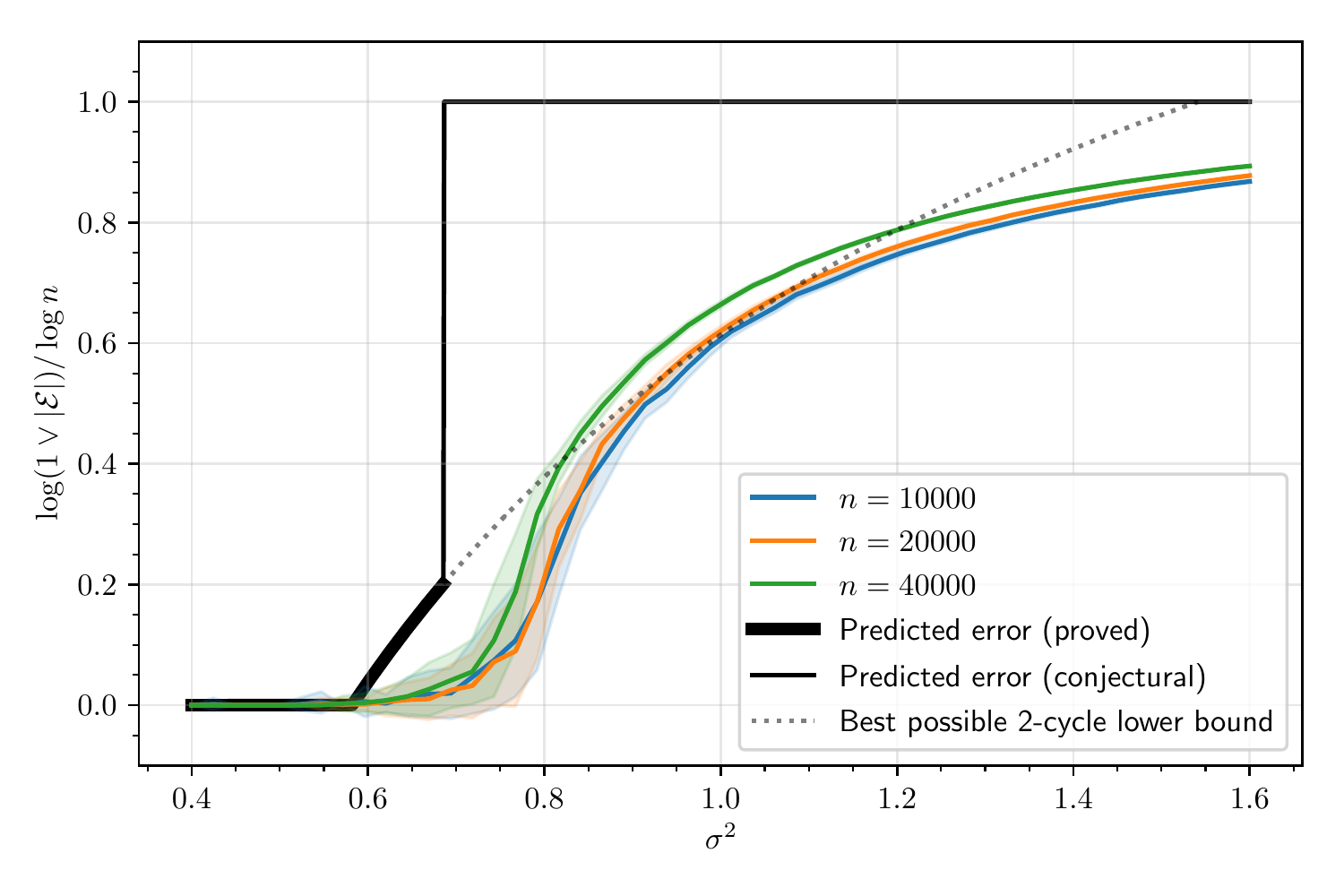}
    \vspace{-1em}
    \caption{\textbf{Discontinuity in polynomial error rate.} We show the predicted jump in the MLE error rate when $d = a\log n$ with $a = 4$ (bold solid line from Theorem~\ref{thm:d-theta-logn} and thin solid line from Conjecture~\ref{conj:non-recovery}) contrasted with the best possible lower bound that could be proved by analyzing only augmenting 2-cycles (dotted line). For increasing $n$, we also plot the average and one standard deviation error bars for 50 random trials of the MLE at regularly spaced $\sigma^2$. Though convergence is very slow with $n$, the fact that these curves cross the dotted line implies that there is non-trivial contribution to the total error from augmenting cycles of length greater than $2$, supporting Conjecture~\ref{conj:non-recovery} in the $d \sim \log n$ regime.}
    \label{fig:error-numerics}
\end{figure}

\subsection{Proof Techniques}
\label{sec:techniques}

We briefly discuss our proof techniques, with the aim of giving a heuristic theoretical justification of Conjecture~\ref{conj:non-recovery} above.
The following is the key structural property obeyed by $\sE$: the indices of $\sE$ belong to a disjoint union of cycles in $\what{\pi}$, and each such cycle $(i_1, \dots, i_t)$ is \emph{augmenting}, meaning that, performing index arithmetic modulo $t$,
\begin{equation}
    \sum_{k = 1}^t W_{i_k i_{k + 1}} \geq \sum_{k = 1}^t W_{i_k i_k},
\end{equation}
the reason being simply that the objective value of $\what{\pi}$ in \eqref{eq:W-mle} must not be increased by replacing any cycle of $\what{\pi}$ with the identity mapping.\footnote{Often the term ``augmenting cycle'' instead refers to an even cycle alternating between rows and columns of $\bW$, a cycle in the weighted bipartite graph on $2n$ vertices whose weights are given by $\bW$. However, we will find it more intuitive to think of cycles as permutations on $[n]$ instead, as described here.}
Our analysis is based on considering how many augmenting cycles of various sizes on $[n]$ exist.

There are $\binom{n}{t}(t - 1)! \approx n^t / t$ possible $t$-cycles on $[n]$ (the approximation holding for $t \ll n$), so the total ``mass'' or sum of the lengths of these cycles is $\approx n^t$.
We show that the probability that any given cycle is augmenting is related to the Riemann sum of a particular function $f(\sigma^2, x)$, thus obtaining that
\begin{align}
    \PP[t\text{-cycle is augmenting}] &\leq \exp\left(-\frac{d}{2}\sum_{j = 1}^{t - 1} f\left(\sigma^2, \frac{j}{t}\right)\right), \\
    \EE[\text{mass of augmenting } t\text{-cycles}] &\leq \exp\left(t\log n -\frac{d}{2}\sum_{j = 1}^{t - 1} f\left(\sigma^2, \frac{j}{t}\right)\right) \equalscolon n^{c(t)}.
\end{align}

We will show that these Riemann sums have a \emph{discrete concavity} property (see Section~\ref{sec:riemann-sums}), and that consequently $c(t)$ is a convex function of $t$, as we illustrate in Figure~\ref{fig:cycle-counts}.
The threshold that Conjecture~\ref{conj:non-recovery} predicts for strong recovery is the location where $\lim_{t \to \infty} c(t) / t$ changes sign from negative to positive, i.e. where the limiting slope of the curves in Figure~\ref{fig:cycle-counts} changes from negative to positive.

When this limiting slope is negative, then in fact the entire curve of $c(t)$ is decreasing, so the dominant contribution is made by augmenting 2-cycles.
In this case, we may analyze the number of errors the MLE makes by counting augmenting 2-cycles with the first and second moment methods.
When the limiting slope is positive, we expect substantial contributions to be made by $t$-cycles with large $t$, which our techniques here do not handle.
There is a third threshold when there are $\Omega(n)$ augmenting 2-cycles, the rightmost threshold in Figure~\ref{fig:cycle-counts}, beyond which in principle our second moment method might be improved to show that the MLE makes $\Omega(n)$ errors.
There are technical obstructions due to correlations in the second moment method that prevent us from carrying this out; moreover, as we emphasize in Figure~\ref{fig:error-numerics} for the case $d = \Theta(\log n)$, we do \emph{not} expect this analysis alone to prove the correct strong recovery threshold---for that, it appears necessary to argue the existence of larger augmenting cycles.

Finally, we remark that this latter threshold is a natural one for \emph{greedy algorithms} that attempt to find a good matching in the matrix $\bW$ row by row.
In Appendix~\ref{app:greedy}, we show that the greedy algorithm applied to $\bW$ in fact achieves strong recovery below this third threshold $\sigma^2 = \frac{1}{n^{2/d} - 1}$, which is asymptotically greater than the strong recovery threshold of the MLE $\sigma^2 = \frac{1}{(2n^{1/d} - 1)^2 - 1}$ once $d = \omega(\log n)$ (the former is $\sim \frac{1}{2}\frac{d}{\log n}$, while the latter is $\sim \frac{1}{4}\frac{d}{\log n}$).
On the other hand, this algorithm fails completely for $d = o(\log n)$; by contrast, a greedy algorithm applied to $\bW^{(0)}$ performs similarly to the MLE in that regime but can be worse outside the low-dimensional regime.
Across all $d = d(n)$ the three algorithms are generally incomparable.
We refer the reader to Appendix~\ref{app:greedy} for further discussion of these algorithms.

\subsection{Open Questions}

We conclude with several open questions on the estimation of geometric planted matchings that we find promising for future research.
\begin{enumerate}
    \item Establish the strong recovery threshold for $d \gg 1$, i.e., prove Conjecture~\ref{conj:non-recovery}.
    \item Establish the error curve for constant dimension $d$: what is the function $e(a, d)$ such that, when $\sigma^2 = an^{-2/d}$, then $\EE |\sE| / n \to e(a, d)$?
    \item Are algorithms other than the MLE (including the greedy algorithms we discuss in Appendix~\ref{app:greedy}, algorithms computing matchings corresponding to Wasserstein distances $W_p$ with $p \neq 2$, algorithms computing entropy-regularized relaxations of the linear assignment problem~\cite{Cuturi-2013-SinkhornLightspeed}, and the belief propagation algorithm proposed by \cite{CKKVZ-2010-ParticleTrackingBP}) more effective in certain regimes of $d$ and $\sigma^2$?
    \item Establish the dimension-dependent scaling of the time for which online MLE tracking can consistently track $n$ particles given in Conjecture~\ref{conj:tracking}, and determine what happens for small time intervals $\delta \ll n^{-4/d}$.
    \item More generally, what are effective algorithms for the motion tracking application proposed in Section~\ref{sec:tracking}? Is there an offline algorithm (processing the entire set of snapshots concurrently) that is superior to the kind of online algorithm we propose?
    \item What are the statistics of permutations obtained by computing optimal matchings between a collection of points and their evolution under Brownian motion for some period of time (either just once or with an iterated MLE or greedy algorithm)? How quickly do such permutations converge to the uniform distribution?
\end{enumerate}

\section{Preliminaries}

\subsection{Graph Laplacians and Spectra}

Given a graph $G = (V, E)$, we write $\bL^G \in \RR^{V \times V}$ for the \emph{graph Laplacian} of $G$, the symmetric matrix with quadratic form
\begin{equation}
    \bx^{\top}\bL^G \bx = \sum_{\{v, w\} \in E} (x_v - x_w)^2.
\end{equation}
We will particularly be interested in the path and cycle graphs.
We write $P_t$ and $C_t$ for the path or cycle, respectively, on $t$ vertices, where we require $t \geq 3$ for $C_t$ to be defined.
The following gives the spectra of their respective Laplacians (see, e.g., Example 8.8 for cycles and the discussion following Lemma 10.18 for paths in \cite{Nica-2016-SpectralGraphTheory}).
\begin{proposition}
    \label{prop:lap-evals}
    The eigenvalues of $\bL^{P_t}$ are $2(1 - \cos(\frac{\pi k}{t}))$ for $k = 0, \dots, t - 1$, and the eigenvalues of $\bL^{C_t}$ are $2(1 - \cos(\frac{2\pi k}{t})) = 4\sin^2(\frac{\pi k}{t})$ for $k = 0, \dots, t - 1$.
\end{proposition}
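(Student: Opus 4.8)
The plan is to diagonalize each Laplacian explicitly by exhibiting a full set of eigenvectors; this is a classical computation with no real difficulty, and the only subtlety is the handling of the leaf boundary conditions in the path case. For the cycle, I would index the vertices of $C_t$ by $\ZZ / t\ZZ$ and write $\bL^{C_t} = 2\bm I_t - \bA$, where $\bA$ is the adjacency matrix, with $A_{jk} = 1$ precisely when $k \equiv j \pm 1 \pmod t$. Since $\bA$ is circulant, it is diagonalized by the Fourier basis $\bm v^{(k)} = (\omega^{jk})_{j \in \ZZ/t\ZZ}$ with $\omega \colonequals e^{2\pi i / t}$, for $k = 0, \dots, t - 1$: indeed $(\bA \bm v^{(k)})_j = \omega^{(j-1)k} + \omega^{(j+1)k} = (\omega^k + \omega^{-k}) \omega^{jk}$, so $\bm v^{(k)}$ is an eigenvector of $\bA$ with eigenvalue $2\cos(2\pi k / t)$ and hence of $\bL^{C_t}$ with eigenvalue $2 - 2\cos(2\pi k/t) = 4\sin^2(\pi k / t)$ by the half-angle identity. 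As the $\bm v^{(k)}$ span $\CC^t$, these $t$ values (with multiplicity) constitute the entire spectrum, and one passes to a real eigenbasis by taking real and imaginary parts.

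For the path, index the vertices of $P_t$ by $1, \dots, t$; then $\bL^{P_t}$ is tridiagonal with sub- and super-diagonal entries $-1$, interior diagonal entries $2$, and the two endpoint diagonal entries equal to $1$ (the degrees of the leaves). The eigenvalue equation $\bL^{P_t} \bm v = \lambda \bm v$ reads $2 v_j - v_{j-1} - v_{j+1} = \lambda v_j$ for $2 \le j \le t - 1$, together with $v_1 - v_2 = \lambda v_1$ and $v_t - v_{t-1} = \lambda v_t$. Introducing ghost coordinates $v_0$ and $v_{t+1}$, these two boundary equations are exactly the interior three-term recurrence at $j = 1$ and $j = t$ subject to the discrete Neumann conditions $v_0 = v_1$ and $v_{t+1} = v_t$. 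Writing $\lambda = 2(1 - \cos\theta)$ and taking the ansatz $v_j = \cos\!\big(\theta (j - \tfrac12)\big)$, a short trigonometric computation shows that the interior recurrence holds with this $\lambda$, that $v_0 = v_1$ holds identically, and that $v_{t+1} = v_t$ holds iff $\sin(\theta t) = 0$, i.e. $\theta = \pi k / t$ for $k = 0, \dots, t - 1$. This produces $t$ eigenvalues $2(1 - \cos(\pi k/t))$, which are distinct since $\cos$ is injective on $[0,\pi)$; as a $t \times t$ matrix has at most $t$ eigenvalues, these are all of them (each simple).

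The hard part here is essentially nonexistent: both computations are textbook. The only point demanding care is recognizing that lowering the vertex degree from $2$ to $1$ at the two endpoints of $P_t$ corresponds to imposing a reflecting (Neumann) rather than an absorbing (Dirichlet) boundary condition — this is exactly why the correct trigonometric ansatz is a cosine centered at the half-integer $\tfrac12$, and why the eigenvalue parameter runs over $\pi k / t$ rather than, say, $\pi k/(t+1)$. If one prefers to avoid this bookkeeping, one can alternatively realize $P_t$ as a quotient of a suitable cycle by a reflection symmetry and pull back the eigenvectors, but the direct verification above is cleaner and self-contained.
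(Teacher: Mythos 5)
Your proof is correct and self-contained: the cycle case is handled by recognizing $\bA$ as a circulant matrix diagonalized by the Fourier basis, and the path case by the half-integer-shifted cosine ansatz that matches the Neumann boundary conditions enforced by the degree-1 endpoints. The only pedantic remark is that the endpoint condition $v_{t+1} = v_t$ computes to $-2\sin(\theta t)\sin(\theta/2) = 0$, so strictly the claim ``holds iff $\sin(\theta t) = 0$'' requires noting that $\sin(\theta/2) \ne 0$ for $\theta \in (0,\pi)$ (and $\theta=0$ works trivially); this is immediate but worth a word. The paper itself offers no proof here --- it simply cites a reference --- so the comparison is between your direct verification and a literature pointer; your argument is the standard textbook one and is perfectly in order.
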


\subsection{Riemann Sums}
\label{sec:riemann-sums}

We have indicated in Section~\ref{sec:techniques}, and will see more precisely below, that probabilities of cycles being augmenting for the MLE give rise to expressions of the form $\Tr\log(1 + (4\sigma^2)^{-1} \bL^{C_t})$.
Per Proposition~\ref{prop:lap-evals}, these may in turn be viewed as Riemann sums of a certain periodic function, and the asymptotic probability of being augmenting for large cycles is therefore related to the integral of this function.
Below we set some notation for these objects and present the properties of theirs that we will use.
\begin{definition}
    For any $t \geq 2$, $\sigma^2 > 0$ define
    \begin{align}
        f(\sigma^2, x) &\colonequals \log\left(1 + \frac{1}{2\sigma^2}(1 - \cos(2\pi x))\right) = \log\left(1 + \frac{1}{\sigma^2}\sin^2(\pi x)\right), \\
        I(\sigma^2) &\colonequals \int_0^1 f(\sigma^2, x)\, dx, \\
        S(\sigma^2, t) &\colonequals \sum_{j = 1}^{t - 1}f\left(\sigma^2, \frac{j}{t}\right).
    \end{align}
\end{definition}
\noindent
In fact, it is possible to evaluate $I(\sigma^2)$ in closed form.
\begin{proposition}
    \label{prop:I-eval}
    For all $\sigma^2 > 0$,
    \begin{equation}
        I(\sigma^2) = 2\log\left(\frac{1 + \sqrt{1 + \sigma^{-2}}}{2}\right).
    \end{equation}
\end{proposition}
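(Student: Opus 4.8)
\textbf{Proof proposal for Proposition~\ref{prop:I-eval}.}
The plan is to reduce the statement to a one-parameter family of integrals and differentiate under the integral sign. Writing $\lambda \colonequals \sigma^{-2}$ and
\begin{equation*}
    J(\lambda) \colonequals \int_0^1 \log\!\left(1 + \lambda \sin^2(\pi x)\right)\, dx,
\end{equation*}
so that $I(\sigma^2) = J(\sigma^{-2})$, it suffices to show that $J(\lambda) = 2\log\!\big(\tfrac{1 + \sqrt{1 + \lambda}}{2}\big)$ for all $\lambda > 0$.

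The first ingredient I would record is the elementary identity
\begin{equation*}
    \int_0^1 \frac{dx}{1 + \lambda \sin^2(\pi x)} = \frac{1}{\sqrt{1 + \lambda}} \qquad (\lambda > -1),
\end{equation*}
which follows from the substitution $s = \tan(\pi x)$ on a half-period together with $\int_0^\infty (1 + (1+\lambda)s^2)^{-1}\,ds = \tfrac{\pi}{2\sqrt{1+\lambda}}$. The integrand of $J$ is jointly smooth in $(\lambda, x)$ and dominated on compact $\lambda$-subintervals of $(0,\infty)$, so differentiation under the integral sign is justified, and using $\tfrac{\sin^2(\pi x)}{1 + \lambda\sin^2(\pi x)} = \tfrac1\lambda\big(1 - \tfrac{1}{1+\lambda\sin^2(\pi x)}\big)$ I get
\begin{equation*}
    J'(\lambda) = \int_0^1 \frac{\sin^2(\pi x)}{1 + \lambda\sin^2(\pi x)}\, dx = \frac{1}{\lambda}\left(1 - \frac{1}{\sqrt{1+\lambda}}\right).
\end{equation*}
The substitution $u = \sqrt{1+\lambda}$ then turns $J'(\lambda)\,d\lambda$ into $\tfrac{2}{u+1}\,du$, whence $J(\lambda) = 2\log(1 + \sqrt{1+\lambda}) + C$; evaluating at $\lambda \to 0^+$, where $J(0) = 0$, forces $C = -2\log 2$, giving the claimed formula, and setting $\lambda = \sigma^{-2}$ completes the proof.

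As an alternative I would mention the classical Jensen/Fourier identity $\int_0^1 \log\lvert 1 - a e^{2\pi i x}\rvert^2\, dx = 2\log_+\lvert a\rvert$: matching coefficients of $\cos(2\pi x)$ shows that $1 + \tfrac{1}{2\sigma^2}(1 - \cos 2\pi x) = \tfrac{1}{4\sigma^2 a}\,\lvert 1 - a e^{2\pi i x}\rvert^2$ where $a$ solves $a^2 - (4\sigma^2+2)a + 1 = 0$; taking the larger root $a = (\sigma + \sqrt{\sigma^2+1})^2 > 1$ and integrating yields $I(\sigma^2) = \log\tfrac{a}{4\sigma^2} = 2\log\tfrac{\sigma + \sqrt{\sigma^2+1}}{2\sigma} = 2\log\tfrac{1 + \sqrt{1+\sigma^{-2}}}{2}$. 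There is no serious obstacle in either route; the only points needing (routine) care are the justification of differentiation under the integral on compact subsets of $(0,\infty)$ together with the boundary evaluation at $\lambda \to 0^+$, or, in the second approach, the verification that the root with $\lvert a\rvert > 1$ is selected so that the logarithmic integral is nonzero.
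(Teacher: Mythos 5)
Your primary route is essentially the paper's own proof in a different disguise: both differentiate under the integral sign, evaluate the resulting derivative integral, and then recover $I$ by integrating with a boundary condition (the paper uses $\sigma^2 \to \infty$, you use $\lambda = \sigma^{-2} \to 0^+$, which is the same limit). The one concrete difference is that where the paper converts the derivative to a contour integral over the unit circle and applies the residue theorem, you instead use the real substitution $s = \tan(\pi x)$ to evaluate $\int_0^1 (1 + \lambda\sin^2(\pi x))^{-1}\,dx = (1+\lambda)^{-1/2}$ directly; this is a small simplification, avoiding complex analysis but arriving at the same derivative $\tfrac{1}{\lambda}(1 - (1+\lambda)^{-1/2})$. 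Your calculations check out, including the change of variable $u = \sqrt{1+\lambda}$ and the constant of integration. Your alternative route via the Jensen/Fourier identity $\int_0^1 \log|1 - a e^{2\pi i x}|^2\,dx = 2\log_+|a|$ is genuinely different and in fact shorter than either differentiate-under-the-integral argument: it avoids differentiation entirely and reduces the evaluation to a one-line factorization of the integrand as $\tfrac{1}{4\sigma^2 a}|1 - a e^{2\pi i x}|^2$ with $a = (\sigma + \sqrt{\sigma^2+1})^2 > 1$. If one is writing for a reader who knows that identity, the Jensen route is the cleaner proof; the paper's residue calculation (or your tangent substitution) is more self-contained. Your root bookkeeping is correct: $a^2 - (4\sigma^2 + 2)a + 1 = 0$ has the stated roots, and choosing the one with $|a| > 1$ is what makes the Jensen integral equal $2\log a$ rather than $0$.
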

\noindent
We give the proof in Appendix~\ref{app:I} by translating the real integral to a complex contour integral.

By elementary real analysis, as $f(\sigma^2, \cdot)$ is continuous on $[0, 1]$, we have the following convergence.
\begin{proposition}
    For any $\sigma^2 > 0$, we have
    \begin{equation}
        \lim_{t \to \infty} \frac{S(\sigma^2, t)}{t} = I(\sigma^2).
    \end{equation}
\end{proposition}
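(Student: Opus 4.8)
The plan is to recognize $S(\sigma^2, t)/t$ as a Riemann sum for $I(\sigma^2)$ and to invoke the standard fact that Riemann sums of a continuous function converge to its integral. The one point needing a word of justification is that the sum defining $S(\sigma^2, t)$ runs over $j = 1, \dots, t - 1$ rather than over a complete partition of $[0, 1]$. I would dispose of this immediately by noting that $f(\sigma^2, 0) = \log(1 + \tfrac{1}{\sigma^2}\sin^2 0) = 0$ and $f(\sigma^2, 1) = \log(1 + \tfrac{1}{\sigma^2}\sin^2 \pi) = 0$, so that appending these two vanishing terms gives
\begin{equation}
\frac{S(\sigma^2, t)}{t} = \frac{1}{t}\sum_{j = 1}^{t} f\left(\sigma^2, \frac{j}{t}\right),
\end{equation}
which is precisely the right-endpoint Riemann sum of $f(\sigma^2, \cdot)$ over the partition of $[0, 1]$ into $t$ equal subintervals.

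Next I would use that, for each fixed $\sigma^2 > 0$, the map $x \mapsto f(\sigma^2, x)$ is continuous---indeed smooth---on the compact interval $[0, 1]$, hence uniformly continuous with some modulus of continuity $\omega(\cdot)$ satisfying $\omega(h) \to 0$ as $h \downarrow 0$. Bounding the Riemann sum against the integral subinterval by subinterval then gives
\begin{equation}
\left|\frac{1}{t}\sum_{j = 1}^{t} f\left(\sigma^2, \frac{j}{t}\right) - \int_0^1 f(\sigma^2, x)\, dx\right| \leq \sum_{j = 1}^{t}\int_{(j - 1)/t}^{j/t}\left|f\left(\sigma^2, \frac{j}{t}\right) - f(\sigma^2, x)\right| dx \leq \omega\!\left(\frac{1}{t}\right),
\end{equation}
and the right-hand side tends to $0$ as $t \to \infty$. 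Since $\int_0^1 f(\sigma^2, x)\, dx = I(\sigma^2)$ by definition, this is exactly the claimed convergence.

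Honestly, there is no real obstacle here: the result is a textbook consequence of the Riemann integrability of continuous functions, and the boundary bookkeeping is trivial because $f(\sigma^2, \cdot)$ vanishes at both endpoints of $[0, 1]$. If at some later point a quantitative rate for $S(\sigma^2, t) - t\,I(\sigma^2)$ were desired (it is not needed for this statement), one could instead extend $f(\sigma^2, \cdot)$ to a smooth $1$-periodic function and apply the Euler--Maclaurin formula, which would show that the difference decays faster than any power of $1/t$; but for the stated limit the elementary estimate above is all that is required.
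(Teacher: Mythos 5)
Your proof is correct and is exactly the argument the paper invokes: the paper simply states the convergence as ``elementary real analysis'' following from continuity of $f(\sigma^2, \cdot)$ on $[0,1]$, and you have supplied the standard Riemann-sum details (including the minor observation that $f$ vanishes at the endpoints, so the sum over $j = 1, \dots, t-1$ matches the full right-endpoint sum).
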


We will, however, need to be substantially more precise for our applications.
The following are the main technical results that much of our analysis will rely on, a discrete analog of concavity for the Riemann sums of $f(\sigma^2, \cdot)$ as well as a matching opposite bound, which together allow us to formulate linear lower bounds on the $S(\sigma^2, t)$.
\begin{lemma}[Riemann sum discrete concavity]
    \label{lem:S-lower}
    For $\sigma^2 > 0$, $S(\sigma^2, t) - S(\sigma^2, t - 1)$ is strictly decreasing in $t \geq 3$ and approaches $I(\sigma^2)$ as $t \to \infty$.
    In particular, $S(\sigma^2, t) - S(\sigma^2, t - 1) > I(\sigma^2)$ for all $t \geq 3$.
\end{lemma}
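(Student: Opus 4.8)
The plan is to evaluate $S(\sigma^2, t)$ in closed form and then reduce both assertions to the strict concavity of an elementary one-variable function.

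First I would note that, since $f(\sigma^2, 0) = 0$, one has $S(\sigma^2, t) = \sum_{j=0}^{t-1} f(\sigma^2, j/t)$, so that $\exp(S(\sigma^2,t)) = \prod_{j=0}^{t-1}\bigl(1 + \tfrac{1}{2\sigma^2}(1 - \cos(2\pi j/t))\bigr)$, which by Proposition~\ref{prop:lap-evals} equals $\det(\bm I_t + (4\sigma^2)^{-1}\bL^{C_t})$. I would evaluate this product by writing each factor as $a - 2b\cos\theta$ with $a = 1 + (2\sigma^2)^{-1}$, $b = (4\sigma^2)^{-1}$, $\theta = 2\pi j/t$, factoring $a - 2b\cos\theta = -b\, e^{-i\theta}(e^{i\theta} - r)(e^{i\theta} - r^{-1})$ with $r > 1$ the root of $r + r^{-1} = a/b = 4\sigma^2 + 2$, and multiplying over $j$ using $\prod_{j=0}^{t-1}(x - e^{2\pi i j/t}) = x^t - 1$ and $\prod_{j=0}^{t-1} e^{-2\pi i j/t} = (-1)^{t-1}$; all the phases cancel and one gets $\prod_{j=0}^{t-1}(a - 2b\cos(2\pi j/t)) = b^t(r^{t/2} - r^{-t/2})^2$. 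Setting $\rho \colonequals \sigma + \sqrt{\sigma^2+1} > 1$, so that $\rho^2 + \rho^{-2} = 4\sigma^2 + 2$ and $(\rho - \rho^{-1})^2 = 4\sigma^2$, hence $r = \rho^2$ and $b = (\rho - \rho^{-1})^{-2}$, this becomes
\[
    S(\sigma^2, t) = 2\log(\rho^t - \rho^{-t}) - 2t\log(\rho - \rho^{-1}).
\]

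Next I would set $h(s) \colonequals \log(\rho^s - \rho^{-s})$ for $s > 0$, so that $S(\sigma^2, t) - S(\sigma^2, t-1) = 2\bigl(h(t) - h(t-1)\bigr) - 2\log(\rho - \rho^{-1})$. Since $h'(s) = (\log\rho)\coth(s\log\rho)$ is strictly decreasing on $(0,\infty)$ (as $\log\rho > 0$ and $\coth$ is strictly decreasing there), $h$ is strictly concave, so $h(t) - h(t-1) = \int_{t-1}^t h'(s)\,ds$ is strictly decreasing in $t \geq 2$; this gives the monotonicity statement. For the limit, $h(t) - h(t-1) = \log\frac{\rho - \rho^{1-2t}}{1 - \rho^{2-2t}} \to \log\rho$, so $S(\sigma^2,t) - S(\sigma^2,t-1) \to 2\log\frac{\rho^2}{\rho^2-1}$; and from $(\rho - \rho^{-1})^2 = 4\sigma^2$ one checks that $\tfrac12\bigl(1 + \sqrt{1 + \sigma^{-2}}\bigr) = \rho^2/(\rho^2-1)$, so by Proposition~\ref{prop:I-eval} this limit is exactly $I(\sigma^2)$. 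A strictly decreasing sequence with limit $I(\sigma^2)$ stays strictly above it, which is the last claim.

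I expect the closed-form evaluation of the determinant/product to be the only step requiring a genuine idea; once it is in hand the rest is one-variable calculus. A more hands-on approach trying to compare the Riemann sums $S(\sigma^2, t)$ and $S(\sigma^2, t-1)$ directly would be awkward, since they use incommensurate meshes $1/t$ and $1/(t-1)$ and admit no natural term-by-term pairing — the point of passing to the product is precisely that it collapses under roots of unity. (All the displayed identities hold for every $\sigma^2 > 0$; the degenerate regimes $\sigma \to 0$ and $\sigma \to \infty$ correspond to $\rho \to 1^+$ and $\rho \to \infty$ and cause no difficulty.)
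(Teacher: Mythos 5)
Your proof is correct, and it takes a genuinely different route from the paper's, although both arguments ultimately turn on the same closed form. Writing $\rho = \sigma + \sqrt{\sigma^2+1}$, your identity $\exp S(\sigma^2, t) = (\rho - \rho^{-1})^{-2t}(\rho^t - \rho^{-t})^2$ is equivalent to the paper's Lemma~\ref{lem:S-lucas}, which states $\exp S(\sigma^2, t) = (4\sigma^2)^{-t}(L_{2t}(2\sigma) - 2)$: by Binet's formula $L_{2t}(2\sigma) = \rho^{2t} + \rho^{-2t}$, so $L_{2t}(2\sigma) - 2 = (\rho^t - \rho^{-t})^2$, and $(\rho - \rho^{-1})^2 = 4\sigma^2$. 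Where you diverge is in both how this is derived and how it is used. You evaluate $\det(\bm I_t + (4\sigma^2)^{-1}\bL^{C_t})$ directly by factoring through the $t$-th roots of unity; the paper instead expands the characteristic polynomial into rooted-spanning-forest counts via the matrix--tree theorem, sets up a bijection between rooted spanning forests of $C_t$ and matchings of $C_{2t}$, and invokes the fact that $L_t$ is the matching polynomial of $C_t$. Your derivation is shorter and avoids the combinatorics. The larger gain is in the concavity step: you note that $h(s) = \log(\rho^s - \rho^{-s})$ has $h'(s) = (\log\rho)\coth(s\log\rho)$, which is strictly decreasing, so $h$ is strictly concave and $h(t) - h(t-1)$ is strictly decreasing --- a single one-variable calculus argument that works uniformly (in fact for all $t \geq 2$, slightly more than claimed). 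The paper instead exponentiates $2S(\sigma^2,t) > S(\sigma^2,t-1) + S(\sigma^2,t+1)$ into the polynomial inequality $(L_{2t}-2)^2 > (L_{2t-2}-2)(L_{2t+2}-2)$ and verifies it with a case split: $t=3$ by AM--GM on explicit low-degree polynomials, and $t \geq 4$ by Binet's formula, treating $t$ as a continuous parameter to reduce to $t=4$, and then exhibiting an explicit nonnegative factorization. Your argument avoids the case split. You also explicitly compute the limit of the forward differences and identify it with $I(\sigma^2)$ via Proposition~\ref{prop:I-eval}, whereas the paper's proof addresses only the monotonicity and leaves the limit to follow implicitly from the Ces\`aro convergence $S(\sigma^2,t)/t \to I(\sigma^2)$ together with monotonicity; your treatment is the more complete one. (One cosmetic remark: the determinant $\det(\bm I_t + (4\sigma^2)^{-1}\bL^{C_t})$ only literally makes sense for $t \geq 3$, but your product factorization itself is valid for all $t \geq 2$ with no reference to a graph, so nothing is lost.)
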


\begin{lemma}[Riemann sum upper bound]
    \label{lem:S-upper}
    For $t \geq 2$ and $\sigma^2 > 0$, $S(\sigma^2, t) < tI(\sigma^2)$.
\end{lemma}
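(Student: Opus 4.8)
The plan is to bypass any delicate Riemann-sum estimate by writing $S(\sigma^2,t)$ in closed form and comparing it directly with the closed form for $I(\sigma^2)$ from Proposition~\ref{prop:I-eval}; after this the inequality collapses to something essentially trivial. First, since $f(\sigma^2,0)=0$ I would write $S(\sigma^2,t)=\sum_{j=0}^{t-1}f(\sigma^2,j/t)=\log\prod_{j=0}^{t-1}\bigl(1+\tfrac{1}{2\sigma^2}(1-\cos(2\pi j/t))\bigr)$, which by Proposition~\ref{prop:lap-evals} is exactly $\Tr\log(1+(4\sigma^2)^{-1}\bL^{C_t})$ for $t\ge 3$ (and the same product formula covers $t=2$ directly, where $S(\sigma^2,2)=f(\sigma^2,1/2)=\log(1+\sigma^{-2})$). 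The one computational input is the classical trigonometric identity
\[
\prod_{j=0}^{t-1}\bigl(2\cosh\theta-2\cos(2\pi j/t)\bigr)=e^{t\theta}+e^{-t\theta}-2=4\sinh^2(t\theta/2),\qquad \theta>0,
\]
which I would prove in one line by factoring each term as $2\cosh\theta-2\cos\alpha=e^{-\theta}(e^{\theta}-e^{i\alpha})(e^{\theta}-e^{-i\alpha})$ and using $\prod_{j=0}^{t-1}(z-\omega^j)=z^t-1$ with $\omega=e^{2\pi i/t}$ (together with $\{\omega^{-j}\}=\{\omega^{j}\}$).

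Next I would introduce $\theta=\theta(\sigma^2)>0$ defined by $\cosh\theta=1+2\sigma^2$. A short check shows each factor of the product above equals $(4\sigma^2)^{-1}\bigl(2\cosh\theta-2\cos(2\pi j/t)\bigr)$, so the identity gives
\[
S(\sigma^2,t)=\log\bigl(4\sinh^2(t\theta/2)\bigr)-t\log(4\sigma^2).
\]
To connect with $I(\sigma^2)$, note $\cosh\theta=1+2\sigma^2$ forces $\cosh(\theta/2)=\sqrt{1+\sigma^2}$ and $\sinh(\theta/2)=\sigma$, hence $e^{\theta/2}=\sigma+\sqrt{1+\sigma^2}$; rationalizing the expression in Proposition~\ref{prop:I-eval} then yields $I(\sigma^2)=2\log\tfrac{1+\sqrt{1+\sigma^{-2}}}{2}=2\log\tfrac{e^{\theta/2}}{2\sigma}=\theta-\log(4\sigma^2)$.

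Subtracting, the $\log(4\sigma^2)$ terms cancel and
\[
t\,I(\sigma^2)-S(\sigma^2,t)=t\theta-\log\bigl(e^{t\theta}+e^{-t\theta}-2\bigr).
\]
Since $\theta>0$ and $t\ge 2$ we have $e^{-t\theta}<1<2$, so $e^{t\theta}+e^{-t\theta}-2<e^{t\theta}$, and the right-hand side is strictly positive. That is exactly the statement of Lemma~\ref{lem:S-upper}.

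I do not expect a real obstacle here: all the content is in the two closed-form evaluations, and the only places needing care are the bookkeeping in the hyperbolic substitution (matching the normalizations of $S$, $I$, and the product identity) and making sure the degenerate case $t=2$—where $C_2$ is not a simple graph—is handled via the single term $f(\sigma^2,1/2)$ rather than via $\bL^{C_2}$. As a side benefit, the same formula $S(\sigma^2,t)=\log(4\sinh^2(t\theta/2))-t\log(4\sigma^2)$ makes $S(\sigma^2,t)-S(\sigma^2,t-1)=2\log\tfrac{\sinh(t\theta/2)}{\sinh((t-1)\theta/2)}-\log(4\sigma^2)$ explicit, and log-concavity of $\sinh$ on $(0,\infty)$ immediately recovers the monotonicity and the limit $I(\sigma^2)$ in Lemma~\ref{lem:S-lower}, so it may be cleanest to derive both lemmas from this one identity.
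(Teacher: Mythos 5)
Your proof is correct, and it arrives at the \emph{same} closed form for $\exp(S(\sigma^2,t))$ that the paper uses, but by a genuinely different route. The paper derives $\exp(S(\sigma^2,t)) = (4\sigma^2)^{-t}(L_{2t}(2\sigma)-2)$ through a combinatorial detour (Lemma~\ref{lem:S-lucas}): the matrix-tree theorem for spanning forests, a bijection between rooted spanning forests of $C_t$ and matchings of $C_{2t}$, the identification of Lucas polynomials as matching polynomials of cycles, and finally Binet's formula. You obtain the equivalent identity $\exp(S(\sigma^2,t)) = (4\sigma^2)^{-t}\bigl(e^{t\theta}+e^{-t\theta}-2\bigr)$ directly from the factorization $\prod_{j=0}^{t-1}(z-\omega^j)=z^t-1$ with $\omega = e^{2\pi i /t}$ — this is the same formula, since with $\cosh\theta = 1+2\sigma^2$ one has $e^{\theta/2}=\alpha(2\sigma)$ and $-e^{-\theta/2}=\beta(2\sigma)$ in the paper's Binet notation. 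Your analytic derivation is shorter, fully self-contained, and holds uniformly for $t\geq 2$ (the paper's Lemma~\ref{lem:S-lucas} is stated only for $t\geq 3$, so the paper's argument for $t=2$ is left slightly implicit). The final comparison is also equivalent in disguise: the paper reduces to $2^{1/t}/(4\sigma^2) > \bigl(\tfrac{\sqrt{1+\sigma^{-2}}-1}{2}\bigr)^2 = e^{-\theta}/(4\sigma^2)$, which is exactly your $e^{-t\theta}<2$; you state the even cleaner $e^{-t\theta}<1<2$. Your side remark that the same identity plus log-concavity of $\sinh$ on $(0,\infty)$ also yields Lemma~\ref{lem:S-lower} checks out (the increments $\log\sinh(t\theta/2)-\log\sinh((t-1)\theta/2)$ are decreasing and converge to $\theta/2$), though the paper's Lucas-polynomial machinery, while heavier, is reused for that lemma in a different way and has some independent combinatorial interest. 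In short: correct, same key identity, more elementary derivation, and a tidier endgame.
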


\begin{corollary}[Riemann sum lower bound]
    \label{cor:riemann-sum-cvx}
    For all $t_0 \geq 2$ and $t > t_0$, we have
    \begin{equation}
        S(\sigma^2, t) > S(\sigma^2, t_0) + (t - t_0)I(\sigma^2) = tI(\sigma^2) - (t_0 I(\sigma^2) - S(\sigma^2, t_0)),
    \end{equation}
    where the constant term satisfies $t_0 I(\sigma^2) - S(\sigma^2, t_0) > 0$.
\end{corollary}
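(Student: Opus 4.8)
The plan is to derive the corollary directly from the two preceding lemmas via a telescoping argument, with essentially no further analysis needed. First I would write, for any $t > t_0 \geq 2$,
\[
S(\sigma^2, t) - S(\sigma^2, t_0) = \sum_{s = t_0 + 1}^{t} \bigl(S(\sigma^2, s) - S(\sigma^2, s - 1)\bigr).
\]
Since $t_0 \geq 2$, every index occurring in this sum satisfies $s \geq t_0 + 1 \geq 3$, so Lemma~\ref{lem:S-lower} applies to each summand and gives $S(\sigma^2, s) - S(\sigma^2, s - 1) > I(\sigma^2)$. Adding up these $t - t_0$ bounds yields $S(\sigma^2, t) - S(\sigma^2, t_0) > (t - t_0) I(\sigma^2)$, which is exactly the first claimed inequality; rearranging gives the equivalent form $S(\sigma^2, t) > t I(\sigma^2) - \bigl(t_0 I(\sigma^2) - S(\sigma^2, t_0)\bigr)$.

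For the positivity of the constant term, I would invoke Lemma~\ref{lem:S-upper} at $t = t_0$ (legitimate since $t_0 \geq 2$), which gives $S(\sigma^2, t_0) < t_0 I(\sigma^2)$, hence $t_0 I(\sigma^2) - S(\sigma^2, t_0) > 0$.

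There is no substantive obstacle here: the only point requiring care is that Lemma~\ref{lem:S-lower} is asserted only for $t \geq 3$, so one must check it is never applied at $t = 2$ — but the hypothesis $t_0 \geq 2$ ensures the telescoping sum starts at index $t_0 + 1 \geq 3$, so this is automatic. All the genuine content of the statement is already packaged in Lemmas~\ref{lem:S-lower} and~\ref{lem:S-upper}, and the proof of the corollary is purely their assembly.
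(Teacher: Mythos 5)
Your proof is correct and is exactly what the paper means by ``the third result follows immediately from the first two'': telescoping $S(\sigma^2,t)-S(\sigma^2,t_0)$ and applying Lemma~\ref{lem:S-lower} to each increment (valid since each index is at least $t_0+1\geq 3$), then invoking Lemma~\ref{lem:S-upper} at $t_0$ for the positivity of the constant.
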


\noindent
The third result follows immediately from the first two.
We give the proofs of the first two results in Appendix~\ref{app:riemann}.
The proofs rely on a combinatorial relationship between the sums $S(\sigma^2, t)$ and the \emph{Lucas polynomials}, which solve a Fibonacci-like recurrence that allows very precise asymptotics via a polynomial-valued analogue of Binet's formula.

\section{Upper Bounds and First Moment Method}

\subsection{Counting Augmenting Cycles}

To prove upper bounds on $|\sE|$, we use the first moment method and bound $\EE |\sE|$ by counting the numbers of augmenting cycles of various sizes.
First, we bound the probability that a cycle of a given size is augmenting.

\begin{proposition}
    \label{prop:cycle-aug-prob}
    Let $C$ be any fixed $t$-cycle in $[n]$.
    Then,
    \begin{equation}
        \PP[C \text{ is augmenting}] \leq \exp\left(-\frac{d}{2}S(\sigma^2, t)\right).
    \end{equation}
\end{proposition}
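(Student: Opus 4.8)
The plan is to reduce the event to a Gaussian Chernoff (moment generating function) bound and then recognize the resulting determinant as $\exp(S(\sigma^2,t))$ through the spectrum of the cycle Laplacian.

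First I would fix the cycle; since the law of $|\sE|$ is unchanged by taking $\pi^\star = \id$, I relabel its vertices so that $C = (1, 2, \dots, t)$ with index arithmetic modulo $t$. Then $C$ is augmenting exactly when $Q \colonequals \sum_{k=1}^t \langle \bx_k, \by_{k+1} - \by_k \rangle \ge 0$. Substituting $\by_j = \bx_j + \bz_j$ splits this into a ``signal'' part $\sum_k \langle \bx_k, \bx_{k+1} - \bx_k \rangle$ and a ``noise'' part $\sum_k \langle \bx_k, \bz_{k+1} - \bz_k \rangle$. Because all the Gaussian vectors have i.i.d.\ coordinates, $Q = \sum_{\ell=1}^d Q_\ell$, where $Q_1, \dots, Q_d$ are i.i.d.\ copies of a single scalar quadratic form in $2t$ independent Gaussians, namely the $\ell$-th coordinates $a_1, \dots, a_t \sim \sN(0,1)$ of the $\bx_k$ together with the $\ell$-th coordinates $b_1, \dots, b_t \sim \sN(0,\sigma^2)$ of the $\bz_k$. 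The Chernoff bound then gives, for every $s \ge 0$, $\PP[C \text{ augmenting}] = \PP[Q \ge 0] \le (\EE e^{sQ_1})^d$.

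Next I would evaluate $\EE e^{sQ_1}$ in closed form. Writing $\nabla$ for the cyclic difference operator ($(\nabla v)_k = v_{k+1} - v_k$), summation by parts on the cycle gives $\sum_k a_k(a_{k+1} - a_k) = -\tfrac12 a^\top \bL^{C_t} a$ and $\sum_k a_k(b_{k+1} - b_k) = a^\top \nabla b$, using $\nabla^\top \nabla = \nabla \nabla^\top = \bL^{C_t}$. Integrating out $b \sim \sN(0, \sigma^2 \bm I_t)$ (a linear form, so its contribution is $\exp(\tfrac12 s^2 \sigma^2 \|\nabla^\top a\|^2) = \exp(\tfrac12 s^2 \sigma^2 a^\top \bL^{C_t} a)$) and then $a \sim \sN(0, \bm I_t)$ (a Gaussian chaos, via $\EE \exp(-\mu\, a^\top \bL a) = \det(\bm I + 2\mu \bL)^{-1/2}$ for $\mu \ge 0$ and $\bL \succeq 0$) yields
\[
\EE e^{sQ_1} = \det\!\left(\bm I + s(1 - s\sigma^2)\bL^{C_t}\right)^{-1/2},
\]
valid for $0 \le s \le \sigma^{-2}$, which is what makes $\mu = \tfrac{s}{2}(1 - s\sigma^2) \ge 0$ and the integrals converge. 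Maximizing the scalar $s(1 - s\sigma^2)$ over this interval picks $s = \tfrac{1}{2\sigma^2}$ with value $\tfrac{1}{4\sigma^2}$, giving $\PP[C \text{ augmenting}] \le \det(\bm I + \tfrac{1}{4\sigma^2}\bL^{C_t})^{-d/2}$.

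Finally I would compute this determinant using Proposition~\ref{prop:lap-evals}: the eigenvalues of $\bL^{C_t}$ are $4\sin^2(\pi k/t)$ for $k = 0, \dots, t-1$, so
\[
\log\det\!\left(\bm I + \tfrac{1}{4\sigma^2}\bL^{C_t}\right) = \sum_{k=1}^{t-1} \log\!\left(1 + \tfrac{1}{\sigma^2}\sin^2(\pi k/t)\right) = \sum_{k=1}^{t-1} f\!\left(\sigma^2, \tfrac{k}{t}\right) = S(\sigma^2, t),
\]
the $k = 0$ term dropping because that eigenvalue is $0$; this is exactly the claimed inequality. The only delicate point is the bookkeeping that puts $Q_1$ into the clean form $-\tfrac12 a^\top \bL^{C_t} a + a^\top \nabla b$ and tracks the positivity conditions ($\bm I + 2\mu\bL^{C_t} \succ 0$ and $s(1 - s\sigma^2) \ge 0$) ensuring the Gaussian integrals are finite; there is no substantive obstacle, since the cyclic structure is precisely what diagonalizes all the relevant operators simultaneously in the discrete Fourier basis.
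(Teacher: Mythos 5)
Your proof is correct and reaches the same determinant $\det(\bm I + \tfrac{1}{4\sigma^2}\bL^{C_t})^{-d/2}$, but organizes the computation differently from the paper. The paper conditions on $\bx$, observes that the remaining noise term is a Gaussian linear form, applies the tail bound $\PP[g \geq u] \leq e^{-u^2/2}$, and then integrates the resulting exponential over $\bx$ as a $\chi^2$ moment generating function. You instead apply a Chernoff bound to $Q$ itself, exploit the coordinate-wise i.i.d.\ decomposition $Q = \sum_{\ell=1}^d Q_\ell$ to reduce to a scalar quadratic form, integrate out $b$ (linear in $b$, quadratic in $a$) and then $a$, and optimize over the Chernoff parameter explicitly to find $s = \tfrac{1}{2\sigma^2}$. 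These two routes are secretly the same bound: the paper's Gaussian tail estimate is the optimal Chernoff conditional on $\bx$, and the optimal parameter $\tfrac{m}{v^2} = \tfrac{\|\cdot\|^2/2}{\sigma^2\|\cdot\|^2} = \tfrac{1}{2\sigma^2}$ happens not to depend on $\bx$, so the two inequalities coincide. What your formulation buys is a genuinely cleaner unification of $t = 2$ and $t \geq 3$: writing everything in terms of the cyclic difference operator $\nabla$ with $\nabla^\top\nabla = \nabla\nabla^\top$ absorbs the factor-of-two bookkeeping that forces the paper to treat $t = 2$ as a separate case via $\bL^{P_2}$. One small caveat you should flag: for $t = 2$ the operator $\nabla^\top\nabla$ has eigenvalues $0$ and $4$, which \emph{does} match the formula $4\sin^2(\pi k/t)$ from Proposition~\ref{prop:lap-evals}, but the paper only states that proposition for $t \geq 3$ (and for $t = 2$ the simple graph $C_2$ degenerates); you should either note that the eigenvalue formula extends to $t = 2$ for the operator $\nabla^\top\nabla$, or verify the two eigenvalues directly.
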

\begin{proof}
    Without loss of generality we may suppose that $C = (1, \dots, t)$.
    Let us consider the cases $t = 2$ and $t \geq 3$ separately.
    If $t = 2$, then $C$ is augmenting if and only if 
    \begin{equation}
        W_{1, 2} + W_{2, 1} \geq W_{1, 1} + W_{2, 2},
    \end{equation}
    which in turn holds if and only if
    \begin{equation}
        \langle \bz_1, \bx_2 - \bx_1 \rangle + \langle \bz_2, \bx_1 - \bx_2 \rangle \geq \|\bx_1 - \bx_2\|^2.
    \end{equation}
    Here, conditional on the $\bx_i$, the law of the left-hand side is $\sN(0, 2\sigma^2 \|\bx_1 - \bx_2\|^2)$ since $\bz_1$ and $\bz_2$ are i.i.d.\ with law $\sN(0, \sigma^2 \bm I_d)$.
    Therefore, we compute
    \begin{align*}
        \PP[C \text{ augmenting}]
        &= \Ex_{\bx_1, \bx_2} \Px_{g \sim \sN(0, 2\sigma^2 \|\bx_1 - \bx_2\|^2)}[g \geq \|\bx_1 - \bx_2\|^2] \\
        &= \Ex_{\bx_1, \bx_2} \Px_{g \sim \sN(0, 1)}\left[g \geq \sqrt{\frac{\|\bx_1 - \bx_2\|^2}{2\sigma^2}}\right] \\
        &\leq \Ex_{\bx_1, \bx_2} \exp\left(-\frac{\|\bx_1 - \bx_2\|^2}{4\sigma^2}\right)
        \intertext{To evaluate the remaining expectation, we must understand the spectrum of the quadratic form involved.
        Writing $\bx$ for the concatenation of $\bx_1$ and $\bx_2$, we may write $\|\bx_1 - \bx_2\|^2 = \bx^{\top} (\bL^{P_2} \otimes \bm I_d) \bx$, where $\bL^{P_2} \in \RR^{2 \times 2}$ is the Laplacian of the path graph on two vertices, using the notation of Proposition~\ref{prop:lap-evals}.
        By the Proposition, the eigenvalues of $\bL^{P_2}$ are 0 and 2.
    Therefore, continuing by applying an orthogonal change of basis diagonalizing the quadratic form and evaluating the $\chi^2$ moment generating function that appears, we find}
    &= \det\left( \bm I_{2d} + \frac{1}{2\sigma^2}(\bL^{P_2} \otimes \bm I_d)\right)^{-1/2} \\
    &= \det\left( \bm I_{2} + \frac{1}{2\sigma^2}\bL^{P_2}\right)^{-d/2} \\
    &= \left(1 + \frac{1}{\sigma^2}\right)^{-d/2} \\
    &= \exp\left(-\frac{d}{2}\log\left(1 + \frac{1}{\sigma^2}\right)\right) \\
    &= \exp\left(-\frac{d}{2}S(\sigma^2, 2)\right), \numberthis
    \end{align*}
    as claimed.
    
    Now, suppose $t \geq 3$.
    Then $C$ is augmenting if and only if
    \begin{equation}
        W_{t, 1} + \sum_{i = 1}^{t - 1} W_{i,i + 1} \geq \sum_{i = 1}^t W_{i,i},
    \end{equation}
    which in turn holds if and only if
    \begin{equation}
        \la \bz_1, \bx_t - \bx_1 \ra + \sum_{i = 2}^{t} \la \bz_i, \bx_{i - 1} - \bx_{i} \ra \geq \frac{1}{2}\bigg(\|\bx_t - \bx_1\|_2^2 + \sum_{i = 2}^t \|\bx_{i - 1} - \bx_{i}\|_2^2\bigg).
    \end{equation}
    Again, let $\bx$ be the concatenation of the $\bx_i$.
    Then, we have
    \begin{equation}
        \|\bx_t - \bx_1\|_2^2 + \sum_{i = 2}^t \|\bx_{i - 1} - \bx_{i}\|_2^2 = \bx^{\top} (\bL^{C_t} \otimes \bm I_d) \bx,
    \end{equation}
    where $\bL^{C_t}$ is the Laplacian of the cycle graph $C_t$ on $t$ vertices.
    Thus the law of the left-hand side above conditional on the $\bx_i$ is $\sN(0, \sigma^2 \bx^{\top} (\bL^{C_t} \otimes \bm I_d) \bx)$, while the right-hand side is $\frac{1}{2} \bx^{\top} (\bL^{C_t} \otimes \bm I_d) \bx$.
    (We note the two differences from the case $t = 2$: the path graph is replaced by the cycle graph, and an extra factor of $\frac{1}{2}$ appears on the right-hand side.)
    An analogous computation to before gives
    \begin{align*}
    \PP\left[C \text{ augmenting}\right] 
    &= \Ex_{\bx_1, \dots, \bx_t} \Px_{g \sim \sN(0, \sigma^2 \bx^{\top} (\bL^{C_t} \otimes \bm I_d) \bx)}\left[g \geq \frac{\bx^{\top} (\bL^{C_t} \otimes \bm I_d) \bx}{2} \right] \\
    &= \Ex_{\bx_1, \dots, \bx_t} \Px_{g \sim \sN(0, 1)}\left[g \geq \sqrt{\frac{\bx^{\top} (\bL^{C_t} \otimes \bm I_d) \bx}{4\sigma^2}} \, \right] \\
    &\leq \Ex_{\bx_1, \dots, \bx_t} \exp\left(-\frac{\bx^{\top} (\bL^{C_t} \otimes \bm I_d) \bx}{8\sigma^2}\right) \\
    &= \det\left( \bm I_{dt} + \frac{1}{4\sigma^2}\bL^{C_t} \otimes \bm I_d\right)^{-1/2} \\
    &= \det\left( \bm I_{t} + \frac{1}{4\sigma^2}\bL^{C_t}\right)^{-d/2}
    \intertext{and substituting in the eigenvalues of $\bL$ from Proposition~\ref{prop:lap-evals}, we have}
    &= \left(\prod_{j = 0}^{t - 1}\left\{1 + \frac{1}{2\sigma^2}\left(1 - \cos\left(\frac{2\pi j}{t}\right)\right)\right\}\right)^{-d/2} \\
    &= \exp\left(-\frac{d}{2}\sum_{j = 0}^{t - 1}\log\left(1 + \frac{1}{2\sigma^2}\left(1 - \cos\left(\frac{2\pi j}{t}\right)\right)\right)\right) \\
    &= \exp\left(-\frac{d}{2}S(\sigma^2, t)\right), \numberthis
    \end{align*}
    again giving the result.
\end{proof} 

\begin{corollary}
    \label{cor:first-moment}
    For any $d, n, \sigma^2$,
    \begin{equation}
        \EE |\sE| \leq \sum_{t = 2}^n \exp\left(t\log n - \frac{d}{2}S(\sigma^2, t)\right).
    \end{equation}
\end{corollary}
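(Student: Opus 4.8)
The plan is to combine the key structural fact about $\sE$---that its indices are partitioned into augmenting cycles of $\what\pi$---with the union bound and Proposition~\ref{prop:cycle-aug-prob}. Concretely, I would first recall from Section~\ref{sec:techniques} that $\sE$ decomposes as a disjoint union of cycles of $\what\pi$, each of which is augmenting in the sense that $\sum_k W_{i_k i_{k+1}} \geq \sum_k W_{i_k i_k}$ (with index arithmetic mod $t$); this is because otherwise replacing that cycle by the identity on its support would strictly increase $\langle \bW, \bP_{\what\pi}\rangle$, contradicting optimality of $\what\pi$. Note that a cycle of $\what\pi$ supported on a single index is just a fixed point, which contributes nothing to $\sE$, so every cycle of $\what\pi$ meeting $\sE$ has length $t \geq 2$.

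Next I would introduce, for each $t$-cycle $C$ on $[n]$ with $2 \leq t \leq n$, the indicator random variable $\One\{C \text{ is augmenting}\}$, and observe that
\begin{equation}
    |\sE| \leq \sum_{t = 2}^{n} \, \sum_{\substack{C \text{ a } t\text{-cycle} \\ \text{on } [n]}} t \cdot \One\{C \text{ is a cycle of } \what\pi\} \leq \sum_{t = 2}^{n} \, \sum_{\substack{C \text{ a } t\text{-cycle} \\ \text{on } [n]}} t \cdot \One\{C \text{ is augmenting}\},
\end{equation}
where the factor of $t$ accounts for the $t$ indices in the support of each cycle. Taking expectations, using linearity and Proposition~\ref{prop:cycle-aug-prob}, and noting that the number of $t$-cycles on $[n]$ is $\binom{n}{t}(t-1)! \leq n^t / t$, each term $\sum_C t \cdot \PP[C \text{ augmenting}]$ is bounded by $t \cdot \frac{n^t}{t} \cdot \exp(-\frac{d}{2} S(\sigma^2, t)) = \exp(t \log n - \frac{d}{2} S(\sigma^2, t))$. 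Summing over $t$ from $2$ to $n$ gives exactly the claimed bound.

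The only genuinely delicate point---and the main thing to get right rather than the main obstacle---is justifying that $|\sE|$ really is at most the total number of indices lying in augmenting cycles of $\what\pi$, i.e.\ that every index at which $\what\pi$ and $\pi^\star = \id$ disagree lies in a nontrivial cycle of $\what\pi$ and that all such cycles are augmenting. The first half is immediate (if $\what\pi(i) \neq i$ then $i$ is in a cycle of length $\geq 2$); the second half is the optimality argument sketched above. Since the clean version of this structural lemma is exactly what Section~\ref{sec:techniques} asserts, and we are told to assume earlier results, the write-up is short: state the decomposition, apply the union bound with the $n^t/t$ count of $t$-cycles, insert Proposition~\ref{prop:cycle-aug-prob}, and collect terms. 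No second-moment or concentration input is needed here---that enters only for the matching lower bounds elsewhere in the paper.
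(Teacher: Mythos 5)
Your argument is correct and follows essentially the same route as the paper: bound $|\sE|$ by the total length of augmenting cycles of $\widehat\pi$, apply linearity of expectation with the count $\binom{n}{t}(t-1)! \leq n^t/t$ of $t$-cycles, and insert the probability bound from Proposition~\ref{prop:cycle-aug-prob}. The extra care you take with indicators and with ruling out fixed points is a fine elaboration of the paper's terser phrasing, but not a different proof.
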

\begin{proof}
    $\sE$ is a disjoint union of augmenting cycles, so $|\sE|$ is at most the sum of the lengths of all augmenting cycles.
    The result then follows from linearity of expectation and applying that the number of $t$-cycles in $[n]$ is $\leq n^t / t$ and the probability bound of Proposition~\ref{prop:cycle-aug-prob}.
\end{proof}

With these expressions for the expected masses of augmenting cycles of various sizes in hand, we may describe more precisely why the situation presented in Figure~\ref{fig:cycle-counts} arises: the limiting exponent above as $t \to \infty$ is $\sim t\log n(1 - \frac{d}{2\log n} I(\sigma^2))$, thus the transition around $I(\sigma^2) = 2\log(\frac{1 + \sqrt{1 + \sigma^{-2}}}{2}) = \frac{2\log n}{d}$, or $\sigma^2 = \frac{1}{(2n^{1/d} - 1)^2 - 1}$, determines whether the expected mass of large augmenting cycles diverges or not, which we conjecture is the correct strong recovery threshold.
Moreover, it will turn out that when strong recovery is possible, then the dominant contribution is by augmenting 2-cycles, whose exponent is $2 - \frac{d}{2}S(\sigma^2, 2) = 2\log n - \frac{d}{2}\log(1 + \sigma^{-2})$, and this changes sign at $\sigma^2 = \frac{1}{n^{4/d} - 1}$, which is the perfect recovery threshold.

\subsection{Perfect Recovery}

In this section we give a sufficient condition for perfect recovery, which proves Part 1 of Theorem~\ref{thm:d-ll-logn}, Part 1 of Theorem~\ref{thm:d-theta-logn}, and Theorem~\ref{thm:d-omega-logn}.

\begin{lemma}
    Let $s_0 \colonequals 2^{1/d}$, and suppose that
    \begin{equation}
        \sigma^2 \leq \frac{1}{s_0^{\omega(1)}n^{4/d} - 1}.
    \end{equation}
    Then, $\EE |\sE| \to 0$, so, in particular, $|\sE| = 0$ with high probability.
\end{lemma}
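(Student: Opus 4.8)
The plan is to plug the hypothesis directly into the first-moment bound of Corollary~\ref{cor:first-moment}, $\EE|\sE|\le\sum_{t=2}^n\exp(t\log n-\tfrac d2 S(\sigma^2,t))$, and to show the right-hand side is $o(1)$; since $|\sE|$ is a non-negative integer, this immediately yields $|\sE|=0$ with high probability. First I would record the hypothesis in the convenient form $1+\sigma^{-2}\ge s_0^{\,g(n)}n^{4/d}=2^{g(n)/d}n^{4/d}$ for some $g(n)=\omega(1)$, and then split the sum into its $t=2$ term and the tail $t\ge3$, using the two identities $S(\sigma^2,2)=\log(1+\sigma^{-2})$ (the $t=2$ case inside the proof of Proposition~\ref{prop:cycle-aug-prob}) and $I(\sigma^2)=2\log\tfrac{1+\sqrt{1+\sigma^{-2}}}{2}$ (Proposition~\ref{prop:I-eval}).

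For the $t=2$ term I would observe that it equals $\exp(-\Delta_n)$ where $\Delta_n\colonequals\tfrac d2\log(1+\sigma^{-2})-2\log n$, and that the hypothesis gives $\Delta_n\ge\tfrac12 g(n)\log2\to\infty$, so this term vanishes. For the tail I would anchor the Riemann-sum lower bound at $t_0=2$: Corollary~\ref{cor:riemann-sum-cvx} gives $S(\sigma^2,t)>S(\sigma^2,2)+(t-2)I(\sigma^2)$ for $t\ge3$, so writing $t\log n=2\log n+(t-2)\log n$ the $t$-th summand is at most $\exp(-\Delta_n)\exp(-(t-2)\eta_n)$ with $\eta_n\colonequals\tfrac d2 I(\sigma^2)-\log n$, whence
\[
  \sum_{t=3}^n\exp\!\left(t\log n-\tfrac d2 S(\sigma^2,t)\right)\;\le\;\exp(-\Delta_n)\,\frac{e^{-\eta_n}}{1-e^{-\eta_n}}.
\]
It then remains only to show that $\eta_n$ is bounded away from $0$. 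Setting $m\colonequals(2^{g(n)}n^4)^{1/(2d)}$ one has $\sqrt{1+\sigma^{-2}}\ge m\ge1$ and $\tfrac d2\log m=\log n+\tfrac14 g(n)\log2$, and the elementary inequality $\tfrac{1+m}{2}\ge\sqrt m$ then gives $\tfrac d2 I(\sigma^2)=d\log\tfrac{1+\sqrt{1+\sigma^{-2}}}{2}\ge\tfrac d2\log m$, i.e.\ $\eta_n\ge\tfrac14 g(n)\log2\to\infty$. For $n$ large the geometric factor above is thus at most $1$, and combining the two parts gives $\EE|\sE|\le2\exp(-\Delta_n)\to0$, as desired.

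The only genuinely delicate point---and the reason care is needed rather than routine calculation---is quantitative, and it is exactly why the hypothesis carries the factor $s_0^{\omega(1)}$ rather than the bare $n^{4/d}$: at the threshold $\sigma^2=1/(n^{4/d}-1)$ one has $\Delta_n=0$, and taking $m=n^{2/d}$ above one gets only $\eta_n\ge0$ with this AM--GM gap shrinking to $0$ as $d\to\infty$, so without slack neither the $t=2$ term nor the geometric ratio would be controlled. The extra $s_0^{\omega(1)}$ converts the borderline AM--GM inequality into the strict surpluses $\Delta_n\ge\tfrac12 g(n)\log2$ and $\eta_n\ge\tfrac14 g(n)\log2$, uniformly in how $d=d(n)$ and $\sigma^2=\sigma^2(n)$ scale, which is precisely what lets this one lemma simultaneously cover the perfect-recovery parts of Theorems~\ref{thm:d-ll-logn}, \ref{thm:d-theta-logn}, and~\ref{thm:d-omega-logn}. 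Beyond that single observation---the reduction to the first-moment sum, peeling off $t=2$, anchoring Corollary~\ref{cor:riemann-sum-cvx} at $t_0=2$, and summing the geometric tail---everything is bookkeeping.
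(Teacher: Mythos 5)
Your proof is correct and follows essentially the same route as the paper's: both rest on Corollary~\ref{cor:first-moment}, both anchor Corollary~\ref{cor:riemann-sum-cvx} at $t_0 = 2$, and both exploit the gap between $S(\sigma^2,2)$ and $2I(\sigma^2)$ before summing a geometric tail. The one cosmetic difference is that where the paper invokes Lemma~\ref{lem:S-upper} to pass from $2 - \tfrac{d}{2\log n}S(\sigma^2,2) < 0$ to $1 - \tfrac{d}{2\log n}I(\sigma^2) < 0$ and then bounds all summands uniformly by $e^{-\omega(1)\cdot t}$, you peel off $t=2$ and re-derive the $t=2$ case of that lemma directly from Proposition~\ref{prop:I-eval} via AM--GM (setting $m=\sqrt{1+\sigma^{-2}}$, the inequality $\tfrac{1+m}{2}\geq\sqrt m$ \emph{is} $S(\sigma^2,2)<2I(\sigma^2)$); this is the same inequality in slightly different clothing.
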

\noindent
Before proceeding with the proof, let us indicate how this implies the claimed results for specific scalings of $d$.
When $d \ll \log n$ is constant, then $s_0$ is bounded and the denominator in the bound above goes to infinity as $n \to \infty$, so the condition is satisfied whenever $\sigma^2 \ll n^{-4/d}$, giving Part 1 of Theorem~\ref{thm:d-ll-logn}.

When $d = a \log n$, then $n^{4/d} = e^{4/a}$, and there exists $f(n) = \omega(1)$ such that $s_0^{f(n)} \to 1$.
Thus the condition is satisfied whenever $\sigma^2$ is bounded below $\frac{1}{e^{4/a} - 1}$, giving Part~1 of Theorem~\ref{thm:d-theta-logn}.

Finally, when $d = \omega(\log n)$, then for any $\epsilon > 0$ again we may choose $f(n) = \omega(1)$ such that $s_0^{f(n)} = 2^{f(n) / d} \leq n^{\epsilon / d}$.
Thus the condition is satisfied whenever $\sigma^2 \leq \frac{1}{n^{(4 + \epsilon) / d} - 1} \sim \frac{1}{4 + \epsilon} \frac{d}{\log n}$, giving Theorem~\ref{thm:d-omega-logn}.

\begin{proof}
Rearranging the assumption on $\sigma^2$, we have
\begin{equation}
    2 - \frac{d}{2\log n}S(\sigma^2, 2) =2 - \frac{d \log(1 + \sigma^{-2})}{2\log n} \leq -\omega\left(\frac{d\log s_0}{\log n}\right) = -\omega\left(\frac{1}{\log n}\right).
\end{equation}
Also, since by Lemma~\ref{lem:S-upper} we have $S(\sigma^2, 2) < 2I(\sigma^2)$, we further have
\begin{equation}
    2 - \frac{d}{2\log n}S(\sigma^2, 2) > 2 - \frac{d}{2\log n}2I(\sigma^2) = 2\left(1 - \frac{d}{2\log n}I(\sigma^2)\right).
\end{equation}
Towards bounding the exponents appearing in Corollary~\ref{cor:first-moment}, we manipulate
\begin{align*}
    t\log n - \frac{d}{2}S(\sigma^2, t)
    &= \log n\left(t - \frac{d}{2\log n}S(\sigma^2, t)\right) \\
    &\leq \log n\left(t - \frac{d}{2\log n}(S(\sigma^2, 2) + (t - 2)I(\sigma^2))\right) \tag{by Corollary~\ref{cor:riemann-sum-cvx} with $t_0 = 2$} \\
    &= \log n\left(2 - \frac{d}{2\log n}S(\sigma^2, 2) + (t - 2)\left(1 - \frac{d}{2\log n}I(\sigma^2)\right)\right)
    \intertext{and substituting in our bounds from above,}
    &\leq t\log n\left(1 - \frac{d}{2\log n}I(\sigma^2)\right) \\
    &\leq - \omega(1) \cdot t . \numberthis
\end{align*}
Applying this to Corollary~\ref{cor:first-moment}, we find
\begin{equation}
    \EE |\sE| \leq \sum_{t = 2}^n (e^{-\omega(1)})^t = o(1),
\end{equation}
and the second result follows by Markov's inequality.
\end{proof}

\subsection{Constant Error Upper Bound}

We next prove a similar result to the above that gives Part~2 of Theorem~\ref{thm:d-ll-logn} and the upper bound for the case of Part~2 of Theorem~\ref{thm:d-theta-logn} where $\sigma^2$ takes its lower bound, $\sigma^2 = \frac{1}{e^{4/a} - 1}$.

\begin{lemma}
    Let $s_0 \colonequals 2^{1/d}$, and suppose that
    \begin{equation}
        \sigma^2 \leq \frac{1}{s_0^{O(1)}n^{4/d} - 1}.
    \end{equation}
    Then, $\EE |\sE| = O(1)$, so, in particular, for any $f(n) = \omega(1)$, we have $|\sE| \leq f(n)$ with high probability.
\end{lemma}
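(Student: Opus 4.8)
The plan is to mimic the proof of the perfect recovery lemma above almost verbatim; the only change is quantitative, and it is precisely the change that degrades the conclusion from $\EE|\sE| = o(1)$ to $\EE|\sE| = O(1)$. First I would start from the first-moment bound of Corollary~\ref{cor:first-moment},
\[
    \EE|\sE| \le \sum_{t=2}^n \exp\!\left(t\log n - \tfrac d2 S(\sigma^2, t)\right),
\]
and rearrange the hypothesis. Since $\sigma^2 \le \big(s_0^{O(1)} n^{4/d} - 1\big)^{-1}$ is equivalent to $1 + \sigma^{-2} \ge s_0^{O(1)} n^{4/d}$, and since $S(\sigma^2, 2) = \log(1 + \sigma^{-2})$, this gives $\tfrac d2 S(\sigma^2, 2) \ge 2\log n + O(1)$, so the ``$t = 2$ exponent'' is now only
\[
    2\log n - \tfrac d2 S(\sigma^2, 2) \le O(1),
\]
bounded but no longer tending to $-\infty$ as it did under the stronger hypothesis of the perfect recovery lemma.

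Next I would control the ``per-extra-length'' exponent $\log n - \tfrac d2 I(\sigma^2)$. Writing $\tfrac d2 I(\sigma^2) = \tfrac d4 S(\sigma^2, 2) + \tfrac d4\big(2 I(\sigma^2) - S(\sigma^2, 2)\big)$, using $\tfrac d4 S(\sigma^2, 2) \ge \log n + O(1)$ from the hypothesis, and using Lemma~\ref{lem:S-upper} to make the remaining term strictly positive, I would argue that
\[
    \log n - \tfrac d2 I(\sigma^2) \le -\omega(1).
\]
This is the one step where the argument does more than copy the perfect recovery proof, and I expect it to be the \emph{main obstacle}: there the stronger hypothesis delivered this bound for free, whereas here one must genuinely quantify the strictly positive gap $2 I(\sigma^2) - S(\sigma^2, 2)$. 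I would split by the regimes the lemma is meant to cover. When $d = o(\log n)$ one has $\sigma^{-2} = \Theta(n^{4/d}) \to \infty$, and this gap is itself of order $S(\sigma^2,2) \sim \tfrac{4\log n}{d}$ (either from the closed form of $I(\sigma^2)$ in Proposition~\ref{prop:I-eval} or from the cruder bound $I(\sigma^2) \ge S(\sigma^2,2) - 2\log 2$), so $\tfrac d4$ times it is of order $\log n$. When $d \sim a\log n$ one has $\sigma^2 = \Theta(1)$, the gap is a fixed positive constant, and it gets amplified by $\tfrac d4 \sim \tfrac a4\log n$; here the required strict inequality reduces, via Proposition~\ref{prop:I-eval}, to $(e^{1/a}-1)^2 > 0$ at the endpoint $\sigma^2 = \tfrac{1}{e^{4/a}-1}$, and only improves for smaller $\sigma^2$. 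In either case the displayed bound holds, with the left side in fact tending to $-\infty$.

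With these two estimates in hand I would finish exactly as in the perfect recovery proof. Corollary~\ref{cor:riemann-sum-cvx} with $t_0 = 2$ bounds, for every $t \ge 2$,
\[
    t\log n - \tfrac d2 S(\sigma^2, t) \le \big(2\log n - \tfrac d2 S(\sigma^2, 2)\big) + (t-2)\big(\log n - \tfrac d2 I(\sigma^2)\big) \le O(1) - \omega(1)(t-2),
\]
so that
\[
    \EE|\sE| \le \sum_{t=2}^n e^{O(1) - \omega(1)(t-2)} = e^{O(1)} \sum_{t \ge 2} e^{-\omega(1)(t-2)} = O(1),
\]
the geometric sum being bounded since its ratio tends to $0$. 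Finally, $|\sE|$ is a nonnegative integer, so Markov's inequality gives $\PP[|\sE| \ge f(n)] \le \EE|\sE|/f(n) = O(1)/\omega(1) = o(1)$ for any $f(n) = \omega(1)$, which is the claimed high-probability bound. I would also remark that obtaining only $\EE|\sE| = O(1)$ rather than $o(1)$ is unavoidable: at $\sigma^2 = \Theta(n^{-4/d})$ the expected total mass of augmenting $2$-cycles is already $\Theta(1)$.
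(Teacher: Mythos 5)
Your proof is correct and takes the same skeleton as the paper's: Corollary~\ref{cor:first-moment} reduces the bound to the series $\sum_t \exp(t\log n - \tfrac d2 S(\sigma^2,t))$, Corollary~\ref{cor:riemann-sum-cvx} with $t_0 = 2$ linearizes the exponent, and a geometric-series estimate finishes. The paper's own proof, however, is a single sentence ("the argument from the previous proof applies verbatim with $\omega(\cdot)$ replaced by $O(\cdot)$"), and you are right to flag that this literal replacement does not quite close the argument. In the perfect-recovery proof the strict inequality $S(\sigma^2,2) < 2I(\sigma^2)$ from Lemma~\ref{lem:S-upper}, combined with $2\log n - \tfrac d2 S(\sigma^2,2) \leq -\omega(1)$, delivers $\log n - \tfrac d2 I(\sigma^2) \leq -\omega(1)$ for free; but under the weakened hypothesis the same chain only gives $\log n - \tfrac d2 I(\sigma^2) \leq O(1)$, of unclear sign, which is not enough to bound the geometric ratio away from $1$. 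Your decomposition $\tfrac d2 I = \tfrac d4 S(\sigma^2,2) + \tfrac d4(2I(\sigma^2) - S(\sigma^2,2))$ together with an explicit lower bound on the gap $2I - S_2$ (the crude but correct bound $I(\sigma^2) \geq S(\sigma^2,2) - 2\log 2$ for $d = o(\log n)$, and the closed form from Proposition~\ref{prop:I-eval} for $d = \Theta(\log n)$) is exactly the ingredient the paper's one-liner elides, and your case analysis covers the two regimes in which the lemma is actually invoked (Part~2 of Theorem~\ref{thm:d-ll-logn} and Part~2 of Theorem~\ref{thm:d-theta-logn}). Two minor remarks: a constant negative bound on $\log n - \tfrac d2 I(\sigma^2)$ already suffices since the $t=2$ exponent is $O(1)$, so proving $-\omega(1)$ is slightly more than is needed; and, like the paper, you leave $d = \omega(\log n)$ unaddressed --- there the gap $\tfrac d4(2I - S_2)$ at the boundary of the hypothesis decays like $(\log n)^2/d$, so this first-moment argument would actually break once $d \gg (\log n)^2$, but the lemma is never applied in that regime.
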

\noindent
The argument from the previous proof applies verbatim with $\omega(\cdot)$ replaced by $O(\cdot)$ throughout, and shows that $\EE |\sE| = O(1)$, whereby the result again follows by Markov's inequality.

\subsection{Sublinear Error Upper Bound}

Finally we give an upper bound on $|\sE|$ that holds in the sublinear error regime. 
This implies the upper bound of Part~3 of Theorem~\ref{thm:d-ll-logn} and the remainder of the upper bound of Part~2 of Theorem~\ref{thm:d-theta-logn} not covered by the previous proof.

\begin{lemma}
    Let $s_0 \colonequals 2^{1/d}$, and suppose that
    \begin{equation}
    \sigma^2 \leq \frac{1}{(2s_0^{\omega(1)}n^{1 / d} - 1)^2 - 1}.
    \end{equation}
    Then,
    \begin{equation}
        \EE |\sE| = O\left(\left(1 + \frac{1}{\sigma^2}\right)^{-d/2} n^2\right),
    \end{equation}
    so in particular for any $f(n) = \omega(1)$ we have, with high probability,
    \begin{equation}
        |\sE| \leq f(n)\left(1 + \frac{1}{\sigma^2}\right)^{-d/2} n^2.
    \end{equation}
\end{lemma}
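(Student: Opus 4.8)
The plan is to feed the first-moment bound of Corollary~\ref{cor:first-moment} into the Riemann-sum estimates of Section~\ref{sec:riemann-sums} and show that the resulting sum is dominated, up to a factor $1 + o(1)$, by its $t = 2$ term. Since $S(\sigma^2, 2) = f(\sigma^2, \tfrac12) = \log(1 + \sigma^{-2})$, that term is exactly $n^2(1 + \sigma^{-2})^{-d/2}$, so Corollary~\ref{cor:first-moment} gives
\[
    \EE |\sE| \;\leq\; n^2\left(1 + \frac{1}{\sigma^2}\right)^{-d/2} + \sum_{t = 3}^n \exp\left(t\log n - \frac{d}{2}S(\sigma^2, t)\right),
\]
and it suffices to show that the tail sum is $o\!\left(n^2(1 + \sigma^{-2})^{-d/2}\right)$; the claimed bound on $\EE|\sE|$ then follows, and the high-probability statement follows from Markov's inequality exactly as in the previous two lemmas.

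For the tail I would invoke Corollary~\ref{cor:riemann-sum-cvx} with $t_0 = 2$, which gives $S(\sigma^2, t) > S(\sigma^2, 2) + (t - 2) I(\sigma^2)$ for every $t \geq 3$, hence
\[
    t\log n - \frac{d}{2}S(\sigma^2, t) \;<\; \left(2\log n - \frac{d}{2}S(\sigma^2, 2)\right) + (t - 2)\left(\log n - \frac{d}{2}I(\sigma^2)\right).
\]
The first bracket is the logarithm of the $t = 2$ term, so the tail is at most $n^2(1 + \sigma^{-2})^{-d/2}$ times the geometric series $\sum_{t \geq 3}\exp\!\big((t - 2)\,(\log n - \tfrac{d}{2}I(\sigma^2))\big)$. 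Everything therefore reduces to showing that $\log n - \tfrac{d}{2}I(\sigma^2) = -\omega(1)$, since then this series is $o(1)$.

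To establish that, I would use the closed form $I(\sigma^2) = 2\log\frac{1 + \sqrt{1 + \sigma^{-2}}}{2}$ from Proposition~\ref{prop:I-eval}. Reading $s_0^{\omega(1)}$ as $s_0^{g(n)}$ for some $g(n) = \omega(1)$ and recalling $\log s_0 = \tfrac{\log 2}{d}$, the hypothesis $\sigma^2 \le \frac{1}{(2s_0^{g(n)}n^{1/d} - 1)^2 - 1}$ rearranges, taking square roots (both sides being positive for $n$ large), to $\frac{1 + \sqrt{1 + \sigma^{-2}}}{2} \ge s_0^{g(n)}n^{1/d}$, whence $\tfrac{d}{2}I(\sigma^2) \ge \log n + g(n)\log 2$ and therefore $\log n - \tfrac{d}{2}I(\sigma^2) \le -g(n)\log 2 = -\omega(1)$, as needed.

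The assembly is routine; the one point worth emphasizing is that, unlike the perfect-recovery and constant-error lemmas, here the $t = 2$ term is typically polynomially large, so it is not enough for the higher-order terms to be individually small---they must be small \emph{relative} to the $t = 2$ term. This is exactly why the argument is organized around the convexity-type bound of Corollary~\ref{cor:riemann-sum-cvx} rather than the cruder bound $S(\sigma^2, t) < tI(\sigma^2)$ of Lemma~\ref{lem:S-upper}: the former peels off the $t = 2$ term with the correct constant and leaves a clean geometric decay governed by the same quantity $\log n - \tfrac{d}{2}I(\sigma^2)$ that controls the conjectured strong-recovery threshold, and the $s_0^{\omega(1)}$ slack in the hypothesis is precisely what forces that quantity to diverge to $-\infty$.
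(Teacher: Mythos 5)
Your proposal is correct and follows essentially the same route as the paper: both apply Corollary~\ref{cor:riemann-sum-cvx} with $t_0 = 2$ to peel off the $t = 2$ term from the first-moment bound of Corollary~\ref{cor:first-moment} and bound the tail by a geometric series whose ratio $\exp(\log n - \tfrac{d}{2}I(\sigma^2))$ is driven to zero by the $s_0^{\omega(1)}$ slack in the hypothesis, followed by Markov's inequality. Your writeup merely expands the "following the previous argument" shorthand that the paper uses; there is no substantive difference.
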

\begin{proof}
    Rearranging the assumption on $\sigma^2$, we have
\begin{equation}
    1 - \frac{d}{2\log n}I(\sigma^2) = 1 - \frac{d}{\log n}\log\left(\frac{1 +  \sqrt{1 + \sigma^{-2}}}{2}\right) = -\omega\left(\frac{1}{\log n}\right),
\end{equation}
as before (the difference with the above settings being that such a bound no longer holds for $2 - \frac{d}{2\log n}S(\sigma^2, 2)$).
Following the previous argument applied to Corollary~\ref{cor:first-moment}, we find
\begin{equation}
    \EE |\sE| \leq \sum_{t = 2}^n n^{2 - \frac{d}{2\log n}S(\sigma^2, 2)} (e^{-\omega(1)})^{t - 2} = O(n^{2 - \frac{d}{2\log n}S(\sigma^2, 2)}) = O\left(\left(1 + \frac{1}{\sigma^2}\right)^{-d/2} n^2\right),
\end{equation}
and the second result again follows by Markov's inequality.
\end{proof}

\section{Lower Bounds and Second Moment Method}

To prove lower bounds on $|\sE|$, we will apply the second moment method to show that there exists a large number of vertex-disjoint augmenting 2-cycles.
That is, we will study the random variable
\begin{equation}
    M \colonequals \text{maximum number of vertex-disjoint augmenting 2-cycles in } [n].
\end{equation}
The following shows that $M$ being large guarantees a large number of errors in the MLE.
\begin{proposition}
    $|\sE| \geq M$.
\end{proposition}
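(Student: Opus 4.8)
\emph{Proof proposal.} The plan is to deduce this directly from the optimality of $\what{\pi}$ in~\eqref{eq:W-mle} via an exchange argument. Recall that we have assumed $\pi^{\star} = \id$, so $\sE = \{i \in [n] : \what{\pi}(i) \neq i\}$. I would begin by fixing a family $(i_1, j_1), \dots, (i_M, j_M)$ of pairwise vertex-disjoint augmenting 2-cycles in $[n]$ attaining the maximum~$M$; by definition of augmenting, $W_{i_k j_k} + W_{j_k i_k} \geq W_{i_k i_k} + W_{j_k j_k}$ for every $k$. The goal is to show that each pair $\{i_k, j_k\}$ contains at least one element of $\sE$. Since the pairs are vertex-disjoint, choosing one such element from each then produces $M$ distinct indices in $\sE$, giving $|\sE| \geq M$.

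The key step is the exchange argument itself. Suppose, for contradiction, that some pair $\{i_k, j_k\}$ is disjoint from $\sE$, i.e.\ $\what{\pi}(i_k) = i_k$ and $\what{\pi}(j_k) = j_k$. Consider $\what{\pi}^{\prime} \colonequals (i_k\ j_k) \circ \what{\pi}$; since $\what{\pi}$ is injective and already fixes $i_k$ and $j_k$, this permutation agrees with $\what{\pi}$ at every index outside $\{i_k, j_k\}$ and satisfies $\what{\pi}^{\prime}(i_k) = j_k$, $\what{\pi}^{\prime}(j_k) = i_k$. As only the two summands of $\langle \bW, \bP_{\pi}\rangle = \sum_i W_{i \pi(i)}$ indexed by $i_k$ and $j_k$ change, we get
\begin{equation*}
    \langle \bW, \bP_{\what{\pi}^{\prime}}\rangle - \langle \bW, \bP_{\what{\pi}}\rangle = \big(W_{i_k j_k} + W_{j_k i_k}\big) - \big(W_{i_k i_k} + W_{j_k j_k}\big) \geq 0 .
\end{equation*}
On the almost-sure event that $W_{ij} + W_{ji} \neq W_{ii} + W_{jj}$ for all $i \neq j$ (which holds because each $W_{ij} = \langle \bx_i, \by_j\rangle$ has a continuous law, so these finitely many degenerate events each have probability zero), this difference is strictly positive, contradicting the optimality of $\what{\pi}$ for the $\argmax$ in~\eqref{eq:W-mle}. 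Hence every pair $\{i_k, j_k\}$ meets $\sE$, and the conclusion follows as above.

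I do not expect a genuine obstacle here; the points requiring care are minor. First, one must phrase everything through the $\argmax$ formulation $\langle \bW, \bP_{\pi}\rangle$, in which augmenting cycles \emph{do not decrease} the objective, rather than the $\argmin$ formulation with $\bW^{(0)}$. Second, one must handle the measure-zero ``tie'' case where the augmenting inequality is an equality; this is dispatched by the continuity of the Gaussian model (equivalently, by noting the solution of~\eqref{eq:W-mle} is almost surely unique, so $\what{\pi}^{\prime} \neq \what{\pi}$ cannot also be optimal). This suffices for all of the paper's high-probability statements.
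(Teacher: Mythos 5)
Your proof is correct and follows the same exchange argument as the paper: if $\what{\pi}$ fixed both endpoints of an augmenting 2-cycle, composing with that transposition would (strictly, a.s.) increase the objective, contradicting optimality, so each of the $M$ vertex-disjoint pairs contributes at least one distinct error. You additionally spell out the measure-zero tie case, which the paper elides by implicitly assuming strict improvement; this is a worthwhile clarification but not a different method.
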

\begin{proof}
    It is impossible for $(i, j)$ to be an augmenting transposition and to have both $\what{\pi}(i) = i$ and $\what{\pi}(j) = j$, since then $\pi$ formed by composing the transposition $(i, j)$ with $\what{\pi}$ would have a higher likelihood than $\what{\pi}$.
    Thus, for every pair in a maximal collection of $M$ augmenting 2-cycles, at least one of its vertices must be labelled incorrectly by $\what{\pi}$, and the result follows.
\end{proof}

Conveniently, this quantity admits a graph-theoretic interpretation.
Namely, the set of augmenting 2-cycles may be described by a graph on $[n]$:
\begin{equation}
    G^{\aug} \colonequals (V = [n], E = \{\{i, j\}: (i, j) \text{ is an augmenting 2-cycle}\}).
\end{equation}
With this notation, $M$ is the size of the largest matching in this graph:
\begin{equation}
    M = \text{number of edges in the largest matching in } G^{\aug}.
\end{equation}
Thus our task is to show that a large matching exists in a random graph.
In particular, we will want to show that there exists a matching of size $\Omega(|E| \wedge n)$, i.e., a matching of size asymptotically as large as possible subject to the basic constraints that it can exceed neither the number of vertices nor the number of edges.

There is an extensive literature on similar questions for \Erdos-\Renyi\ (ER) random graphs; however, most of these results analyze concrete algorithms for finding large matchings rather than using the second moment method \cite{KS-1981-MathingSparseRandomGraphs,DFP-1993-AverageGreedyMatching,AFP-1998-KarpSipserRevisited,ZM-2006-MatchingsRandomGraphs}.
Indeed, to the best of our knowledge no previous work has tried to show the existence of large matchings in random graphs using the second moment method---perhaps thanks to the success of analyzing algorithms and to the ``effectiveness'' of such results, which provide an algorithm in addition to an existence proof.
However, our graph $G^{\aug}$ has a more complicated dependence structure, so the second moment method is more convenient, and we draw inspiration from a line of work applying an adjusted second moment method to other extremal problems in ER random graphs, especially the chromatic number and independence number \cite{SS-1987-SharpConcentrationChromaticNumber,Frieze-1990-IndependenceNumberRandomGraphs,McDiarmid-1989-BoundedDifferences}.

\begin{remark}
    When the degree of all vertices in $G^{\aug}$ is bounded with high probability by some $d_{\max}$, then algorithmic techniques do show that a large matching exists, since a greedy algorithm matching vertices arbitrarily until no more can be matched will produce a matching of at least $|E| / 2d_{\max}$ edges.
    One may control the maximum degree in our case by appealing to the probability bounds of Proposition~\ref{prop:aug-per-edge-general} for the star graph.
    However, this no longer applies in the critical regime where the average degree is constant (when we expect a nearly-linear number of errors in the MLE), in which case in an ER graph the largest degree is of logarithmic order, and we expect a similar behavior for $G^{\aug}$.
\end{remark}

\subsection{Statistics of \texorpdfstring{$G^{\protect\aug}$}{Gaug}}

We will think of $G^{\aug}$ as being well-approximated by an ER random graph, albeit with some stronger dependencies among various subgraphs.
We begin by precisely describing the probability of any particular edge belonging to $G^{\aug}$, which is the edge probability of the analogous ER graph.

\begin{proposition}[Edge probability in $G^{\aug}$]
    \label{prop:pair-aug-lb}
    Define
    \begin{align}
        p &\colonequals \PP[\{i, j\} \in E(G^{\aug})], \\
        \what{p} &\colonequals \frac{p}{\exp(-\frac{d}{2}S(\sigma^2, 2))},
    \end{align}
    which do not depend on $i, j \in [n]$ distinct.
    Then, for all $n$, $d$, and $\sigma^2 \leq \frac{1}{40}d$, 
    \begin{equation}
        \frac{1}{1000}\sqrt{\frac{1 + \sigma^2}{d}} \leq \what{p} \leq 1.
    \end{equation}
\end{proposition}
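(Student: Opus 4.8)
The plan is to compute $p = \PP[\{i,j\} \in E(G^{\aug})]$ exactly in an integral form by conditioning on $\bx_i, \bx_j$, and then to estimate the ratio $\what{p}$ from above and below.

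First I would set up the conditional calculation as in the proof of Proposition~\ref{prop:cycle-aug-prob}. Writing $r \colonequals \|\bx_i - \bx_j\|$, the pair $(i,j)$ is augmenting if and only if $g \geq r^2$ where, conditionally on $\bx_i, \bx_j$, we have $g \sim \sN(0, 2\sigma^2 r^2)$; equivalently $h \geq r/\sqrt{2\sigma^2}$ for $h \sim \sN(0,1)$. Thus
\[
    p = \Ex_{r} \, \overline{\Phi}\!\left(\frac{r}{\sqrt{2\sigma^2}}\right),
\]
where $\overline{\Phi}$ is the Gaussian tail and $r^2 = \|\bx_i - \bx_j\|^2 \sim 2\chi^2_d$ (since $\bx_i - \bx_j \sim \sN(\bm 0, 2\bm I_d)$). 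The upper bound $\what{p} \leq 1$ is then immediate from the standard bound $\overline{\Phi}(u) \leq \exp(-u^2/2)$, which gives $p \leq \Ex \exp(-r^2/4\sigma^2)$, and this expectation is exactly $(1 + \sigma^{-2})^{-d/2} = \exp(-\tfrac d2 S(\sigma^2, 2))$ by the $\chi^2$ moment generating function computation already carried out in the $t=2$ case of Proposition~\ref{prop:cycle-aug-prob}.

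The substance is the lower bound on $\what{p}$. Here I would use the sharper two-sided control on the Gaussian tail, $\overline{\Phi}(u) \geq \frac{c}{1+u} e^{-u^2/2}$ for an absolute constant $c$, to write
\[
    p \geq c \, \Ex_r \left[\frac{1}{1 + r/\sqrt{2\sigma^2}} \exp\!\left(-\frac{r^2}{4\sigma^2}\right)\right].
\]
Dividing by $\exp(-\tfrac d2 S(\sigma^2,2)) = \Ex_r \exp(-r^2/4\sigma^2)$, the quantity $\what{p}$ becomes (up to the constant $c$) the expectation of $(1 + r/\sqrt{2\sigma^2})^{-1}$ under the \emph{tilted} law of $r$ with density proportional to $e^{-r^2/4\sigma^2}$ times the $2\chi^2_d$ density. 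Under this tilt, $r^2/2$ has a $\mathrm{Gamma}(d/2)$-type law rescaled by a factor $\frac{1}{1 + \sigma^{-2}} = \frac{\sigma^2}{1+\sigma^2}$, so $\Ex[r^2]$ under the tilt is $\Theta(d\sigma^2/(1+\sigma^2))$ and hence $\Ex[r/\sqrt{2\sigma^2}]$ under the tilt is $O(\sqrt{d/(1+\sigma^2)})$. By Jensen applied to the convex function $u \mapsto (1+u)^{-1}$, or more carefully by a direct moment/concentration estimate on the tilted $\chi^2$, one gets
\[
    \what{p} \gtrsim \frac{1}{1 + \Ex_{\mathrm{tilt}}[r/\sqrt{2\sigma^2}]} \gtrsim \frac{1}{\sqrt{d/(1+\sigma^2)}} = \sqrt{\frac{1+\sigma^2}{d}},
\]
valid when $d/(1+\sigma^2) \gtrsim 1$, which is exactly what the hypothesis $\sigma^2 \leq \frac{1}{40} d$ ensures (it forces $1 + \sigma^2 \leq \frac{d}{40} + 1$, so the bound $\sqrt{(1+\sigma^2)/d}$ is at most a constant and there is room for the explicit constant $1/1000$). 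Tracking the absolute constants through the Gaussian tail bound and the tilted-$\chi^2$ moment estimate yields the stated $\tfrac{1}{1000}$.

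I expect the main obstacle to be making the lower bound on $\Ex_{\mathrm{tilt}}[(1 + r/\sqrt{2\sigma^2})^{-1}]$ genuinely quantitative with a clean absolute constant: Jensen gives the right \emph{order} but one must check there is no loss when $d$ is small (say $d = 1$), where the tilted $\chi^2$ is not concentrated. The robust fix is to lower bound the expectation by restricting the tilted integral to the event $\{r/\sqrt{2\sigma^2} \leq K\sqrt{d/(1+\sigma^2)}\}$ for a suitable absolute $K$, using Markov's inequality on the tilted second moment to show this event has probability at least, say, $\tfrac12$, and then bounding $(1+r/\sqrt{2\sigma^2})^{-1} \geq (1 + K\sqrt{d/(1+\sigma^2)})^{-1}$ on that event. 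Combined with $\sqrt{(1+\sigma^2)/d} \le \sqrt{1/40 + 1/d} \le $ (a fixed constant) from the hypothesis, this gives the displayed two-sided bound with the explicit constants after elementary arithmetic, which I would not reproduce in full here.
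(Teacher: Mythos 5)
Your approach is correct, and it is a genuinely different (and arguably cleaner) route than the paper's. The paper uses two separate Mills'-ratio lower bounds, $\overline{\Phi}(t) \geq \frac{t}{1+t^2}\phi(t)$ for all $d \geq 1$ and $\overline{\Phi}(t) \geq (\frac{1}{t} - \frac{1}{t^3})\phi(t)$ for $d \geq 4$, evaluates each resulting integral explicitly via a change of variables and Gamma-function identities, and then case-splits on $d \leq 40$ versus $d \geq 40$ to assemble the final constant. Your single bound $\overline{\Phi}(u) \geq \frac{c}{1+u}e^{-u^2/2}$, combined with recognizing the resulting ratio as an expectation under the exponentially tilted $\chi^2$ law, avoids the case split entirely; the tilted-law computation replaces the Gamma-function manipulations. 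What your route buys is a uniform argument over all $d$ with the two estimates (tilted first/second moment plus Jensen or Markov) doing all the quantitative work; what the paper's route buys is a slightly sharper leading constant in the $d \geq 40$ regime, namely $\what{p} \geq \frac{1}{4}\sqrt{(1+\sigma^2)/d}$, because the exact Gamma ratios are tracked.

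Two small remarks. First, your hedging about Jensen when $d$ is small is unnecessary: $u \mapsto (1+u)^{-1}$ is convex on $u > -1$, so $\Ex_{\mathrm{tilt}}[(1 + r/\sqrt{2\sigma^2})^{-1}] \geq (1 + \Ex_{\mathrm{tilt}}[r/\sqrt{2\sigma^2}])^{-1}$ holds unconditionally, and $\Ex_{\mathrm{tilt}}[r/\sqrt{2\sigma^2}] \leq \sqrt{\Ex_{\mathrm{tilt}}[r^2/(2\sigma^2)]} = \sqrt{d/(1+\sigma^2)}$ by Cauchy--Schwarz; the hypothesis $\sigma^2 \leq d/40$ gives $1 + \sigma^2 \leq 2d$, so $1 + \sqrt{d/(1+\sigma^2)} \leq (1+\sqrt{2})\sqrt{d/(1+\sigma^2)}$ and the bound is already of the right form with a clean constant. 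The Markov restriction is a valid alternative but not needed. Second, to match the paper's notation it would be slightly cleaner to parametrize by $u = \|\bx_i - \bx_j\|^2/2 \sim \chi^2_d$ as the paper does, but your $r^2 \sim 2\chi^2_d$ convention is equivalent and your arithmetic under it is correct.
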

\noindent
We give the proof, an application of bounds on Gaussian Mills' ratios, in Appendix~\ref{app:pf:prop:pair-aug-lb}.

Next, we control more coarsely the probability that a given graph occurs as a subgraph of $G^{\aug}$.
The following is a general parametrized bound, which relates these probabilities to Laplacians with weighted edges.
\begin{proposition}
    \label{prop:aug-per-edge-general}
    Suppose $G = (V, E)$ for some $V \subseteq [n]$.
    Let $\bm\Delta \in \RR^{E \times V}$ be the edge-vertex incidence matrix for $G$, i.e., the matrix having non-zero entries $\Delta_{\{i, j\}, k}$ only when $i = k$ or $j = k$, with one of these equaling 1 and the other equaling $-1$ (chosen arbitrarily) for each row index $\{i, j\}$.
    Note that $\bm\Delta^{\top}\bm\Delta = \bm L$, the graph Laplacian.
    Then, for any diagonal matrix $\bD \succeq \bm 0$,
    \begin{align*}
        \PP[G \subseteq G^{\aug}]
        &\leq \det\left( \bm I_V + 2\bm\Delta^{\top}\bD \bm\Delta - \sigma^2(\bm\Delta^{\top}\bD \bm\Delta)^2\right)^{-d / 2} \\
        &= \exp\left(-\frac{d}{2}\sum_{i = 1}^{|V|} \log\left(1 + 2\lambda_i(\bm\Delta^{\top}\bD \bm\Delta) - \sigma^2 \lambda_i(\bm\Delta^{\top}\bD \bm\Delta)^2\right)\right). \numberthis
    \end{align*}
\end{proposition}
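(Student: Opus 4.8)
The plan is to carry out the single-edge ($t=2$) computation from the proof of Proposition~\ref{prop:cycle-aug-prob}, but applied to all edges of $G$ at once: relax each edge's augmenting-indicator by an exponential (Chernoff) factor with its own multiplier, and then integrate out the noise vectors $\bz$ and the points $\bx$ in turn. First I would rewrite the event. An edge $\{i,j\}$ lies in $E(G^{\aug})$ exactly when $\langle\bz_i,\bx_j-\bx_i\rangle+\langle\bz_j,\bx_i-\bx_j\rangle\geq\|\bx_i-\bx_j\|^2$, as shown in the proof of Proposition~\ref{prop:cycle-aug-prob}. Fixing the orientation of each edge consistent with $\bm\Delta$, letting $\bx,\bz\in\RR^{|V|d}$ be the concatenations of $(\bx_k)_{k\in V}$ and $(\bz_k)_{k\in V}$, and setting $\bX\colonequals(\bm\Delta\otimes\bm I_d)\bx$ and $\bZ\colonequals(\bm\Delta\otimes\bm I_d)\bz$, whose blocks are indexed by $E$, one has $\bX_e=\bx_i-\bx_j$ and $\bZ_e=\bz_i-\bz_j$ for $e=\{i,j\}$, so that $\{G\subseteq G^{\aug}\}=\bigcap_{e\in E}\{-\langle\bZ_e,\bX_e\rangle\geq\|\bX_e\|^2\}$ (reversing an edge's orientation negates both $\bZ_e$ and $\bX_e$, so the choice is immaterial).

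Next I would apply the Chernoff bound $\One[-\langle\bZ_e,\bX_e\rangle\geq\|\bX_e\|^2]\leq\exp(-D_e\langle\bZ_e,\bX_e\rangle-D_e\|\bX_e\|^2)$ for each $e$, where $D_e\geq0$ is the $e$-th diagonal entry of $\bD$, and multiply over $e\in E$. Abbreviating $\bL\colonequals\bm\Delta^\top\bD\bm\Delta$, the Kronecker identities $\sum_{e}D_e\langle\bZ_e,\bX_e\rangle=\langle\bz,(\bL\otimes\bm I_d)\bx\rangle$ and $\sum_{e}D_e\|\bX_e\|^2=\bx^\top(\bL\otimes\bm I_d)\bx$ turn this into $\One[G\subseteq G^{\aug}]\leq\exp(-\langle\bz,(\bL\otimes\bm I_d)\bx\rangle-\bx^\top(\bL\otimes\bm I_d)\bx)$.

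Then I would integrate out the Gaussians in two stages. Conditionally on $\bx$, the quantity $\langle\bz,(\bL\otimes\bm I_d)\bx\rangle$ is centered Gaussian with variance $\sigma^2\bx^\top(\bL^2\otimes\bm I_d)\bx$, so taking $\EE_{\bz}$ gives $\PP[G\subseteq G^{\aug}\mid\bx]\leq\exp\!\big(\bx^\top((-\bL+\tfrac{\sigma^2}{2}\bL^2)\otimes\bm I_d)\bx\big)$. Taking $\EE_{\bx}$ over $\bx\sim\sN(\bm 0,\bm I)$ via the $\chi^2$ moment generating function $\EE\exp(\bx^\top\bm N\bx)=\det(\bm I-2\bm N)^{-1/2}$ and pulling out the Kronecker factor $\bm I_d$ yields $\det(\bm I_V+2\bL-\sigma^2\bL^2)^{-d/2}$, which is the asserted bound. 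The eigenvalue form follows because $2\bL-\sigma^2\bL^2$ is a polynomial in the symmetric matrix $\bL$ and hence has eigenvalues $2\lambda_i(\bL)-\sigma^2\lambda_i(\bL)^2$.

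The one point needing care — which I regard as the main obstacle, though it is really bookkeeping rather than a genuine difficulty — is that the $\chi^2$ moment generating function identity is valid only when $\bm I-2\bm N\succ\bm 0$, i.e.\ when $\bm I_V+2\bL-\sigma^2\bL^2\succ\bm 0$ (equivalently $1+2\mu-\sigma^2\mu^2>0$ for every eigenvalue $\mu$ of $\bL$). When this fails, $\EE_{\bx}$ of the exponential is $+\infty$ and the inequality holds vacuously with the right-hand side interpreted as $+\infty$; I would simply assume positive-definiteness at the outset, since otherwise there is nothing to prove. Apart from this, the argument is identical in spirit to the single-cycle computation of Proposition~\ref{prop:cycle-aug-prob}, the only new ingredient being the per-edge multipliers $D_e$.
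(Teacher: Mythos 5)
Your proposal is correct and takes essentially the same route as the paper: both reduce the event to a system of linear inequalities in the noise, absorb the diagonal weights $\bD$ as Chernoff multipliers, and then integrate out the noise and the points in turn via the Gaussian and $\chi^2$ moment generating functions. The only cosmetic difference is that you apply the exponential bound edge-by-edge before aggregating via Kronecker products, whereas the paper first aggregates the inequalities against $\bD$ and then applies a single Chernoff bound; the computations are otherwise identical, and your note on the positive-definiteness caveat is accurate and implicitly shared by the paper.
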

\begin{proof}
    The event that $G \subseteq G^{\aug}$ is the same as that, for all $\{i, j\} \in E$, we have $W_{i, j} + W_{j, i} \leq W_{i, i} + W_{j, j}$.
    Rewriting, this is the event that, for all $\{i, j\} \in E$,
    \begin{equation}
        -\langle \bz_i - \bz_j, \bx_i - \bx_j \rangle \geq \|\bx_i - \bx_j \|^2. 
    \end{equation}
    Let $\bX, \bZ \in \RR^{V \times d}$ have the $\bx_i$ and the $-\bz_i$ as their rows, respectively.
    Then, the system above may be rewritten with the help of $\bm\Delta$ as
    \begin{equation}
        \diag(\bm\Delta \bZ (\bm\Delta \bX)^{\top}) \geq \diag(\bm \Delta \bX(\bm \Delta \bX)^{\top}).
    \end{equation}
    Whenever this is true, then we also have
    \begin{equation}
        \langle \bD, \bm\Delta \bZ (\bm\Delta \bX)^{\top} \rangle \geq \langle \bD, \bm \Delta \bX(\bm \Delta \bX)^{\top}\rangle,
    \end{equation}
    or, rewriting to isolate $\bZ$,
    \begin{equation}
        \langle \bZ, \bm\Delta^{\top}\bD \bm\Delta \bX \rangle \geq \langle \bX\bX^{\top}, \bm\Delta^{\top}\bD \bm\Delta \rangle.
    \end{equation}
    
    Since the entries of $\bZ$ are i.i.d.\ with law $\sN(0, \sigma^2)$, taking a Chernoff bound and evaluating the Gaussian moment generating function yields
    \begin{align*}
        \PP[G \subseteq G^{\aug}]
        &\leq \EE_{\bX}\frac{\EE_{\bZ} \exp\left(\langle \bZ, \bm\Delta^{\top}\bD \bm\Delta \bX \rangle\right)}{\exp\left(\langle \bX\bX^{\top}, \bm\Delta^{\top}\bD \bm\Delta \rangle\right)} \\
        &= \EE_{\bX}\exp\left(\frac{\sigma^2}{2}\|\bm\Delta^{\top}\bD \bm\Delta \bX\|_F^2 - \langle \bX\bX^{\top}, \bm\Delta^{\top}\bD \bm\Delta \rangle\right)
        \intertext{and, noting that $\|\bm\Delta^{\top}\bD \bm\Delta \bX\|_F^2 = \Tr(\bX^{\top}(\bm\Delta^{\top} \bD \bm\Delta)^2 \bX) = \langle \bX\bX^{\top}, (\bm\Delta^{\top} \bD \bm\Delta)^2\rangle$, we find}
        &= \EE_{\bX}\exp\left(\left\langle\bX\bX^{\top}, \frac{\sigma^2}{2}(\bm\Delta^{\top} \bD \bm\Delta)^2 - \bm\Delta^{\top}\bD\bm\Delta\right\rangle\right)
        \intertext{and evaluating this as a $\chi^2$ moment generating function after an orthogonal change of basis diagonalizing the matrix on the right, we obtain}
        &= \det\left( \bm I_V + 2\bm\Delta^{\top}\bD \bm\Delta - \sigma^2(\bm\Delta^{\top}\bD \bm\Delta)^2\right)^{-d / 2}, \numberthis
    \end{align*}
    as claimed.
\end{proof}
\noindent
It is an interesting question to optimize the choice of $\bD$ in this bound.
For our purposes, it suffices to use a simple version for $G$ a path or cycle.
\begin{proposition}
    \label{prop:aug-per-edge-path-cycle}
    For any $G = P_t$ with $t \geq 2$ or $G = C_t$ with $t \geq 3$,
    \begin{equation}
        \PP[G \subseteq G^{\aug}] \leq \exp\left(-\frac{d}{2}S(\sigma^2, t)\right).
    \end{equation}
\end{proposition}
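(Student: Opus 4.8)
The plan is to apply Proposition~\ref{prop:aug-per-edge-general} with $\bD$ equal to a suitable scalar multiple of the identity on the edge set, $\bD = c\,\bm I_E$ for some $c > 0$. Then $\bm\Delta^{\top}\bD\bm\Delta = c\,\bL^G$, whose eigenvalues, by Proposition~\ref{prop:lap-evals}, are $c\lambda_0, \dots, c\lambda_{t-1}$ with $\lambda_k = 2(1 - \cos(\pi k/t))$ when $G = P_t$ and $\lambda_k = 2(1 - \cos(2\pi k/t)) = 4\sin^2(\pi k/t)$ when $G = C_t$. Proposition~\ref{prop:aug-per-edge-general} then bounds $\PP[G \subseteq G^{\aug}]$ by $\prod_{k=0}^{t-1}(1 + 2c\lambda_k - \sigma^2 c^2 \lambda_k^2)^{-d/2}$, which is valid as long as each of these factors is positive (otherwise the moment generating function in the proof of that proposition diverges and the bound is vacuously true). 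Since $\exp(S(\sigma^2,t)) = \prod_{k=0}^{t-1}(1 + \tfrac{1}{\sigma^2}\sin^2(\pi k/t))$ (the $k = 0$ term contributing a factor $1$), it suffices to pick $c$ so that $1 + 2c\lambda_k - \sigma^2 c^2\lambda_k^2 \geq 1 + \tfrac{1}{\sigma^2}\sin^2(\pi k/t) > 0$ for all $k$.

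For the cycle I would take $c = \tfrac{1}{4\sigma^2}$. Writing $s = \sin^2(\pi k/t) \in [0,1]$ so that $\lambda_k = 4s$, the $k$-th factor is $1 + \tfrac{2s}{\sigma^2} - \tfrac{s^2}{\sigma^2} = 1 + \tfrac{s(2-s)}{\sigma^2}$, which is positive and at least $1 + \tfrac{s}{\sigma^2}$ precisely because $s \leq 1$ --- this is where the cycle structure (Laplacian eigenvalues bounded by $4$) enters. Taking the product over $k$ gives the claim for $C_t$.

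The path case is the subtle one, since $S(\sigma^2,t)$ is built from the \emph{cycle} eigenvalues $\{4\sin^2(\pi k/t)\}$ rather than the path eigenvalues, and the value $c = \tfrac{1}{4\sigma^2}$ that works for $C_t$ turns out to be too weak for $P_t$ --- as it must be, since $P_t$ is a subgraph of $C_t$ and naive monotonicity of the determinant in $\bD$ points the wrong way. The key observation I would aim for is that $c = \tfrac{1}{2\sigma^2}$ makes the bound an exact match: with $\lambda_k = 2(1 - \cos(\pi k/t))$,
\begin{equation*}
    1 + 2c\lambda_k - \sigma^2 c^2\lambda_k^2 = 1 + \frac{2(1-\cos(\pi k/t))}{\sigma^2} - \frac{(1-\cos(\pi k/t))^2}{\sigma^2} = 1 + \frac{(1-\cos(\pi k/t))(1+\cos(\pi k/t))}{\sigma^2} = 1 + \frac{\sin^2(\pi k/t)}{\sigma^2},
\end{equation*}
so the product over $k$ equals $\exp(S(\sigma^2,t))$ on the nose, and positivity is free since each factor is $\geq 1$. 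Thus both graphs are handled by the same device, with the scaling of $\bD$ changed from $\tfrac{1}{4\sigma^2}$ (cycle) to $\tfrac{1}{2\sigma^2}$ (path).

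I expect the one genuine obstacle to be spotting this $\tfrac{1}{2\sigma^2}$-versus-$\tfrac{1}{4\sigma^2}$ distinction, i.e.\ recognizing how to align the quadratic $\lambda \mapsto 1 + 2c\lambda - \sigma^2 c^2\lambda^2$ with the target profile $1 + \tfrac{1}{\sigma^2}\sin^2(\pi x)$ on each graph's spectrum; after that, each verification is a single trigonometric identity. As sanity checks I would confirm the boundary case $t = 2$ (where $P_2$ is a single edge and the bound should collapse to the augmenting-$2$-cycle probability $(1+\sigma^{-2})^{-d/2}$ from Proposition~\ref{prop:cycle-aug-prob}) and verify in both cases that $\bD \succeq \bm 0$ and that $\bm I_V + 2\bm\Delta^{\top}\bD\bm\Delta - \sigma^2(\bm\Delta^{\top}\bD\bm\Delta)^2 \succ \bm 0$, so that the hypotheses and the determinant identity in Proposition~\ref{prop:aug-per-edge-general} genuinely apply.
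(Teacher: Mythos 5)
Your proof is correct and uses the same key lemma (Proposition~\ref{prop:aug-per-edge-general} with $\bD$ a scalar multiple of the identity) and the same choice $c = \frac{1}{2\sigma^2}$ for the path case as the paper. The only deviation is the cycle case: the paper observes that $P_t \subseteq C_t$ implies $\PP[C_t \subseteq G^{\aug}] \leq \PP[P_t \subseteq G^{\aug}]$ and so reduces everything to the path computation, whereas you handle $C_t$ directly with $c = \frac{1}{4\sigma^2}$ and the inequality $s(2-s) \geq s$ for $s \in [0,1]$; both are one-line arguments and equally valid, so this is only a minor variation rather than a different route.
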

\noindent
In words, this shows that the probability that a path or cycle in $G^{\aug}$ on $t$ vertices has augmenting 2-cycles for all of its edges is at most our bound (Proposition~\ref{prop:cycle-aug-prob}) on the probability that a cycle on $t$ vertices is augmenting.
\begin{proof}
    For $G = P_2$ the result follows from Proposition~\ref{prop:cycle-aug-prob}.
    We first note that, since $P_t$ is a subgraph of $C_t$, $\PP[C_t \subseteq G^{\aug}] \leq \PP[P_t \subseteq G^{\aug}]$ for all $t \geq 3$ (since the event that $P_t \subseteq G^{\aug}$ contains the event that $C_t \subseteq G^{\aug}$ for suitable labellings of the two graphs), so it suffices to consider $G = P_t$.
    For this case, we choose $\bD = \frac{1}{2\sigma^2}\bm I_{t - 1}$ in Proposition~\ref{prop:aug-per-edge-general}.
    That gives
    \begin{align*}
        \PP[ P_t \subseteq G^{\aug}]
        &\leq \exp\left(-\frac{d}{2}\sum_{i = 1}^{t} \log\left(1 + \frac{1}{\sigma^2}\lambda_i(\bL^{P_t}) - \frac{1}{4\sigma^2} \lambda_i(\bL^{P_t})^2\right)\right) \\
        &= \exp\left(-\frac{d}{2}\sum_{k = 1}^{t - 1} \log\left(1 + \frac{2}{\sigma^2}\left(1 - \cos\left(\frac{\pi k}{t}\right)\right) - \frac{1}{\sigma^2} \left(1 - \cos\left(\frac{\pi k}{t}\right)\right)^2\right)\right) \\
        &= \exp\left(-\frac{d}{2}\sum_{k = 1}^{t - 1} \log\left(1 + \frac{1}{\sigma^2}\sin^2\left(\frac{\pi k}{t}\right)\right)\right) \\
        &= \exp\left(-\frac{d}{2}S(\sigma^2, t)\right), \numberthis
    \end{align*}
    completing the proof.
\end{proof}

\begin{remark}
    While this approach to bounding $\PP[G \subseteq G^{\aug}]$ may seem rather naive, there is reason to believe it is close to optimal up to constant factors in $\sigma^2$: we know from the proof of Proposition~\ref{prop:cycle-aug-prob} for $t = 2$ that $\PP[\{i, j\} \in E(G^{\aug}) \mid \bx_i, \bx_j] \approx \exp(-\frac{1}{4\sigma^2} \|\bx_i - \bx_j\|^2)$, so if we heuristically suppose that the edges of $G^{\aug}$ occur independently conditional on the $\bx_i$, then we find
    \begin{align*}
        \PP[G \subseteq G^{\aug}] 
        &\approx \Ex_{\bx_i} \prod_{\{i, j\} \in E(G)} \PP[\{i, j\} \in E(G^{\aug}) \mid \bx_i, \bx_j] \\
        &\approx \Ex_{\bx_i} \exp\left(-\frac{1}{4\sigma^2} \bx^{\top} (\bL^G \otimes \bm I_d) \bx\right) \\
        &= \det\left(\bm I_V + \frac{1}{2\sigma^2} \bL^G\right)^{-d/2}
        \intertext{and if, for instance, $G = C_t$ then following the computations in Proposition~\ref{prop:cycle-aug-prob} for $t \geq 3$ we would find}
        &= \exp\left(-\frac{d}{2}S\left(\frac{\sigma^2}{2}, t\right)\right), \numberthis
    \end{align*}
    differing only by a factor of 2 in $\sigma^2$ from the bound of Proposition~\ref{prop:aug-per-edge-path-cycle}.
\end{remark}

\subsection{Concentration-Enhanced Second Moment Method}
\label{sec:second-moment}

We next review a version of the second moment method that can sometimes improve a weak result of the ordinary method---showing an object exists with quite low probability---to a strong result with high probability by combining it with a concentration inequality.
Below, Part~(b) is the typical result of a second moment method that has not succeeded in showing that a random variable is positive with high probability, instead only giving a lower bound of exponentially small probability.
Part~(a) is a concentration inequality, which in our case will come from a martingale argument, showing that the random variable also enjoys concentration around its mean with Gaussian tails.
Exploiting the interplay of these two inequalities, we may in fact ``repair'' the ineffective second moment, as follows.
\begin{lemma}
    \label{lem:frieze}
    Suppose $X \geq 0$ is a random variable and $m > 0$ are such that the following two statements hold, for some constants $0 < \beta < \alpha$:
    \begin{enumerate}
        \item[(a)] $\PP[X - \EE X \leq - t] \vee \PP[X - \EE X \geq t] \leq \exp(-\alpha t^2 / m)$ for all $t > 0$.
        \item[(b)] $\PP[X \geq m] \geq \exp(-\beta m)$.
    \end{enumerate}
    Then, for any $0 < \gamma < 1 - \sqrt{\beta / \alpha}$,
    \begin{equation}
        \PP[X > \gamma m] \geq 1 - \exp\left(-\alpha\left(1 - \sqrt{\frac{\beta}{\alpha}} - \gamma\right)^2 m\right).
    \end{equation}
\end{lemma}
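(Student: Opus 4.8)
The plan is to use assumption (b) to show that $\EE X$ cannot be much smaller than $m$, and then feed this lower bound on $\EE X$ into the lower-tail half of assumption (a). First I would argue the bound on the mean. If $\EE X \geq m$ there is nothing to do, so suppose $\EE X < m$ and set $s \colonequals m - \EE X > 0$. Then
\[
    \exp(-\beta m) \;\leq\; \PP[X \geq m] \;=\; \PP[X - \EE X \geq s] \;\leq\; \exp\!\left(-\frac{\alpha s^2}{m}\right),
\]
where the first inequality is (b) and the last is the upper-tail case of (a) applied at $t = s > 0$. Rearranging gives $\alpha s^2 / m \leq \beta m$, hence $s \leq \sqrt{\beta/\alpha}\, m$, i.e.
\[
    \EE X \;\geq\; \left(1 - \sqrt{\tfrac{\beta}{\alpha}}\right) m,
\]
and (as noted) this also holds trivially when $\EE X \geq m$.

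Next I would apply the lower-tail case of (a) at the level $\gamma m$. Put $\eta \colonequals 1 - \sqrt{\beta/\alpha} - \gamma$, which is strictly positive by the hypothesis $\gamma < 1 - \sqrt{\beta/\alpha}$. By the previous step, $\EE X - \gamma m \geq (1 - \sqrt{\beta/\alpha} - \gamma)\, m = \eta m > 0$, so
\[
    \PP[X \leq \gamma m] \;=\; \PP[X - \EE X \leq \gamma m - \EE X] \;\leq\; \PP[X - \EE X \leq -\eta m] \;\leq\; \exp\!\left(-\frac{\alpha (\eta m)^2}{m}\right) \;=\; \exp(-\alpha \eta^2 m),
\]
using (a) at $t = \eta m > 0$. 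Taking complements gives $\PP[X > \gamma m] \geq 1 - \exp(-\alpha \eta^2 m)$, which is exactly the claimed inequality.

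The argument is genuinely short, so there is no real obstacle; the only points needing care are (i) treating the case $\EE X \geq m$ separately, where the mean bound is immediate, and (ii) verifying that every invocation of (a) is at a strictly positive argument, which is guaranteed precisely by the constraint $\gamma < 1 - \sqrt{\beta/\alpha}$ (equivalently $\eta > 0$). It is worth noting why the two hypotheses are natural in our setting: (b) will be the output of the ordinary second moment method applied to a count of vertex-disjoint augmenting $2$-cycles, which by itself yields only an exponentially small probability that the count is large, while (a) will come from a bounded-differences martingale argument; the lemma is the mechanism that upgrades their combination to a high-probability statement.
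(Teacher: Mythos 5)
Your proof is correct and follows essentially the same two-step strategy as the paper: first use (b) together with the upper-tail half of (a) to deduce $\EE X \geq (1-\sqrt{\beta/\alpha})\,m$, then apply the lower-tail half of (a) at level $\gamma m$. The only cosmetic difference is that you argue the mean bound directly (via $s = m - \EE X$) while the paper phrases it as a contrapositive over $\delta$; the substance is identical.
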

\begin{proof}
    Suppose $\delta \in (0, 1)$.
    Then, whenever $\EE X \leq (1 - \delta) m$, we have
    \begin{align*}
        \exp(-\beta m)
        &\leq \PP[X \geq m] \\
        &\leq \PP[X \geq \EE X + \delta m] \\
        &\leq \exp\left(-\frac{\alpha (\delta m)^2}{m}\right) \\
        &= \exp(-\alpha \delta^2 m), \numberthis
    \end{align*}
    whereby $\delta \leq \sqrt{\beta / \alpha}$.
    Thus, by contrapositive, $\EE X > (1 - \delta)m$ for all $\delta > \sqrt{\beta/\alpha}$, so $\EE X \geq (1 - \sqrt{\beta/\alpha})m$.
    
    Now, for all $0 < \gamma < 1 - \sqrt{\beta / \alpha}$, we have
    \begin{align*}
        \PP[X \leq \gamma m]
        &\leq \PP\left[X \leq \EE X - \left(1 - \sqrt{\frac{\beta}{\alpha}} - \gamma\right)m\right] \\
        &\leq \exp\left(-\alpha\left(1 - \sqrt{\frac{\beta}{\alpha}} - \gamma\right)^2 m\right), \numberthis
    \end{align*}
    as claimed.
\end{proof}

\noindent
Our formulation here is very similar to that of Frieze in \cite{Frieze-1990-IndependenceNumberRandomGraphs}, who treats the largest independent set in an ER graph; a similar idea also appeared earlier in \cite{SS-1987-SharpConcentrationChromaticNumber} for the chromatic number of an ER graph.
See also \cite{McDiarmid-1989-BoundedDifferences} for a survey of related methods.

\subsection{Type (a) and (b) Inequalities}

We now proceed to the main computations for using the concentration-enhanced second moment method, which we state as general claims for all dimensions $d$.
In the following sections we will derive specific consequences for different scalings of $d$.

Unfortunately, applying our method directly to the random variable $M$ does not afford us sufficient flexibility to adjust the constants $\alpha$ and $\beta$ such that the condition $\beta < \alpha$ is satisfied.
Instead, we will proceed by applying Lemma~\ref{lem:frieze} to the following adjustment of $M$, which is also directly analogous to the approach of Frieze in \cite{Frieze-1990-IndependenceNumberRandomGraphs}, there credited to Luczak, to the existence of independent sets.
Given $r \in \ZZ_+$, let $n^{\prime} \colonequals \lfloor n / r \rfloor$, and let $A_k = \{(k - 1)r + 1, \dots, kr\}$ for $k \in [n^{\prime}]$.
Then, we call a matching \emph{$r$-good} if all of its vertices belong to $A_1 \cup \cdots \cup A_{n^{\prime}}$, and it contains at most one vertex in each $A_k$.
We then work with the random variable
\begin{equation}
    M^{(r)} \colonequals \text{number of edges in the largest $r$-good matching in } G^{\aug}.
\end{equation}
Clearly, $M \geq M^{(r)}$.

\begin{lemma}[Type (a) inequality]
    \label{lem:second-moment-a}
    For all $t > 0$,
    \begin{equation}
        \PP[M^{(r)} - \EE M^{(r)} \leq - t] \vee \PP[M^{(r)} - \EE M^{(r)} \geq t] \leq \exp\left(-\frac{t^2}{2n^{\prime}}\right).
    \end{equation}
    That is, inequality (a) of Lemma~\ref{lem:frieze} holds for $M^{(r)}$ for any $m > 0$ with
    \begin{equation}
        \alpha = \frac{m}{2n^{\prime}}.
    \end{equation}
\end{lemma}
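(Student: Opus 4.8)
The plan is to prove this as a martingale (bounded‐differences) concentration estimate, exploiting the block structure built into the definition of an $r$-good matching. Write $n^{\prime} = \lfloor n/r\rfloor$ and let $A_1, \dots, A_{n^{\prime}}$ be the blocks. The underlying randomness splits into the $n^{\prime}$ independent chunks $\xi_k \colonequals (\bx_i, \bz_i)_{i \in A_k}$, $k \in [n^{\prime}]$ (the vertices of $[n] \setminus (A_1 \cup \cdots \cup A_{n^{\prime}})$ play no role, since no $r$-good matching touches them). Every edge $\{i,j\}$ of $G^{\aug}$ with $i \in A_k$, $j \in A_{\ell}$ is present precisely when $W_{ij} + W_{ji} \geq W_{ii} + W_{jj}$, a condition depending only on $(\bx_i, \bx_j, \bz_i, \bz_j)$ and hence only on $(\xi_k, \xi_{\ell})$. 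Therefore $M^{(r)}$ is a deterministic function $F(\xi_1, \dots, \xi_{n^{\prime}})$ of the independent chunks, and we may form the Doob martingale $Z_k \colonequals \EE[M^{(r)} \mid \xi_1, \dots, \xi_k]$, so that $Z_0 = \EE M^{(r)}$ and $Z_{n^{\prime}} = M^{(r)}$.

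First I would verify the key Lipschitz property: replacing a single chunk $\xi_k$ changes $F$ by at most $1$. Indeed, any $r$-good matching has at most one vertex in $A_k$, hence at most one edge incident to $A_k$. Given two configurations $\xi, \xi^{\prime}$ differing only in chunk $k$, take an optimal $r$-good matching for $\xi$; deleting its (at most one) edge meeting $A_k$ leaves a matching all of whose edges avoid $A_k$, so (i) their membership in $G^{\aug}$ is unchanged under the perturbation and (ii) the $r$-good constraints survive edge deletion; thus it is an $r$-good matching for $\xi^{\prime}$ with at least $F(\xi) - 1$ edges. Swapping the roles of $\xi$ and $\xi^{\prime}$ gives $|F(\xi) - F(\xi^{\prime})| \leq 1$. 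By the standard fact that a Doob martingale inherits such bounded differences, $|Z_k - Z_{k-1}| \leq 1$ almost surely for every $k$.

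Then I would apply the Azuma--Hoeffding inequality to $(Z_k)$: since $\sum_{k=1}^{n^{\prime}} 1^2 = n^{\prime}$,
\[
    \PP[M^{(r)} - \EE M^{(r)} \geq t] \leq \exp\left(-\frac{t^2}{2n^{\prime}}\right),
\]
and applying the same bound to $-M^{(r)}$ (whose Doob martingale also has unit‐bounded increments) yields the lower tail. This is exactly the stated inequality. Comparing it with the form $\exp(-\alpha t^2/m)$ required in Lemma~\ref{lem:frieze}, we may take $\alpha = m/(2n^{\prime})$ for any value of $m > 0$, as claimed.

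The only genuinely delicate point — the step I would write most carefully — is the bounded‐differences verification, and within it the assertion that deleting the unique $A_k$-incident edge of an optimal $r$-good matching produces a \emph{valid} $r$-good matching for the perturbed configuration. This rests on the two elementary but essential observations that edges avoiding $A_k$ retain their status in $G^{\aug}$ after perturbing $\xi_k$, and that the $r$-good property (all vertices inside $\bigcup_j A_j$, at most one per block) is monotone under removing edges. Everything else is a black‐box invocation of the martingale concentration inequality.
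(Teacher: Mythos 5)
Your proof is correct and follows essentially the same route as the paper: you build a Doob martingale over the independent blocks $A_1,\dots,A_{n'}$, establish the unit bounded‐differences property by observing that any $r$-good matching meets $A_k$ in at most one edge, and close with Azuma--Hoeffding. Your version is slightly more explicit about why $M^{(r)}$ depends only on the data over $\bigcup_k A_k$ and about why the deleted-edge matching remains $r$-good, but these are the same points the paper's argument relies on.
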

\begin{proof}
    For an arbitrary graph $G$ on vertex set $[n]$, let $M^{(r)}(G)$ denote the number of edges in the largest $r$-good matching in $G$.
    
    We first claim that, if there exists some $k \in [n^{\prime}]$ such that $G$ and $G^{\prime}$ differ only on edges incident with $A_k$, then $|M^{(r)}(G) - M^{(r)}(G^{\prime})| \leq 1$.
    Indeed, if the largest matching in $G^{\prime}$ contains no edge incident with $A_k$, then the same matching exists in $M^{(r)}(G)$, so $M^{(r)}(G) \geq M^{(r)}(G^{\prime})$.
    If the largest matching in $G^{\prime}$ does contain an edge incident with $A_k$, then the matching formed by removing that edge exists in $M^{(r)}(G)$, so $M^{(r)}(G) \geq M^{(r)}(G^{\prime}) - 1$.
    Thus $M^{(r)}(G^{\prime}) - M^{(r)}(G) \leq 1$, and symmetrically $M^{(r)}(G) - M^{(r)}(G^{\prime}) \leq 1$.
    
    Now, view $M^{(r)} = M^{(r)}(G^{\aug})$ as a function of $\bx_1, \bz_1, \dots, \bx_n, \bz_n$.
    Form the Doob's martingale $M^{(r)}_k \colonequals \EE[M^{(r)} \mid \{\bx_i\}_{i \in A_1 \cup \cdots \cup A_k} \cup \{\bz_i\}_{i \in A_1 \cup \cdots \cup A_k}]$ for $k = 0, 1, \dots, n^{\prime}$, for which $M^{(r)}_0 = \EE M^{(r)}$ and $M^{(r)}_{n^{\prime}} = M^{(r)}$.
    By the above claim, $|M^{(r)}_k - M^{(r)}_{k - 1}| \leq 1$ for all $k$, and the result then follows from the Azuma-Hoeffding inequality (see Lemma 1.2 of \cite{McDiarmid-1989-BoundedDifferences}).
\end{proof}

Our type (b) inequality involves the multinomial entropy function $H$, defined for $x_1, \dots, x_k \geq 0$ satisfying $x_1 + \cdots + x_k \leq 1$ as
\begin{equation}
    H(x_1, \dots, x_k) \colonequals -\sum_{i = 1}^k x_i \log x_i - \left(1 - \sum_{i = 1}^k x_i\right)\log\left(1 - \sum_{i = 1}^k x_i\right).
\end{equation}
We use the slightly non-standard notation of omitting what is usually the last argument $1 - \sum_{i = 1}^k x_i$ to shorten the expressions that arise below; this is, however, in agreement with the standard notation $H(x) = -x\log x - (1 - x)\log(1 - x)$ for the binomial entropy.

We give a coarsely-bounded exponential rate function below; this will suffice for our purposes and we make no efforts to optimize our analysis at the level of constants on the exponential scale in $m$.
More precise expressions are mentioned in our proof to follow.
\begin{lemma}[Type (b) inequality]
    \label{lem:second-moment-b}
    Suppose $n^{\prime} \geq 4m$ and $\sigma^2 \leq \frac{d}{40}$.
    Define as before $p \colonequals \PP[\{i, j\} \in E(G^{\aug})]$.
    Then,
    \begin{equation}
        \PP[M^{(r)} \geq m] \geq \exp\left(-m \sup_{\bx \in \sA} F(\bx) - O(\log n^{\prime})\right),
    \end{equation}
    where, for an absolute positive constant $K$ (e.g., one may take $K = 50$),
    \begin{align}
        \sA &\colonequals \bigg\{ (\oa, \ob, \oc, \oj, \ok, \ol) \in  [0, 1]^6: 2\oa + \ob + 2\oc + \oj + \ok + \ol \leq 1\bigg\}, \\
        R_1 &\colonequals K + \log\left(\frac{n^{\prime^2}}{pn^2 m}\right), \\
        R_2 &\colonequals K + d(S(\sigma^2, 2) - I(\sigma^2)) + 4\log_+\left(\frac{d}{1 + \sigma^2}\right) - 2\log r, \\
        F(\oa, \ob, \oc, \oj, \ok, \ol) &\colonequals 7H\left(\oa, \oa, \ob, \oc, \oc, \oj, \ok, \ol\right) + \left(\oa + \ob + \oc + \oj + \ok + \ol\right)(R_1 \vee R_2).
    \end{align}
    That is, if $n^{\prime}$ and $m$ are functions of $n \to \infty$ with $n^{\prime} = e^{o(m)}$ then, for any $\epsilon > 0$, for all sufficiently large $n$, inequality (b) of Lemma~\ref{lem:frieze} holds with
    \begin{equation}
        \beta = \sup_{\bx \in \sA} F(\bx) + \epsilon.
    \end{equation}
\end{lemma}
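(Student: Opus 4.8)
The plan is to run a second-moment (Cauchy--Schwarz) argument on the counting random variable
\[
X \colonequals \#\big\{\, r\text{-good matchings of size exactly } m \text{ in } G^{\aug} \,\big\} .
\]
Every $r$-good matching of size at least $m$ contains one of size exactly $m$, so $\{X > 0\} = \{M^{(r)} \geq m\}$, and Cauchy--Schwarz applied to $\EE X = \EE[X\One\{X>0\}]$ gives $\PP[M^{(r)} \geq m] \geq (\EE X)^2/\EE[X^2]$; it therefore suffices to prove $\EE[X^2]/(\EE X)^2 \leq \exp\!\big(m\sup_{\bx \in \sA}F(\bx) + O(\log n^{\prime})\big)$. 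For the first moment, let $N_m$ be the number of $r$-good size-$m$ matchings; since the $2m$ matched vertices occupy $2m$ distinct blocks $A_k$, a direct count gives $N_m = (n^{\prime})_{2m}\, r^{2m}/(2^m m!)$, and the hypothesis $n^{\prime} \geq 4m$ makes this $n^{2m} e^{\Theta(m)}/m!$ (the discrepancy between $n^{\prime} r$ and $n$ costing only $e^{O(m)}$); in particular $\EE X > 0$. The decisive elementary fact is that the $m$ edges of a fixed matching are vertex-disjoint, so their membership events in $G^{\aug}$ depend on disjoint coordinates $(\bx_i, \bz_i)$ and are independent, whence $\PP[\sM \subseteq G^{\aug}] = p^m$ for every size-$m$ matching $\sM$ and $\EE X = N_m p^m$; Stirling then gives $\log \EE X = 2m\log n - \log(m!) + m\log p + O(m)$.

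For the second moment I would expand $\EE[X^2] = \sum_{\sM_1, \sM_2}\PP[\sM_1 \cup \sM_2 \subseteq G^{\aug}]$ over ordered pairs and classify the pairs by the isomorphism type of the union graph $H = \sM_1 \cup \sM_2$, which, being a union of two matchings, is a vertex-disjoint union of shared edges $\sM_1 \cap \sM_2$, two-colour-alternating even cycles, and alternating paths. Vertex-disjointness of the components makes $\PP[H \subseteq G^{\aug}]$ factorise over them: a single edge (shared or monochromatic) contributes exactly $p$, while Proposition~\ref{prop:aug-per-edge-path-cycle} bounds the factor of a path or cycle on $t \geq 3$ vertices by $\exp(-\tfrac d2 S(\sigma^2, t))$. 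The point is then to control the sum against the ``all-disjoint'' contribution, which reproduces $(\EE X)^2$ up to $e^{O(m)}$; writing $p = \what{p}\, e^{-\frac d2 S(\sigma^2,2)}$, a path or cycle component with $e$ edges and $t \leq e+1$ vertices contributes an excess factor over the independent value $p^{e}$ of at most $\what{p}^{\,-e}\exp\!\big(\tfrac d2 (e-1)(S(\sigma^2,2) - I(\sigma^2))\big)$ — here Corollary~\ref{cor:riemann-sum-cvx} (with $t_0 = 2$) is essential, since it bounds $e\,S(\sigma^2,2) - S(\sigma^2, t)$ \emph{linearly in $e$, uniformly over the component size}, so that arbitrarily long alternating paths and cycles carry no worse than a bounded penalty per edge — and Proposition~\ref{prop:pair-aug-lb} (valid since $\sigma^2 \leq d/40$) bounds $\what{p}^{\,-1} \lesssim \sqrt{d/(1+\sigma^2)}$; this produces the $d(S(\sigma^2,2) - I(\sigma^2))$ and $\log_+(d/(1+\sigma^2))$ terms in $R_2$. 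Shared edges and (closed) alternating cycles, being ``balanced'', are the more expensive pieces: each supplies one factor fewer of $p$ than the independent count would, costing $p^{-1} \asymp e^{\frac d2 S(\sigma^2,2)}$, and combined with the count of which edges of $\sM_1$ are involved this yields $R_1$. Counting the ordered pairs with a prescribed union profile is, after choosing which edges of $\sM_1$ lie in each component type and how the $\sM_2$-edges attach, a multinomial coefficient times powers of $n$, $r$, $n^{\prime}$; Stirling turns the multinomial into $\exp(m \cdot 7H(\cdots))$ with the leading $7$ (and the constant $K$) crudely absorbing all unoptimised slack, the $-2\log r$ in $R_2$ reflecting that reusing a block of $\sM_1$ forfeits a free choice at scale $r$, and overbounding every per-component cost by $R_1 \vee R_2$ collapses the parametrisation onto the six fractions $(\oa,\ob,\oc,\oj,\ok,\ol)$ of $\sM_1$-edges in the coarse categories, whose admissible range is exactly $\sA$ because these fractions (with multiplicities) sum to at most $1$. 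Summing over the $\mathrm{poly}(n^{\prime}) = e^{O(\log n^{\prime})}$ discretised profiles yields the claimed bound on $\EE[X^2]/(\EE X)^2$, and the closing ``That is'' statement follows because $n^{\prime} = e^{o(m)}$ forces $O(\log n^{\prime}) = o(m) \leq \epsilon m$ for $n$ large.

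I expect the main obstacle to be exactly this second-moment bookkeeping, and in particular accounting correctly for the genuine correlations in $G^{\aug}$: unlike in an \Erdos--\Renyi\ graph, pairs of matchings sharing only \emph{vertices} (not edges) are positively correlated here, so the naive independent-edge heuristic undercounts $\EE[X^2]$, and the argument survives only because each shared vertex — each edge of an alternating path or cycle in the union — carries merely the bounded multiplicative penalty governed by $S(\sigma^2,2) - I(\sigma^2)$. This is precisely why one needs the explicit linear-in-$t$ lower bound on the Riemann sums from Corollary~\ref{cor:riemann-sum-cvx} rather than just their asymptotics, and why passing to $r$-good matchings (so that interactions with $\sM_1$ become both rarer and come with the favourable $-2\log r$ in $R_2$) is what ultimately lets the downstream arguments drive $\sup_{\sA} F$ below the concentration exponent $\alpha$. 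A secondary difficulty is aligning the entropy (counting) factors with the probability factors type-by-type so that the admissible set reduces cleanly to $\sA$; the generous constants ($K = 50$, the leading $7$) indicate this is carried out bluntly rather than sharply.
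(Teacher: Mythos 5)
Your overall plan is the same as the paper's: set up the Paley--Zygmund bound $\PP[M^{(r)}\geq m]\geq(\EE N)^2/\EE N^2$ for the count $N$ of $r$-good size-$m$ matchings, reduce $\EE N^2$ to a sum over a fixed $Q_0$ of $\PP[Q_0\cup Q\subseteq G^{\aug}]$, decompose the union into isolated edges, alternating paths, and alternating even cycles, bound each component via Proposition~\ref{prop:aug-per-edge-path-cycle} together with the linear-in-$t$ Riemann-sum lower bound (the paper applies Corollary~\ref{cor:riemann-sum-cvx} with $t_0=4$; your choice of $t_0=2$ works for your bookkeeping), absorb $\what p^{-1}$ via Proposition~\ref{prop:pair-aug-lb}, and count $Q$ with a prescribed overlap profile by multinomial/entropy bounds. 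That much is right.

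However, there is a genuine gap in what you call ``Stirling turns the multinomial into $\exp(m\cdot 7H(\cdots))$'': the number of alternating \emph{cycle components} in $Q_0\cup Q$ is a free parameter that does not appear in the multinomial coefficient, and it is not benign. In the per-component probability bound, each component of $\geq 4$ vertices picks up a constant ``bonus'' $\exp(\tfrac d2 J)$ with $J=4I(\sigma^2)-S(\sigma^2,4)>0$. For path components the number of such bonuses is already encoded by the endpoint types ($a$ and $b$), but for cycle components one must sum over all possible numbers of cycles $\widetilde\ell$ on the $\ell$ cycle-type vertices, weighted by $(e^{\frac d2 J})^{\widetilde\ell}$ times the number of ways two edge-disjoint perfect matchings on $[\ell]$ can form $\widetilde\ell$ cycles. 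When $d$ is large this weight per cycle is huge, and nothing in a naive entropy argument controls the sum. The paper handles exactly this with Proposition~\ref{prop:cycles-mgf}, a moment generating function bound $\EE a^{X_\ell}\leq(e^3 a/\ell)^{\ell/4}$ for the cycle count $X_\ell$ in a union of two random perfect matchings, proved by a Feller-coupling-style recursion; this is what lets the $\frac d2 J\widetilde\ell$ term be absorbed into the $\ol$-rate (and is ultimately why $R_2$ involves only $S(\sigma^2,2)-I(\sigma^2)$ rather than something that grows with $d$). Your proposal does not identify this ingredient, and without it the sum over cycle configurations is uncontrolled; this is arguably the most delicate step of the whole type~(b) inequality. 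A minor further inaccuracy: in the paper $\oa,\dots,\ol$ are normalized counts of six \emph{vertex} types in $V(Q_0)\cap V(Q)$ (not fractions of $\sM_1$-edges as you describe), which is what makes the admissible region exactly $\sA$ and the factorization over components clean.
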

\noindent
The basic idea of the remaining analysis will be to choose $r$ and $m$ to ensure that $R_1$ and $R_2$ are very negative, forcing $\oa, \ob, \oc, \oj, \ok, \ol$ to be small at the maximizing point.
In $R_1$, we will accomplish this by taking $n^{\prime} = Cm$ for some fixed $C$, and $m = cpn^2$ for some sufficiently small $c$.
Then, the first term of $F$ is also small, so $\sup F$ and therefore $\beta$ may be made arbitrarily small by lowering $c$.
On the other hand, $\alpha = m / 2n^{\prime} = 1/2C$, so we may ensure $\beta < \alpha$ and apply Lemma~\ref{lem:frieze}, finding that with high probability $M \geq M^{(r)} \geq c^{\prime}pn^2$ for some $0 < c^{\prime} < c$.

The following is the main technical preliminary to our proof, which bounds the moment generating function of the number of connected components in the union of two random edge-disjoint perfect matchings.
\begin{proposition}[Cycle moment generating function]
    \label{prop:cycles-mgf}
    Let $\ell$ be even and let $K_{\ell}$ be the complete graph on vertex set $[\ell]$.
    Let $Q_1$ be any perfect matching in $K_{\ell}$, and let $Q_2$ be a uniformly random perfect matching in $K_{\ell}$ with the edges of $Q_1$ removed.
    Write $X_{\ell}$ for the random variable giving the number of connected components in $Q_1 \cup Q_2$.
Then, for all $\ell \geq 4$ and all $a \geq \ell$,
    \begin{equation}\label{mgf}
        \EE a^{X_{\ell}} \leq \frac{(\phi^2a)^{\ell/4}}{(\ell/2)!!} \leq \left(\frac{e^3 a}{\ell}\right)^{\ell/4},
    \end{equation}
    where $\phi = (1+\sqrt 5)/2$ denotes the golden ratio.
    \end{proposition}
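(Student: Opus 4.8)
Write $\ell = 2n$ and keep the fixed matching $Q_1$. The plan is to compute $\EE a^{X_\ell}$ \emph{exactly} through a combinatorial recursion and then estimate the result. Set
\[
  f_n(a) \colonequals \sum_{Q_2} a^{c(Q_1 \cup Q_2)},
\]
the sum over perfect matchings $Q_2$ of $K_{2n}$ edge-disjoint from $Q_1$, where $c(\cdot)$ counts connected components; then $N_n \colonequals f_n(1)$ is the number of such $Q_2$, and $\EE a^{X_{2n}} = f_n(a)/N_n$.

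Since $Q_1 \cup Q_2$ is a disjoint union of cycles alternating between $Q_1$- and $Q_2$-edges, each spanning some $k \ge 2$ of the $n$ edges of $Q_1$, conditioning on the cycle through a fixed vertex (it uses $k$ of the $Q_1$-edges; on the $2k$ covered vertices $Q_2$ must form a single $2k$-cycle with $Q_1$; the $2(n-k)$ remaining vertices are an isomorphic copy of the whole problem) gives
\[
  f_n(a) = a \sum_{k = 2}^n \binom{n - 1}{k - 1} h_k\, f_{n - k}(a), \qquad f_0(a) = 1,
\]
where $h_k$ is, for a fixed perfect matching $P$ of $K_{2k}$, the number of perfect matchings $P'$ with $P \cup P'$ a single cycle. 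There are $(2k-1)!$ ordered pairs $(P,P')$ with $P \cup P'$ a single $2k$-cycle, since each of the $(2k-1)!/2$ Hamilton cycles on $2k$ labelled vertices is the edge-disjoint union of its two alternating matchings; hence $h_k = (2k-1)!/(2k-1)!! = (2k-2)!! = 2^{k-1}(k-1)!$ for $k \ge 2$. Substituting $\binom{n-1}{k-1}h_k = (n-1)!\,2^{k-1}/(n-k)!$, putting $g_n \colonequals f_n(a)/n!$ and $S_n \colonequals \sum_{i=0}^n 2^{-i} g_i$, the recursion telescopes to
\[
  S_n = S_{n-1} + \frac{a}{2n} S_{n-2}, \qquad S_0 = S_1 = 1,
\]
with $f_n(a) = (n-1)!\,a\,2^{n-1} S_{n-2}(a)$ (the subscript $\bullet(a)$ recording the parameter). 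The prefactor cancels in the ratio, so
\[
  \EE a^{X_{2n}} = a\,\frac{S_{n-2}(a)}{S_{n-2}(1)}
\]
(equivalently this follows from the exponential generating function $\sum_n f_n(a)\,t^n/n! = e^{-at}(1-2t)^{-a/2}$, but the recursion is what we use).

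Now to the estimate. All increments of $S_\bullet(1)$ are positive, so $S_m(1) \ge S_0 = 1$, and it suffices to prove that for every $m \ge 0$ and every $a \ge 2(m+2)$,
\[
  S_m(a) \le \frac{\phi^{m+2}}{(m+2)!!}\, a^{m/2};
\]
taking $m = n-2$ and using the hypothesis $a \ge \ell = 2n$ then yields $\EE a^{X_\ell} \le a\cdot\frac{\phi^n}{n!!} a^{n/2 - 1} = \frac{(\phi^2 a)^{n/2}}{n!!} = \frac{(\phi^2 a)^{\ell/4}}{(\ell/2)!!}$, the first bound. The displayed inequality is proved by induction on $m$: the cases $m \in \{0,1,2,3\}$ are checked directly from the explicit small $S_m$ (the constraint $a \ge 2(m+2)$ leaves ample room), and for $m \ge 4$ one applies the hypothesis to $S_{m-1}$ and $S_{m-2}$, bounds $a^{(m-1)/2} \le a^{m/2}/\sqrt{2(m+2)}$, and reduces — after the Wallis-type estimate $m!!/(m+1)!! \le \sqrt{2/(m+1)}$ and $(m+2)!! = (m+2)\,m!!$ — to $\phi\sqrt{\tfrac{m+2}{m+1}} + \tfrac{m+2}{2m} \le \phi^2$, whose left side decreases in $m$ to $\phi + \tfrac12 < \phi + 1 = \phi^2$, so verifying it at $m = 4$ completes the step. (This is where $\phi$ enters: up to the choice of base cases it is the growth rate of the Fibonacci-type recursion for $S_\bullet$.)

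Finally, the second inequality $\frac{(\phi^2 a)^{\ell/4}}{(\ell/2)!!} \le (e^3 a/\ell)^{\ell/4}$ is equivalent to $(\ell/2)!! \ge (\phi^2 \ell/e^3)^{\ell/4}$, which follows from the elementary bound $(\ell/2)!! \ge (\ell/(2e))^{\ell/4}$ together with $e^2 \ge 2\phi^2$. I expect the main obstacle to be establishing the recursion in the clean form above — in particular evaluating $h_k$ and carrying out the telescoping — and then choosing the induction hypothesis so that the powers of $\phi$ and the double factorial line up; after that the induction is routine, except that the small-$m$ base cases genuinely must be checked by hand, since the generic estimate only closes for $m \ge 4$.
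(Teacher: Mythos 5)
Your proof is correct, but it takes a genuinely different (and in some ways cleaner) route than the paper. The paper works with the moment generating function $m_\ell = \EE a^{X_\ell}$ directly: it conditions on whether vertex $1$ lies in a $4$-cycle, claiming the $Q_2$-neighbor $k$ of vertex $i$ is uniform over $[\ell]\setminus\{1,i,j\}$, and deduces the two-term recursion $m_\ell = \tfrac{a}{\ell-3} m_{\ell-4} + (1-\tfrac{1}{\ell-3}) m_{\ell-2}$ with the convention $m_2 = 0$. You instead work with the unnormalized sum $f_n(a) = \sum_{Q_2} a^{c(Q_1\cup Q_2)}$, condition on the \emph{length} of the cycle through a fixed vertex, evaluate the single-cycle count $h_k = (2k-2)!!$, and then telescope the resulting multi-term convolution recursion via the rescaled partial sums $S_n$ to arrive at the two-term Fibonacci-type recursion $S_n = S_{n-1} + \tfrac{a}{2n}S_{n-2}$; this gives, as a bonus, the exact closed form $\EE a^{X_\ell} = a\,S_{(\ell-4)/2}(a)/S_{(\ell-4)/2}(1)$ and the exponential generating function. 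I note that your route sidesteps a delicate point: the paper's claim that $k$ is uniform over $[\ell]\setminus\{1,i,j\}$ is false in general (it already fails at $\ell=6$, where the recursion yields $m_6 = 2a/3$ while the true value is $m_6 = a$ since every edge-disjoint $Q_2$ forms a single $6$-cycle, and it also fails for $\ell \geq 10$ by a direct count), though the paper's final inequality happens to be correct regardless; your derivation of the recurrence is verifiably exact at all small $\ell$. Both inductions share the same engine — a Wallis-type double-factorial estimate $m!!/(m+1)!! \leq \sqrt{2/(m+1)}$ in your case, $(\ell/2-1)!! \geq (\ell/2)!!/\sqrt{\ell}$ in the paper's — so the analytic difficulty is comparable; your approach costs a bit more combinatorics up front (the computation of $h_k$, the telescoping, and the four hand-checked base cases $m\le 3$) in exchange for an exact identity rather than just an upper-bound recursion, which seems a worthwhile trade.
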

\begin{proof}
    We prove our bound inductively.
    For any fixed $a$, write $m_{\ell} \colonequals \EE a^{X_{\ell}}$ for each even $\ell$, where we take $m_0 = 1$ and $m_2 = 0$.
    We will prove that
	\begin{equation}\label{eq:m_recurrence}
		m_\ell = \frac{a}{\ell - 3} m_{\ell - 4} + \left(1 - \frac{1}{\ell - 3}\right) m_{\ell - 2}\,.
	\end{equation}
	
	Let us assume that~\eqref{eq:m_recurrence} holds for now and show how to derive the claim.
	Clearly the first inequality in~\eqref{mgf} holds for $\ell = 0, 2, 4$.
    For the inductive step, suppose the bound holds for all values smaller than a given $\ell \geq 6$.
    The induction hypothesis then implies
    \begin{align*}
    m_\ell & \leq \frac{a}{\ell -3} \cdot \frac{(\phi^2a)^{\ell/4 - 1}}{(\ell/2 - 2)!!}  + \left(1 - \frac{1}{\ell - 3}\right) \frac{(\phi^2a)^{\ell/4 - 1/2}}{(\ell/2 - 1)!!} \\
    & \leq \frac{a}{\ell/2} \cdot \frac{(\phi^2a)^{\ell/4 - 1}}{(\ell/2 - 2)!!} + \frac{(\phi^2a)^{\ell/4 - 1/2} a^{1/2}}{(\ell/2)!!} \\
    & = \frac{(\phi^2a^{\ell/4})}{(\ell/2)!} (\phi^{-2} + \phi^{-1})\,,
    \end{align*}
    where we have used that $\ell - 3 \geq \ell/2$ for all $\ell \geq 6$ and that $(\ell/2 - 1)!! \geq (\ell/2)!!/\sqrt{\ell} \geq (\ell/2)!!/\sqrt{a}$.
    Since $\phi^{-2} + \phi^{-1} = 1$, this completes the induction and proves the first inequality in~\eqref{mgf}, and the second is an immediate consequence.

    All that is left is to establish the promised recurrence~\eqref{eq:m_recurrence}.
        Note that $(Q_1, Q_2)$ as described are two uniformly random perfect matchings on $[\ell]$ conditioned to be edge disjoint.
    Each connected component of $Q_1 \cup Q_2$ is a cycle whose edges alternate between $Q_1$ and $Q_2$.
    Let us condition on the size of the component containing the vertex $1$.
    Write $i$ for the neighbor of $1$ in $Q_1$, and $j$ and $k$ for the neighbors of $1$ and $i$, respectively, in $Q_2$.
    Since $Q_1$ and $Q_2$ are edge-disjoint perfect matchings, $1$, $i$, $j$, and $k$ are distinct.
    
    If $1$ lies in a 4-cycle, then $\{j, k\} \in Q_1$, and removing the vertices $\{1, i, j, k\}$, and corresponding edges from $Q_1$ and $Q_2$ yields two uniformly random, edge-disjoint perfect matchings on $\ell - 4$ vertices, with one fewer connected component than $Q_1 \cup Q_2$.
    Since $k$ is a uniform random vertex from $[\ell] \setminus \{1, i, j\}$, this situation occurs with probability $\frac{1}{\ell - 3}$.
    This gives the first term of~\eqref{eq:m_recurrence}.
     
    On the other hand, if $1$ lies in a cycle of length greater than $4$, then $\{j, k\} \notin Q_1$.
    Removing the vertices $1$ and $i$ as well as the edges $\{1, i\}$ from $Q_1$ and $\{1, j\}$ from $Q_2$ and replacing the edge $\{i, k\}$ by $\{j, k\}$ in $Q_2$ yields two uniformly random, edge-disjoint perfect matchings on $\ell - 2$ vertices, with the same number of connected components as $Q_1 \cup Q_2$.
    Since this occurs with probability $1 - \frac{1}{\ell - 3}$, this yields the second term of~\eqref{eq:m_recurrence}.

\end{proof}

We will also use the following inequalities among the various functions of $\sigma^2$, whose proofs we defer to Appendix~\ref{app:pf:prop:eta-zeta-bounds}.
\begin{proposition}
    \label{prop:eta-zeta-bounds}
    For $\sigma^2 > 0$, define
    \begin{align}
        \eta_1 = \eta_1(\sigma^2) &\colonequals \frac{3}{4}S(\sigma^2, 2) - \frac{1}{4}S(\sigma^2, 4), \\
        \eta_2 = \eta_2(\sigma^2) &\colonequals S(\sigma^2, 2) -\frac{1}{2}S(\sigma^2, 3), \\
        \eta_3 = \eta_3(\sigma^2) &\colonequals \frac{1}{2}S(\sigma^2, 2) - \frac{1}{2}I(\sigma^2).
    \end{align}
    Then, we have
    \begin{align}
        \eta_i &\leq \eta_3 \leq \frac{3}{2 + 8\sigma^2} \text{ for each } i \in \{1, 2, 3\}.
    \end{align}
\end{proposition}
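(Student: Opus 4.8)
The plan is to treat the three assertions $\eta_i\le\eta_3$ (for $i\in\{1,2\}$; the case $i=3$ is vacuous) and $\eta_3\le\frac{3}{2+8\sigma^2}$ separately. The first two I would reduce directly to the Riemann-sum discrete concavity of Lemma~\ref{lem:S-lower}, and the third to an elementary one-variable inequality after the substitution $u\colonequals\sqrt{1+\sigma^{-2}}$ and the closed form for $I(\sigma^2)$ from Proposition~\ref{prop:I-eval}.

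For $\eta_1\le\eta_3$ and $\eta_2\le\eta_3$: a direct rearrangement of the definitions gives
$\eta_3-\eta_1=\tfrac14\bigl(S(\sigma^2,4)-S(\sigma^2,2)\bigr)-\tfrac12 I(\sigma^2)$ and $\eta_3-\eta_2=\tfrac12\bigl(S(\sigma^2,3)-S(\sigma^2,2)-I(\sigma^2)\bigr)$. By Lemma~\ref{lem:S-lower}, $S(\sigma^2,t)-S(\sigma^2,t-1)>I(\sigma^2)$ for every $t\ge 3$. Applying this with $t=3$ gives $S(\sigma^2,3)-S(\sigma^2,2)>I(\sigma^2)$, so $\eta_3-\eta_2>0$; applying it with $t=3$ and $t=4$ and summing gives $S(\sigma^2,4)-S(\sigma^2,2)>2I(\sigma^2)$, so $\eta_3-\eta_1>0$. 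Thus both inequalities are essentially immediate corollaries of the concavity lemma.

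For $\eta_3\le\frac{3}{2+8\sigma^2}$: using $S(\sigma^2,2)=f(\sigma^2,\tfrac12)=\log(1+\sigma^{-2})$ together with Proposition~\ref{prop:I-eval}, and writing $u\colonequals\sqrt{1+\sigma^{-2}}$ (so $u>1$ and $\sigma^2=1/(u^2-1)$), I would simplify $\eta_3=\log u-\log\tfrac{1+u}{2}=\log\tfrac{2u}{1+u}$, while the target becomes $\tfrac{3}{2+8\sigma^2}=\tfrac{3(u^2-1)}{2(u^2+3)}$. Then I would apply $\log(1+x)\le x$ with $x=\tfrac{u-1}{u+1}\ge 0$ to obtain $\eta_3=\log\bigl(1+\tfrac{u-1}{u+1}\bigr)\le\tfrac{u-1}{u+1}$, so it remains to verify $\tfrac{u-1}{u+1}\le\tfrac{3(u-1)(u+1)}{2(u^2+3)}$; dividing through by $u-1>0$, this is equivalent to $2(u^2+3)\le 3(u+1)^2$, i.e.\ $u^2+6u-3\ge0$, which holds for all $u\ge1$ since the left side is at least $4$ there.

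I do not expect a genuine obstacle. The first two inequalities are a one-line consequence of Lemma~\ref{lem:S-lower}, and the third collapses to a quadratic inequality. The only mild point to watch is that $\log(1+x)\le x$ is a lossy bound, so one should confirm it still suffices --- and it does, precisely because $u^2+6u-3$ stays bounded away from $0$ on $[1,\infty)$ with slack to spare. One should also note that dividing by $u-1$ is legitimate because $\sigma^2>0$ forces $u>1$ strictly (the bound degenerates to an equality only in the limit $\sigma^2\to\infty$).
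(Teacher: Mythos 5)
Your proof is correct, but it takes a genuinely different route from the paper's. For the inequalities $\eta_1 \le \eta_3$ and $\eta_2 \le \eta_3$, you invoke the discrete concavity of the Riemann sums (Lemma~\ref{lem:S-lower}) directly: since $S(\sigma^2,t)-S(\sigma^2,t-1) > I(\sigma^2)$ for $t\ge 3$, telescoping gives $S_3 - S_2 > I$ and $S_4 - S_2 > 2I$, from which both inequalities drop out in one line each. The paper instead writes out $\exp(\eta_i)$ explicitly in the variable $y=\sigma^{-2}$ (e.g.\ $\exp(\eta_1)=\sqrt{(1+y)/(1+\tfrac12 y)}$, $\exp(\eta_3)=2\sqrt{1+y}/(1+\sqrt{1+y})$), and derives the ordering $\eta_3 \ge \eta_1 \ge \eta_2$ via concavity of the square root and AM--GM; this gives slightly more information (a total ordering of the three $\eta_i$) but requires re-deriving the closed forms. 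Your route leverages a lemma already proved in the paper and is arguably more economical here. For the upper bound $\eta_3 \le \tfrac{3}{2+8\sigma^2}$, both you and the paper use the closed form for $I(\sigma^2)$, but the paper proceeds by bounding $\exp(\eta_3)$ through a chain involving $\sqrt{1+u}\le e^{u/2}$, whereas your substitution $u=\sqrt{1+\sigma^{-2}}$ collapses $\eta_3$ to $\log\bigl(1+\tfrac{u-1}{u+1}\bigr)$, after which $\log(1+x)\le x$ and the quadratic $u^2+6u-3\ge 0$ finish it; both are elementary and sound. The only small caveat, which you correctly noted, is the division by $u-1$: since $\sigma^2>0$ forces $u>1$ strictly, this is legitimate.
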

\noindent
We remark that these results are qualitatively sharp, in that the given quantities indeed approach positive constants as $\sigma^2 \to 0$, and decay as $O(\sigma^{-2})$ as $\sigma^2 \to \infty$; proofs of matching opposite bounds follow from similar elementary manipulations to those we give in the proof.

Finally, we will use the following standard properties of the multinomial entropy function $H$.
We note that we adopt the same convention for multinomial coefficients of omitting the last argument as we do for $H$:
\begin{equation}
    \binom{m}{a_1, \dots, a_k} \colonequals \frac{m!}{a_1! \cdots a_k! (m - a_1 - \cdots - a_k)!}.
\end{equation}
\begin{proposition}
    \label{prop:entropy-properties}
    The function $H$ satisfies the following properties:
    \begin{enumerate}
        \item $H(x_1, \dots, x_k) \leq \log(k + 1)$.
        \item For any $x \in (0, 1)$, $tH(x / t)$ is a strictly increasing function of $t$.
        \item For any $x_1, \dots, x_k \geq 0$ with $x_1 + \cdots + x_k \leq 1$, $H(x_1 + x_2, x_3, \dots, x_k) \leq H(x_1, x_2, x_3, \dots, x_k)$, and for any $k^{\prime} < k$, $H(x_1, \dots, x_{k^{\prime}}) \leq H(x_1, \dots, x_k)$.
        \item A multinomial coefficient is bounded by the entropy as 
        \begin{equation}
            \exp\left(mH\left(\frac{a_1}{m}, \cdots, \frac{a_k}{m}\right) - O_k(\log m)\right) \leq \binom{m}{a_1, \dots, a_k} \leq \exp\left(mH\left(\frac{a_1}{m}, \cdots, \frac{a_k}{m}\right)\right).
        \end{equation}
    \end{enumerate}
\end{proposition}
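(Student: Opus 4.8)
Throughout, I would exploit the identification of $H(x_1, \dots, x_k)$---under the stated convention---with the Shannon entropy of the probability vector $(x_1, \dots, x_k, x_{k+1})$ on $k + 1$ symbols, where $x_{k+1} \colonequals 1 - \sum_{i=1}^k x_i$, and handle each part by a standard argument. Part 1 is then exactly the statement that the entropy of a distribution on $k+1$ symbols is at most $\log(k+1)$, with equality for the uniform distribution; this is immediate from Gibbs' inequality (nonnegativity of relative entropy against the uniform law), or from Jensen's inequality applied to the concave function $\log$. Part 3 follows from the grouping identity for entropy: merging two symbols of masses $p, q$ into a single symbol of mass $p + q$ decreases the entropy by precisely $(p + q)\, H\!\left(\frac{p}{p + q}\right) \ge 0$. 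Applying this once with $p = x_1$ and $q = x_2$ gives the first inequality of Part 3. For the second, I would observe that $H(x_1, \dots, x_{k'})$ is obtained from $H(x_1, \dots, x_k)$ by merging the last $k - k' + 1$ coordinates $x_{k'+1}, \dots, x_k, x_{k+1}$, whose total mass is $1 - \sum_{i \le k'} x_i$, into a single coordinate; thus $k - k'$ applications of the grouping identity yield the bound. The only point requiring care is tracking the implicit last coordinate, but each merge above is arranged so that the affected mass stays in that coordinate.

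Part 2 reduces to a one-line computation after expanding the binary entropy. For $t > x > 0$---the range on which $x/t \in (0,1)$---set $g(t) \colonequals t H(x/t) = t\log t - (t - x)\log(t - x) - x\log x$; differentiating gives $g'(t) = \log \frac{t}{t - x} > 0$, so $g$ is strictly increasing.

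Part 4 is the classical Stirling bound for multinomial coefficients. Writing $a_{k+1} \colonequals m - \sum_{i=1}^k a_i$, we have $\binom{m}{a_1, \dots, a_k} = m! \big/ \prod_{i=1}^{k+1} a_i!$. The upper bound needs no Stirling: multiplying by $\prod_{i=1}^{k+1} (a_i/m)^{a_i}$ and using the multinomial theorem together with $\sum_i a_i/m = 1$ gives $\binom{m}{a_1, \dots, a_k} \prod_i (a_i/m)^{a_i} \le 1$, i.e.\ $\binom{m}{a_1, \dots, a_k} \le \prod_i (m/a_i)^{a_i} = \exp\big(m H(a_1/m, \dots, a_k/m)\big)$, with the convention $0\log 0 = 0$ handling any vanishing $a_i$. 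For the matching lower bound, I would apply the two-sided estimate $\sqrt{n}\,(n/e)^n \le n! \le e\,(n + 1)\,(n/e)^n$ to each of the $k + 2$ factorials involved: the exponential parts combine to exactly $m^m / \prod_i a_i^{a_i} = \exp\big(m H(a_1/m, \dots, a_k/m)\big)$, while the $k + 2$ polynomial prefactors together contribute only a multiplicative factor $e^{O_k(\log m)}$. All of these steps are routine, and there is no genuine obstacle here; the only mild bookkeeping is counting the $\Theta_k(1)$ Stirling prefactors in Part 4 and tracking the hidden coordinate in Part 3.
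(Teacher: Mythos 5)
The paper states this proposition without proof, treating its four parts as standard facts about entropy and multinomial coefficients, so there is no official argument to compare against. Your verification is correct and uses the canonical approach in each case (uniform bound via Gibbs/Jensen for Part 1, a direct derivative computation for Part 2, the grouping identity for Part 3, and the multinomial theorem plus two-sided Stirling for Part 4); in particular you correctly handle the one subtle bookkeeping point, namely that the implicit $(k+1)$-st coordinate absorbs the merged mass in Part 3.
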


\begin{proof}[Proof of Lemma~\ref{lem:second-moment-b}]
    Define the random variable
    \begin{equation}
        N \colonequals \#\{r\text{-good matchings on } 2m \text{ vertices of } G^{\aug}\}.
    \end{equation}
    We then have
    \begin{equation}
        \PP[M^{(r)} \geq m] = \PP[N > 0],
    \end{equation}
    and we will bound the latter from below by the second moment method.
    
    Let $\sM$ denote the set of $r$-good matchings of $2m$ vertices of the complete graph on $[n]$, whose cardinality is
    \begin{equation}
        |\sM| =  \binom{n^{\prime}}{2m} r^{2m} (2m - 1)!!.
    \end{equation}
    We then have by linearity of expectation that
    \begin{equation}
        \EE N = p^{m} |\sM|.
    \end{equation}
    Let $Q_0$ be a fixed $r$-good matching of $m$ elements in the complete graph on $[n]$ (say, the graph with edges $\{1, 2\}, \{3, 4\}, \dots, \{2m - 1, 2m\}$).
    By symmetry, we have
    \begin{equation}
        \EE N^2 = |\sM| \sum_{Q \in \sM} \PP[Q_0 \cup Q \subseteq G^{\aug}],
    \end{equation}
    and therefore the moment ratio may be written as an average,
    \begin{equation}
        \frac{\EE N^2}{(\EE N)^2} = \frac{1}{|\sM|} \sum_{Q \in \sM} \frac{\PP[Q_0 \cup Q \subseteq G^{\aug}]}{p^{2m}}.
    \end{equation}
    
    Given a graph $G$, write $\cc(G)$ for the set of its connected components, $\cc_2(G)$ for the set of its connected components isomorphic to the path on two vertices, $\cc_3(G)$ for the set of those isomorphic to the path on three vertices, and $\cc_{\geq 4}(G)$ for the set of the remaining connected components.
    Let us abbreviate $G \colonequals Q_0 \cup Q$.
    Note that all components of $G$ are then either cycles of even length at least 4 or paths.
    Then, by Proposition~\ref{prop:aug-per-edge-path-cycle}, for any connected component $H \in \cc_{\geq 4}(G)$ we have
    \begin{align}
        \PP[H \subseteq G^{\aug}] 
        &\leq \exp\left(-\frac{d}{2}S(\sigma^2, |V(H)|)\right)
        \intertext{and, applying Lemma~\ref{lem:S-lower} with $t_0 = 4$, we have}
        &\leq \exp\left(-\frac{d}{2}(|V(H)| I - J)\right),
    \end{align}
    where $I = I(\sigma^2)$ and $J = J(\sigma^2) = 4I(\sigma^2) - S(\sigma^2, 4) > 0$.
    For the remainder of this proof, let us follow the above convention abbreviating $I = I(\sigma^2)$ and $J = J(\sigma^2)$, and also writing $S_t = S(\sigma^2, t)$.
    Using this bound for connected components on at least four vertices and Proposition~\ref{prop:aug-per-edge-path-cycle} for connected components on two and three vertices, we then have
    \begin{align*}
        &\PP[G \subseteq G^{\aug}] \\
        &= \prod_{H \in \cc(G)} \PP[H \subseteq G^{\aug}] \\
        &\leq \exp\left(|\cc_2(G)| \log p - |\cc_3(G)| \frac{d}{2}S_3 - \frac{d}{2}\sum_{H \in \cc_{\geq 4}(G)}(|V(H)|I - J)\right) \\
        &= \exp\left(|\cc_2(G)| \log p - |\cc_3(G)| \frac{d}{2}S_3 - (|V(G)| - 2|\cc_2(G)| - 3|\cc_3(G)|)\frac{d}{2}I + |\cc_{\geq 4}(G)|\frac{d}{2}J\right) \\
        &= \exp\left(-|V(G)|\frac{d}{2}I + |\cc_2(G)| \left(dI + \log p\right) + |\cc_3(G)|\left(\frac{3d}{2}I - \frac{d}{2}S_3\right) + |\cc_{\geq 4}(G)|\frac{d}{2}J\right). \numberthis
    \end{align*}
    
    \begin{figure}
        \centering
        \includegraphics[scale=0.8]{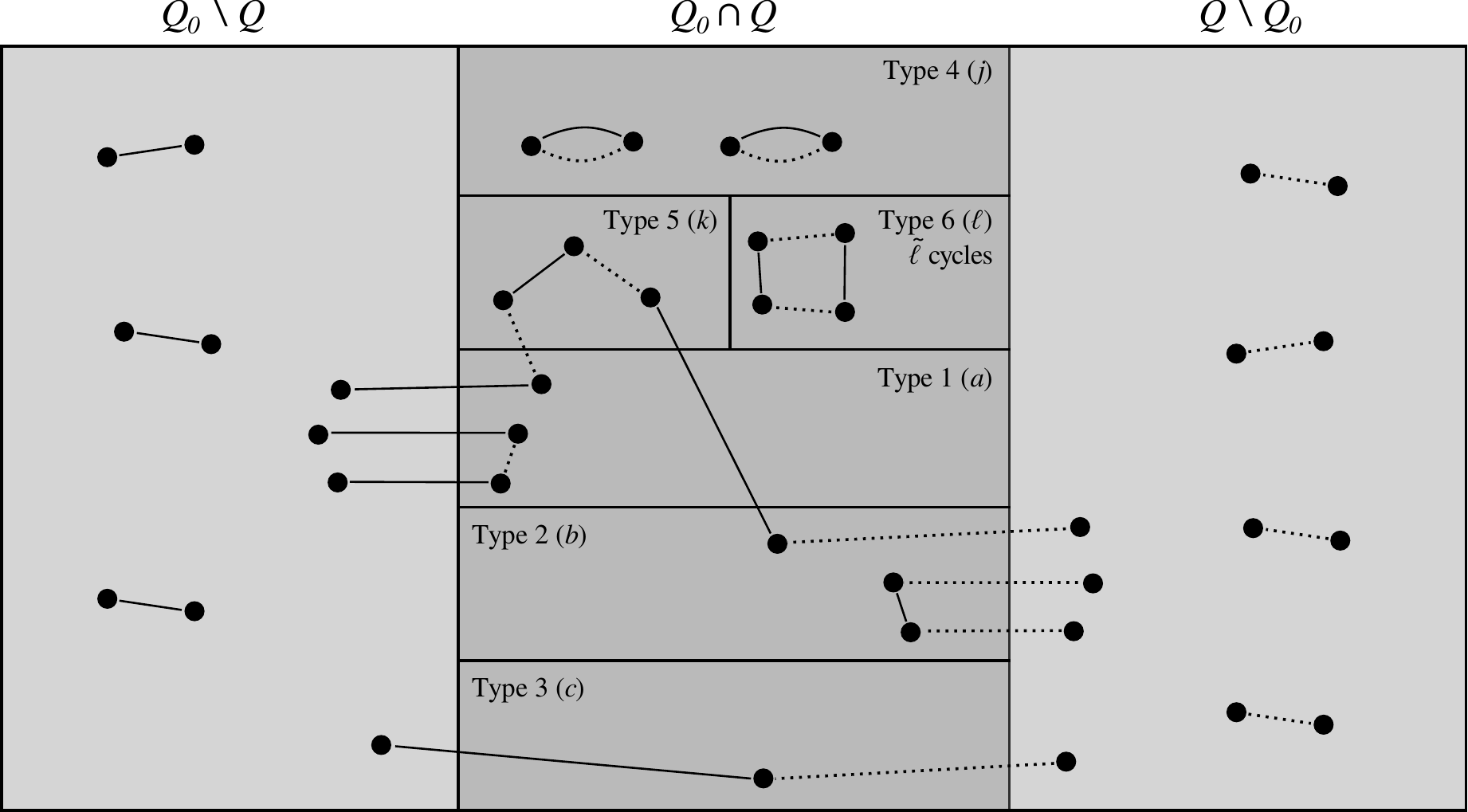}
        \caption{\textbf{Union of two matchings.} We illustrate the decomposition of two partial matchings of $[n]$ from the proof of Lemma~\ref{lem:second-moment-b}. The center region contains all vertices of $V(Q_0) \cap V(Q)$, solid lines indicate edges of $E(Q_0)$, and dotted lines indicate edges of $E(Q)$.}
        \label{fig:matching-types}
    \end{figure}
    
    We now divide the sum over $Q \in \sM$ into portions over which we may uniformly bound this probability.
    To do this, for any given $Q$ we introduce the following classification of the vertices of $Q_0 \cap Q$, into ``types'' 1, 2, 3, 4, 5, and 6.
    Next to each type, we give the letter that will denote the number of vertices of this type, $a, b,c,j, k,$ and $\ell$, respectively:
    \begin{enumerate}
        \item $a$ vertices whose neighbor in $Q_0$ lies in $Q_0 \setminus Q$ and whose neighbor in $Q$ lies in $Q_0 \cap Q$.
        \item $b$ vertices whose neighbor in $Q_0$ lies in $Q_0 \cap Q$ and whose neighbor in $Q$ lies in $Q \setminus Q_0$.
        \item $c$ vertices whose neighbor in $Q_0$ lies in $Q_0 \setminus Q$ and whose neighbor in $Q$ lies in $Q \setminus Q_0$.
        \item $j$ vertices whose neighbors in both $Q_0$ and $Q$ are equal.
        \item $k$ vertices which belong to a path connected component and whose neighbors in $Q_0$ and $Q$ are different but both lie in $Q_0 \cap Q$.
        \item $\ell$ vertices which belong to a cycle connected component and whose neighbors in $Q_0$ and $Q$ are different but both lie in $Q_0 \cap Q$.
    \end{enumerate}
    We also denote by $\widetilde{\ell}$ the number of cycle connected components in $Q_0 \cup Q$ (which all consist of Type 6 vertices).
    See Figure~\ref{fig:matching-types} for an illustration of this decomposition.
    With these notations, recalling that $|V(Q_0)| = |V(Q)| = 2m$, we have
    \begin{align*}
        |V(Q_0) \cap V(Q)| &= a + b + c + j + k + \ell, \numberthis \\
        |V(Q_0 \cup Q)| &= |V(Q_0)| + |V(Q)| - |V(Q_0) \cap V(Q)| \\
        &= 4m - a - b - c - j - k - \ell, \numberthis \\
        |\cc_2(Q_0 \cup Q)| &= \frac{j}{2} + \frac{2m - 2a - b - 2c - j - k - \ell}{2} + \frac{2m - a - 2b - 2c - j - k - \ell}{2} \\
        &= 2m - \frac{3}{2}a - \frac{3}{2}b - 2c - \frac{1}{2}j - k - \ell, \numberthis \\
        |\cc_3(Q_0 \cup Q)| &= c, \numberthis \\
        |\cc_{\geq 4}(Q_0 \cup Q)| &= \frac{1}{2}a + \frac{1}{2}b + \widetilde{\ell}, \numberthis
    \end{align*}
    the final claim following because every path component of length 4 or greater contains exactly two internal vertices of Type 1 or Type 2.
    
    Let $\sM_{a, b, c, j, k, \ell, \widetilde{\ell}}$ be the set of $Q$ such that $Q_0 \cup Q$ has the specified number of vertices of each type, and $\widetilde{\ell}$ cycle connected components.
    We note that this set is empty unless $j$, $k + a$, and $k + b$, and $\ell$ are all even, since these sets of vertices must admit perfect matchings (from restrictions of both $Q_0$ and $Q$, $Q$, $Q_0$, and both $Q_0$ and $Q$, respectively).
    For $Q \in \sM_{a, b, c, j, k, \ell, \widetilde{\ell}}$, we have
    \begin{align*}
        &\frac{\PP[Q_0 \cup Q \subseteq G^{\aug}]}{p^{2m}} \\
        &\leq \exp\bigg(-(4m - a - b - c - j - k - \ell) \frac{d}{2}I \\
        &\hspace{2cm} + \left(2m - \frac{3}{2}a - \frac{3}{2}b - 2c - \frac{1}{2}j - k - \ell\right)(dI + \log p) \\
        &\hspace{2cm} + c\left(\frac{3d}{2}I - \frac{d}{2}S_3\right) \\
        &\hspace{2cm} + \left(\frac{1}{2}a + \frac{1}{2}b + \widetilde{\ell}\right)\frac{d}{2}J \\
        &\hspace{2cm} - 2m\log p\bigg) \\
        &= \exp\bigg(a\left(-dI + \frac{d}{4}J - \frac{3}{2}\log p\right) + b\left(-dI + \frac{d}{4}J - \frac{3}{2}\log p\right) + c\left(-\frac{d}{2}S_3 - 2\log p\right) \\
        &\hspace{2cm} + j\left(-\frac{1}{2}\log p\right) + k\left(-\frac{d}{2}I - \log p\right) + \ell\left(-\frac{d}{2}I - \log p\right) + \widetilde{\ell}\left(\frac{d}{2}J\right)\bigg)
        \intertext{and, using that $\log p = -\frac{d}{2}S_2 + \log \what{p}$,}
        &= \exp\bigg((a + b)\left(\eta_1 d - \frac{3}{2}\log\what{p}\right) + c\left(\eta_2 d - 2\log\what{p}\right) + (k + \ell)\left(\eta_3 d - \log \what{p}\right) \\
        &\hspace{2cm} + j\left(-\frac{1}{2}\log p\right) + \widetilde{\ell}\left(\frac{d}{2}J\right)\bigg), \numberthis
    \end{align*}
    with $\eta_i$ as defined in Proposition~\ref{prop:eta-zeta-bounds},
    \begin{align}
        \eta_1 &= -I + \frac{1}{4}J + \frac{3}{4}S_2 = \frac{3}{4}S_2 - \frac{1}{4}S_4, \\
        \eta_2 &= S_2 -\frac{1}{2}S_3, \\
        \eta_3 &= \frac{1}{2}S_2 - \frac{1}{2}I.
    \end{align}
    By Proposition~\ref{prop:eta-zeta-bounds} we have $\eta_i \leq \eta_3$, and by Proposition~\ref{prop:pair-aug-lb} we have $\what{p} \geq \frac{1}{1000}\sqrt{\frac{1 + \sigma^2}{d}}$.
    Here and in the remainder of the proof, let $K$ be a large constant that may vary line to line.
    Substituting these bounds,
    \begin{align*}
        &\frac{\PP[Q_0 \cup Q \subseteq G^{\aug}]}{p^{2m}} \\
        &\leq \exp\bigg((a + b + c + k)\left(K + \eta_3d + \log\left(\frac{d}{1 + \sigma^2}\right)\right) \\
        &\hspace{2cm} + j\left(-\frac{1}{2}\log p\right) + \ell\left(K + \eta_3 d + \log\left(\frac{d}{1 + \sigma^2}\right)\right) + \widetilde{\ell}\left(\frac{d}{2}J\right)\bigg). \numberthis \label{eq:second-moment-prob-bound}
    \end{align*}
    
    We must also control the size of the subsets $\sM_{a, b, c, j, k, \ell, \widetilde{\ell}}$, which is the content of the following technical lemma, whose proof we defer to the conclusion of this section.
    \begin{lemma}\label{lem:M-bound}
    Given $\ell$ and $\widetilde{\ell}$, let $R_0$ be a perfect matching of $[\ell]$, and write $\Cyc(\ell, \widetilde{\ell})$ for the number of perfect matchings $R$ of $[\ell]$ edge-disjoint from $R_0$ such that $R_0 \cup R$ contains exactly $\widetilde{\ell}$ cycles.
    Define normalizations $\oa \colonequals a / 2m$ and likewise $\ob, \oc, \oj, \ok,$ and $\ol$.
Then $\sM_{a, b, c, j, k, \ell, \widetilde{\ell}}$ satisfies
\begin{multline}\label{eq:second-moment-ent-bound}
\frac{|\sM_{a, b, c, j, k, \ell, \widetilde{\ell}}|}{|\sM|} \leq \exp\bigg(m\bigg[5H\left(\oa, \oa, \ob, \oc, \oc, \oj, \ok, \ol\right) + (\oa + \oc)\log 4\bigg] + O(\log n^{\prime})\bigg) \cdot \\ r^{- a - b - c - j - k - \ell} \frac{1}{(j - 1)!!} \frac{\Cyc(\ell, \widetilde{\ell})}{(\ell - 1)!!}.
\end{multline}
    \end{lemma}

    Putting the bounds \eqref{eq:second-moment-prob-bound} and \eqref{eq:second-moment-ent-bound} together,
    \begin{align*}
        \frac{\EE N^2}{(\EE N)^2}
        &\leq \sum_{a, b, c, j, k, \ell, \widetilde{\ell}} \frac{|\sM_{a, b, c, j, k, \ell, \widetilde{\ell}}|}{|\sM|} \max_{Q \in \sM_{a, b, c, j, k, \ell, \widetilde{\ell}}} \frac{\PP[Q_0 \cup Q \subseteq G^{\aug}]}{p^{2m}} \\
        &\leq \sum_{a, b, c, j, k, \ell}\exp\bigg(m\bigg[5H\left(\oa, \oa, \ob, \oc, \oc, \oj, \ok, \ol\right) \\
        &\hspace{3cm} + (\oa + \ob + \oc + \ok)\left(K + 2\eta_3d + 2\log_+\left(\frac{d}{1 + \sigma^2}\right) - 2\log r\right) \\
        &\hspace{3cm} + \oj\left(-\log p - 2\log r\right) \\
        &\hspace{3cm} + \ol\left(K + 2\eta_3 d + 2\log_+\left(\frac{d}{1 + \sigma^2} \right) - 2\log r\right)\bigg] + O(\log n^{\prime})\bigg) \\
        &\hspace{2cm} \frac{1}{(j - 1)!!} \sum_{\widetilde{\ell}} \frac{\Cyc(\ell, \widetilde{\ell})}{(\ell - 1)!!}(e^{\frac{d}{2}J})^{\widetilde{\ell}} \numberthis
    \end{align*}
    
    For the remaining sum over $\widetilde{\ell}$, we use Proposition~\ref{prop:cycles-mgf}.
    We note first that, since $(\ell - 1)!!$ is the total number of matchings on $[\ell]$ and $\frac{\Cyc(\ell, \widetilde{\ell})}{(\ell - 1)!!}$ is the number of such matchings that are disjoint from a fixed matching and whose union with that matching contains $\widetilde{\ell}$ cycles, we may generally bound $\sum_{\widetilde{\ell}}\frac{\Cyc(\ell, \widetilde{\ell})}{(\ell - 1)!!} f(\widetilde{\ell}) \leq \EE f(X_{\ell})$, where $X_{\ell}$ is the random variable from Proposition~\ref{prop:cycles-mgf}.
    If $e^{\frac{d}{2}J} \leq \ell$, then we may bound
    \begin{equation}
        \sum_{\widetilde{\ell}} \frac{\Cyc(\ell, \widetilde{\ell})}{(\ell - 1)!!}(e^{\frac{d}{2}J})^{\widetilde{\ell}} \leq \sum_{\widetilde{\ell}} \frac{\Cyc(\ell, \widetilde{\ell})}{(\ell - 1)!!}(\ell)^{\widetilde{\ell}} \leq (e^3)^{\ell / 4}.
    \end{equation}
    If $e^{\frac{d}{2}J} \geq \ell$, then we have
    \begin{equation}
        \sum_{\widetilde{\ell}} \frac{\Cyc(\ell, \widetilde{\ell})}{(\ell - 1)!!}(e^{\frac{d}{2}J})^{\widetilde{\ell}} \leq \left(\frac{e^{3 + \frac{d}{2}J}}{\ell}\right)^{\ell / 4}.
    \end{equation}
    For the remaining term involving $j$, we bound $(j - 1)!! \geq (j / e)^{j/2}$.
    Combining these estimates,
    we may incorporate everything under the exponential as

    \begin{align*}
        \frac{\EE N^2}{(\EE N)^2}
        &\leq \sum_{a, b, c, j, k, \ell}\exp\bigg(m\bigg[5H\left(\oa, \oa, \ob, \oc, \oc, \oj, \ok, \ol\right) \\
        &\hspace{3cm} + (\oa + \ob + \oc + \ok)R_2 \\
        &\hspace{3cm} + \oj\left(1 -\log p - 2\log r + \log\left(\frac{1}{j}\right)\right) \\
        &\hspace{3cm} + \ol\left(K + 2\eta_3 d + 2\log_+\left(\frac{d}{1 + \sigma^2} \right) - 2\log r + 0 \vee\left(\frac{d}{4}J + \frac{1}{2}\log\left(\frac{1}{\ell}\right)\right)\right)\bigg] \\
        &\hspace{2.5cm} + O(\log n^{\prime})\bigg) \numberthis
    \end{align*}
    
    To bound the remaining rates, we first consider the $\oj$ term.
    Recall that $n^{\prime} = \lfloor n / r \rfloor \geq n / 2r$, so $r \geq n / 2n^{\prime}$.
    Thus we have
    \begin{align*}
        -\log p - 2\log r + \log\left(\frac{1}{j}\right)
        &= \log \frac{1}{pr^2 j} \\
        &\leq \log \frac{4n^{\prime^{2}}}{pn^2 j} \\
        &= \log \frac{2n^{\prime^{2}}}{pn^2 m} - \log \oj. \numberthis
    \end{align*}
    Extracting a similar expression by adding and subtracting $\frac{1}{2}\log p$ in the $\ol$ term when the second term of the maximum is greater than zero,
    \begin{align*}
        &K + 2\eta_3 d + 2\log_+\left(\frac{d}{1 + \sigma^2} \right) - 2\log r + \frac{d}{4}J + \frac{1}{2}\log\left(\frac{1}{\ell}\right) \\
        &= K + \left(2\eta_3 + \frac{1}{4}J\right) d + 2\log_+\left(\frac{d}{1 + \sigma^2} \right) + \frac{1}{2}\log p - \log r + \frac{1}{2}\log \frac{2n^{\prime^{2}}}{pn^2 m} - \log \ol \\
        &\leq K + \left(2\eta_3 + \frac{1}{4}J - \frac{1}{4}S_2\right) d + 2\log_+\left(\frac{d}{1 + \sigma^2} \right) - \log r + \frac{1}{2}\log \frac{2n^{\prime^{2}}}{pn^2 m} - \log \ol
        \intertext{and we notice $2\eta_3 + \frac{1}{4}J - \frac{1}{4}S_2 = S_2 - I + I - \frac{1}{4}S_4 - \frac{1}{4}S_2 = \frac{3}{4}S_2 - \frac{1}{4}S_4 = \eta_1$, so, using Proposition~\ref{prop:eta-zeta-bounds} to bound $\eta_1 \leq \eta_3$,}
        &\leq K + \eta_3 d + 2\log\left(\frac{d}{1 + \sigma^2} \right) - \log r + \frac{1}{2}\log \frac{2n^{\prime^{2}}}{pn^2 m} - \log \ol \\
        & \leq \frac 12 (R_1 + R_2) - \log \ol \numberthis
    \end{align*}
    We note also that $-\oj\log\oj \leq H(\oj)$ and likewise $-\ol \log \ol \leq H(\ol)$, and both of these are bounded by $H(\oa, \oa, \ob, \oc, \oc, \oj, \ok, \ol)$ by Proposition~\ref{prop:entropy-properties}.
    
    Applying these observations,
    \begin{align*}
        \frac{\EE N^2}{(\EE N)^2}
        &\leq \sum_{a, b, c, j, k, \ell}\exp\bigg(m\bigg[7H\left(\oa, \oa, \ob, \oc, \oc, \oj, \ok, \ol\right) \\
        &\hspace{3cm} + (\oa + \ob + \oc + \ok)R_2 + \oj R_1 \\
        &\hspace{3cm} + \ol\max\left\{R_2, \frac 12 (R_1 + R_2)\right\}\bigg] \\
        &\hspace{2.5cm} + O(\log n^{\prime})\bigg)\,, \numberthis
    \end{align*}
    which implies the result.
\end{proof}

To complete the proof, it remains to justify Lemma~\ref{lem:M-bound}.
\begin{proof}[Proof of Lemma~\ref{lem:M-bound}]
    We begin with a combinatorial bound.
    Below, line by line, the factors count the number of ways to choose the vertices of $V(Q) \cap V(Q_0)$, the number of ways to choose the vertices of $V(Q) \setminus V(Q_0)$, the number of ways to draw the edges of $E(Q)$ incident with $V(Q) \cap V(Q_0)$, and the number of ways to draw the edges of $E(Q)$ between pairs of $V(Q) \setminus V(Q_0)$:
    \begin{align*}
        &|\sM_{a, b, c, j, k, \ell, \widetilde{\ell}}| \\
        &\leq \binom{m}{a, c, \frac{b + k}{2}, \frac{j}{2}, \frac{\ell}{2}} 2^{a + c}\cdot \\
        &\phantom{\leq} \binom{n^{\prime} - a - \frac{b}{2} - c - \frac{j}{2} - \frac{k}{2}- \frac{\ell}{2}}{2m - a - b - c - j - k - \ell} r^{2m - a - b - c - j - k - \ell} \cdot \\
        &\phantom{\leq} \binom{b + k}{k} (k + a - 1)!! \binom{2m - a - b - c - j - k - \ell}{b + c}(b + c)! \Cyc(\ell, \widetilde{\ell}) \cdot\\
        &\phantom{\leq} (2m - a - 2b - 2c - j - k - \ell - 1)!!
        \intertext{Let us introduce $C \colonequals n^{\prime} / 2m$, which satisfies $C \geq 2$ by assumption.
        Then, applying the entropy bound for multinomial coefficients wherever possible,}
        &\leq \exp\bigg(m\bigg[H\left(2\oa, 2\oc, \ob + \ok, \oj, \ol\right) \\
        &\phantom{\leq \exp\bigg(m\bigg[} + 2\left(C - \left(\oa + \frac{1}{2}\ob + \oc + \frac{1}{2}\oj + \frac{1}{2}\ok + \frac{1}{2}\ol\right)\right)H\left(\frac{1 - \oa - \ob - \oc - \oj - \ok - \ol}{C - (\oa + \frac{1}{2}\ob + \oc + \frac{1}{2}\oj + \frac{1}{2}\ok + \frac{1}{2}\ol)}\right)\\
        &\phantom{\leq \exp\bigg(m\bigg[} + 2(1 - \oa - \ob - \oc - \oj - \ok - \ol)H\left(\frac{\ob + \oc}{1 - \oa - \ob - \oc - \oj - \ok - \ol}\right) \\
        &\phantom{\leq \exp\bigg(m\bigg[} + 2(\ob + \ok)H\left(\frac{\ok}{\ob + \ok}\right) + 2(\oa + \oc)\log 2\bigg]\bigg)\cdot \\
        &\phantom{\leq } r^{2m - a - b - c - j - k - \ell} (k + a - 1)!! (b + c)! (2m - a - 2b - 2c - j - k - \ell - 1)!! (j - 1)!!(\ell - 1)!!\cdot \\
        &\phantom{\leq } \frac{1}{(j - 1)!!}\frac{\Cyc(\ell, \widetilde{\ell})}{(\ell - 1)!!} \numberthis
    \end{align*}
    We will in particular need to bound the fraction of $\sM$ occupied by each of these subsets.
    To that end, we note that
    \begin{equation}
        |\sM| = \binom{n^{\prime}}{2m} r^{2m} (2m - 1)!! \geq \exp\left(m\left[2C H\left(\frac{1}{C}\right)\right] - O(\log n^{\prime})\right) r^{2m} (2m - 1)!!.
    \end{equation}
    Considering the quotient of factorials and double factorials that will remain, an entropy bound again yields
    \begin{align*}
        &\hspace{-1cm}\frac{(k + a - 1)!! (b + c)! (2m - a - 2b - 2c - j - k - \ell - 1)!! (j - 1)!! (\ell - 1)!!}{(2m - 1)!!} \\
        &\leq \exp\left(-m H\left(\oa + \ok, \ob + \oc, \ob + \oc, \oj, \ol\right) + O(\log n^{\prime})\right). \numberthis
    \end{align*}
    Thus we find
    \begin{align*}
        &\frac{|\sM_{a, b, c, j, k, \ell, \widetilde{\ell}}|}{|\sM|} \\
        &\leq \exp\bigg(m\bigg[H\left(2\oa, 2\oc, \ob + \ok, \oj, \ol\right) \\
        &\hspace{2.5cm} + 2\left(C - \left(\oa + \frac{1}{2}\ob + \oc + \frac{1}{2}\oj + \frac{1}{2}\ok + \frac{1}{2}\ol\right)\right)H\left(\frac{1 - \oa - \ob - \oc - \oj - \ok - \ol}{C - (\oa + \frac{1}{2}\ob + \oc + \frac{1}{2}\oj + \frac{1}{2}\ok + \frac{1}{2}\ol)}\right)\\
        &\hspace{2.5cm} + 2(1 - \oa - \ob - \oc - \oj - \ok - \ol)H\left(\frac{\ob + \oc}{1 - \oa - \ob - \oc - \oj - \ok - \ol}\right) \\
        &\hspace{2.5cm} - H\left(\oa + \ok, \ob + \oc, \ob + \oc, \oj, \ol\right) - 2CH\left(\frac{1}{C}\right) \\
        &\hspace{2.5cm} + 2(\ob + \ok)H\left(\frac{\ok}{\ob + \ok}\right) + 2(\oa + \oc)\log 2\bigg] + O(\log n^{\prime})\bigg) \\
        &\hspace{1cm} r^{- a - b - c - j - k - \ell} \frac{1}{(j - 1)!!} \frac{\Cyc(\ell, \widetilde{\ell})}{(\ell - 1)!!}
        \intertext{and repeatedly use Proposition~\ref{prop:entropy-properties} to bound the entropies,}
        &\leq \exp\bigg(m\bigg[5H\left(\oa, \oa, \ob, \oc, \oc, \oj, \ok, \ol\right) + (\oa + \oc)\log 4\bigg] + O(\log n^{\prime})\bigg) \\
        &\hspace{1cm} r^{- a - b - c - j - k - \ell} \frac{1}{(j - 1)!!} \frac{\Cyc(\ell, \widetilde{\ell})}{(\ell - 1)!!},    \numberthis  \end{align*}
    where we have used that $C \geq 2$ ensures that the sum of the two terms involving $C$ is at most zero.
    \end{proof}

\subsection{Sublinear Error Lower Bound}

We now prove the following application of the above results, which implies the lower bound of Part~3 of Theorem~\ref{thm:d-ll-logn} and of Part~2 of Theorem~\ref{thm:d-theta-logn}.

\begin{lemma}\label{lem:sublin_lb}
    Define $\what{d} \colonequals 1 + d \wedge \log n \geq 2$ and $s \colonequals \what{d}^{1/d}$.
    Suppose that
    \begin{equation}
    \frac{1}{s^{-\omega(1)} n^{4/d} - 1} \leq \sigma^2 \leq \frac{1}{(2s^{\omega(1)}n^{1/d} - 1)^2 - 1}.
    \end{equation}
    Then, there exists an absolute constant $c > 0$ such that, with high probability,
    \begin{equation}
        |\sE| \geq \frac{c}{\sqrt{\what{d}}} \left(1 + \frac{1}{\sigma^2}\right)^{-d/2} n^2.
    \end{equation}
\end{lemma}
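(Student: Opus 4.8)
The plan is to combine the deterministic bounds $|\sE| \geq M \geq M^{(r)}$ with the concentration-enhanced second moment method of Lemma~\ref{lem:frieze}, applied to $X = M^{(r)}$ for a suitable block size $r$ and target $m$. Concretely, I would set $m \colonequals c_0\,\what{d}^{\,-1/2}(1+\sigma^{-2})^{-d/2} n^2$ for a small absolute constant $c_0$, and choose the integer $r$ so that $n^{\prime} \colonequals \lfloor n/r \rfloor$ lies in $[4m,\, 2Cm]$ for a fixed constant $C$. This is possible because the upper hypothesis on $\sigma^2$ forces $\tfrac d2 S(\sigma^2,2) \geq \tfrac d2 I(\sigma^2) \geq \log n + \omega(1)$ (using the elementary $S(\sigma^2,2) \geq I(\sigma^2)$ and Proposition~\ref{prop:I-eval}), so $m = o(n)$, while the lower hypothesis forces $m \to \infty$. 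With such a choice, Lemma~\ref{lem:second-moment-a} gives inequality~(a) of Lemma~\ref{lem:frieze} with $\alpha = \tfrac{m}{2n^{\prime}} \in [\tfrac{1}{4C},\tfrac18]$, and the remaining hypotheses of Lemma~\ref{lem:second-moment-b} ($n^{\prime} \geq 4m$, $n^{\prime} = e^{o(m)}$, $\sigma^2 \leq d/40$) all hold for $n$ large.

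The heart of the argument is to show that the quantities $R_1, R_2$ in Lemma~\ref{lem:second-moment-b} can be made arbitrarily negative. For $R_1 = K + \log\tfrac{n^{\prime 2}}{pn^2 m}$, writing $p = \what p\,(1+\sigma^{-2})^{-d/2}$ and using $n^{\prime}\leq 2Cm$ and the bound $\what p \geq \tfrac{1}{1000}\sqrt{(1+\sigma^2)/d}$ from Proposition~\ref{prop:pair-aug-lb}, a direct substitution gives
\[
  R_1 \;\leq\; K + \log\!\Big(O(c_0)\cdot\sqrt{\tfrac{d}{(1+\sigma^2)\what d}}\Big),
\]
where the radical is $O(1)$: if $d \leq \log n$ then $\what d = 1+d \geq d$ settles it, while if $d > \log n$ the \emph{lower} hypothesis on $\sigma^2$ forces $\sigma^2 \gtrsim d/\log n \asymp d/\what d$. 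Hence $R_1 \leq K + \log(O(c_0)) \to -\infty$ as $c_0 \to 0$. For $R_2 = K + d(S(\sigma^2,2) - I(\sigma^2)) + 4\log_+\tfrac{d}{1+\sigma^2} - 2\log r$, I would substitute $2\log r = -2\log n + dS(\sigma^2,2) + \log\what d + O(1)$ (from $r \asymp n/(Cm)$ and the definition of $m$), so the $dS(\sigma^2,2)$ terms cancel, and then invoke the upper hypothesis in its sharpened form $dI(\sigma^2) \geq 2\log n + 2\,\omega(1)\log\what d$ — obtained by taking logs in the constraint and using $d\log s = \log\what d$ — which absorbs the $2\log n$. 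Together with $\log_+\tfrac{d}{1+\sigma^2} \leq \log\what d + O(1)$ (again splitting on $d \lessgtr \log n$, using the lower hypothesis in the second case), this yields $R_2 \leq K + (3 - 2\,\omega(1))\log\what d + O(1) \to -\infty$ since $\log\what d \geq \log 2$.

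Once $R_1 \vee R_2 \leq -L$ with $L$ large, $\sup_{\bx\in\sA}F(\bx)$ is controlled via the elementary estimate $H(x_1,\dots,x_k) \leq T\log(ek/T)$ with $T = \sum_i x_i$: this bounds $F(\bx) \leq T_H\big(7\log\tfrac{8e}{T_H} - \tfrac L2\big)$, whose supremum over the relevant range of $T_H \in [0,1]$ is $O(Le^{-L/14}) \to 0$. Thus $\beta = \sup_{\sA}F + \epsilon$ can be taken below $\alpha \geq \tfrac{1}{4C}$, and Lemma~\ref{lem:frieze} applies with $\gamma = \tfrac12(1-\sqrt{\beta/\alpha}) > \tfrac14$, giving $\PP[M^{(r)} > \gamma m] \geq 1 - e^{-\Omega(m)} \to 1$. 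Since $|\sE| \geq M \geq M^{(r)}$, we conclude $|\sE| \geq \gamma m \geq \tfrac14 c_0\,\what d^{-1/2}(1+\sigma^{-2})^{-d/2}n^2$ with high probability, and $c \colonequals c_0/4$ is the desired absolute constant.

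I expect the main obstacle to be the $R_2$ computation: pinning down exactly the logarithmic slack $\omega(1)\log\what d$ coming from the upper hypothesis and checking that it dominates both the residual $3\log\what d$ and the $4\log_+\tfrac{d}{1+\sigma^2}$ term is precisely where the definitions of $\what d$ and $s$ and the exact shapes of the two hypotheses on $\sigma^2$ must conspire. A secondary but real nuisance is verifying that $m \to \infty$, $m = o(n)$, $n^{\prime} = e^{o(m)}$, $n^{\prime} \geq 4m$, and $\sigma^2 \leq d/40$ can all be arranged simultaneously under both hypotheses, and (inside Lemma~\ref{lem:second-moment-b}) invoking Proposition~\ref{prop:eta-zeta-bounds} to replace the various $\eta_i$ by $\eta_3$ so that only the single cancellation above is needed.
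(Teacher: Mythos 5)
Your proof is correct and follows essentially the same route as the paper's: set up the concentration-enhanced second moment via Lemmata~\ref{lem:second-moment-a} and~\ref{lem:second-moment-b}, tune $r$ and $m$ so that $\alpha$ is bounded below by an absolute constant while $R_1$ is made small by shrinking the prefactor and $R_2 \to -\infty$ by cancelling the $dS(\sigma^2,2)$ terms against $2\log r$ and invoking the sharpened form $dI(\sigma^2) \geq 2\log n + \omega(1)\log\what d$ of the upper hypothesis, then conclude via Lemma~\ref{lem:frieze}. The only cosmetic difference is that you declare $m \propto \what d^{-1/2}(1+\sigma^{-2})^{-d/2}n^2$ directly and back out $r$, whereas the paper sets $r \approx 2/(cpn)$ and $m \approx cpn^2/32$; since $p = \what p\,(1+\sigma^{-2})^{-d/2}$ with $\what p \asymp \what d^{-1/2}$ in this regime, the two parametrizations are equivalent up to constants.
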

\begin{proof}
Let us bound $\what{p}$ from below under these assumptions, which amounts to bounding $\frac{d}{1 + \sigma^2}$ from above.
We always have $\frac{d}{1 + \sigma^2} \leq d$, and, using the lower bound above along with $1 - e^{-x} \leq x$, we have
\begin{equation}
    \frac{d}{1 + \sigma^2} \leq d(1 - s^{\omega(1)}n^{-4/d}) \leq d(1 - n^{-4/d}) \leq 4\log n.
\end{equation}
Thus,
\begin{equation}
    \frac{d}{1 + \sigma^2} \leq d \wedge 4\log n \leq 4\what{d},
\end{equation}
and so, by Proposition~\ref{prop:pair-aug-lb}
\begin{equation}
    \what{p} \geq \frac{1}{1000} \sqrt{\frac{1 + \sigma^2}{d}} \geq \frac{1}{2000}\what{d}^{-1/2}.
\end{equation}

Next, we bound the other term of $p$, $\exp(-\frac{d}{2}S(\sigma^2, 2))$, from above and below.
Since $S(\sigma^2, 2)$ is a decreasing function of $\sigma^2$, by the lower bound on $\sigma^2$ we have
\begin{equation}
    S(\sigma^2, 2) \leq S\left(\frac{1}{s^{-\omega(1)} n^{4/d} - 1}, 2\right) = \log(s^{-\omega(1)} n^{4/d}) \leq \frac{4\log n - \omega(\log \what{d})}{d},
\end{equation}
and thus
\begin{equation}
    \exp\left(-\frac{d}{2}S(\sigma^2, 2)\right) \geq \frac{1}{n^2} \exp(\omega(\log \what{d})),
\end{equation}
whereby $pn^2 = \what{p}\exp(-\frac{d}{2}S(\sigma^2, 2))n^2 \to \infty$ as $n \to \infty$.
On the other hand, by the upper bound on $\sigma^2$ we have
\begin{equation}
    S(\sigma^2, 2) \geq S\left(\frac{1}{(2s^{\omega(1)} n^{1/d} - 1)^2 - 1}, 2\right) = 2\log(2s^{\omega(1)} n^{1/d} - 1) \geq \frac{2\log n}{d},
\end{equation}
whereby
\begin{equation}
    \exp\left(-\frac{d}{2}S(\sigma^2, 2)\right) \leq \frac{1}{n},
\end{equation}
so, since $\what{p} \leq 1$, $pn \leq 1$ as well.

With these properties in mind, let us set up an application of Lemma~\ref{lem:frieze} via Lemmata~\ref{lem:second-moment-a} and \ref{lem:second-moment-b}, with which we will seek to show that $M \gtrsim pn^2$ with high probability.
Fix $c > 0$ a small constant, and take
\begin{align}
    r &\colonequals \left\lfloor \frac{2}{cpn} \right\rfloor \in \left[\frac{1}{cpn}, \frac{4}{cpn}\right], \\
    n^{\prime} &\colonequals \left\lfloor \frac{n}{r} \right\rfloor \in \left[\frac{1}{4}cpn^2, cpn^2\right], \\
    m &\colonequals \left\lfloor \frac{1}{32}cpn^2 \right\rfloor \in \left[\frac{1}{64}cpn^2, \frac{1}{16}cpn^2\right].
\end{align}
Then by Lemma~\ref{lem:second-moment-a}, the type (a) inequality, holds with $\alpha = m / 2n^{\prime} \geq \frac{1}{128}$.
    
For the type (b) inequality, we have $n^{\prime} \geq 4m$ by our choice, and by the upper bound on $\sigma^2$,
\begin{equation}
    \sigma^2 \leq \frac{1}{4n^{1/d}(n^{1/d} - 1)} \leq \frac{1}{4(n^{1/d} - 1)} \leq \frac{d}{4\log n},
\end{equation}
so, for sufficiently large $n$, the conditions of Lemma~\ref{lem:second-moment-b} are satisfied.
It remains to control the rates $R_1$ and $R_2$ appearing in the Lemma, and thus to bound $\beta$.

We note in advance that, by the upper bound on $\sigma^2$ and since $I(\sigma^2)$ is a decreasing function,
\begin{equation}
    I(\sigma^2) \geq I\left(\frac{1}{(2s^{\omega(1)}n^{1/d} - 1)^2 - 1}\right) = \frac{2\log n}{d} + \omega(\log s) = \frac{2\log n + \omega(\what{d})}{d}.
\end{equation}
The quantities appearing in these rates satisfy
\begin{align*}
        \frac{n^{\prime^2}}{pn^2 m} &\leq 64c \numberthis \\
        d(S(\sigma^2, 2) - I(\sigma^2)) - 2\log r &\leq -dI(\sigma^2) + 2\log n + 2\log c \\
        &\leq -\log n\left(\frac{d}{\log n}I(\sigma^2) - 2\right) \\
        &\leq -\omega(\log \what{d}). \numberthis
    \end{align*}
    We thus have
    \begin{align}
        R_1 &= K + \log\left(\frac{n^{\prime^2}}{pn^2 m}\right) \leq K + \log 64c, \\
        R_2 &= K + d(S(\sigma^2, 2) - I(\sigma^2)) + 4\log_+\left(\frac{d}{1 + \sigma^2}\right) - 2\log r = -\omega(\log \what{d}),
    \end{align}
    using in the latter our earlier result that $\frac{d}{1 + \sigma^2} = O(\what{d})$.
    
    For any $D > 0$, we may therefore choose $c$ small enough that $R_i \leq -D$ for $i = 1, 2$, so the whole rate function $F$ in Lemma~\ref{lem:second-moment-b} satisfies
    \begin{align*}
        F(\oa, \ob, \oc, \oj, \ok, \ol)
        &\leq 7H(\oa, \oa, \ob, \oc, \oc, \oj, \ok, \ol) - D(\oa + \ob + \oc + \oj + \ok + \ol)
    \end{align*}
    The first term is bounded uniformly by $7\log 9$, so sufficiently large $D$ we may ensure that $F$ is negative if any of $\oa, \ob, \oc, \oj, \ok,$ or $\ol$ is at least $\epsilon$.
    On the other hand if all of the parameters are at most $\epsilon$, then the first term is at most $-7(8\epsilon \log \epsilon + (1 - \epsilon)\log(1 - \epsilon))$, which tends to zero as $\epsilon \to 0$.
    Thus for sufficiently small $c$ we may make the supremum $\sup_{\bx \in \sA} F(\bx)$ bounded by any arbitrarily small positive number.
    In particular, for any $\epsilon > 0$ there exists $c > 0$ such that the type (b) inequality holds with $\beta < \epsilon$.
    For $c$ sufficiently small we may thus ensure, e.g., $\beta \leq \frac{1}{512} \leq \frac{\alpha}{4}$.
    Thus we may take $\gamma = \frac{1}{4}$ in Lemma~\ref{lem:frieze}, which gives that, with high probability, $M \geq \frac{1}{2048}cpn^2$.
    Substituting our lower bound on $\what{p}$ then gives the result as stated.
\end{proof}
    
\subsection{Linear or Nearly-Linear Error Lower Bound}

Finally, we prove the following result, which yields Part~4 of Theorem~\ref{thm:d-ll-logn}.
\begin{lemma}
	Let $s \colonequals d^{1/d}$.
    Suppose that $1 \leq d \ll \log n$, and that for some $a \in \RR$,
    \begin{equation}
    \sigma^2 \geq \frac{1}{(2s^a n^{1/d} - 1)^2 - 1}.
\end{equation}
    Then, there exists $c = c(a) > 0$ such that
    \begin{equation}
        |\sE| \geq e^{-cd}n
    \end{equation}
    with high probability.
\end{lemma}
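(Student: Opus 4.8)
The plan is to reduce, via a scaling coupling, to the boundary value of the noise, and then re-run the second-moment argument behind Lemma~\ref{lem:sublin_lb} at that value, carrying the constant $a$ through in place of the $\omega(1)$ exponents used there. Write $\sigma_a^2 \colonequals \frac{1}{(2 s^a n^{1/d} - 1)^2 - 1}$ for the threshold appearing in the hypothesis, so $\sigma^2 \geq \sigma_a^2$. If $a < 0$ then $d^{a/d} \leq 1$, hence $\sigma_a^2 \geq \sigma_0^2$, so the hypothesis $\sigma^2 \geq \sigma_a^2$ already implies $\sigma^2 \geq \sigma_0^2$; it therefore suffices to prove the statement for $a \geq 0$ (taking $c(a) \colonequals c(0)$ when $a < 0$), which we assume henceforth.

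\textbf{Monotonicity in the noise.} Couple all of the models of varying noise variance by fixing $\bx_1, \dots, \bx_n$ and i.i.d.\ $\bg_1, \dots, \bg_n \sim \sN(\bm 0, \bm I_d)$ and taking $\bz_i = \nu \bg_i$ in the model of variance $\nu^2$. Recalling from the proof of Proposition~\ref{prop:cycle-aug-prob} that a transposition $(i, j)$ is augmenting at noise level $\nu$ if and only if $-\nu \langle \bx_i - \bx_j, \bg_i - \bg_j \rangle \geq \|\bx_i - \bx_j\|^2$, and since the right side is nonnegative, this event is pointwise monotone increasing in $\nu$. Hence $E(G^{\aug})$, and so $M$, is pointwise monotone increasing in $\nu$. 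Since $|\sE| \geq M$, for any $\sigma_0^2 \leq \sigma^2$ the law of $|\sE|$ in the variance-$\sigma^2$ model stochastically dominates the law of $M$ in the variance-$\sigma_0^2$ model. Taking $\sigma_0^2 \colonequals \sigma_a^2 \leq \sigma^2$, it suffices to show $M \geq e^{-cd} n$ with high probability in the model of variance $\sigma_0^2$; note this also disposes of all $\sigma^2 > \frac{d}{40}$, where the second-moment estimates below do not apply.

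\textbf{Second moment at the boundary value.} I would now run the machinery of Section~\ref{sec:second-moment}---Lemmata~\ref{lem:frieze}, \ref{lem:second-moment-a}, and \ref{lem:second-moment-b}---at variance $\sigma_0^2$, following the proof of Lemma~\ref{lem:sublin_lb} but with the constant $a$ rather than an $\omega(1)$. One \emph{cannot} simply invoke Lemma~\ref{lem:sublin_lb}: its hypothesis requires $\sigma^2$ below $\frac{1}{(2 s^{\omega(1)} n^{1/d} - 1)^2 - 1}$, and passing to such a value would weaken the resulting bound on $M$ by a factor $\what{d}^{\,\omega(1)}$, which is not of the form $e^{O(d)}$ when $d$ stays bounded. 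At $\sigma_0^2$ one computes
\begin{equation*}
I(\sigma_0^2) = \frac{2\log n + 2a\log d}{d}, \qquad I(\sigma_0^2) \leq S(\sigma_0^2, 2) \leq I(\sigma_0^2) + 2\log 2,
\end{equation*}
and $\sigma_0^2 \leq \frac{d}{4\log n} \leq \frac{d}{40}$, so Proposition~\ref{prop:pair-aug-lb} gives $\frac{1}{1000}\sqrt{(1 + \sigma_0^2)/d} \leq \what{p} \leq 1$ and hence the edge probability $p$ of $G^{\aug}$ obeys $pn \leq 1$ and $pn^2 \geq \frac{n}{1000\sqrt d \, 2^d d^a}$, which tends to infinity since $d \ll \log n$. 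I would then set $r \colonequals \lfloor 2/(cpn) \rfloor$, $n' \colonequals \lfloor n/r \rfloor$, $m \colonequals \lfloor cpn^2/32 \rfloor$ exactly as in Lemma~\ref{lem:sublin_lb}, but with $c \colonequals c_0 \, d^{-\max(0, 2-a)}$ for a small absolute constant $c_0$ (rather than an absolute constant). Lemma~\ref{lem:second-moment-a} gives $\alpha = m/2n' \geq \frac{1}{128}$. For the type (b) bound the only delicate point is the rate $R_2$: as in the proof of Lemma~\ref{lem:sublin_lb}, the $d\,S(\sigma_0^2, 2)$ term cancels against the $-2\log r$ produced by $r \asymp 1/(cpn)$, leaving
\begin{equation*}
R_2 \leq K - dI(\sigma_0^2) + 2\log n + 4\log_+\!\Big(\tfrac{d}{1 + \sigma_0^2}\Big) + 2\log c \leq K + (4 - 2a)\log d + 2\log c,
\end{equation*}
and the choice of $c$ makes this (and also $R_1 \leq K + \log(64 c)$) at most $-D$ for $D$ arbitrarily large; then $\sup_{\sA} F$, hence $\beta$, can be made smaller than $\frac{1}{512} < \alpha$, and Lemma~\ref{lem:frieze} with $\gamma = \frac14$ yields $M \geq M^{(r)} > m/4$ with high probability.

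\textbf{Conclusion and main obstacle.} Assembling the estimates, with high probability
\begin{equation*}
|\sE| \geq M > \frac{m}{4} \geq \frac{c}{256}\, pn^2 \geq \frac{c_0 \, n}{256000 \cdot 2^d \, d^{\,\max(a, 2) + 1/2}} \geq e^{-c(a) d} n
\end{equation*}
for a suitable $c(a)$ growing linearly in $a$, the last step absorbing the $\mathrm{poly}(d)$ factor into the exponential via $\frac{\log d}{d} \leq \frac1e$ and $d \geq 1$. The main obstacle I anticipate is precisely the need to bypass Lemma~\ref{lem:sublin_lb}: our hypothesis separates $\sigma^2$ from the conjectured strong-recovery threshold only by the bounded factor $s^a = d^{a/d}$ (which is $1 + o(1)$ only as $d \to \infty$), so we cannot afford the $\omega(1)$ slack of Lemma~\ref{lem:sublin_lb}; instead we must re-run the second-moment argument at the boundary itself, letting the auxiliary constant $c$ decay like a power of $d$ while verifying that $R_1, R_2 \to -\infty$ nonetheless. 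The scaling coupling then carries the conclusion to all larger $\sigma^2$ at no cost.
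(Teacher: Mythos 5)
Your proof is correct, but it takes a genuinely different route from the paper's. You reduce via a monotonicity coupling (scaling $\bz_i = \nu\bg_i$ for fixed $\bg_i$, observing that the augmenting event $-\nu\langle \bx_i - \bx_j, \bg_i - \bg_j\rangle \geq \|\bx_i - \bx_j\|^2$ is increasing in $\nu$, so $G^{\aug}$ and hence $M$ only grow with the noise) to the exact boundary $\sigma_0^2 = \sigma_a^2$, and then rerun the second-moment machinery there with $r \asymp 1/(cpn)$ and a $d$-dependent small constant $c = c_0\, d^{-\max(0, 2-a)}$. The paper instead works directly at the given $\sigma^2$, chooses $r = \lfloor 2e^{cd}\rfloor$ with $c$ a \emph{large} constant, and crucially invokes the uniform estimate $S(\sigma^2, 2) - I(\sigma^2) = 2\eta_3(\sigma^2) \leq 3/2$ from Proposition~\ref{prop:eta-zeta-bounds} to control $R_2 \leq K + d(\tfrac{3}{2} - 2c)$ for all $\sigma^2 > 0$ simultaneously; this lets them avoid computing $I(\sigma_0^2)$ explicitly and avoids the coupling. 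Both approaches work, and both ultimately set $r = e^{\Theta(d)}$. One thing your coupling buys that is worth noting: Proposition~\ref{prop:pair-aug-lb} and Lemma~\ref{lem:second-moment-b} require $\sigma^2 \leq d/40$, but the lemma hypothesis is only a \emph{lower} bound on $\sigma^2$; your reduction to $\sigma_0^2 \leq d/(4\log n)$ handles arbitrarily large $\sigma^2$ rigorously, whereas the paper's proof silently uses those propositions without verifying that condition. The remaining bookkeeping in your argument — $R_1 \leq K + \log(64c)$, $R_2 \leq K + (4-2a)\log d + 2\log c$, and absorbing the $2^d d^{\max(a,2)+1/2}$ factor into $e^{c(a)d}$ using $\log d / d \leq 1/e$ — is all correct, and the $a < 0$ reduction to $a = 0$ by monotonicity of $\sigma_a^2$ in $a$ is also valid.
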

\begin{proof}
We first produce similar preliminary bounds to before.
As before we have $\frac{d}{1 + \sigma^2} \leq d$, and thus
%
\begin{equation}
\what{p} \geq \frac{1}{1000} d^{-1/2}.
\end{equation}

For the bounds on the other, exponential factor in $p$, we have, again assuming $n$ is sufficiently large,
\begin{equation}
    S(\sigma^2, 2) \leq S\left(\frac{1}{(2s^a n^{1/d} - 1)^2 - 1}, 2\right) = 2\log(2s^{a} n^{1/d} - 1) \leq 2\log(2s^a n^{1/d}),
\end{equation}
whereby
\begin{equation}
    \exp\left(-\frac{d}{2}S(\sigma^2, 2)\right) \geq \frac{1}{2^ds^{ad}n} = \frac{d^{-a}}{2^d n}.
\end{equation}
Thus we may bound
\begin{equation}
    pn = \what{p}\exp\left(-\frac{d}{2}S(\sigma^2, 2)\right) n \geq \frac{1}{2000}d^{-a-1/2} 2^{-d}.
\end{equation}

We take a similar choice of parameters to the previous proof, only now taking $r$ a constant not depending on $n$.
Let $c = c(a) > 0$ be a constant to be fixed later, and take
\begin{align}
    r &\colonequals \left\lfloor 2e^{cd} \right\rfloor \in \left[e^{cd}, 4e^{cd}\right], \\
    n^{\prime} &\colonequals \left\lfloor \frac{n}{r} \right\rfloor \in \left[\frac{1}{4}e^{-cd}n, e^{-cd}n\right], \\
    m &\colonequals \left\lfloor \frac{1}{32}e^{-cd}n \right\rfloor \in \left[\frac{1}{64}e^{-cd}n, \frac{1}{16}e^{-cd}n\right].
\end{align}
Then by Lemma~\ref{lem:second-moment-a}, the type (a) inequality, holds as before with $\alpha = m / 2n^{\prime} \geq \frac{1}{128}$.
The conditions of Lemma~\ref{lem:second-moment-b} are again satisfied.
To control the rates appearing there, we again have
\begin{align}
    \frac{n^{\prime^2}}{pn^2 m} &\leq 64 \frac{e^{-cd}}{pn} = O((2e^{-c})^d d^{a + 1/2}),
    \intertext{and for the other rate we use that, by Proposition~\ref{prop:eta-zeta-bounds}, for all $\sigma^2$ we have $S(\sigma^2, 2) - I(\sigma^2) \leq \frac{3}{2}$, so}
    d(S(\sigma^2, 2) - I(\sigma^2)) - 2\log r &\leq d\left(\frac{3}{2} - 2c\right).
\end{align}
Thus, for $c$ sufficiently large the rates appearing in $F$ are
\begin{align}
    R_1 &= K + \log\left(\frac{n^{\prime^2}}{pn^2 m}\right) \leq K - \frac{c}{2}d, \\
    R_2 &= K + d(S(\sigma^2, 2) - I(\sigma^2)) + 4 \log_+\left(\frac{d}{1 + \sigma^2}\right) - 2\log r \leq K - \frac{c}{2}d,
\end{align}
thus choosing $c$ large enough we may again make both rates arbitrarily negative, and the remainder of the proof goes through as before.
\end{proof}

\addcontentsline{toc}{section}{Acknowledgments}
\section*{Acknowledgments}

We thank Cristopher Moore for helpful discussions about the second moment method.

\addcontentsline{toc}{section}{References}
\bibliographystyle{abbrvnat}
\bibliography{main}

\begin{thebibliography}{47}
\providecommand{\natexlab}[1]{#1}
\providecommand{\url}[1]{\texttt{#1}}
\expandafter\ifx\csname urlstyle\endcsname\relax
  \providecommand{\doi}[1]{doi: #1}\else
  \providecommand{\doi}{doi: \begingroup \urlstyle{rm}\Url}\fi

\bibitem[Abbe(2017)]{Abbe-2017-SBMReview}
E.~Abbe.
\newblock Community detection and stochastic block models: recent developments.
\newblock \emph{The Journal of Machine Learning Research}, 18\penalty0
  (1):\penalty0 6446--6531, 2017.

\bibitem[Ajtai et~al.(1984)Ajtai, Koml\'os, and Tusn\'ady]{AjtKomTus84}
M.~Ajtai, J.~Koml\'os, and G.~Tusn\'ady.
\newblock On optimal matchings.
\newblock \emph{Combinatorica}, 4\penalty0 (4):\penalty0 259--264, 1984.

\bibitem[Aldous(1992)]{Aldous-1992-AsymptoticsRandomAssignment}
D.~Aldous.
\newblock Asymptotics in the random assignment problem.
\newblock \emph{Probability Theory and Related Fields}, 93\penalty0
  (4):\penalty0 507--534, 1992.

\bibitem[Aldous(2001)]{Aldous-2001-ZetaRandomAssignment}
D.~J. Aldous.
\newblock The $\zeta(2)$ limit in the random assignment problem.
\newblock \emph{Random Structures \& Algorithms}, 18\penalty0 (4):\penalty0
  381--418, 2001.

\bibitem[Alon et~al.(1998)Alon, Krivelevich, and
  Sudakov]{AKS-1998-PlantedClique}
N.~Alon, M.~Krivelevich, and B.~Sudakov.
\newblock Finding a large hidden clique in a random graph.
\newblock \emph{Random Structures \& Algorithms}, 13\penalty0 (3-4):\penalty0
  457--466, 1998.

\bibitem[Ambrosio et~al.(2019)Ambrosio, Stra, and
  Trevisan]{AST-2019-PDEApproach}
L.~Ambrosio, F.~Stra, and D.~Trevisan.
\newblock A {PDE} approach to a 2-dimensional matching problem.
\newblock \emph{Probability Theory and Related Fields}, 173\penalty0
  (1-2):\penalty0 433--477, 2019.

\bibitem[Aronson et~al.(1998)Aronson, Frieze, and
  Pittel]{AFP-1998-KarpSipserRevisited}
J.~Aronson, A.~Frieze, and B.~G. Pittel.
\newblock Maximum matchings in sparse random graphs: {Karp}-{Sipser} revisited.
\newblock \emph{Random Structures \& Algorithms}, 12\penalty0 (2):\penalty0
  111--177, 1998.

\bibitem[Awasthi et~al.(2015)Awasthi, Bandeira, Charikar, Krishnaswamy, Villar,
  and Ward]{ABCKVW-2015-RelaxRoundClustering}
P.~Awasthi, A.~S. Bandeira, M.~Charikar, R.~Krishnaswamy, S.~Villar, and
  R.~Ward.
\newblock Relax, no need to round: Integrality of clustering formulations.
\newblock In \emph{Proceedings of the 2015 Conference on Innovations in
  Theoretical Computer Science}, pages 191--200, 2015.

\bibitem[Bandeira et~al.(2018)Bandeira, Perry, and Wein]{BPW-2018-GapsNotes}
A.~S. Bandeira, A.~Perry, and A.~S. Wein.
\newblock Notes on computational-to-statistical gaps: predictions using
  statistical physics.
\newblock \emph{arXiv preprint arXiv:1803.11132}, 2018.

\bibitem[Barak et~al.(2019)Barak, Hopkins, Kelner, Kothari, Moitra, and
  Potechin]{BHKKMP-2019-PlantedClique}
B.~Barak, S.~Hopkins, J.~Kelner, P.~K. Kothari, A.~Moitra, and A.~Potechin.
\newblock A nearly tight sum-of-squares lower bound for the planted clique
  problem.
\newblock \emph{SIAM Journal on Computing}, 48\penalty0 (2):\penalty0 687--735,
  2019.

\bibitem[Bertsekas(1990)]{Bertsekas-1990-AuctionAlgorithm}
D.~P. Bertsekas.
\newblock The auction algorithm for assignment and other network flow problems:
  A tutorial.
\newblock \emph{Interfaces}, 20\penalty0 (4):\penalty0 133--149, 1990.

\bibitem[Bewley et~al.(2016)Bewley, Ge, Ott, Ramos, and
  Upcroft]{BGORU-2016-OnlineTracking}
A.~Bewley, Z.~Ge, L.~Ott, F.~Ramos, and B.~Upcroft.
\newblock Simple online and realtime tracking.
\newblock In \emph{2016 IEEE international conference on image processing
  (ICIP)}, pages 3464--3468. IEEE, 2016.

\bibitem[Biggs(1993)]{Biggs-1993-AlgebraicGraphTheory}
N.~Biggs.
\newblock \emph{Algebraic graph theory}.
\newblock Cambridge University Press, 1993.

\bibitem[Caracciolo et~al.(2014)Caracciolo, Lucibello, Parisi, and
  Sicuro]{CLPS-2014-HypothesisEuclideanMatching}
S.~Caracciolo, C.~Lucibello, G.~Parisi, and G.~Sicuro.
\newblock Scaling hypothesis for the {Euclidean} bipartite matching problem.
\newblock \emph{Physical Review E}, 90\penalty0 (1):\penalty0 012118, 2014.

\bibitem[Chertkov et~al.(2010)Chertkov, Kroc, Krzakala, Vergassola, and
  Zdeborov{\'a}]{CKKVZ-2010-ParticleTrackingBP}
M.~Chertkov, L.~Kroc, F.~Krzakala, M.~Vergassola, and L.~Zdeborov{\'a}.
\newblock Inference in particle tracking experiments by passing messages
  between images.
\newblock \emph{Proceedings of the National Academy of Sciences}, 107\penalty0
  (17):\penalty0 7663--7668, 2010.

\bibitem[Cuturi(2013)]{Cuturi-2013-SinkhornLightspeed}
M.~Cuturi.
\newblock Sinkhorn distances: Lightspeed computation of optimal transport.
\newblock \emph{Advances in Neural Information Processing Systems},
  26:\penalty0 2292--2300, 2013.

\bibitem[Decelle et~al.(2011)Decelle, Krzakala, Moore, and
  Zdeborov{\'a}]{DKMZ-2011-SBM}
A.~Decelle, F.~Krzakala, C.~Moore, and L.~Zdeborov{\'a}.
\newblock Inference and phase transitions in the detection of modules in sparse
  networks.
\newblock \emph{Physical Review Letters}, 107\penalty0 (6):\penalty0 065701,
  2011.

\bibitem[Ding et~al.(2021)Ding, Wu, Xu, and
  Yang]{DWXY-2021-PlantedMatchingInfiniteOrder}
J.~Ding, Y.~Wu, J.~Xu, and D.~Yang.
\newblock The planted matching problem: Sharp threshold and infinite-order
  phase transition.
\newblock \emph{arXiv preprint arXiv:2103.09383}, 2021.

\bibitem[Duembgen(2010)]{Duembgen-2010-MillsRatio}
L.~Duembgen.
\newblock Bounding standard gaussian tail probabilities.
\newblock \emph{arXiv preprint arXiv:1012.2063}, 2010.

\bibitem[Dyer et~al.(1993)Dyer, Frieze, and
  Pittel]{DFP-1993-AverageGreedyMatching}
M.~Dyer, A.~Frieze, and B.~Pittel.
\newblock The average performance of the greedy matching algorithm.
\newblock \emph{The Annals of Applied Probability}, pages 526--552, 1993.

\bibitem[Farrell(1979)]{Farrell-1979-MatchingPolynomials}
E.~J. Farrell.
\newblock An introduction to matching polynomials.
\newblock \emph{Journal of Combinatorial Theory, Series B}, 27\penalty0
  (1):\penalty0 75--86, 1979.

\bibitem[Frieze(1990)]{Frieze-1990-IndependenceNumberRandomGraphs}
A.~M. Frieze.
\newblock On the independence number of random graphs.
\newblock \emph{Discrete Mathematics}, 81\penalty0 (2):\penalty0 171--175,
  1990.

\bibitem[Horadam and Mahon(1985)]{HM-1985-PellLucasPolynomials}
A.~Horadam and J.~Mahon.
\newblock Pell and {Pell}-{Lucas} polynomials.
\newblock \emph{The Fibonacci Quarterly}, 23\penalty0 (1):\penalty0 7--20,
  1985.

\bibitem[Hug et~al.(2004)Hug, Munsonius, and
  Reitzner]{HMR-2004-GaussianPolytopes}
D.~Hug, G.~O. Munsonius, and M.~Reitzner.
\newblock Asymptotic mean values of {Gaussian} polytopes.
\newblock \emph{Beitr{\"a}ge Algebra Geometry}, 45\penalty0 (2):\penalty0
  531--548, 2004.

\bibitem[Iguchi et~al.(2017)Iguchi, Mixon, Peterson, and
  Villar]{IMPV-2017-CertifiableKMeans}
T.~Iguchi, D.~G. Mixon, J.~Peterson, and S.~Villar.
\newblock Probably certifiably correct $k$-means clustering.
\newblock \emph{Mathematical Programming}, 165\penalty0 (2):\penalty0 605--642,
  2017.

\bibitem[Jerrum(1992)]{Jerrum-1992-LargeCliques}
M.~Jerrum.
\newblock Large cliques elude the {Metropolis} process.
\newblock \emph{Random Structures \& Algorithms}, 3\penalty0 (4):\penalty0
  347--359, 1992.

\bibitem[Karp and Sipser(1981)]{KS-1981-MathingSparseRandomGraphs}
R.~M. Karp and M.~Sipser.
\newblock Maximum matching in sparse random graphs.
\newblock In \emph{22nd Annual Symposium on Foundations of Computer Science
  (SFCS 1981)}, pages 364--375. IEEE, 1981.

\bibitem[Kuhn(1955)]{Kuhn-1955-HungarianAlgorithm}
H.~W. Kuhn.
\newblock The {Hungarian} method for the assignment problem.
\newblock \emph{Naval Research Logistics Quarterly}, 2\penalty0 (1-2):\penalty0
  83--97, 1955.

\bibitem[Ledoux(2018)]{Ledoux-2018-OptimalMatchingII}
M.~Ledoux.
\newblock On optimal matching of {Gaussian} samples {II}, 2018.

\bibitem[Ledoux(2019)]{Ledoux-2019-OptimalMatching}
M.~Ledoux.
\newblock On optimal matching of {Gaussian} samples.
\newblock \emph{Journal of Mathematical Sciences}, 238\penalty0 (4):\penalty0
  495--522, Apr 2019.

\bibitem[Leighton and Shor(1989)]{LeiSho89}
T.~Leighton and P.~Shor.
\newblock Tight bounds for minimax grid matching with applications to the
  average case analysis of algorithms.
\newblock \emph{Combinatorica}, 9\penalty0 (2):\penalty0 161--187, 1989.

\bibitem[Li et~al.(2020)Li, Li, Ling, Strohmer, and
  Wei]{LLLSW-2020-KMeansProximity}
X.~Li, Y.~Li, S.~Ling, T.~Strohmer, and K.~Wei.
\newblock When do birds of a feather flock together? $k$-means, proximity, and
  conic programming.
\newblock \emph{Mathematical Programming}, 179\penalty0 (1):\penalty0 295--341,
  2020.

\bibitem[McDiarmid(1989)]{McDiarmid-1989-BoundedDifferences}
C.~McDiarmid.
\newblock On the method of bounded differences.
\newblock \emph{Surveys in Combinatorics}, 141\penalty0 (1):\penalty0 148--188,
  1989.

\bibitem[M{\'e}zard and Parisi(1987)]{MP-1987-RandomAssignment}
M.~M{\'e}zard and G.~Parisi.
\newblock On the solution of the random link matching problems.
\newblock \emph{Journal de Physique}, 48\penalty0 (9):\penalty0 1451--1459,
  1987.

\bibitem[Mixon et~al.(2017)Mixon, Villar, and
  Ward]{MVW-2017-ClusteringSubgaussianMixtures}
D.~G. Mixon, S.~Villar, and R.~Ward.
\newblock Clustering subgaussian mixtures by semidefinite programming.
\newblock \emph{Information and Inference: A Journal of the IMA}, 6\penalty0
  (4):\penalty0 389--415, 2017.

\bibitem[Moharrami et~al.(2019)Moharrami, Moore, and
  Xu]{MMX-2019-PlantedMatching}
M.~Moharrami, C.~Moore, and J.~Xu.
\newblock The planted matching problem: phase transitions and exact results.
\newblock \emph{arXiv preprint arXiv:1912.08880}, 2019.

\bibitem[Moore(2017)]{Moore-2017-SBMReview}
C.~Moore.
\newblock The computer science and physics of community detection: landscapes,
  phase transitions, and hardness.
\newblock \emph{arXiv preprint arXiv:1702.00467}, 2017.

\bibitem[Nica(2016)]{Nica-2016-SpectralGraphTheory}
B.~Nica.
\newblock A brief introduction to spectral graph theory.
\newblock \emph{arXiv preprint arXiv:1609.08072}, 2016.

\bibitem[Parisi(1998)]{Parisi-1998-RandomAssignment}
G.~Parisi.
\newblock A conjecture on random bipartite matching.
\newblock \emph{arXiv preprint cond-mat/9801176}, 1998.

\bibitem[Perera et~al.(2006)Perera, Srinivas, Hoogs, Brooksby, and
  Hu]{PSHBH-2006-MultiObjectTracking}
A.~A. Perera, C.~Srinivas, A.~Hoogs, G.~Brooksby, and W.~Hu.
\newblock Multi-object tracking through simultaneous long occlusions and
  split-merge conditions.
\newblock In \emph{2006 IEEE Computer Society Conference on Computer Vision and
  Pattern Recognition (CVPR 2006)}, volume~1, pages 666--673. IEEE, 2006.

\bibitem[Sahbani and Adiprawita(2016)]{SA-2016-IterativeHungarianTracking}
B.~Sahbani and W.~Adiprawita.
\newblock Kalman filter and iterative-{Hungarian} algorithm implementation for
  low complexity point tracking as part of fast multiple object tracking
  system.
\newblock In \emph{6th International Conference on System Engineering and
  Technology (ICSET 2016)}, pages 109--115. IEEE, 2016.

\bibitem[Semerjian et~al.(2020)Semerjian, Sicuro, and
  Zdeborov{\'a}]{SSZ-2020-SparsePlantedMatching}
G.~Semerjian, G.~Sicuro, and L.~Zdeborov{\'a}.
\newblock Recovery thresholds in the sparse planted matching problem.
\newblock \emph{Physical Review E}, 102\penalty0 (2):\penalty0 022304, 2020.

\bibitem[Shamir and Spencer(1987)]{SS-1987-SharpConcentrationChromaticNumber}
E.~Shamir and J.~Spencer.
\newblock Sharp concentration of the chromatic number on random graphs {$G_{n,
  p}$}.
\newblock \emph{Combinatorica}, 7\penalty0 (1):\penalty0 121--129, 1987.

\bibitem[Shor and Yukich(1991)]{ShoYuk91}
P.~W. Shor and J.~E. Yukich.
\newblock Minimax grid matching and empirical measures.
\newblock \emph{The Annals of Probability}, 19\penalty0 (3):\penalty0
  1338--1348, 1991.

\bibitem[Talagrand(2014)]{Tal14}
M.~Talagrand.
\newblock \emph{Upper and lower bounds for stochastic processes: modern methods
  and classical problems}, volume~60 of \emph{Ergebnisse der Mathematik und
  ihrer Grenzgebiete, 3. Folge (A Series of Modern Surveys in Mathematics)}.
\newblock Springer, Heidelberg, 2014.

\bibitem[Talagrand(2018)]{Talagrand-2018-NonStandardMatching}
M.~Talagrand.
\newblock Scaling and non-standard matching theorems.
\newblock \emph{Comptes Rendus Mathematique}, 356\penalty0 (6):\penalty0
  692--695, 2018.

\bibitem[Zdeborov{\'a} and M{\'e}zard(2006)]{ZM-2006-MatchingsRandomGraphs}
L.~Zdeborov{\'a} and M.~M{\'e}zard.
\newblock The number of matchings in random graphs.
\newblock \emph{Journal of Statistical Mechanics: Theory and Experiment},
  2006\penalty0 (05):\penalty0 P05003, 2006.

\end{thebibliography}

\newpage

\appendix

\section{Greedy Algorithms and a Gaussian Limit}
\label{app:greedy}

To supplement our discussion of the MLE, we describe two natural greedy algorithms for estimating the planted permutation $\pi^{\star}$ and discuss their performance heuristically.
We believe the computations presented here are accurate, but for the sake of brevity we will not give detailed proofs.
The algorithms we analyze here are also \emph{improper} in the sense that they do not return a permutation; rather, they output an assignment of each $\bx_i$ to an element of $\{\by_1, \dots, \by_n\}$ with no restriction that each element is matched exactly once.
We leave more careful analysis of greedy algorithms which output a permutation to future work.

To summarize before presenting the details, the first greedy algorithm, where distance is measured as ordinary $\ell^2$ distance, will match the performance of the MLE when $d = o(\log n)$ but will make $n - o(n)$ errors once $d = \omega(\log n)$.
The second, which greedily selects the point with largest inner product with $\bx_i$, will make $n - o(n)$ errors when $d = o(\log n)$ but will sometimes (though not always) improve on the MLE in the $d = \omega(\log n)$ regime.
It is unclear what simplifying assumptions are reasonable when $d = \Theta(\log n)$, so we leave this case aside here; numerical evidence suggests that all three algorithms are competitive and none strictly dominates another in this regime.

\subsection{Algorithm 1: Greedy Distance}

The first algorithm we consider is perhaps the most immediately appealing greedy algorithm, which attempts to match each point to its nearest neighbor.
This may be viewed as greedily matching rows to columns in the matrix $\bW^{(0)}$ of pairwise squared distances between the $\bx_i$ and $\by_j$ formed as an intermediate step in our derivation of the MLE.
As a proxy for the error incurred by such an algorithm, we consider the number of $\bx_i$ whose nearest neighbor among the $\{\by_j\}_{j = 1}^n$ is not equal to $\by_i$, the set of which we denote
\begin{equation}
    \sE^{\dist} \colonequals \{i \in [n]: \|\bx_i - \by_i\|^2 > \|\bx_i - \by_j\|^2 \text{ for some } j \neq i\}.
\end{equation}
As another, simpler variant, we may also consider
\begin{equation}
    \widetilde{\sE}^{\dist} \colonequals \{(i, j) \in [n]^2: \|\bx_i - \by_i\|^2 > \|\bx_i - \by_j\|^2\},
\end{equation}
which satisfies
\begin{equation}
    \frac{1}{n-1}|\widetilde{\sE}^{\dist}| \leq |\sE^{\dist}| \leq |\widetilde{\sE}^{\dist}|.
\end{equation}

By linearity of expectation, we have
\begin{align*}
    \EE |\widetilde{\sE}^{\dist}|
    &= n(n-1) \cdot \PP[\|\bx_1 - \by_1\|^2 > \|\bx_1 - \by_2\|^2]
    \intertext{Here, we note that $\bx_1 - \by_1 = \bz_1$ whose squared norm has law $\chi^2(d)$ scaled by $\sigma^2$, and is independent from $\bx_1 - \by_2 = \bx_1 - \bx_2 - \bz_2$, whose squared norm has law $\chi^2(d)$ scaled by $2 + \sigma^2$. Thus, we may rewrite this probability in terms of two independent $A, B \sim \chi^2(d)$ as}
    &= n(n-1) \cdot \PP\left[\frac{A}{B} < \frac{\sigma^2}{2 + \sigma^2}\right]
    \intertext{The ratio $A / B$ has the $F$ distribution $F(\frac{d}{2}, \frac{d}{2})$, whose density is  $\frac{\Gamma(d)}{\Gamma(d/2)^2}x^{d/2 - 1} (1 + x)^{-d}dx$. Thus, so long as $\sigma^2 = o(1)$, we will have from integrating an initial segment of this density that}
    &\lesssim \sigma^d n^2, \numberthis
\end{align*}
which, up to lower-order terms, is the same as the expected number of augmenting 2-cycles for the MLE.
In particular, $\EE |\sE^{\dist}|$ is bounded by the same quantity and it is reasonable to believe, so long as this is $O(n)$, that this bound is tight.
We also expect this first moment computation to be an accurate estimate of the typical size of $\sE^{\dist}$.
Thus, we find that the error rate of a greedy distance algorithm asymptotically that of the MLE so long as $d = o(\log n)$.

However, once $d = \omega(\log n)$ this algorithm is much less effective than the MLE.
In that case, in the critical scaling for the MLE we have $\sigma^2 = \Theta(\frac{d}{\log n}) = \omega(1)$, so $\frac{\sigma^2}{2 + \sigma^2} \to 1$ as $n \to \infty$.
By evaluating the probability as a Laplace integral, we thus find
\begin{align*}==
    \PP\left[\frac{A}{B} < \frac{\sigma^2}{2 + \sigma^2}\right] 
    &\approx \frac{\Gamma(d)}{\Gamma(\frac{d}{2})^2} \exp\left(d\left[\frac{1}{2}\log\left(\frac{\sigma^2}{2 + \sigma^2}\right) - \log\left(1 + \frac{\sigma^2}{2 + \sigma^2}\right)\right]\right) \\
    &\approx \exp\left(d\left[\frac{1}{2}\log\left(\frac{\sigma^2}{2 + \sigma^2}\right) - \log\left(\frac{2 + 2\sigma^2}{2 + \sigma^2}\right) + \log 2 \right]\right) \\
    &= \exp\left(d\log \frac{\sqrt{\sigma^2(2 + \sigma^2)}}{1 + \sigma^2}\right), \numberthis
\end{align*}
whereby
\begin{equation}
    \frac{\log(1 \vee \EE |\widetilde{\sE}^{\dist}|)}{\log n} \approx 2 - \frac{d}{2\log n} \log \frac{(1 + \sigma^2)^2}{\sigma^2(2 + \sigma^2)} \approx 2 - \frac{d}{2\log n}\log\left(1 + \frac{1}{\sigma^4}\right) \approx 2 - \frac{d}{2\sigma^4 \log n}.
\end{equation}
So, while the MLE achieves perfect recovery for some $\sigma^2 = \Theta(d / \log n)$, the greedy distance algorithm only achieves perfect recovery for the asymptotically smaller $\sigma^2 = O(\sqrt{d / \log n})$.

A more informal way to make the same prediction is to first observe that, for large $d$, we have the distributional approximation $\chi^2(d) \approx \sN(d, 2d)$.
Then, when $\sigma^2 \gg 1$ the distances $\|\bx_i - \by_j\|^2$ are distributed approximately as $\sN((2 + \sigma^2)d, 2(2 + \sigma^2)^2d)$.
Likewise the $\|\bx_i - \by_i\|^2 = \|\bz_i\|^2$ are distributed approximately as $\sN(\sigma^2d, 2\sigma^4 d)$.
Moreover, we may make the simplifying assumption of thinking of these distances as independent.
Then, we expect strong recovery to only be possible when $\min_{j \neq i} \|\bz_j\|^2 \approx (2 + \sigma^2) d - \sqrt{4(2 + \sigma^2)^2d\log n}$ is at least the typical $\|\bz_i\|^2 \approx \sigma^2 d$.
This gives $(2 + \sigma^2)\sqrt{d\log n} \lesssim d$, or $\sigma^2 \lesssim \sqrt{d / \log n}$, as claimed above.

\begin{figure}
    \centering
    \includegraphics[scale=0.6]{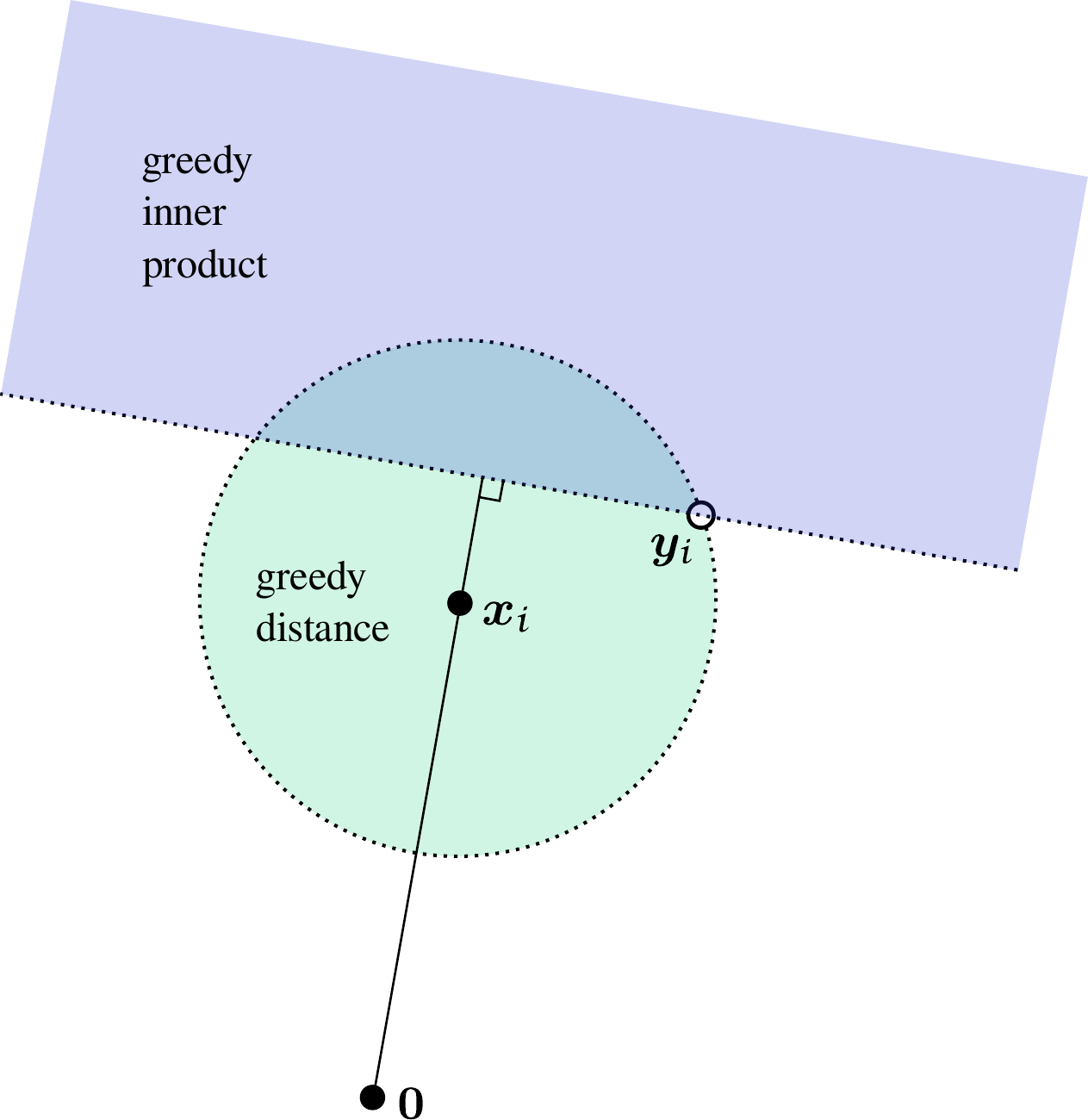}
    \caption{\textbf{Greedy algorithm proximal regions.} We illustrate the regions which must not contain any other $\by_j$ in order for $\bx_i$ to be matched with $\by_i$ in the two greedy algorithms we consider in Appendix~\ref{app:greedy}.}
    \label{fig:greedy}
\end{figure}

\subsection{Algorithm 2: Greedy Inner Product}

The second algorithm we consider applies the same greedy matching approach to the cost matrix $\bW$ formed for the MLE by subtracting out the norm terms when the squared distances are expanded.
The analogous error set is then
\begin{equation}
    \sE^{\pprod} \colonequals \{i \in [n]: \langle \bx_i, \by_i \rangle < \langle \bx_i, \by_j \rangle \text{ for some } j \neq i\}.
\end{equation}

Here a useful shortcut allows us to dispense with the low-dimensional case easily: as is apparent from Figure~\ref{fig:greedy}, if $i \notin \sE^{\pprod}$ then $\by_i$ is a vertex of the convex hull of $\by_1, \dots, \by_n$.
In particular then, this algorithm will only achieve even weak recovery when the convex hull of $n$ i.i.d.\ standard Gaussian vectors in $\RR^d$ has $\Omega(n)$ vertices with high probability.
As has been shown in the literature on this so-called \emph{Gaussian polytope} (e.g., \cite{HMR-2004-GaussianPolytopes}), the expected number of vertices is $(\log n)^{O(d)}$, whereby whenever $d = o(\log n / \log \log n)$ the greedy inner product algorithm will not achieve even weak recovery, having $|\sE^{\pprod}| = n - o(n)$.

On the other hand, when $d = \omega(\log n)$ we believe that the instance $\bW$ should, loosely speaking, behave in law like a matrix with independent entries (we will say more about how our computations here relate to prior work on such models below).
In this case, the ``planted'' or diagonal and ``null'' or off-diagonal distributions should be approximately
\begin{align}
    W_{ii} &= \langle \bx_i, \by_i \rangle = \|\bx_i\|^2 + \langle \bx_i, \bz_i \rangle \stackrel{(d)}{\approx} \sN(d, \sigma^2 d) \equalscolon \sP, \\
    W_{ij} &= \langle \bx_i, \by_j \rangle = \langle \bx_i, \bx_j + \bz_j \rangle \stackrel{(d)}{\approx} \sN(0, \sigma^2d) \equalscolon \sQ,
\end{align}
where we use that, because $\sigma^2 \gg 1$, we may neglect the fluctuations coming from terms not involving any $\bz_i$.

If this approximation is sound, then we expect $\max_{j \neq i}W_{ij} \approx \sqrt{2\sigma^2 d \log n}$.
On the other hand, $n - o(n)$ of the diagonal terms are of size $\Theta(d)$.
Therefore we expect the strong recovery regime to be when $\sqrt{2\sigma^2 d \log n} < d$, or $\sigma^2 < \frac{1}{2}\frac{d}{\log n}$.
This is strictly larger than the strong recovery regime $\sigma^2 < \frac{1}{4}\frac{d}{\log n}$ of the MLE.
We note that the former threshold $\sigma^2 = \frac{1}{2}\frac{d}{\log n}$ as the threshold we illustrated in Figure~\ref{fig:cycle-counts} when the number of augmenting 2-cycles for the MLE becomes macroscopic.

On the other hand, we also expect $\min_i W_{ii} \approx d - \sqrt{2\sigma^2 d \log n}$, so we expect the \emph{perfect} recovery regime for the greedy inner product algorithm to be when $d - \sqrt{2\sigma^2 d \log n} < \sqrt{2\sigma^2 d\log n}$, or $\sigma^2 < \frac{1}{8}\frac{d}{\log n}$.
This is strictly \emph{smaller} than the perfect recovery regime $\sigma^2 < \frac{1}{4}\frac{d}{\log n}$ of the MLE (which is the same as the strong recovery regime of the MLE).
Thus the final picture that emerges for the greedy inner product algorithm when $d = \omega(\log n)$ is that it achieves strong recovery for a greater range of $\sigma^2$, but has a region of sublinear error $\frac{1}{8}\frac{d}{\log n} < \sigma^2 < \frac{1}{2} \frac{d}{\log n}$, while the MLE has no region of sublinear error on this scale, instead achieving perfect recovery when $\sigma^2 < \frac{1}{4}\frac{d}{\log n}$; from the point of view of the polynomial error rate, the two algorithms are thus incomparable.

\subsection{Gaussian Limit in High Dimension}

The independent Gaussian limit discussed above falls in the range of models treated by previous works \cite{MMX-2019-PlantedMatching,SSZ-2020-SparsePlantedMatching,DWXY-2021-PlantedMatchingInfiniteOrder}.
In particular, it was predicted in \cite{MMX-2019-PlantedMatching,SSZ-2020-SparsePlantedMatching} and proved for certain models (not including the Gaussian model of $\sP$ and $\sQ$ above) in \cite{DWXY-2021-PlantedMatchingInfiniteOrder} that the strong recovery threshold in such a model should correspond to $\sqrt{n} B(\sP, \sQ) = 1$, where $B(\sP, \sQ)$ is the Bhattacharyya coefficient.
This may be computed in closed form for Gaussian distributions, which gives that the critical $\sigma^2$ should satisfy
\begin{equation}
    n = \frac{3 + 2\sigma^2}{2\sqrt{(2 + \sigma^2)(1 + \sigma^2)}}\exp\left(\frac{d}{2(3 + 2\sigma^2)}\right).
\end{equation}
As $n \to \infty$ the prefactor and the constant term in the exponent denominator are irrelevant, so this predicts a critical transition at $n = \exp(d / 4\sigma^2)$, or $\sigma^2 = \frac{1}{4}\frac{d}{\log n}$.

Per our discussion above, this is the correct strong recovery threshold for the MLE; indeed, the proof of the positive results in \cite{DWXY-2021-PlantedMatchingInfiniteOrder} goes by analyzing the MLE, so this is not surprising.
However, the greedy algorithm applied to this model (to agree with the setting of \cite{DWXY-2021-PlantedMatchingInfiniteOrder}, we should think of the input as the matrix $\bW$ with entries distributed roughly according to $\sP$ and $\sQ$, rather than the ``raw'' point sets $\{\bx_i\}$ and $\{\by_i\}$) achieves a better strong recovery threshold of $\sigma^2 = \frac{1}{2}\frac{d}{\log n}$.
Essentially the same is noted in Remark~1 of \cite{DWXY-2021-PlantedMatchingInfiniteOrder}, where the authors bring up a similar independent Gaussian model as an instance where the Bhattacharyya coefficient does \emph{not} give a correct prediction.

Our discussion above, however, gives some further nuance to this point if one is interested in sublinear error rates in addition to just strong recovery.
Namely, both in our model for high dimension and in the independent Gaussian model, the greedy algorithm achieves an inferior perfect recovery threshold, and, more generally, an inferior sublinear error rate to the MLE whenever $\sigma^2 < \frac{1}{4} \frac{d}{\log n}$, but a superior rate whenever $\frac{1}{4}\frac{d}{\log n} < \sigma^2 < \frac{1}{2}\frac{d}{\log n}$.

\section{Evaluation of Integral: Proof of Proposition~\ref{prop:I-eval}}
\label{app:I}

Recall our claim,
\begin{equation}
I(\sigma^2) \colonequals \int_0^1 \log\left(1 + \frac{1}{2\sigma^2}(1 - \cos(2\pi x))\right)dx = 2\log\left(\frac{1 + \sqrt{1 + \sigma^{-2}}}{2}\right).
\end{equation}
To lighten the notation, let us set $\lambda = \sigma^2$.
Differentiating under the integral sign, we have
    \begin{align*}
        I^{\prime}(\lambda) &= -\frac{1}{\lambda} \int_0^1 \frac{1 - \cos(2\pi x)}{(2\lambda + 1) - \cos(2\pi x)}dx
        \intertext{which we may write as a contour integral over $C$ the complex unit circle}
        &= -\frac{1}{2\pi \lambda} \oint_C \frac{1 - \frac{z + z^{-1}}{2}}{(2\lambda + 1) - \frac{z + z^{-1}}{2}} \frac{dz}{iz} \\
        &= -\frac{1}{2\pi i\lambda} \oint_C \frac{(z - 1)^2}{z(z^2 - (4\lambda + 2)z + 1}dz
        \intertext{where the integrand has poles at $z = 0, \rho_-, \rho_+$ for $\rho_{\pm} = 2\lambda + 1 \pm 2\sqrt{\lambda^2 + \lambda}$. Only $z = 0, \rho_-$ lie inside $C$, so by the residue theorem we have}
        &= -\frac{1}{\lambda}\left(\frac{1}{\rho_+\rho_-} + \frac{(\rho_- - 1)^2}{\rho_-(\rho_- - \rho_+)}\right)
        \intertext{which after some algebra reduces to}
        &= -\frac{1}{\lambda}\left(1 - \sqrt{\frac{\lambda}{\lambda + 1}}\right). \numberthis
    \end{align*}
    Since $\lim_{\lambda \to \infty} I(\lambda) = 0$, we then have
    \begin{align*}
        I(\lambda) &= \int_{\lambda}^{\infty} \frac{1}{t}\left(1 - \sqrt{\frac{t}{t + 1}}\right)dt
        \intertext{where the integrand has the explicit antiderivative $\log(t) - \log(1 + \sqrt{\frac{t}{t + 1}}) + \log(1 - \sqrt{\frac{t}{t + 1}})$ whose limit as $t \to \infty$ is $-\log(4)$, whereby we finish}
        &= -\log(4) - \log(\lambda) + \log\left(1 + \sqrt{\frac{t}{t + 1}}\right) - \log\left(1 - \sqrt{\frac{t}{t + 1}}\right) \\
        &= \log\left(\frac{1 + \sqrt{\frac{\lambda}{1 + \lambda}}}{4\lambda(1 - \sqrt{\frac{\lambda}{1 + \lambda}})}\right)
        \intertext{which after some algebra reduces to}
        &= \log\left(\left(\frac{1 + \sqrt{1 + \lambda^{-1}}}{2}\right)^2\right), \numberthis
    \end{align*}
    as claimed.

\section{Riemann Sum Analysis}
\label{app:riemann}

In this appendix we prove our bounds on the Riemann sums $S(\sigma^2, t)$.
We will proceed by relating $S(\sigma^2, t)$ to the following well-known family of polynomials.
\begin{definition}[Lucas polynomials]
    The \emph{Lucas polynomials} $L_k(x) \in \RR[x]$ for $k \geq 0$ are defined by the recursion
    \begin{align}
        L_0(x) &= 2, \\
        L_1(x) &= x, \\
        L_{k}(x) &= xL_{k - 1}(x) + L_{k - 2}(x) \text{ for } k \geq 2.
    \end{align}
\end{definition}
\noindent
The recursion may be solved as follows, a version of the usual approach for a second-order recurrence, only now parametrized by $x$ (see, e.g., \cite{HM-1985-PellLucasPolynomials}).
\begin{proposition}[Binet's formula]
    Let $\alpha(x), \beta(x)$ be the roots of $t^2 - tx - 1$, i.e.,
    \begin{align}
        \alpha(x) &= \frac{x + \sqrt{x^2 + 4}}{2}, \\
        \beta(x) &= \frac{x - \sqrt{x^2 + 4}}{2}.
    \end{align}
    Then, $L_n(x) = \alpha(x)^n + \beta(x)^n$.
\end{proposition}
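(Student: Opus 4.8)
The plan is a routine induction on $n$, using only the fact that $\alpha(x)$ and $\beta(x)$ are the two roots of the quadratic $t^2 - tx - 1$. Define the auxiliary quantity $\widetilde{L}_n(x) \colonequals \alpha(x)^n + \beta(x)^n$; I will show that $\widetilde{L}_n$ satisfies the same recursion and the same two initial conditions as $L_n$, from which $\widetilde{L}_n = L_n$ for all $n \geq 0$ follows immediately.

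First I would record the two elementary facts we need. Since $\alpha(x), \beta(x)$ are the roots of $t^2 - tx - 1$, Vieta's formulas give $\alpha(x) + \beta(x) = x$ and $\alpha(x)\beta(x) = -1$; moreover each root satisfies its defining equation, i.e. $\alpha(x)^2 = x\,\alpha(x) + 1$ and $\beta(x)^2 = x\,\beta(x) + 1$. (For real $x$ we have $x^2 + 4 > 0$, so the two roots are distinct and real, but this is not needed for the argument.) The base cases are then $\widetilde{L}_0(x) = \alpha(x)^0 + \beta(x)^0 = 2 = L_0(x)$ and $\widetilde{L}_1(x) = \alpha(x) + \beta(x) = x = L_1(x)$.

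For the inductive step, fix $k \geq 2$ and multiply the defining relations of the roots by $\alpha(x)^{k-2}$ and $\beta(x)^{k-2}$ respectively, then add:
\begin{align*}
    \widetilde{L}_k(x)
    &= \alpha(x)^k + \beta(x)^k
    = \alpha(x)^{k-2}\big(x\,\alpha(x) + 1\big) + \beta(x)^{k-2}\big(x\,\beta(x) + 1\big) \\
    &= x\big(\alpha(x)^{k-1} + \beta(x)^{k-1}\big) + \big(\alpha(x)^{k-2} + \beta(x)^{k-2}\big)
    = x\,\widetilde{L}_{k-1}(x) + \widetilde{L}_{k-2}(x).
\end{align*}
This is exactly the recursion defining $L_k(x)$. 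Since $\widetilde{L}_0 = L_0$ and $\widetilde{L}_1 = L_1$, induction gives $\widetilde{L}_n(x) = L_n(x)$ for every $n \geq 0$, which is the claimed identity. (As a byproduct this re-confirms that $\alpha(x)^n + \beta(x)^n$ is a genuine polynomial in $x$: the recursion manifestly produces polynomials, the irrational parts of the two powers cancelling.) I do not anticipate any real obstacle here; the only point requiring a moment's care is invoking Vieta's relations to identify the first two terms of the two sequences.
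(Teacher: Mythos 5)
Your proof is correct and is the standard argument for Binet-type formulas: verify the base cases via Vieta's relations and propagate the recursion using the fact that each root satisfies its characteristic equation. The paper does not actually prove this proposition — it remarks that the recursion is solved by "a version of the usual approach for a second-order recurrence" and cites a reference — so your write-up is a clean, self-contained version of exactly the argument the paper gestures at.
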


The following is then the key statement relating the Lucas polynomials to our Riemann sums.
\begin{lemma}
    \label{lem:S-lucas}
    For any $t \geq 3$, $\exp(S(\sigma^2, t)) = (4\sigma^2)^{-t}(L_{2t}(2\sigma) - 2)$.
\end{lemma}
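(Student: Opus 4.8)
The plan is to compute the product $\exp(S(\sigma^2,t)) = \prod_{j=0}^{t-1}\bigl(1 + \tfrac{1}{2\sigma^2}(1-\cos(2\pi j/t))\bigr)$ in closed form by recognizing it as a resultant-type product over $t$-th roots of unity. Recall that the eigenvalues of $\bL^{C_t}$ are $2(1-\cos(2\pi j/t))$, so $\exp(S(\sigma^2,t)) = \det(\bm I_t + \tfrac{1}{4\sigma^2}\bL^{C_t})$. Since $\bL^{C_t} = 2\bm I_t - (\bm A + \bm A^\top)$ where $\bm A$ is the cyclic shift, this determinant is a circulant determinant and equals $\prod_{j=0}^{t-1} g(\omega^j)$ where $\omega = e^{2\pi i/t}$ and $g(z) = 1 + \tfrac{1}{4\sigma^2}(2 - z - z^{-1})$.

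First I would clear denominators: $g(z) = \tfrac{1}{4\sigma^2 z}\bigl(-z^2 + (4\sigma^2+2)z - 1\bigr) = \tfrac{-1}{4\sigma^2 z}(z-\rho_+)(z-\rho_-)$, where $\rho_\pm$ are the roots of $z^2 - (4\sigma^2+2)z + 1$. One checks that $\rho_\pm = \alpha(2\sigma)^2$ and $\beta(2\sigma)^2$ respectively, using that $\alpha(2\sigma)\beta(2\sigma) = -1$ and $\alpha(2\sigma)^2 + \beta(2\sigma)^2 = L_2(2\sigma) = 4\sigma^2 + 2$ from Binet's formula; so $\{\rho_+,\rho_-\} = \{\alpha^2,\beta^2\}$ with $\alpha = \alpha(2\sigma)$, $\beta=\beta(2\sigma)$. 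Then
\begin{align*}
\exp(S(\sigma^2,t)) &= \prod_{j=0}^{t-1} \frac{-1}{4\sigma^2\omega^j}(\omega^j - \alpha^2)(\omega^j - \beta^2) \\
&= \frac{(-1)^t}{(4\sigma^2)^t \prod_j \omega^j}\Bigl(\prod_j(\omega^j-\alpha^2)\Bigr)\Bigl(\prod_j(\omega^j - \beta^2)\Bigr).
\end{align*}
Now $\prod_{j=0}^{t-1}\omega^j = \omega^{t(t-1)/2} = (-1)^{t-1}$, and $\prod_{j=0}^{t-1}(\omega^j - c) = (-1)^t\prod_j(c - \omega^j) = (-1)^t(c^t - 1)$. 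Substituting gives $\exp(S(\sigma^2,t)) = (4\sigma^2)^{-t}(-1)^t(-1)\cdot(\alpha^{2t}-1)(\beta^{2t}-1)$, and expanding $(\alpha^{2t}-1)(\beta^{2t}-1) = (\alpha\beta)^{2t} - \alpha^{2t} - \beta^{2t} + 1 = (\alpha\beta)^{2t} + 1 - L_{2t}(2\sigma)$ by Binet. Since $\alpha\beta = -1$, $(\alpha\beta)^{2t} = 1$, so this is $2 - L_{2t}(2\sigma)$, and the two sign factors $(-1)^t(-1)$ against $(2 - L_{2t}(2\sigma)) = -(L_{2t}(2\sigma)-2)$ combine to yield $\exp(S(\sigma^2,t)) = (4\sigma^2)^{-t}(L_{2t}(2\sigma)-2)$, as claimed. (I would double-check that no stray $(-1)^t$ survives; the accounting is: $\prod g(\omega^j)$ carries $(-1)^t$, the $1/\prod\omega^j$ carries $(-1)^{t-1}$, each $\prod_j(\omega^j - \alpha^2)$ carries $(-1)^t$ for a total $(-1)^{2t}=1$, leaving $(-1)^{2t-1} = -1$ overall against $(\alpha^{2t}-1)(\beta^{2t}-1) = 2 - L_{2t}$, giving $L_{2t}-2$.)

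The main obstacle is bookkeeping of the sign and normalization factors rather than any conceptual difficulty; the only genuine input is the identification $\{\rho_+,\rho_-\} = \{\alpha(2\sigma)^2, \beta(2\sigma)^2\}$, which follows from matching the elementary symmetric functions of the two pairs, and the evaluation of the cyclotomic product $\prod_{j}(\omega^j - c) = (-1)^t(c^t-1)$. An alternative route avoiding circulants entirely: use Proposition~\ref{prop:lap-evals} directly to write $\exp(S(\sigma^2,t)) = \prod_{j=0}^{t-1}\bigl(1 + \tfrac{1}{\sigma^2}\sin^2(\pi j/t)\bigr)$ and recognize this via the factorization of $L_{2t}(x) - 2$ over its roots; indeed $L_{2t}(x) - 2 = \prod$ over appropriate linear/quadratic factors coming from $\alpha^{2t}+\beta^{2t} - 2\alpha^t\beta^t/(\alpha\beta)^t$-type identities, but the circulant determinant approach above is cleaner. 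I would present the circulant computation.
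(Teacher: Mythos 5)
Your proof is correct, and it takes a genuinely different route from the paper's. The paper argues combinatorially: it expands $\det(\bm I_t + \tfrac{1}{4\sigma^2}\bL^{C_t})$ via the generalized matrix--tree theorem (which identifies $E_k(\lambda_{t,1},\dots,\lambda_{t,t})$ with a count of rooted spanning forests of $C_t$), then exhibits a bijection between rooted spanning forests of $C_t$ with $k$ edges and $k$-edge matchings of $C_{2t}$, and finally invokes the classical fact that $L_t$ is the matching polynomial of $C_t$. Your approach bypasses all of the combinatorics: you exploit the circulant structure directly, factor the quadratic $z^2 - (4\sigma^2+2)z + 1$ over $\alpha(2\sigma)^2,\beta(2\sigma)^2$, evaluate the cyclotomic product $\prod_j(\omega^j-c) = (-1)^t(c^t-1)$, and finish with Binet. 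I verified your sign bookkeeping carefully: the $(-1)^t$ from the $g$-normalization, the $(-1)^{t-1}$ from $\prod_j\omega^j$, and the $(-1)^{2t}$ from the two cyclotomic products combine to an overall factor of $-1$, which against $(\alpha^{2t}-1)(\beta^{2t}-1) = 2 - L_{2t}(2\sigma)$ (using $\alpha\beta = -1$) indeed gives $L_{2t}(2\sigma)-2$. The algebraic route is arguably cleaner and more self-contained; what the paper's approach buys is a structural explanation for why Lucas polynomials enter (as matching polynomials of cycles) and, via the last step of the bijection, a transparent source for the correction term $-2$ (the two perfect matchings of $C_{2t}$ that have no forest preimage). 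One minor observation: your derivation never actually uses $t \geq 3$, while the paper's route requires $C_t$ to be a simple graph and its side remark asserts the identity fails at $t=2$; a direct check (using $L_4(x) = x^4 + 4x^2 + 2$) shows your formula in fact holds at $t=2$ as well, so the restriction is an artifact of the paper's proof strategy rather than of the statement. You should still state the lemma for $t\geq 3$ as given, but this is worth noting as evidence that your argument is a bit more robust.
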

\noindent
We note that the same formula does not hold for $t = 2$: the left-hand side is $1 + \sigma^{-2}$, while the right-hand side is $1 + \sigma^{-2}/4$.
As we will see in the course of the proof, that is because the formula depends on the eigenvalues of the $t$-cycle graph $C_3$ appearing in the summation in $S(\sigma^2, t)$.

 \begin{figure}
        \centering
        \includegraphics[scale=0.8]{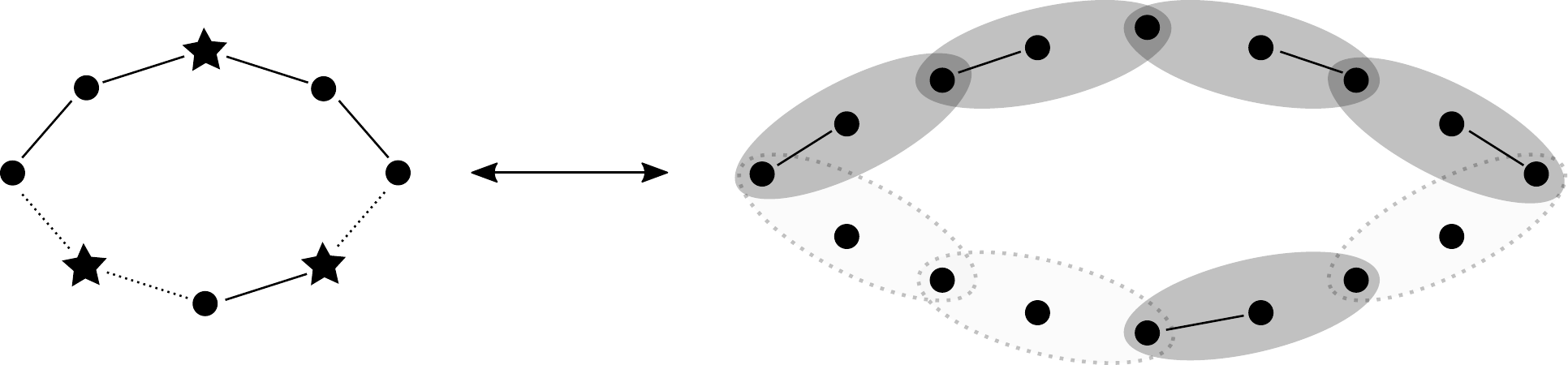}
        \caption{\textbf{Spanning forests and matchings.} We illustrate the bijection between rooted spanning forests on the $t$-cycle and matchings on the $2t$-cycle used in the proof of Lemma~\ref{lem:S-lucas}.}
        \label{fig:forest-matching-bijection}
    \end{figure}

\begin{proof}[Proof of Lemma~\ref{lem:S-lucas}]
    Recall that we denote by $C_t$ the cycle on $t$ vertices and by $\bL^{C_t}$ its graph Laplacian.
    Let $\lambda_{t, 1}, \dots, \lambda_{t, t}$ denote the eigenvalues of $\bL^{C_t}$.
    Then, we have
    \begin{equation}
        \exp(S(\sigma^2, t))
        = \prod_{j = 1}^t \left(1 + \frac{1}{4\sigma^2}\lambda_{t, j}\right)
        = \sum_{k = 0}^t E_k(\lambda_{t, 1}, \dots, \lambda_{t, t})(4\sigma^2)^{-k},
    \end{equation}
    where
    \begin{equation}
        E_k(a_1, \dots, a_t) \colonequals \sum_{1 \leq i_1 < \cdots < i_k \leq t} a_{i_1} \cdots a_{i_k}
    \end{equation}
    are the elementary symmetric polynomials.
    
    By a generalization of the matrix-tree theorem (see Theorem 7.5 of \cite{Biggs-1993-AlgebraicGraphTheory}), $E_k(\lambda_{t, 1}, \dots, \lambda_{t, t})$ is equal to the number of spanning forests of $C_t$ containing $t - k$ connected components and with each connected component having an assigned root vertex.
    The condition of a spanning forest having $t - k$ connected components is, for the specific case of the graph $C_t$, also equivalent to the forest containing $k$ edges.
    
    When $k < t$, then these forests are in bijection with the matchings on $C_{2t}$ also containing $k$ edges.
    An explicit bijection is as follows.
    Suppose the vertices of $C_{2t}$ are labelled $0, \dots, 2t - 1$ and the vertices of $C_t$ labelled $0, \dots, t - 1$.
    Suppose $M$ is a matching in $C_{2t}$.
    We build a spanning forest $F$ of $C_t$ by including the edge $\{i, i + 1\}$ whenever either $\{2i, 2i + 1\}$ or $\{2i + 1, 2i + 2\}$ is included in $M$, and by declaring $i$ a root vertex in $C_t$ if $2i$ is not adjacent to any edges of $M$ in $C_{2t}$.
    We illustrate this mapping in Figure~\ref{fig:forest-matching-bijection}, which shows that every connected component of $F$ formed this way indeed has a unique root vertex (located where the pairs of consecutive edges containing edges of $M$ switch from ``leaning'' counterclockwise to clockwise), and that the knowledge of the connected components and the root vertices of $F$ uniquely determines the preimage $M$.
    This holds so long as $k < t$; however, when $k = t$, then there are two matchings on $C^{2t}$ with $k = t$ edges, while $E_t(\lambda_{t, 1}, \dots, \lambda_{t, t}) = \det(\bL^{C_t}) = 0$.
    
    Let $M_{t, k}$ denote the number of matchings of $k$ edges in $C_t$.
    The result then follows from showing that $L_t(x)$ is the \emph{matching polynomial} of $C_t$ for $t \geq 3$:
    \begin{equation}
        L_t(x) = \sum_{k = 0}^{\lfloor t / 2 \rfloor} M_{t, k} x^{t - 2k}.
    \end{equation}
    This fact is known, though often phrased differently (e.g., Section 6 of \cite{Farrell-1979-MatchingPolynomials}), but we give the simple proof here for the sake of completeness.
    The statement is easily verified for $t = 3, 4$, and then it suffices to show the coefficient recursion for $t \geq 5$ that
    \begin{equation}
        M_{t, k} = M_{t - 1, k} + M_{t - 2, k - 1}.
    \end{equation}
    We present a bijective proof of this recursion in Figure~\ref{fig:matching-recursion}: fixing a sequence of three consecutive edges in $C_t$, we map any matching in $C_t$ to a matching in either $C_{t - 1}$ or $C_{t - 2}$ by a suitable replacement of this sequence by either two edges or one edge, and this mapping is visibly a bijection.
\end{proof}

\begin{figure}
    \centering
    \includegraphics[scale=0.8]{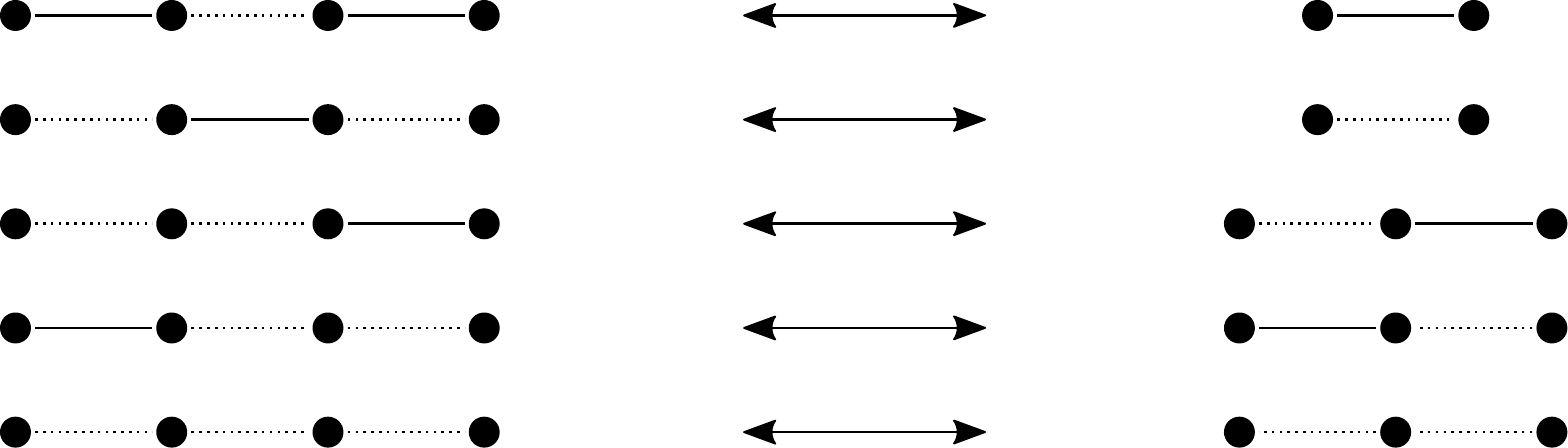}
    \caption{\textbf{Contracting matchings in cycles.} We illustrate the bijection between matchings in a $t$-cycle and those in either a $(t-1)$- or $(t - 2)$-cycle used in the proof of Lemma~\ref{lem:S-lucas}.}
    \label{fig:matching-recursion}
\end{figure}

\subsection{Discrete Concavity: Proof of Lemma~\ref{lem:S-lower}}

It suffices to show that $S(\sigma^2, t) - S(\sigma^2, t - 1)$ is decreasing in $t$, or equivalently that, for all $t \geq 3$,
\begin{equation}
    2S(\sigma^2, t) \stackrel{?}{>} S(\sigma^2, t - 1) + S(\sigma^2, t + 1).
\end{equation}

We consider separately the case $t = 3$.
Introducing for the sake of convenience a new variable $y = \sigma^{-2}$, we have the polynomials
\begin{align*}
    \exp(S(y^{-1}, 2)) &= 1 + y\sin^2\left(\frac{1}{2}\pi\right) \\
    &= 1 + y, \numberthis \label{eq:expS2} \\
    \exp(S(y^{-1}, 3)) &= \left(1 + y\sin^2\left(\frac{1}{3}\pi\right)\right)\left(1 + y\sin^2\left(\frac{2}{3}\pi\right)\right) \\
        &= \left(1 + \frac{3}{4}y\right)^2, \numberthis \label{eq:expS3} \\
        \exp(S(y^{-1}, 4)) &= \left(1 + y\sin^2\left(\frac{1}{4}\pi\right)\right)\left(1 + y\sin^2\left(\frac{2}{4}\pi\right)\right)\left(1 + y\sin^2\left(\frac{3}{4}\pi\right)\right) \\
        &= \left(1 + \frac{1}{2}y\right)^2(1 + y) \label{eq:expS4} \numberthis
\end{align*}
Thus it suffices to show that, for all $y > 0$,
\begin{equation}
    \left(1 + \frac{3}{4}y\right)^4 \stackrel{?}{>} \left(1 + \frac{1}{2}y\right)^2(1 + y)^2,
\end{equation}
which follows by the arithmetic-geometric mean inequality which gives $(1 + \frac{1}{2}y)(1 + y) < (\frac{1}{2}(1 + \frac{1}{2}y + 1 + y))^2 = (1 + \frac{3}{4}y)^2$.

Now, suppose $t \geq 4$.
Then, exponentiating both sides and applying Lemma~\ref{lem:S-lucas}, it suffices to show that
\begin{equation}
    (L_{2t}(x) - 2)^2 \stackrel{?}{>} (L_{2t - 2}(x) - 2)(L_{2t + 2}(x) - 2)
\end{equation}
for all $x > 0$.
Letting $a = \alpha(x)^2 > 1$, we have $\beta(x)^2 = a^{-1}$.
In terms of $a$, we may then expand either side as
\begin{align}
    (L_{2t}(x) - 2)^2 &= a^{2t} + a^{-2t} + 2 - 4a^{t} - 4a^{-t}, \\
    (L_{2t - 2}(x) - 2)(L_{2t + 2}(x) - 2) &= a^{2t} + a^{-2t} + a^2 + a^{-2} - 2a^{t - 1} - 2a^{-t + 1} - 2a^{t + 1} - 2a^{-t - 1}.
\end{align}
Therefore, it suffices to show that, for all $a > 0$,
\begin{align*}
    0 &\stackrel{?}{<} 2 - 4a^t - 4a^{-t} - a^2 - a^{-2} + 2a^{t - 1} + 2a^{-t + 1} + 2a^{t + 1} + 2a^{-t - 1} \\
    &= 2 + 2a^{t - 1}(1 - a)^2 + 2a^{-t + 1}(1 - a^{-1})^2 - a^2 - a^{-2}. \numberthis
\end{align*}

Viewing $t$ for a moment as a continuous parameter, we note that the derivative of the above expression with respect to $t$ is $2\log a (a^{t - 1}(1 - a)^2 - a^{-t + 1}(1 - a^{-1})^2) > 0$ for any $t > 1$ and $a > 1$, since $a^{t - 1} > a^{-t + 1}$ and $(1 - a)^2 > (1 - a^{-1})^2 = (\frac{a - 1}{a})^2$.
Thus this expression is increasing in $t$, so it suffices to consider $t = 4$.
In that case, we have the factorization
\begin{equation}
    2 + 2a^{3}(1 - a)^2 + 2a^{-3}(1 - a^{-1})^2 - a^2 - a^{-2} = \frac{(a - 1)^4(2a^6 + 4a^5 + 6a^4 + 7a^3 + 6a^2 + 4a + 2)}{a^5},
\end{equation}
which shows strict positivity for any $a > 1$.

\subsection{Upper Bound: Proof of Lemma~\ref{lem:S-upper}}

We want to show $S(\sigma^2, t) < tI(\sigma^2)$.
Exponentiating either side, we observe that
\begin{align*}
    \exp(tI(\sigma^2)) &= \left(\frac{1 + \sqrt{1 + \sigma^{-2}}}{2}\right)^{2t}, \numberthis \\
    \exp(S(\sigma^2, t))
    &= (4\sigma^2)^{-t}\left(\alpha(\sqrt{4\sigma^2})^{2t} + \beta(\sqrt{4\sigma^2})^{2t} - 2\right) \\
    &= (4\sigma^2)^{-t}\left(\left(\frac{\sqrt{4\sigma^2} + \sqrt{4\sigma^2 + 4}}{2}\right)^{2t} + \left(\frac{\sqrt{4\sigma^2} - \sqrt{4\sigma^2 + 4}}{2}\right)^{2t} - 2\right) \\
    &= \left(\frac{1 + \sqrt{1 + \sigma^{-2}}}{2}\right)^{2t} + \left(\frac{1 - \sqrt{1 + \sigma^{-2}}}{2}\right)^{2t} - 2(4\sigma^2)^{-t} \\
    &= \exp(tI(\sigma^2)) + \left(\frac{1 - \sqrt{1 + \sigma^{-2}}}{2}\right)^{2t} - \left(\frac{2^{1/t}}{4\sigma^2}\right)^t. \numberthis
\end{align*}
Thus it suffices to show that
\begin{equation}
    \frac{2^{1/t}}{4\sigma^2} \stackrel{?}{>} \left(\frac{\sqrt{1 + \sigma^{-2}} - 1}{2}\right)^2,
\end{equation}
which we verify as
\begin{equation}
    \frac{2^{1/t}}{4\sigma^2} - \left(\frac{\sqrt{1 + \sigma^{-2}} - 1}{2}\right)^2 > \frac{1}{4\sigma^2} - \left(\frac{\sqrt{1 + \sigma^{-2}} - 1}{2}\right)^2 = \frac{1}{2}\left(\sqrt{1 + \sigma^{-2}} - 1\right) > 0.
\end{equation}

\subsection{Miscellaneous Rate Functions: Proof of Proposition~\ref{prop:eta-zeta-bounds}}
\label{app:pf:prop:eta-zeta-bounds}

Again introducing $y = \sigma^{-2}$ and using our expressions for $S_t$ for $t = 2, 3,$ and 4 from \eqref{eq:expS2}, \eqref{eq:expS3}, and \eqref{eq:expS4}, respectively, as well as the expression
\begin{equation}
    \exp(I(y^{-1})) = \left(\frac{1 + \sqrt{1 + y}}{2}\right)^2 \in \left[1 + \frac{y}{4}, 1 + \frac{y}{2}\right],
\end{equation}
we may compute as follows:
\begin{align}
    \exp(\eta_1) &= \frac{(1 + y)^{3/4}}{(1 + \frac{1}{2}y)^{1/2}(1 + y)^{1/4}} \nonumber \\
    &= \sqrt{\frac{1 + y}{1 + \frac{1}{2}y}}, \\
    \exp(\eta_2) &= \frac{1 + y}{1 + \frac{3}{4}y}, \\
    \exp(\eta_3) &= \frac{2\sqrt{1 + y}}{1 + \sqrt{1 + y}}.
\end{align}
Thus we find $\eta_3 \geq \eta_1$, since by concavity of the square root $\frac{1}{2}(1 + \sqrt{1 + y}) \leq \sqrt{1 + \frac{1}{2}y}$.
We also have, by the arithmetic-geometric mean inequality,
\begin{equation}
    \frac{\exp(\eta_1)}{\exp(\eta_2)} = \frac{1 + \frac{3}{4}y}{\sqrt{(1 + y)(1 + \frac{1}{2}y)}} \geq 1,
\end{equation}
so $\eta_3 \geq \eta_1 \geq \eta_2$.
For the upper bound on $\eta_3$, we have
\begin{align*}
    \exp(\eta_3)
    &\leq \sqrt{\frac{1 + y}{1 + \frac{1}{4}y}} \\
    &= \sqrt{1 + \frac{\frac{3}{4}y}{1 + \frac{1}{4}y}} \\
    &\leq \exp\left(\frac{\frac{3}{4}y}{2 + \frac{1}{2}y}\right) \\
    &= \exp\left(\frac{1}{\frac{2}{3} + \frac{8}{3}\sigma^2}\right). \numberthis
\end{align*}

\section{Edge Probability Prefactor: Proof of Proposition~\ref{prop:pair-aug-lb}}
\label{app:pf:prop:pair-aug-lb}
    We begin by producing the following formula for $p$: let $g \sim \sN(0, \sigma^2)$ and $u \sim \chi^2(d)$ be independent.
    Then,
    \begin{equation}
        p = \PP\left[g \geq \sqrt{u} \right].
    \end{equation}
    We work directly from the earlier expression, in the special case $t = 2$:
    \begin{align*}
        \PP[(1, 2) \text{ is augmenting}] &= \PP\left[ \langle \bz_1 - \bz_2, \bx_2 - \bx_1 \rangle \geq \|\bx_2 - \bx_1\|^2 \right] \\
        &= \PP\left[ \left\langle \frac{\bz_1 - \bz_2}{\sqrt{2}}, \frac{\bx_2 - \bx_1}{\|\bx_2 - \bx_1\|} \right\rangle \geq \left\|\frac{\bx_2 - \bx_1}{\sqrt{2}}\right\| \right], \numberthis
    \end{align*}
    where we observe now that $(\bx_2 - \bx_1) / \|\bx_2 - \bx_1\|$ has the law of a uniformly distributed unit vector, and is independent from $\|\bx_2 - \bx_1\| / \sqrt{2}$, which has the law of the norm of a standard gaussian vector, which is that of $\sqrt{u}$.
    Moreover, $(\bz_1 - \bz_2) / \sqrt{2}$ has the law $\sN(\bm 0, \sigma^2 \bm I_d)$, so the left-hand side of the probability has law $\sN(0, \sigma^2)$, giving the claim.
        
    Note that the upper bound on $\hat{p}$ follows from Proposition~\ref{prop:cycle-aug-prob}.
    For the lower bounds, we first give two quantitative lower bounds, a looser one that holds for all $d \geq 1$ and a tighter one that holds for all $d \geq 4$.
    We will use the following ``Mills' ratio'' lower bounds on Gaussian tails (see, e.g., \cite{Duembgen-2010-MillsRatio}): for all $t > 0$,
    \begin{equation}
        \Px_{g \sim \sN(0, 1)}[g \geq t] \geq \frac{t}{1 + t^2} \frac{1}{\sqrt{2\pi}} \exp\left(-\frac{t^2}{2}\right) \geq \left(\frac{1}{t} - \frac{1}{t^3}\right) \frac{1}{\sqrt{2\pi}} \exp\left(-\frac{t^2}{2}\right) \label{eq:mills}
    \end{equation}
    
    For our first lower bound, we use the first lower bound of \eqref{eq:mills}:
    \begin{align*}
        p 
        &= \Ex_{u \sim \chi^2(d)}\Px_{g \sim \sN(0, 1)}\left[g \geq \sqrt{\frac{u}{\sigma^2}}\right] \\
        &\geq \sqrt{\frac{\sigma^2}{2\pi}} \Ex_u \frac{\sqrt{u}}{\sigma^2 + u}\exp\left(-\frac{u}{2\sigma^2}\right) \\
        &= \sqrt{\frac{\sigma^2}{2\pi}} \frac{1}{2^{d/2} \Gamma(\frac{d}{2})} \int_0^{\infty} \frac{u^{\frac{d - 1}{2}}}{\sigma^2 + u} \exp\left(-\frac{1 + \sigma^{-2}}{2}u\right)du \\
        &= \sqrt{\frac{\sigma^2}{2\pi}} \frac{1}{2^{d/2} \Gamma(\frac{d}{2})} \left(\frac{1 + \sigma^{-2}}{2}\right)^{-\frac{d - 1}{2}} \int_0^{\infty} \frac{v^{\frac{d - 1}{2}}}{\frac{1 + \sigma^2}{2} + v} e^{-v}dv \\
        &= \exp\left(-\frac{d}{2}S(\sigma^2, 2)\right) \frac{\sqrt{1 + \sigma^2}}{2\sqrt{\pi}} \frac{1}{\Gamma(\frac{d}{2})} \int_0^{\infty} \frac{v^{\frac{d - 1}{2}}}{\frac{1 + \sigma^2}{2} + v} e^{-v}dv. \numberthis
    \end{align*}
    Working now with the remaining integral,
    \begin{align*}
        \int_0^{\infty} \frac{v^{\frac{d - 1}{2}}}{\frac{1 + \sigma^2}{2} + v} e^{-v}dv
        &\geq \frac{1}{1 + \sigma^2}\int_0^{\infty} \frac{v^{\frac{d - 1}{2}}}{1 + v} e^{-v}dv \\
        &\geq \frac{1}{1 + \sigma^2}\int_1^{\infty} \frac{v^{\frac{d - 1}{2}}}{2v} e^{-v}dv \\
        &= \frac{1}{2(1 + \sigma^2)} \Gamma\left(\frac{d - 1}{2}, 1\right), \numberthis
    \end{align*}
    and we find that, for all $d \geq 1$,
    \begin{align*}
        \what{p} 
        &\geq \frac{1}{4\sqrt{\pi(1 + \sigma^2)}} \frac{\Gamma(\frac{d - 1}{2}, 1)}{\Gamma(\frac{d}{2})}
        \intertext{and, bounding the $\Gamma$ functions,}
        &\geq \frac{1}{40\sqrt{\pi d(1 + \sigma^2)}}. \numberthis
    \end{align*}

    For our second lower bound, we suppose that $d \geq 4$ and use the second lower bound of \eqref{eq:mills}:
    \begin{align*}
        p &\geq \frac{1}{\sqrt{2\pi}}\Ex_u \left(\left(\frac{\sigma^2}{u}\right)^{1/2} - \left(\frac{\sigma^2}{u}\right)^{3/2}\right)\exp\left(-\frac{u}{2\sigma^2}\right) \\
        &= \frac{1}{\sqrt{2\pi}} \frac{1}{2^{d/2}\Gamma(\frac{d}{2})}\left(\sigma\int_0^{\infty} u^{\frac{d - 3}{2}}\exp\left(-\frac{1 + \sigma^{-2}}{2}u\right)du - \sigma^3\int_0^{\infty} u^{\frac{d - 5}{2}}\exp\left(-\frac{1 + \sigma^{-2}}{2}u\right)du\right)
        \intertext{where both integrals converge due to our assumption that $d \geq 4$. Performing the same change of various as before, we find}
        &= \frac{1}{\sqrt{2\pi}} \frac{1}{\Gamma(\frac{d}{2})}\left(\frac{\Gamma(\frac{d - 1}{2})}{\sqrt{2}}(1 + \sigma^2)^{1/2} - \frac{\Gamma(\frac{d - 3}{2})}{2\sqrt{2}}(1 + \sigma^2)^{3/2}\right) \exp\left(-\frac{d}{2}S(\sigma^2, 2)\right), \numberthis
    \end{align*}
    and thus, rearranging, we find that
    \begin{align*}
        \what{p} 
        &\geq \frac{1}{2\sqrt{\pi}} \cdot \frac{\Gamma(\frac{d - 1}{2})}{\Gamma(\frac{d}{2})}(\sigma^2 + 1)^{\frac{1}{2}} - \frac{1}{4\sqrt{\pi}} \cdot \frac{\Gamma(\frac{d - 3}{2})}{\Gamma(\frac{d}{2})} (\sigma^2 + 1)^{3/2}
        \intertext{and bounding the $\Gamma$ function ratios from above and below,}
        &\geq \frac{1}{\sqrt{2\pi}} \left(\frac{1 + \sigma^2}{d}\right)^{\frac{1}{2}} - 2\left(\frac{1 + \sigma^2}{d}\right)^{3/2} \numberthis
    \end{align*}
    
    For $1 \leq d \leq 40$, by assumption we have $\sigma^2 \leq 1$, so by our first bound we find
    \begin{equation}
        \what{p} \geq \frac{1}{40 \sqrt{80\pi}} \geq \frac{1}{1000} \sqrt{\frac{1 + \sigma^2}{d}},
    \end{equation}
    using that $1 + \sigma^2 \leq 2$ and $d \geq 1$.
    For $d \geq 40$, we have $1 \leq \frac{d}{40}$, so $1 + \sigma^2 \leq \frac{d}{20}$.
    On the interval $x \in [0, \frac{1}{20}]$ we have $\frac{1}{\sqrt{2\pi}}x^{1/2} - 2x^{3/2} \geq \frac{1}{4}x^{1/2}$, so by our second bound we have
    \begin{equation}
        \what{p} \geq \frac{1}{4} \sqrt{\frac{1 + \sigma^2}{d}}.
    \end{equation}
    Combining the two cases gives the result.

\end{document}